\numberwithin{equation}{section}
\numberwithin{figure}{section}
\theoremstyle{plain}
\newtheorem{thm}{\protect\theoremname}
\theoremstyle{definition}
\newtheorem{defn}[thm]{\protect\definitionname}
\theoremstyle{plain}
\newtheorem*{thm*}{\protect\theoremname}
\theoremstyle{plain}
\newtheorem{lem}[thm]{\protect\lemmaname}
\newtheorem*{lem*}{\protect\lemmaname}
\theoremstyle{plain}
\newtheorem{prop}[thm]{\protect\propositionname}
\theoremstyle{plain}
\newtheorem{cor}[thm]{\protect\corollaryname}
\providecommand{\corollaryname}{Corollary}
\providecommand{\definitionname}{Definition}
\providecommand{\lemmaname}{Lemma}
\providecommand{\propositionname}{Proposition}
\providecommand{\theoremname}{Theorem}
\begin{document}
\global\long\def\defeq{\stackrel{\mathrm{{\scriptscriptstyle def}}}{=}}%

\global\long\def\norm#1{\left\Vert #1\right\Vert }%
\global\long\def\prob#1{\mathbf{P}\left(#1\right)}%
\global\long\def\Par#1{\left(#1\right)}%
\global\long\def\Brack#1{\left[#1\right]}%
\global\long\def\Abs#1{\left|#1\right|}%
\global\long\def\ginverse{g^{-1}}%
\global\long\def\ginv#1{g(#1)^{-1}}%
\global\long\def\ginvnorm#1{\left\Vert #1\right\Vert _{g^{-1}}}%
\global\long\def\Brace#1{\left\{  #1\right\}  }%
\global\long\def\inner#1{\left\langle #1\right\rangle }%
\global\long\def\normg#1{\left\Vert #1\right\Vert _{g}}%
\global\long\def\normginv#1{\left\Vert #1\right\Vert _{g^{-1}}}%

\global\long\def\R{\mathbb{R}}%
\global\long\def\Rn{\mathbb{R}^{n}}%
\global\long\def\E{\mathbb{E}}%
\global\long\def\P{\mathbf{P}}%
\global\long\def\S{\mathbb{S}}%
\global\long\def\Rd{\mathbb{R}^{d}}%
\global\long\def\P{\mathbf{P}}%

\global\long\def\acal{\mathcal{A}}%
\global\long\def\bcal{\mathcal{B}}%
\global\long\def\ccal{\mathcal{C}}%
\global\long\def\dcal{\mathcal{D}}%
\global\long\def\ecal{\mathcal{E}}%
\global\long\def\tcal{\mathbb{\mathcal{T}}}%
\global\long\def\mcal{\mathbb{\mathcal{M}}}%
\global\long\def\pcal{\mathcal{P}}%
\global\long\def\ncal{\mathcal{N}}%
\global\long\def\kcal{\mathcal{K}}%
\global\long\def\half{\frac{1}{2}}%
 
\global\long\def\tr{\mathrm{Tr}}%
\global\long\def\diag{\mathrm{diag}}%
\global\long\def\cov{\mathrm{Cov}}%
\global\long\def\Var{\mathrm{Var}}%
\global\long\def\rank{\mathrm{rank}}%
\global\long\def\range{\mathrm{Range}}%
\global\long\def\nulls{\mathrm{Null}}%
\global\long\def\vgood{V_{\text{good}}^{x}}%
\global\long\def\vbad{V_{\text{bad}}^{x}}%
\global\long\def\Diag{\mathrm{Diag}}%

\global\long\def\ov{\overline{v}}%
\global\long\def\ox{\overline{x}}%
\global\long\def\xmid{x_{\text{mid}}}%
\global\long\def\vmid{v_{\text{mid}}}%
\global\long\def\tx{\widetilde{x}}%
\global\long\def\tv{\widetilde{v}}%
\global\long\def\mid{\mathrm{mid}}%
\global\long\def\idftn{\mathrm{id}}%

\global\long\def\dtv{d_{\text{TV}}}%
\global\long\def\tp{\tilde{p}}%
\global\long\def\tP{\tilde{P}}%
\global\long\def\dv{\delta_{v}}%
\global\long\def\dx{\delta_{x}}%
\global\long\def\del{\mathcal{\partial}}%
\global\long\def\tpose{^{\top}}%
\global\long\def\grad{\nabla}%
\global\long\def\hess{\nabla^{2}}%
\global\long\def\veps{\varepsilon}%
\global\long\def\lda{\lambda}%
\global\long\def\vphi{\varphi}%

\global\long\def\ot{\frac{1}{3}}%
\global\long\def\tt{\frac{2}{3}}%
\global\long\def\ham{\mathrm{Ham}}%
\global\long\def\bg{\bar{g}}%
\global\long\def\md{\mathrm{mid}}%
\global\long\def\bx{\bar{x}}%
\global\long\def\bv{\bar{v}}%
\global\long\def\bp{\overline{p}}%
\global\long\def\bP{\overline{P}}%
\global\long\def\oprop{\overline{\pcal}}%

\global\long\def\Tx{T_{x}}%
\global\long\def\bTx{\overline{T}_{x}}%
\global\long\def\onecirc{\textcircled{\small{1}}}%
\global\long\def\twocirc{\textcircled{\small{2}}}%
\global\long\def\threecirc{\textcircled{\small{3}}}%
\global\long\def\fourcirc{\textcircled{\small{4}}}%

\title{Condition-number-independent convergence rate of Riemannian Hamiltonian
Monte Carlo with numerical integrators}
\author{Yunbum Kook\thanks{Georgia Tech, yb.kook@gatech.edu}, Yin Tat Lee\thanks{University of Washington and Microsoft Research, yintat@uw.edu},
Ruoqi Shen\thanks{University of Washington, shenr3@cs.washington.edu},
Santosh S. Vempala\thanks{Georgia Tech, vempala@gatech.edu}}
\maketitle
\begin{abstract}
We study the convergence rate of discretized Riemannian Hamiltonian Monte Carlo on sampling from distributions in the form of $e^{-f(x)}$ on a convex body $\mathcal{M}\subset\R^{n}$. 
We show that for distributions in the form of $e^{-\alpha^{\top}x}$ on a polytope with $m$ constraints, the convergence rate of a family of commonly-used integrators is independent of $\left\Vert \alpha\right\Vert _{2}$ and the geometry of the polytope.
In particular, the implicit midpoint method (IMM) and the generalized Leapfrog method (LM) have a mixing time of $\widetilde{O}\left(mn^{3}\right)$ to achieve $\epsilon$ total variation distance to the target distribution.
These guarantees are based on a general bound on the convergence rate for densities of the form $e^{-f(x)}$ in terms of parameters of the manifold and the integrator. 
Our theoretical guarantee complements the empirical results of \cite{kook2022sampling}, which shows that RHMC with IMM can sample ill-conditioned, non-smooth and constrained distributions in very high dimension efficiently in practice. 
\end{abstract}

\tableofcontents{}

\newpage{}

%!TEX root = ./main.tex

\section{Introduction}

Efficient sampling from high dimensional distributions is a fundamental question that arises in many fields such as statistics, machine learning, and theoretical computer science. One class of distributions that arises in many applications is constrained distributions, where the distribution is defined on a constrained set. Sampling from such distribution can be an efficient way to study the geometric properties of the constrained set when direct computation is not feasible. For instance, in systems biology, a metabolic network is defined by a set of equalities and inequalities that represents feasible steady state reaction rates \cite{lewis2012constraining, thiele2013community}. For large metabolic networks, sampling from the constraint set can be an efficient way to simulate the biochemical network and evaluate its capacity. In mathematics, computing the volume of the Birkhoff polytope plays a key role in several areas, including algebraic geometry, and probability. However, computing the volume exactly using algebraic representations can take years even for a small dimension $n = 11$. On the other hand, a sampling based-algorithm can compute the volume efficiently up to dimension half-million \cite{kook2022sampling}. 

\paragraph{Traditional samplers}The current primary approach for sampling is Markov Chain Monte Carlo (MCMC) method, which for many problems is the only known method with provable efficiency guarantees. For general non-smooth distributions, the traditional class of samplers is the zeroth-order samplers, which query the density of the distributions to determine the algorithm's trajectory. This class of samplers includes Ball walk \cite{lovasz1993random, kannan1997random}, its affine-invariant version Dikin walk \cite{kannan2012random, laddha2020strong} and Hit-and-Run \cite{smith1984efficient,lovasz1999hit}, which avoids an explicit step size. However, this class of sampler is inefficient in practice because it intrinsically needs a step size smaller than $O(1/\sqrt n )$, where $n$ is the dimension, to avoid stepping outside the constraint set, which leads to a bottleneck of quadratic mixing time in dimension. Moreover, without putting the convex body into an isotropic position, which requires expensive computation in practice, the mixing time of Ball walk and Hit-and-Run, $\widetilde{O} (n^2 R^2)$, depends on the condition number $R$ of the convex body. The condition number of the distributions appearing in practical applications can be large, \textit{e.g.,} the condition number of RECON1 \cite{king2016bigg}, a human
metabolic network, can be as large as $10^6$. Using Hit-and-Run to sample from metabolic networks can be over 100 times slower than the better algorithms on this problem \cite{cousins2016practical}. Sampling from the Birkhoff polytope can be prohibitively expensive for any dimension higher than $n = 20$ \cite{cousins2016practical}.

Another class of samplers commonly used is the first-order samplers, which update the Markov chain based on the gradient information. The mixing time of the continuous processes as well as the various discretization methods of this class of samplers has been studied in a long line of recent works. The most well-studied first-order samplers include Langevin algorithm  \cite{dalalyan2017further,dwivedi2018log,durmus2019analysis,vempala2019rapid,chewi2021optimal,chewi2021analysis}, its variant Underdamped Langevin algorithm \cite{cheng2018underdamped,shen2019randomized}, and Hamiltonian Monte Carlo (HMC) \cite{chen2019optimal,chen2020fast,lee2020logsmooth}. The mixing time of this class of samplers also suffers from dependence on the condition number of the distributions. Moreover, this class of samplers cannot be applied to constrained distributions directly because their Markov chain can easily step outside the constraint set. Currently, popular sampling packages such as Stan \cite{stan} and Pyro \cite{bingham2019pyro} that are based on this class of samplers are not able to handle constrained distributions, despite their effectiveness in other settings.

\paragraph{Non-Euclidean Samplers} 
Given the limitations of the traditional samplers, researchers have sought to extend these methods to non-Euclidean samplers, which leverage the local geometry of distributions to speed up the samplers. For instance, Riemannian Hamiltonian Monte Carlo (RHMC) extends the traditional HMC by considering the dynamics on a Riemannian manifold that uses a non-Euclidean metric corresponding to the distribution's local geometry. 
When combined with a local metric induced by the Hessian of a self-concordant barrier function, RHMC can sample from ill-conditioned and non-smooth distributions efficiently. A recent work \cite{kook2022sampling} showed that RHMC can achieve a 1000-fold acceleration on the benchmark dataset RECON3D \cite{king2016bigg}, the largest published human metabolic network, compared to previous methods. While RHMC has demonstrated superior practical performance, the convergence rate of discretized RHMC remains open. \cite{lee2018convergence} bounded the convergence rate of continuous RHMC in terms of the isoperimetry and natural smoothness parameters of the associated Riemannian manifold. However, to implement RHMC, sophisticated integrators such as implicit midpoint integrator (IMM) or the generalized Leapfrog integrator (LM) are necessary to maintain measure-preservation and time reversibility. Simple integrators, such as the naive Leapfrog method, are not suitable for RHMC as they are no longer symplectic on general Riemannian manifolds \cite{cobb2019introducing}. These sophisticated integrators provide accurate discretization and efficient convergence in practice, but their theoretical analysis is challenging. In particular, there is no theoretical guarantee that the convergence rate of RHMC remains independent of the condition number after discretization, which is the main motivation for using non-Euclidean samplers in our case. 

In fact, analyzing discretized non-Euclidean samplers has been a persistent challenge in many recent works. Another commonly studied class of non-Euclidean samplers is the Riemannian Langevin algorithm (RLA) \cite{girolami2011riemann}, which extends the Langevin algorithm to non-Euclidean space. A closely related process is the Mirror Langevin diffusion (MLD) \cite{zhang2020wasserstein}, which is a special case of RLA when the metric is given by the Hessian of a Legendre-type convex potential $\phi$. Many recent works have focused on obtaining the convergence rate of discretized MLD or RLA, but many of them require strong assumptions or oracles for accurate discretization. The analysis of \cite{zhang2020wasserstein, jiang2021mirror, li2022mirror} and the empirical results in \cite{jiang2021mirror} suggest that unless a strong regularity assumption between the target distribution and $\phi$ is satisfied, the naive integrators can lead to a bias term that exists even when the step size tends to zero. This bias arises from the third-order error terms resulting from non-Euclidean geometries and is hard to control. \cite{ahn2021efficient} circumvented this issue by proposing an alternative discretization method that uses the exact solution to the Brownian motion term, but it remains unclear whether the discretization is feasible for general $\phi$. Similarly, \cite{gatmiry2022convergence} analyzed the convergence rate of RLA using an oracle to sample from the natural Brownian motion on the manifold. Given the current limitations in our understanding of the integrators for non-Euclidean samplers, we believe it is crucial to investigate the integrators more thoroughly and explore alternative integrators.

\paragraph{Contribution} 
We provide (to our knowledge) the first convergence rate of discretized RHMC on a class of numerical integrators. We consider a general class of constrained distributions that can be written as
\begin{equation}
	e^{-f(x)}\text{ subject to }x\in\mathcal{M},\label{eq:constrained_dist}
\end{equation}
where we assume $f$ is a convex function and $\mathcal{M}\subset\R^{n}$
is a convex body with a (highly) self-concordant barrier. We give theoretical guarantees showing
that a large class of integrators can maintain smoothness and condition number independence when sampling from distributions in the form of $e^{-\alpha^{\top}x}$ on a polytope with $m$ constraints. In fact, many applications can be written in this form because 
any log-concave density in the form of $\eqref{eq:constrained_dist}$ can be reduced to 
\begin{equation}
	e^{-t} \text{ subject to } (x,t)\in\mathcal{M}', 
\end{equation}
where $\mathcal{M}'=\left\{ (x,t):f(x)\leq t,x\in\mathcal{M}\right\}$ is convex in $(x, t)$.
We show for distributions in the form of $e^{-\alpha^{\top}x}$, the implicit midpoint method (IMM) and the generalized Leapfrog method (LM) have a mixing time of $\widetilde{O}\left(mn^{3}\right)$
to achieve $\epsilon$ total variation distance to the target distribution.
In addition, we give a general convergence result on sampling from distributions in the form of $e^{-f(x)}$ on a convex body in terms of parameters of the manifold and the integrator, which can be useful for future works that analyze the convergence rate on other integrators or distributions. 

While numerical integration is a rich and active field \cite{hairer2006geometric}, and the study of the local convergence of numerical estimators is quite sophisticated, we are not aware of global polynomial-time mixing time guarantees based on commonly-used numerical integrators such as IMM and LM. Our convergence result is the theoretical foundation of \cite{kook2022sampling}
and extends \cite{lee2018convergence} to settings of practical importance. Our results apply to not only IMM and LM, but also a more general class of symplectic and time-reversible integrators that satisfies a sensitivity condition, which advances our understanding of integrators for RHMC and the more general non-Euclidean samplers.

Moreover, in our algorithm, we use a Metropolis filter to correct the distribution, which is a crucial step for high-accuracy sampling. To address the discretization issues of RLA and MLD, applying a Metropolis filter to correct the bias is one potential solution. Nevertheless, to the best of our knowledge, there is no general-purpose metropolized non-Euclidean Langevin algorithm in the literature. We believe that our analysis of metropolized RHMC can provide valuable insights into the design and analysis of future metropolized non-Euclidean Langevin algorithms.

It is important for readers to be aware that although the convergence rate we obtain is independent of the condition number, the convergence rate is likely to be far from optimal due to the complicated analysis of the integrators used. To couple the discretized and ideal RHMC in our analysis, we need a step size much smaller than what is typically required in practice. \cite{kook2022sampling} demonstrated that RHMC with IMM can achieve sublinear mixing times in dimension on metabolic networks and structured polytopes including hypercubes, simplices, and Birkhoff polytopes. We believe a tighter convergence bound is possible with more advanced analysis. 

\subsection{Prior work}

The convergence rate of MCMC methods in sampling from a convex body has been a topic of active research for decades (see \cite{lee2022manifold} for a more detailed discussion). 
The mixing time of ball walk on isotropic log-concave density is bounded by $\widetilde{O}(n^2)$ from a warm start \cite{kannan1997random}, where a convex body can be put into a near isotropic position in $\widetilde{O}(n^3)$ membership queries \cite{jia2021reducing}. Dikin walk uses the local geometry to improve the mixing rate to $O(mn)$ on polytopes, where $m$ is the number of constraints. Moreover, due to its affine invariance, there is no need to put the polytope into an isotropic position. With an LS barrier \cite{lee2014path}, Dikin walk can achieve a mixing rate of $\widetilde{O}(n^2)$ for any polytope \cite{laddha2020strong}. Geodesic walk utilizes non-Euclidean geometry by taking a random walk on a manifold. Geodesic walk with an exact exponential map and a Metropolis filter can converge to the uniform density in $O(mn^{3/4})$ steps \cite{lee2017geodesic}.  Continuous RHMC avoids the use of a Metropolis filter due to its measure preservation and time reversibility, which further improves the mixing time to $O(mn^{2/3})$ \cite{lee2018convergence} on uniform density. Our paper extends the mixing time result to discretized RHMC with feasible integrators on more general distributions. Note that even an extension to distribution $e^{-\alpha^\top x}$ needs nontrivial work to avoid dependence on quantities such
as the domain diameter. 
%!TEX root = ./main.tex

\section{RHMC with numerical integrators}

\subsection{Basics of RHMC}
Hamiltonian Monte Carlo (HMC) is one of the most widely used MCMC methods and is the default sampler implementation in many sampling packages (\cite{stan,salvatier2016probabilistic,bingham2019pyro,kook2022sampling}).
HMC introduces an auxiliary velocity variable $v$ in addition to the position $x$, defines a joint density on $(x,v)$, and determines its trajectory according to the \emph{Hamiltonian dynamics}.
The Hamiltonian dynamics is characterized by the \emph{Hamiltonian equations}, the first-order differential equations of the \emph{Hamiltonian} $H$ with respect to $x$ and $v$.
The Hamiltonian has a natural interpretation as the total energy of a particle consisting of the kinetic and potential energy at position $x$ with velocity $v$. 

The dynamic can be naturally generalized to the setting of Riemannian manifold with local metric $\{g(x)\}_{x\in \mcal}$.
A natural extension of the Hamiltonian is given by 
\[
H(x,v)=f(x)+\half v^{\top}\ginv xv+\half\log\det g(x),
\]
with $g(x)$ viewed as a positive-definite matrix. 
For later use, we split $H$ into two parts $H_{1}(x,v)=f(x)+\half\log\det g(x)$
and $H_{2}(x,v)=\half v^{\top}\ginv xv$. A curve $(x(t),v(t))\in\mcal\times T_{x}\mcal\subset\Rn\times\Rn$ is called the \emph{Hamiltonian curve} if it is the solution to the Hamiltonian equations:
\begin{align}
\frac{dx}{dt} & =\frac{\del H}{\del v}(x,v)=\ginv xv,\nonumber \\
\frac{dv}{dt} & =-\frac{\del H}{\del x}(x,v)=-\Par{\underbrace{\grad f(x)+\half\tr\Brack{g(x)^{-1}Dg(x)}}_{\frac{\del H_{1}}{\del x}}+\underbrace{\Par{-\half Dg(x)\Brack{\frac{dx}{dt},\frac{dx}{dt}}}}_{\frac{\del H_{2}}{\del x}}}.\label{eq:hmc_intro}
\end{align}
When clear from context, the Hamiltonian curve refers to $x(t)\in\mcal$
only. The Hamiltonian curves $(x(t), v(t))$ have several geometric properties. For a map $F_t: (x,v) \mapsto (x(t),v(t))$,
\begin{enumerate}
\item Hamiltonian preservation: $\frac{d}{dt}H(x(t),v(t))=0.$
\item Symplectic: $DF_{t}(x,v)^{\top}\cdot J\cdot DF_{t}(x, v) = J$ for any $t\geq0$ and $J=\left[\begin{array}{cc}
0 & I_{n}\\
-I_{n} & 0
\end{array}\right]$.
\item Measure-preservation: $\det(DF_{t}(x, v)=1$ for any $t\geq0$. Note that measure-preservation immediately follows from symplecticity.
\item Time-reversible: $F_{t}(x(t),-v(t))=(x,-v)$.
\end{enumerate}

Just as the Hamiltonian dynamics can be extended to the Riemannian setting, \cite{girolami2011riemann} extended HMC to a Riemannian version called Riemannian Hamiltonian Monte Carlo (RHMC); see Algorithm~\ref{alg:RHMC} for its one-step description.
In fact, HMC can be recovered from RHMC using the Euclidean metric (i.e., $g(x)=I$).

Our goal is to sample from a probability density proportional to $e^{-f(x)}$ supported on a convex body. 
To this end, we use RHMC with the Hamiltonian $H:\mcal\times\R^{n}\subset\R^{n}\times\Rn\to\R$, viewing the convex body as a Riemannian manifold $\mcal$ with a local metric $g$.

\begin{algorithm2e}[h!]

\caption{$\texttt{Riemannian Hamiltonian Monte Carlo}$}\label{alg:RHMC}

\SetAlgoLined

\textbf{Input:} Initial point $x$, step size $h$

\tcp{Step 1: Sample an initial velocity $v$}

Sample $v\sim\mathcal{N}(0,g(x))$.

\ 

\tcp{Step 2: Solve the Hamiltonian equations}

Solve the Hamiltonian equations (\ref{eq:hmc_intro}) to obtain $(x(t),v(t))$.

\

\tcp{Step 3: Metropolis-filter (skipped for ideal RHMC)}
Accept $x(h)$ with probability $\min \Par{1 , \frac{e^{-H(x(h),v(h))}}{e^{-H(x,v)}}}$. 
Otherwise, stay at $x$.
\end{algorithm2e}

\subsection{Notation and setting}

We use $\Par{\mcal,g}$ to denote a connected and compact Riemannian manifold
with a boundary and a metric $g$ on which a target distribution is
supported. For a function $f:\mcal\subset\Rn\to\R$, we denote a target
distribution by $\pi(x)$ whose density is proportional to $e^{-f(x)}$
(i.e., $\frac{d\pi}{dx}\sim e^{-f(x)}$). We use $T_{x}\mcal$ to
denote the tangent space of $\mcal$ at $x\in\mcal$. We denote by
$\pi_{x}$ the projection map onto $x$-space (i.e., $\pi_{x}(x,v)\defeq x$)
and by $i_{x}$ the inclusion map (i.e., $i_{x}(v)\defeq(x,v)$).
We reserve $h$ for the step size of RHMC. 

With both manifold $\mcal$ and tangent space $T_{x}\mcal$ endowed with the Euclidean metric, we define a map $F_{t}:\mcal\times T_{x}\mcal\to\mcal\times \bigcup_{z\in \mcal} T_{z}\mcal$ by $F_{t}(x,v)\defeq(x(t),v(t))$, where $(x(t),v(t))$ is the solution to the Hamiltonian equations at time $t$ with an initial condition $(x,v)$. 
In particular, we define $T_{x,h}:T_{x}\mcal\to\mcal$ by $T_{x,h}(v)\defeq(\pi_{x}\circ F_{h} \circ i_x)(v)=x(h)$. 
When both $\mcal$ and $T_{x}\mcal$ are endowed with the local metric $g$, we instead use $\ham_{x,t}:T_{x}\mcal\to\mcal$ defined by $\ham_{x,t}(v)\defeq x(t)$.

When a numerical integrator with step size $h$ outputs $(\bx_{h},\bv_{h})$ by solving the Hamiltonian equations with an initial condition $(x,v)$, we denote $\overline{F}_{h}(x,v)\defeq(\bx_{h},\bv_{h})$ for a function $\overline{F}_{h}:\mcal\times T_{x}\mcal\to\mcal\times \bigcup_{z\in \mcal} T_{z}\mcal$, where the domain and range are endowed with the Euclidean metric.
We define $\overline{T}_{x,h}:T_{x}\mcal\to\mcal$ (endowed with the
Euclidean metric) by $\overline{T}_{x,h}(v)=(\pi_{x}\circ\overline{F}_{h}\circ i_x)(v)=\bx_{h}$.
We drop $h$ from $T_{x,h},\overline{F}_{h}$ and $\overline{T}_{x,h}$
if the step size is clear from context.

We assume that the domain $\mcal\subset\Rn$ with a boundary is convex
and has a (highly) self-concordant barrier $\phi:\mcal\subset\Rn\to\R$
(Definition \ref{def:sc-barrier}), and that the metric $g$ is induced
by the Hessian of the barrier (i.e., $g(x)=\hess\phi(x)$). We denote
the local norm of a vector $v$ by $\norm v_{x}$ or $\norm v_{g(x)}$,
and the Riemannian distance by $d_{\phi}$ (Definition \ref{def:RiemannDistance}).
We use $a\lesssim b$ to indicate that $a\leq cb$ for some universal
constant $c>0$.

\subsection{Discretized RHMC}

We use \emph{ideal} RHMC to denote the algorithm when the Hamiltonian equations in Step 2 is accurately solved without any error. 
However, we cannot expect such an accurate ODE solver to always exist in reality, so numerical integrators with solutions that approximate the accurate ODE solutions are necessary. We use \emph{discretized} RHMC to denote the algorithm when Step 2 of RHMC is solved by a numerical integrator and a Metropolis-filter is used to correct the distribution.

We now define a condition of numerical integrators that plays an important role in our convergence-rate analysis.
\begin{defn} \label{def:stable} 
For a numerical integrator $\overline{F}$ and $(x,v)\in\mcal\times T_{x}\mcal\subset\Rn\times\Rn$, we call $\overline{F}$ \emph{sensitive} at $(x,v)$ if there exists step size $h_{0}(x,v)$ such that the numerical integrator with step size $h$ less than $h_{0}$ satisfies
\[
\frac{\Abs{D\overline{T}_{x,h}(v)}}{\Abs{DT_{x,h}(v')}}\geq0.998,
\]
where $v'$ satisfies $T_{x,h}(v')=\overline{T}_{x,h}(v)$ and the Jacobian $DT$ is taken with respect to the velocity variable. In other words, the solution of the numerical integrator changes almost as fast as the ideal solution does. Unless specified otherwise, a sensitive integrator is additionally assumed to be measure-preserving (i.e., $\det(D\overline{F}_h(x,v))=1$) and time-reversible (i.e., $\overline{F}_h(\overline{x}_h, -\overline{v}_h) = (x, -v)$).
\end{defn}

As a time-reversible numerical integrator is even-order, second-orderness automatically follows. 
That is, for sufficiently small step size $h>0$, $d_{g}(\bx_{h},x_{h})\leq C_{x}(x,v)h^{2}$
and $\norm{\bv_{h}-v_{h}}_{g(x)^{-1}}\leq C_{v}(x,v)h^{2}$ for some functions of $x$ and $v$, $C_{x}$
and $C_{v}$. In other words, the errors
of the numerical integrator $\overline{F}_{h}$ with respect to the
exact ODE solver $F_{h}$ grow at most quadratically in the step size
$h$.

This family of numerical integrators turns out to cover many commonly used integrators in practice. 
For example, the implicit midpoint integrator (IMM) (Algorithm \ref{alg:Leap-IMM}) and the generalized Leapfrog integrator (LM) (Algorithm \ref{alg:LM}) satisfy symplecticity, time-reversibility, and sensitivity (as shown in Section~\ref{sec:numerical}). Measure-preservation gives the simple formula of the acceptance probability in Step 3 of Algorithm~\ref{alg:RHMC}. Measure-preservation together with time-reversibility plays an important role  in showing that the discretized RHMC converges to its stationary distribution with density proportional to $e^{-f(x)}$ (see Theorem 8 in \cite{kook2022sampling}).

\section{Our results} \label{sec:result}

We analyze the mixing time of RHMC discretized by
numerical integrators commonly used in practice, with the Hamiltonian
set to be $H(x,v)=f(x)+\half v^{\top}\ginv xv+\half\log\det g(x).$
Previous analysis of RHMC was based on high accuracy numerical integrators,
which are not always achievable in practice \cite{lee2018convergence}, and
the complexity bounds were derived for uniform
density on a polytope. We extend the setting to sampling exponential
densities with practically feasible integrators. In the next theorem,
we denote $\mcal_{\rho}:=\Brace{x\in\mcal:\norm{\alpha}_{g(x)^{-1}}^{2}\leq10n^{2}\log^{2}\frac{1}{\rho}}$
for $\rho>0$.

\begin{restatable}{thmre}{thmdiscPoly} \label{thm:discPoly} 
Let $\pi$ be a target distribution on a polytope with $m$ constraints
in $\Rn$ such that $\frac{d\pi}{dx}\sim e^{-\alpha^{\top}x}$ for
$\alpha\in\Rn$. Let $\mcal$ be the Hessian manifold of the polytope
induced by the logarithmic barrier of the polytope. Let $\Lambda=\sup_{S\subset\mcal}\frac{\pi_{0}(S)}{\pi(S)}$
be the warmness of the initial distribution $\pi_{0}$. Let $\pi_{T}$
be the distribution obtained after $T$ steps of RHMC discretized
by a sensitive integrator on $\mcal$. For any $\veps>0$,
if for $x\in\mcal_{\frac{\veps}{2\Lambda}}$ and $v\in\Rn$ randomly
drawn from $\ncal(0,g(x))$, we have that with probability at least
$0.99,$ step size $h\leq h_0(x,v)$,
\[
h\leq\frac{10^{-20}}{n^{7/12}\log^{1/2}\frac{\Lambda}{\veps}},\ hC_{x}(x,v)\leq\frac{10^{-20}}{\sqrt{n}},\ h^{2}C_{x}(x,v)\leq\frac{10^{-10}}{n\log\frac{\Lambda}{\veps}}\text{ and }h^{2}C_{v}(x,v)\leq\frac{10^{-10}}{\sqrt{n\log\frac{\Lambda}{\veps}}},
\]
then $\dtv(\pi_{T},\pi)\leq\veps$ for $T=O\Par{mh^{-2}\log\frac{\Lambda}{\veps}}$.
\end{restatable}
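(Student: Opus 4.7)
The plan is to reduce Theorem~\ref{thm:discPoly} to a general convergence bound for metropolized discretized RHMC on a Hessian manifold (the general $e^{-f}$ convergence result the paper advertises in its contributions) and then verify that, when instantiated for $f(x)=\alpha^{\top}x$ on the log-barrier manifold of a polytope, the hypotheses of that general result reduce to quantities depending only on $m$ and $n$ rather than on $\norm{\alpha}_{2}$ or the geometry of the polytope. The skeleton I would follow is: (i) restrict the analysis to the sublevel set $\mcal_{\veps/(2\Lambda)}$, which carries essentially all relevant $\pi$-mass; (ii) bound the one-step Metropolis rejection probability using second-orderness of the integrator; (iii) bound the one-step TV overlap between the discretized RHMC kernel at two nearby points via the sensitivity condition of Definition~\ref{def:stable}; and (iv) combine these one-step bounds with the isoperimetric inequality on the log-barrier Hessian manifold to run an $s$-conductance argument, yielding $\dtv(\pi_{T},\pi)\leq\veps$ after $T=O(mh^{-2}\log(\Lambda/\veps))$ steps.

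For step (ii), I would decompose the one-step Hamiltonian change as
\[
H(\bx_{h},\bv_{h})-H(x,v) \;=\; \Par{H_{1}(\bx_{h})-H_{1}(x)}+\Par{H_{2}(\bx_{h},\bv_{h})-H_{2}(x,v)},
\]
and bound each summand using the second-order errors $d_{g}(\bx_{h},x_{h})\leq C_{x}h^{2}$ and $\norm{\bv_{h}-v_{h}}_{g(x)^{-1}}\leq C_{v}h^{2}$ (which follow from time-reversibility) together with conservation of $H$ along the ideal flow. The four step-size conditions in the hypothesis are calibrated precisely so that, with probability at least $0.99$ over $v\sim\ncal(0,g(x))$, each contribution is $O(1)$ in Hamiltonian units, producing a uniformly constant acceptance probability. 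For step (iii), sensitivity gives $\Abs{D\bTx(v)}/\Abs{DT_{x}(v')}\geq 0.998$; combined with measure-preservation and time-reversibility, this lets me couple one step of discretized RHMC to one step of ideal RHMC with TV error $\widetilde O(h^{2})$ per step, small enough that the $s$-conductance of the discretized chain is within a constant factor of that of the ideal chain.

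The main obstacle will be making the constants in (iii) independent of $\norm{\alpha}_{2}$. The right-hand side of \eqref{eq:hmc_intro} contains $\grad f(x)=\alpha$, so a direct Gr\"onwall-type estimate along $(x(t),v(t))$ would pick up factors of $\norm{\alpha}_{2}$. The remedy is to work throughout in the local norm: the governing quantity is $\norm{\alpha}_{g(x)^{-1}}$, which by the very definition of $\mcal_{\rho}$ is $\widetilde O(n)$ on $\mcal_{\veps/(2\Lambda)}$, and self-concordance of the logarithmic barrier ensures that $g(x(t))$ varies slowly along short Hamiltonian segments, so $\norm{\alpha}_{g(x(t))^{-1}}$ stays controlled over the whole step. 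Once this localization is in place, the one-step overlap computation parallels that of continuous RHMC on polytopes in \cite{lee2018convergence}, and combining it with the Cheeger-type constant $\Omega(1/\sqrt{m})$ of the log-barrier Hessian manifold yields the stated mixing time $T=O(mh^{-2}\log(\Lambda/\veps))$.
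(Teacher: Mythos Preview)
Your plan is correct and matches the paper's approach: Theorem~\ref{thm:discPoly} is proved precisely by invoking the general $e^{-f}$ result (Theorem~\ref{thm:discGen}), whose one-step coupling decomposes through the ideal chain as you outline, and then verifying its step-size hypotheses by computing the polytope parameters $M_1,R_1,R_2,R_3,\ell_0,\ell_1$ on $\mcal_{\veps/(2\Lambda)}$ (where $\norm{\alpha}_{g(x)^{-1}}=\widetilde O(n)$ eliminates the $\norm{\alpha}_2$ dependence, exactly as you say) and plugging in the isoperimetry $\psi_{\mcal_\rho}\gtrsim 1/\sqrt{m}$. Two minor sharpenings: in (ii) you should compare $H(\bx_h,\bv_h)$ to $H(x_h,v_h)$ rather than to $H(x,v)$ so that the second-order errors $C_xh^2,C_vh^2$ apply directly (your written $H_1/H_2$ decomposition has $O(h)$ terms that only cancel after using conservation); and in (iii) the discretized-to-ideal coupling is a small \emph{constant} bound, not $\widetilde O(h^2)$, and needs beyond sensitivity a Banach fixed-point construction of $v^*$ with $T_x(v^*)=\bTx(v)$ and $\norm{v-v^*}_{g^{-1}}=O(hC_x)$ to control the Gaussian density ratio---measure-preservation and time-reversibility are used only to ensure the Metropolized chain has the correct stationary distribution, not for this coupling.
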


By setting $C_{x} = C_{v}=0$, we can obtain the following corollary for the mixing time of the ideal RHMC in this setting.

\begin{restatable}{corre}{coridealPoly} \label{cor:idealPoly} 
Let $\pi$ be a target distribution on a polytope with $m$ constraints
in $\Rn$ such that $\frac{d\pi}{dx}\sim e^{-\alpha^{\top}x}$ for
$\alpha\in\Rn$. Let $\mcal$ be the Hessian manifold of the polytope
induced by the logarithmic barrier of the polytope. Let $\Lambda=\sup_{S\subset\mcal}\frac{\pi_{0}(S)}{\pi(S)}$
be the warmness of the initial distribution $\pi_{0}$. Let $\pi_{T}$
be the distribution obtained after $T$ iterations of the ideal RHMC
on $\mcal$. For any $\veps>0$ and step size $h=O\Par{\frac{1}{n^{7/12}\log^{1/2}\frac{\Lambda}{\veps}}}$,
there exists $T=O\Par{mn^{7/6}\log^{2}\frac{\Lambda}{\veps}}$ such
that $\dtv(\pi_{T},\pi)\leq\veps$.
\end{restatable}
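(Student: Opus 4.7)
The plan is to derive Corollary~\ref{cor:idealPoly} as a direct specialization of Theorem~\ref{thm:discPoly} by sending the integrator error constants to zero. In the ideal RHMC setting, Step~2 of Algorithm~\ref{alg:RHMC} is solved exactly, so the ``numerical integrator'' $\overline{F}_h$ coincides with the true Hamiltonian flow $F_h$. Consequently $\bar x_h = x_h$ and $\bar v_h = v_h$ for every $(x,v)$ and every $h>0$, so one may take the integrator-error functions $C_x(x,v)=C_v(x,v)=0$ and the sensitivity threshold $h_0(x,v)=+\infty$; moreover, since $H(x_h,v_h)=H(x,v)$ by Hamiltonian preservation, the Metropolis ratio in Step~3 equals $1$. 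In particular the chain without the Metropolis step (ideal RHMC) produces the same distribution as the chain with a vacuous Metropolis step, so both the hypotheses and the conclusion of Theorem~\ref{thm:discPoly} apply verbatim.

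With $C_x\equiv C_v\equiv 0$, the three step-size conditions $hC_x\le 10^{-20}/\sqrt{n}$, $h^{2}C_x\le 10^{-10}/(n\log(\Lambda/\veps))$, and $h^{2}C_v\le 10^{-10}/\sqrt{n\log(\Lambda/\veps)}$ are satisfied trivially (with probability $1$ over the Gaussian draw of $v$), and the constraint $h\le h_0(x,v)$ is vacuous. The only remaining constraint coming from Theorem~\ref{thm:discPoly} is therefore
\[
h \;\le\; \frac{10^{-20}}{n^{7/12}\log^{1/2}(\Lambda/\veps)},
\]
so we are free to choose $h = \Theta\Par{1/(n^{7/12}\log^{1/2}(\Lambda/\veps))}$, which matches the range of step sizes allowed by the corollary. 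The $\mcal_{\veps/(2\Lambda)}$ and warm-start hypotheses of Theorem~\ref{thm:discPoly} are independent of the integrator and transfer without change.

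Plugging this choice of $h$ into the mixing-time bound $T = O\Par{m\,h^{-2}\log(\Lambda/\veps)}$ supplied by Theorem~\ref{thm:discPoly} then yields
\[
T \;=\; O\!\left(m\cdot n^{7/6}\log(\Lambda/\veps)\cdot\log(\Lambda/\veps)\right) \;=\; O\Par{m\,n^{7/6}\log^{2}(\Lambda/\veps)},
\]
which is the claimed bound. Since the argument is purely substitutional, there is no genuine obstacle beyond Theorem~\ref{thm:discPoly} itself; the only points worth double-checking are that the sensitivity ratio in Definition~\ref{def:stable} degenerates to $1$ when $\overline{F}_h=F_h$ (so that the ``sensitive integrator'' hypothesis is non-vacuously satisfied in the limit), and that the acceptance-probability formula in Step~3 reduces to $1$ under exact Hamiltonian preservation, both of which we have already recorded.
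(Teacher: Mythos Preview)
Your proposal is correct and matches the paper's own derivation: the paper obtains Corollary~\ref{cor:idealPoly} simply by setting $C_x=C_v=0$ in Theorem~\ref{thm:discPoly}, which leaves only the step-size constraint $h\lesssim n^{-7/12}\log^{-1/2}(\Lambda/\veps)$ and yields $T=O(mh^{-2}\log(\Lambda/\veps))=O(mn^{7/6}\log^2(\Lambda/\veps))$. Your added remarks on why sensitivity and the Metropolis filter become vacuous in the ideal case are accurate and make the specialization explicit.
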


After we compute the parameters $C_{x}$ and $C_{v}$ of IMM and LM
(see Section \ref{sec:numerical} and \ref{sec:polyotopes}), and identify the
sufficient conditions on the step size for their sensitivity, the following mixing
times of RHMC discretized by IMM or LM immediately follow.

\begin{restatable}{corre}{corimmPoly} \label{cor:immPoly} 
Let $\pi$ be a target distribution on a polytope with $m$ constraints
in $\Rn$ such that $\frac{d\pi}{dx}\sim e^{-\alpha^{\top}x}$ for
$\alpha\in\Rn$. Let $\mcal$ be the Hessian manifold of the polytope
induced by the logarithmic barrier of the polytope. Let $\Lambda=\sup_{S\subset\mcal}\frac{\pi_{0}(S)}{\pi(S)}$
be the warmness of the initial distribution $\pi_{0}$. Let $\pi_{T}$
be the distribution obtained after $T$ iterations of RHMC discretized
by IMM on $\mcal$. For any $\veps>0$ and step size $h=O\Par{\frac{1}{n^{3/2}\log\frac{\Lambda}{\veps}}}$,
there exists $T=O\Par{mn^{3}\log^{3}\frac{\Lambda}{\veps}}$ such
that $\dtv(\pi_{T},\pi)\leq\veps$.
\end{restatable}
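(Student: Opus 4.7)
The plan is to obtain Corollary~\ref{cor:immPoly} as a direct specialization of Theorem~\ref{thm:discPoly} to the implicit midpoint method on the Hessian manifold of the logarithmic barrier. The theorem already reduces mixing-time analysis to checking (i) that the chosen integrator belongs to the ``sensitive'' class of Definition~\ref{def:stable}, and (ii) that its local error constants $C_{x}(x,v)$ and $C_{v}(x,v)$, together with its sensitivity threshold $h_{0}(x,v)$, meet the four step-size constraints with probability at least $0.99$ for a Gaussian velocity $v\sim\ncal(0,g(x))$ at a typical point $x\in\mcal_{\veps/(2\Lambda)}$. So the proof is essentially a matter of importing the IMM-specific estimates from Sections~\ref{sec:numerical} and~\ref{sec:polyotopes} and optimizing the resulting inequality system.

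First I would verify that IMM is a sensitive integrator. Symplecticity and time-reversibility of the implicit midpoint rule (hence measure-preservation) are standard properties of the flow $\overline{F}_{h}$, which I would quote from the integration-theoretic development in Section~\ref{sec:numerical}. The sensitivity bound $\Abs{D\bTx_{,h}(v)}/\Abs{DT_{x,h}(v')}\geq 0.998$ should follow from a direct perturbation comparison between the implicit update and the exact Hamiltonian flow: one linearizes both maps around $(x,v)$, uses the self-concordance of $\phi$ to control the derivatives of $g$ that appear in the Hamiltonian, and chooses $h_{0}(x,v)$ small enough that the Jacobian ratio is controlled by $O(h^{2})$ curvature terms that are bounded in the local norm. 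This analysis also yields an explicit lower bound on $h_{0}(x,v)$.

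Next I would invoke the IMM error estimates established in Section~\ref{sec:polyotopes} to bound $C_{x}$ and $C_{v}$ for the logarithmic barrier of an $m$-constraint polytope. The key ingredients are Taylor expansion of the IMM update against the exact flow up to second order, combined with self-concordance estimates that control $Dg$ and $D^{2}g$ in terms of $g$, and concentration of $\norm{v}_{g(x)^{-1}}^{2}$ for a Gaussian velocity. Together with the restriction $x\in\mcal_{\veps/(2\Lambda)}$, which caps $\norm{\alpha}_{g(x)^{-1}}^{2}$ by $O(n^{2}\log^{2}(\Lambda/\veps))$, this gives polynomial-in-$n$ and polylogarithmic-in-$\Lambda/\veps$ bounds for $C_{x}$ and $C_{v}$ that hold with probability at least $0.99$.

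Finally I would substitute the bounds on $C_{x},C_{v},h_{0}$ into the four step-size conditions of Theorem~\ref{thm:discPoly}, identify the binding constraint, and read off the claimed $h$ and $T$. The expected binding condition is $h^{2}C_{v}(x,v)\lesssim 1/\sqrt{n\log(\Lambda/\veps)}$ (with the $C_{v}$ bound for IMM being the largest of the four), which yields $h=O(1/(n^{3/2}\log(\Lambda/\veps)))$; the remaining three inequalities are then satisfied by the same $h$, and Theorem~\ref{thm:discPoly} gives $T=O(mh^{-2}\log(\Lambda/\veps))=O(mn^{3}\log^{3}(\Lambda/\veps))$. The main obstacle, and the reason this ``immediate'' corollary still requires real work elsewhere in the paper, is step~three: producing $C_{x},C_{v}$ bounds that depend on $n$ and $\log(\Lambda/\veps)$ only, without any hidden dependence on the diameter, the condition number of the polytope, or $\norm{\alpha}_{2}$, which is exactly the affine-invariant miracle that the barrier-induced metric is supposed to buy us and which has to be extracted carefully from the self-concordance inequalities when unrolling the IMM update.
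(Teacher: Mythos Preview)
Your plan is essentially the paper's own: specialize Theorem~\ref{thm:discPoly} using the IMM-specific second-order constants $C_{x},C_{v}$ (Lemma~\ref{lem:leapIMMsecondfinal}) and the sensitivity threshold (Lemma~\ref{lem:leapIMMstabilityfinal}) computed in Sections~\ref{sec:numerical}--\ref{sec:polyotopes}, then read off $h$ and $T$. One correction: with the paper's bounds $C_{x}=O(n\log\tfrac{1}{\rho})$ and $C_{v}=O(n^{3/2}\log^{3/2}\tfrac{1}{\rho})$, the binding step-size constraint is $hC_{x}\lesssim n^{-1/2}$, which gives $h=O\bigl(n^{-3/2}\log^{-1}\tfrac{\Lambda}{\veps}\bigr)$; the $h^{2}C_{v}$ condition only forces $h\lesssim n^{-1}\log^{-1}\tfrac{1}{\rho}$ and is not the bottleneck, so your stated reason for the final step size is off even though the value you land on is correct.
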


\begin{restatable}{corre}{corleapPoly} \label{cor:leapPoly} 
Let $\pi$ be a target distribution on a polytope with $m$ constraints
in $\Rn$ such that $\frac{d\pi}{dx}\sim e^{-\alpha^{\top}x}$ for
$\alpha\in\Rn$. Let $\mcal$ be the Hessian manifold of the polytope
induced by the logarithmic barrier of the polytope. Let $\Lambda=\sup_{S\subset\mcal}\frac{\pi_{0}(S)}{\pi(S)}$
be the warmness of the initial distribution $\pi_{0}$. Let $\pi_{T}$
be the distribution obtained after $T$ iterations of RHMC discretized
by LM on $\mcal$. For any $\veps>0$ and step size $h=O\Par{\frac{1}{n^{3/2}\log\frac{\Lambda}{\veps}}}$,
there exists $T=O\Par{mn^{3}\log^{3}\frac{\Lambda}{\veps}}$ such
that $\dtv(\pi_{T},\pi)\leq\veps$.
\end{restatable}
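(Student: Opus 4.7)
The plan is to derive Corollary \ref{cor:leapPoly} as a direct consequence of Theorem \ref{thm:discPoly} specialized to the generalized Leapfrog integrator (LM). The theorem already reduces the analysis to two integrator-specific tasks: (i) confirming that LM is a sensitive integrator in the sense of Definition \ref{def:stable}, and (ii) producing high-probability bounds on the quadratic-error coefficients $C_{x}(x,v)$ and $C_{v}(x,v)$ when $x\in\mcal_{\veps/(2\Lambda)}$ and $v\sim\ncal(0,g(x))$. Both tasks are carried out in Sections \ref{sec:numerical} and \ref{sec:polyotopes}; the content of this corollary is then simply to choose $h$ so that all four inequalities in the hypothesis of Theorem \ref{thm:discPoly} are simultaneously satisfied, and to read off the resulting iteration count.

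First I would invoke Section \ref{sec:numerical} to assert that LM is symplectic (hence measure-preserving), time-reversible, and sensitive for every step size $h$ below a geometry-dependent threshold $h_{0}(x,v)$, which on the log-barrier Hessian manifold of a polytope is at worst polynomial in $n$ on $\mcal_{\veps/(2\Lambda)}$ and will therefore be dominated by the final choice of $h$. Next I would import the bounds on $C_{x}(x,v)$ and $C_{v}(x,v)$ for LM proved in Section \ref{sec:polyotopes}; these control the cubic remainder of the implicit half-step against the self-concordance of the log barrier. Conditioning on $v\sim\ncal(0,g(x))$, Gaussian concentration for $\norm{v}_{g(x)}$ combined with the defining bound $\norm{\alpha}_{g(x)^{-1}}^{2}\leq 10 n^{2}\log^{2}(\Lambda/\veps)$ on $\mcal_{\veps/(2\Lambda)}$ yields, with probability at least $0.99$, bounds on $C_{x},C_{v}$ which, when inserted into the four step-size constraints of Theorem \ref{thm:discPoly}, are all satisfied by $h=O\bigl(1/(n^{3/2}\log(\Lambda/\veps))\bigr)$. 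The binding constraint is $h^{2}C_{x}(x,v)\leq 10^{-10}/(n\log(\Lambda/\veps))$, which is exactly the one fixing the $n^{-3/2}$ scaling; the remaining three hold under weaker scalings and are automatic.

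Substituting this $h$ into the iteration count $T=O\bigl(mh^{-2}\log(\Lambda/\veps)\bigr)$ supplied by Theorem \ref{thm:discPoly} gives $T=O\bigl(mn^{3}\log^{3}(\Lambda/\veps)\bigr)$, matching the claim. The genuinely hard part is not this bookkeeping but the underlying control of $C_{x}$ and $C_{v}$ for LM: because LM defines its update through an implicit half-step equation in the velocity, perturbations propagate through a fixed-point iteration, and bounding the second-order error in the Hessian norm requires a careful use of the self-concordance of the logarithmic barrier together with the non-Euclidean form of the Hamiltonian vector field. Once those estimates from Section \ref{sec:polyotopes} are in hand, together with the sensitivity verification of Section \ref{sec:numerical}, the corollary follows by the step-size arithmetic sketched above.
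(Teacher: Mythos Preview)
Your approach is the same as the paper's: invoke the LM-specific lemmas from Sections \ref{sec:numerical} and \ref{sec:polyotopes} (namely $C_{x}=O(n\log\frac{1}{\rho})$, $C_{v}=O(n^{3/2}\log^{3/2}\frac{1}{\rho})$, and the sensitivity regime), verify the step-size hypotheses of Theorem \ref{thm:discPoly}, and read off $T$. However, your bookkeeping contains an error that you should fix.

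You assert that the binding constraint is $h^{2}C_{x}(x,v)\leq 10^{-10}/(n\log\frac{\Lambda}{\veps})$ and that this ``is exactly the one fixing the $n^{-3/2}$ scaling''. With $C_{x}=O(n\log\frac{1}{\rho})$ this inequality only forces $h^{2}\lesssim 1/(n^{2}\log^{2}\frac{1}{\rho})$, i.e.\ $h=O(n^{-1}\log^{-1}\frac{1}{\rho})$, which is \emph{weaker} than $n^{-3/2}$. The constraints that actually pin down $h=O(n^{-3/2}\log^{-1}\frac{\Lambda}{\veps})$ are (i) $hC_{x}(x,v)\leq 10^{-20}/\sqrt{n}$, which with $C_{x}=O(n\log\frac{1}{\rho})$ gives $h\lesssim n^{-3/2}\log^{-1}\frac{1}{\rho}$, and (ii) the LM sensitivity threshold itself (Lemma \ref{lem:leapstabilityfinal}), which requires $h^{2}n^{3}\log\frac{1}{\rho}\leq 10^{-20}$, i.e.\ $h\lesssim n^{-3/2}\log^{-1/2}\frac{1}{\rho}$. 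In particular your claim that the sensitivity threshold $h_{0}(x,v)$ is ``dominated by the final choice of $h$'' is misleading: for LM it is one of the two tightest constraints in the $n$-dependence, not a slack one. The paper's proof explicitly cites Lemma \ref{lem:stableLeap}/\ref{lem:leapstabilityfinal} for exactly this reason. Once you correct which inequalities are binding, the rest of your argument goes through unchanged.
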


In fact, Theorem \ref{thm:discPoly} comes from a general result on the mixing time of RHMC for density $e^{-f}$ on a convex body $\mcal\subset\Rn$.
We provide its informal version here and defer its full statement (Theorem \ref{thm:discGen}) to Section \ref{sec:discretizedRHMC}.
\begin{thm*}(Informal) Let $\pi$ be a target distribution on a convex set $\mcal\subset\Rn$
and $\Lambda=\sup_{S\subset\mcal}\frac{\pi_{0}(S)}{\pi(S)}$ be the
warmness of the initial distribution $\pi_{0}$. Let $\mcal$ be the
Hessian manifold with its metric induced by the Hessian of a highly
self-concordant barrier and $\pi_{T}$ the distribution obtained after
$T$ steps of RHMC discretized by a sensitive integrator on $\mcal$. 
For any $\veps>0$, let $\mcal_{\frac{\veps}{2\Lambda}}\subset\mcal$ be a convex subset of measure at least $1-\frac{\veps}{2\Lambda}$.
There is a step size bound $h_{0}$, defined in terms of smoothness parameters of the manifold and the integrator, so that for any step size $h\le h_{0}$, there exists $T=O\Par{\Par{h\psi_{\mcal_{\frac{\veps}{2\Lambda}}}}^{-2}\log\frac{\Lambda}{\veps}}$ where $\psi_{\mcal_{\frac{\veps}{2\Lambda}}}$ is the isoperimetry
of $\mcal_{\frac{\veps}{2\Lambda}}$, such that $\dtv(\pi_{T},\pi)\leq\veps$.
\end{thm*}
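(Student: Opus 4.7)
The plan is to adapt the standard conductance-based mixing-time framework for reversible Markov chains (Lov\'{a}sz--Simonovits) to the discretized RHMC setting, using the sensitivity hypothesis on the integrator to recover the push-forward properties that measure preservation gives for free in the ideal-RHMC analysis of \cite{lee2018convergence}. The overall strategy has four parts: (1) truncate to the convex subset $\mcal_{\veps/(2\Lambda)} \subset \mcal$ of $\pi$-measure at least $1 - \veps/(2\Lambda)$, absorbing the excursions outside it into a final $\veps/2$-slack via the warmness $\Lambda$; (2) prove a one-step overlap lemma saying that for $x, y \in \mcal_{\veps/(2\Lambda)}$ with $d_\phi(x,y) \lesssim h$, the discretized one-step distributions $\bar P_x$ and $\bar P_y$ satisfy $\dtv(\bar P_x, \bar P_y) \leq 1 - c$ for a universal constant $c > 0$; (3) convert this into a conductance bound $\Phi \gtrsim h\,\psi_{\mcal_{\veps/(2\Lambda)}}$ via the isoperimetric inequality on the Hessian manifold; (4) apply the warm-start mixing bound $\dtv(\pi_T, \pi) \leq \sqrt{\Lambda}\,(1 - \Phi^2/2)^T$ to conclude. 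The step-size threshold $h_0$ of the informal statement is precisely the conjunction of the smoothness conditions required for (2)--(3).

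The technical core is the one-step overlap lemma. For $x, y \in \mcal_{\veps/(2\Lambda)}$ I would couple the Gaussian velocities $v \sim \ncal(0, g(x))$ and $v' \sim \ncal(0, g(y))$ using the self-concordance of $\phi$ to match their covariances up to a constant factor. For the coupled pair I would (a) use smoothness of the ideal Hamiltonian flow over time $h$ to bound $d_\phi(T_{x,h}(v), T_{y,h}(v'))$ in terms of $d_\phi(x,y)$ and $\norm{v - v'}_{g(x)}$; (b) pass from the ideal to the discretized endpoints using the second-order error bound $d_g(\bx_h, x_h) \lesssim C_x(x,v)\,h^2$ together with the sensitivity hypothesis $|D\bar T_{x,h}(v)| \geq 0.998\,|DT_{x,h}(v')|$, so the push-forward of the Gaussian onto position space is preserved up to a constant; (c) lower-bound the Metropolis acceptance probability by $\Omega(1)$ by bounding the single-step Hamiltonian drift $|H(\bar F_h(x,v)) - H(x,v)| = O(h^3)$ from second-order accuracy, invoking measure preservation and time reversibility to obtain the clean acceptance ratio $e^{-\Delta H}$. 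The step-size constraints on $h$, $hC_x$, $h^2 C_x$, and $h^2 C_v$ are exactly what is needed for (a)--(c) to hold simultaneously on a subset of velocities of constant Gaussian probability.

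With the overlap lemma in hand, the conductance argument is standard: for any $S \subset \mcal$ with $\pi(S) \leq 1/2$, the isoperimetric inequality on $\mcal_{\veps/(2\Lambda)}$ gives the Riemannian $h$-neighborhood of $\partial S$ mass at least $h\,\psi_{\mcal_{\veps/(2\Lambda)}}\,\min(\pi(S), \pi(S^c))$, while the overlap lemma shows that any $x$ in this neighborhood transitions outside $S$ with probability at least $c/2$. This yields $\Phi \gtrsim h\,\psi_{\mcal_{\veps/(2\Lambda)}}$, and the warm-start mixing bound then gives $T = O\bigl((h\,\psi_{\mcal_{\veps/(2\Lambda)}})^{-2} \log(\Lambda/\veps)\bigr)$ after adding the $\veps/2$ truncation slack.

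The main obstacle is step (b)--(c) of the overlap lemma: showing that the discretized flow retains enough of the ideal flow's smoothness and measure-theoretic structure to produce a constant overlap, \emph{without} any dependence on condition number or domain diameter. The sensitivity definition is the precise hypothesis that lets us substitute the Jacobian of $\bar T_{x,h}$ for that of $T_{x,h}$ (which equals $1$ under measure preservation in the ideal case) at cost of a $0.998$ factor; but it has to be combined with the $O(h^3)$ Hamiltonian-drift bound for the Metropolis filter, all uniformly over a subset of Gaussian velocities of constant mass, since $C_x$ and $C_v$ themselves depend on $v$ and on local quantities such as $\norm{\grad f}_{g(x)^{-1}}$. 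Arranging these estimates so that the resulting bound depends only on $n$, $m$, and universal self-concordance constants --- not on $\norm{\alpha}_2$ or the geometry of $\mcal$ --- is the delicate part of the argument, and is where the careful separation of ``good'' from ``bad'' velocities must be performed.
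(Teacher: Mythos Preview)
Your four-part scaffold (truncate to $\mcal_{\veps/(2\Lambda)}$, one-step overlap, conductance, Lov\'asz--Simonovits) is exactly the paper's, and your implicit decomposition of the overlap into (a) ideal-vs-ideal, (b) ideal-vs-discretized at the same point, (c) rejection probability is the same triangle inequality the paper uses: $\dtv(\oprop_x,\oprop_y)\le \dtv(\oprop_x',\pcal_x)+\dtv(\pcal_x,\pcal_y)+\dtv(\pcal_y,\oprop_y')+\dtv(\oprop_x',\oprop_x)+\dtv(\oprop_y',\oprop_y)$.

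The gap is in your execution of (a). Coupling $v\sim\ncal(0,g(x))$ with $v'\sim\ncal(0,g(y))$ and bounding $d_\phi(T_{x,h}(v),T_{y,h}(v'))$ gives Wasserstein-type control, not total-variation control: close endpoints with probability one is compatible with $\dtv(\pcal_x,\pcal_y)=1$. The paper does not couple trajectories. It works directly with the push-forward densities $p_{c(s)}(z)=\sum_{v:\ham_{c(s),h}(v)=z}|D\ham_{c(s),h}(v)|^{-1}p^*_{c(s)}(v)$ and bounds $\int_z\bigl|\frac{d}{ds}\tilde p_{c(s)}(z)\bigr|\,dz$ as the start point $c(s)$ moves along the straight segment from $x$ to $y$. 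This requires, for each fixed endpoint $z$, the existence and smooth dependence of a velocity field $v(c(s))$ with $\ham_{c(s),h}(v(c(s)))=z$ (implicit function theorem plus the Jacobian estimate $|D\ham_{x,h}|\approx h^n$), together with an ODE bound on the operator $\Phi(\gamma,t)$; this is where the smoothness parameters $R_1,R_2,R_3$ and the auxiliary function $\ell$ enter, and it is what yields the quantitative bound $\dtv(\pcal_x,\pcal_y)\le O(1/h)\,d_\phi(x,y)+\tfrac{1}{25}$ needed for the conductance step. Your endpoint-distance approach does not recover this.

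A smaller gap is in (b): the sensitivity hypothesis compares $|D\bar T_{x,h}(v)|$ to $|DT_{x,h}(v^*)|$ for the $v^*$ with $T_{x,h}(v^*)=\bar T_{x,h}(v)$, but you first need to produce $v^*$ and control $\|v-v^*\|_{g^{-1}}$ (to compare the Gaussian factors $p_x^*(v)$ and $p_x^*(v^*)$). The paper does this via a Banach fixed-point argument on the map $\Upsilon(u)=u-\tfrac{1}{h}g(T_x(u)-\bar T_x(v))$, which is where the constraint $hC_x\lesssim 1/\sqrt{n}$ enters. Your step (c) is correct in spirit; the $O(h^3)$ energy drift is right for a second-order symplectic method, though the paper tracks the constants explicitly via $C_x,C_v,M_1,M_1^*$ rather than asserting a pure power of $h$.
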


%!TEX root = ./colt2023-main.tex
 
\section{Technical overview \label{sec:prelim}}

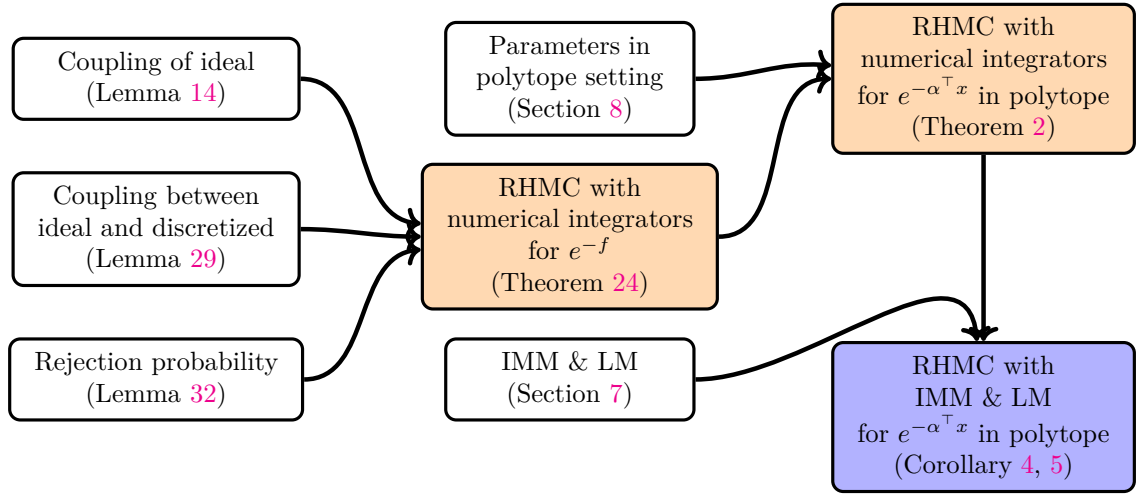
\begin{figure}[t]
	\centering \begin{tikzpicture}
		%[>=stealth,every node/.style={shape=rectangle,draw,rounded corners, minimum width=4.5cm,},]
		% create the nodes
		\node[rectangle,draw,rounded corners, very thick, fill=orange!30, minimum width=2.8cm] at (5.5, -0.1) (l2) {
			\begin{tabular}{c}
				RHMC with\\
				numerical integrators\\ 
				for $e^{-f}$\\
				(Theorem~\ref{thm:discGen})
		\end{tabular}};
		
		\node[rectangle,draw,rounded corners, very thick, minimum width=3.8cm] at (0, 2) (l11) {
			\begin{tabular}{c}
				Coupling of ideal\\
				(Lemma~\ref{lem:onestepIdeal})
		\end{tabular}};
		
		\node[rectangle,draw,rounded corners, very thick, minimum width=3.3cm] at (0, 0) (l12) {
			\begin{tabular}{c}
				Coupling between \\
				ideal and discretized\\
				(Lemma~\ref{lem:tvidealdis})
		\end{tabular}};
		
		\node[rectangle,draw,rounded corners, very thick, minimum width=3.3cm] at (0, -2) (l13) {
			\begin{tabular}{c}
				Rejection probability\\
				(Lemma~\ref{lem:rejFinal})
		\end{tabular}};

		\node[rectangle,draw,rounded corners, very thick, minimum width=3.3cm] at (5.5, -2) (t2) {
			\begin{tabular}{c}
				IMM \& LM\\
				(Section~\ref{sec:numerical})
		\end{tabular}};
		
		\node[rectangle,draw,rounded corners, very thick, minimum width=3.3cm] at (5.5, 2) (t1) {
			\begin{tabular}{c}
				Parameters in\\
				polytope setting\\
				(Section~\ref{sec:polyotopes})
		\end{tabular}};
		
		\node[rectangle,draw,rounded corners, very thick, fill=orange!30, minimum width=2.8cm] at (11, 2) (l3) {
			\begin{tabular}{c}
				RHMC with\\ 
				numerical integrators\\
				for $e^{-\alpha^{\top}x}$
				in polytope\\
				(Theorem~\ref{thm:discPoly})
		\end{tabular}};
		
		\node[rectangle,draw,rounded corners, very thick, fill=blue!30, minimum width=2.8cm] at (11, -2.5) (l4) {
			\begin{tabular}{c}
				RHMC with\\ 
				IMM \& LM\\
				for $e^{-\alpha^{\top}x}$
				in polytope\\
				(Corollary~\ref{cor:immPoly}, \ref{cor:leapPoly})
		\end{tabular}};

		% connect the nodes
		\draw[->,  line width=.6mm] (l11) to[out=0,in=175] (l2);
		\draw[->,  line width=.6mm] (l12) to[out=0,in=180] (l2);
		\draw[->,  line width=.6mm] (l13) to[out=0,in=185] (l2);
		
		\draw[->,  line width=.6mm] (l2) to[out=0,in=180] (l3);
		\draw[->,  line width=.6mm] (t1) to[out=0,in=175] (l3);
		
		\draw[->,  line width=.6mm] (t2) to[out=0,in=95] (l4);
		\draw[->,  line width=.6mm] (l3) to[out=270,in=90] (l4);

		%\draw[orange, thick, dotted, line width=1mm]    ($(t1.north west)+(-0.5,0.3)$) rectangle ($(t2.south east)+(0.5,-0.3)$) 
		%node[black, right] at (2.5, 1.2) {Proven in [cite]};
		%\draw[->,  line width=.6mm] (l1) to[out=90,in=270] node[right]{Reduce} (t1);
		%\draw[->,  line width=.6mm] (l2) to[out=90,in=270] node[left]{Reduce} (t2);
		% \draw[->, dashed,  line width=0.6mm] (l2) to[out=0,in=187] (t2);
		% \draw[->, dashed,  line width=0.6mm] (l4) to[out=180,in=195] (t2);
	\end{tikzpicture} \caption{Proof outline \label{fig:outline}}
\vspace{-3mm}
\end{figure}

In this section, we provide a summary of the key proof ingredients that gives the convergence rate of RHMC with numerical integrators to samples from density $e^{-f(x)}$ on a convex body; see Figure~\ref{fig:outline} for the roadmap.
In Section~\ref{subsec:s-conductance}, we review a general technique using $s$-conductance for bounding the mixing time of a Markov chain.
In Section~\ref{subsec:pfOutline}, we summarize a refined analysis of the ideal RHMC (Section~\ref{sec:idealRHMC}) and the technique to couple the ideal and discretized RHMC (Section \ref{sec:discretizedRHMC}).
Finally in Section~\ref{sec:CandD}, we describe the high-level ideas of our analysis of the numerical integrators (Section~\ref{sec:numerical}), IMM and LM, and how to get the results for sampling from $e^{-\alpha^{\top}x}$ on the Hessian manifolds of polytopes (Section~\ref{sec:polyotopes}).

\subsection{Mixing time via $s$-conductance: isoperimetry and one-step coupling
\label{subsec:s-conductance}}

Consider a Markov chain with a state space $\mcal$, a transition
distribution $\tcal_{x}$ and stationary distribution $\pi$. 
We consider a \emph{lazy} Markov chain to avoid a uniqueness issue
of the stationary distribution. At each step, the lazy version
of the Markov chain does nothing with probability $\frac{1}{2}$
(\textit{i.e.}, stays at where it is and does not move). Note that
this change for the purpose of proof worsens the mixing time only
by a factor of $2$.

We use a standard conductance-based argument in \cite{vempala2005geometric}
to bound the mixing time, which consists of two main ingredients --
\emph{the isoperimetry} and \emph{the total variation (TV) distance coupling of one-step distributions }(Definition \ref{def:tvDistance})
staring from two close points.
\begin{defn}[$s$-conductance] Consider a Markov chain with a state space $\mcal$,
a transition distribution $\tcal_{x}$ and stationary distribution
$\pi$. For any $s\in[0,1/2)$, the $s$-\emph{conductance} of the
Markov chain is 
\[
\Phi_{s}\defeq\inf_{\pi(S)\in(s,1-s)}\frac{\int_{S}\tcal_{x}(S^{c})\pi(x)dx}{\min(\pi(S)-s,\pi(S^{c})-s)}.
\]
\end{defn}

As shown by \cite{lovasz1993random}, a lower bound on the $s$-conductance
of a Markov chain leads to an upper bound on the mixing time of the
Markov chain. 
\begin{lem}[\cite{lovasz1993random}] \label{lem:tvDecrease} Let $\pi_{t}$
be the distribution of the points obtained after $t$ steps of a lazy
reversible Markov chain with the stationary distribution $\pi$. 
Let $\Lambda=\sup_{S\subset\mcal}\frac{\pi_{0}(S)}{\pi(S)}$ be the warmness of an initial distribution $\pi_{0}$. For $H_{s}=\sup\Brace{\Abs{\pi_{0}(A)-\pi(A)}:A\subset\mcal,\,\pi(A)\leq s}$ with $0<s\leq\half$,
it follows that 
\[
\dtv(\pi_{t},\pi)\leq H_{s}+\frac{H_{s}}{s}\Par{1-\frac{\Phi_{s}^{2}}{2}}^{t}.
\]
\end{lem}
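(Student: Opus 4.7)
The plan is to deploy the classical Lov\'asz--Simonovits potential-function argument specialized to $s$-conductance. For each $t$, I would introduce the discrepancy profile
\[
h_{t}(x)\defeq\sup\Brace{\pi_{t}(A)-\pi(A) : A\subseteq\mcal,\ \pi(A)=x},
\]
so that $\dtv(\pi_{t},\pi)=\sup_{x\in[0,1]}h_{t}(x)$, reducing the lemma to a pointwise upper bound on $h_{t}$ that decays geometrically in $t$. By the reversibility and laziness of the chain (standard splitting/union arguments on witness sets), $h_{t}$ is concave on $[0,1]$ with $h_{t}(0)=h_{t}(1)=0$. By the very definition of $H_{s}$, $h_{0}(x)\le H_{s}$ for $x\in[0,s]$, and passing to complements yields the same on $[1-s,1]$.

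The key analytic estimate is a one-step contraction driven by $s$-conductance: for every $x\in(s,1-s)$,
\[
h_{t+1}(x)\le\half h_{t}\bigl(x-d(x)\bigr)+\half h_{t}\bigl(x+d(x)\bigr),\qquad d(x)\defeq\Phi_{s}\min(x-s,1-s-x),
\]
while the trivial bound $h_{t+1}(x)\le h_{t}(x)$ suffices elsewhere. Given a near-extremal witness set $A$ for $h_{t+1}(x)$, one uses laziness to decompose $\pi_{t+1}(A)=\half\pi_{t}(A)+\half\int\tcal'_{z}(A)\,d\pi_{t}(z)$ (with $\tcal'$ the non-lazy component of the kernel), then applies the $s$-conductance bound to the ergodic flow across $\del A$ (valid since $\pi(A)\in(s,1-s)$) to produce shifted sets of $\pi$-measures $x\pm d(x)$ whose $\pi_{t}$-discrepancies dominate $\pi_{t+1}(A)-\pi(A)$.

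I would then close the induction with a concave upper envelope of the form
\[
\overline h_{t}(x)\defeq H_{s}+\frac{H_{s}}{s}\,\Par{1-\frac{\Phi_{s}^{2}}{2}}^{t}\psi(x),
\]
where $\psi$ is a strictly concave ``tent'' supported on $[s,1-s]$ with $\sup\psi\le 1$ and curvature strong enough that
\[
\half\psi\bigl(x-d(x)\bigr)+\half\psi\bigl(x+d(x)\bigr)\le\Par{1-\frac{\Phi_{s}^{2}}{2}}\psi(x)\quad\text{on }(s,1-s);
\]
a square-root profile such as $\psi(x)\propto\sqrt{(x-s)(1-s-x)}$ delivers the required quadratic gain by a direct calculation using the concavity of $\sqrt{\cdot}$. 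The base case $h_{0}\le\overline h_{0}$ follows from the structural properties above combined with concavity (the endpoint values $H_{s}$ at $x\in\{s,1-s\}$ extrapolate through the concavity of $h_{0}$), and the inductive step follows from the one-step contraction. Taking $\sup_{x}$ yields $\dtv(\pi_{t},\pi)\le H_{s}+(H_{s}/s)(1-\Phi_{s}^{2}/2)^{t}$.

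The main obstacle is the one-step contraction: $s$-conductance only controls sets of $\pi$-measure in $(s,1-s)$, so the shift $d(x)$ must vanish near the endpoints, and the conductance-to-witness-set surgery requires producing the shifted sets as level sets of the transition density rather than arbitrary subsets so that they realize both the correct $\pi$-measure and the desired discrepancy. A secondary subtlety is the $H_{s}$ offset in the envelope: it cannot be made to decay, because discrepancies on small sets persist outside conductance control, and concavity of $h_{t}$ extrapolates this offset across $[s,1-s]$, which is precisely where the $(1+1/s)$ prefactor originates.
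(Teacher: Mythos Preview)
The paper does not prove this lemma; it is stated with a citation to \cite{lovasz1993random} and used as a black box. There is therefore nothing in the paper to compare your proposal against.

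Your outline is a faithful reconstruction of the classical Lov\'asz--Simonovits argument from that reference: the concave discrepancy profile $h_{t}$, the one-step contraction on $(s,1-s)$ driven by $s$-conductance (with the witness sets taken as level sets of the density $d\pi_{t}/d\pi$ so that they simultaneously realize the target $\pi$-measure and the extremal discrepancy), and the induction against a square-root envelope with the constant offset $H_{s}$. Your identification of why the $H_{s}$ term cannot decay and why the $1/s$ prefactor appears is correct. One small remark: the concavity of $h_{t}$ is not automatic from the definition you wrote; in the original argument one either works directly with level sets of the Radon--Nikodym derivative (which attain the supremum and make the needed inequalities hold) or defines $h_{t}$ as the least concave majorant. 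You allude to this in your ``main obstacle'' paragraph, so you are aware of it, but it is worth making explicit in a full write-up. The standard envelope is $\min(\sqrt{x-s},\sqrt{1-s-x})$ rather than $\sqrt{(x-s)(1-s-x)}$, though either choice closes the induction.
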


We now define the isoperimetry of a subset of $\mcal$.
\begin{defn}[Isoperimetry]
Let $(\mcal,g)$ be a Riemannian manifold and $\mcal'$ a measurable subset of $\mcal$ with $\pi(\mcal')>\half$.
The \emph{isoperimetry} $\psi$ of the subset with stationary distribution
$\pi$ is defined by 
\[
\psi_{\mcal'}=\inf_{S\subset\mcal'}\frac{\lim_{\delta\to0^{+}}\frac{1}{\delta}\int_{\{x\in\mcal'\,:\,0<d_{g}(S,x)\leq\delta\}}\pi(x)dx}{\min(\pi(S),\pi(\mcal'\backslash S))}.
\]
\end{defn}

The following illustrates how one-step coupling with the
isoperimetry leads to a lower bound on the $s$-conductance. It can
be proved similarly as Lemma 13 in \cite{lee2018convergence}.
\begin{prop}
\label{prop:conductance} For a Riemannian manifold $(\mcal,g)$,
let $\pi$ be the stationary distribution of a reversible Markov chain
on $\mcal$ with a transition distribution $\tcal_{x}$. Let $\mcal'\subset\mcal$
be a subset with $\pi(\mcal')\geq1-\rho$ for some $\rho<\frac{1}{2}$.
We assume the following one-step coupling: if $d_{g}(x,x')\leq\Delta\leq1$
for $x,x'\in\mcal'$, then $\dtv(\tcal_{x},\tcal_{x'})\leq0.9$. Then
for any $\rho\leq s<\frac{1}{2}$, the $s$-conductance is bounded
below by 
\[
\Phi_{s}\geq\Omega\Par{\psi_{\mcal'}\Delta}.
\]
\end{prop}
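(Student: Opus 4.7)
The plan is to adapt the standard Lovász--Simonovits conductance argument to this Riemannian setting with a ``good'' subset $\mcal'$. Fix a test set $S\subset\mcal$ with $\pi(S)\in(s,1-s)$ and assume WLOG that $\pi(S)\leq\half$, so that $\min(\pi(S)-s,\pi(S^{c})-s)=\pi(S)-s$. Since $\pi(\mcal\setminus\mcal')\leq\rho\leq s$, we have $\pi(S\cap\mcal')\geq\pi(S)-s$ and $\pi(S^{c}\cap\mcal')\geq\pi(S^{c})-s\geq\pi(S)-s$. I would then define the two ``interior'' subsets
\[ S_{1}=\Brace{x\in S\cap\mcal':\tcal_{x}(S^{c})<\tfrac{1}{40}},\qquad S_{2}=\Brace{x\in S^{c}\cap\mcal':\tcal_{x}(S)<\tfrac{1}{40}}. \]
The first step is to verify that $d_{g}(S_{1},S_{2})>\Delta$: otherwise picking $x\in S_{1}$ and $x'\in S_{2}$ with $d_{g}(x,x')\leq\Delta$ gives $\tcal_{x}(S)>\tfrac{39}{40}$ and $\tcal_{x'}(S)<\tfrac{1}{40}$, so $\dtv(\tcal_{x},\tcal_{x'})\geq\tfrac{19}{20}>0.9$, contradicting the coupling hypothesis.

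Next I would split into two cases. \emph{Case A:} at least one of the inequalities $\pi((S\cap\mcal')\setminus S_{1})\geq\half\pi(S\cap\mcal')$ or $\pi((S^{c}\cap\mcal')\setminus S_{2})\geq\half\pi(S^{c}\cap\mcal')$ holds. Every point of $(S\cap\mcal')\setminus S_{1}$ transitions into $S^{c}$ with probability at least $\tfrac{1}{40}$, so the first subcase directly yields
\[ \int_{S}\tcal_{x}(S^{c})\,\pi(x)\,dx\geq \tfrac{1}{40}\,\pi\bigl((S\cap\mcal')\setminus S_{1}\bigr)\geq \tfrac{1}{80}(\pi(S)-s), \]
and the second subcase follows identically after invoking reversibility to rewrite this flow as $\int_{S^{c}}\tcal_{x}(S)\,\pi(x)\,dx$. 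We may assume $\psi_{\mcal'}\Delta\leq 1$ (the conclusion is vacuous otherwise), so the above dominates the target $\Omega(\psi_{\mcal'}\Delta(\pi(S)-s))$.

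\emph{Case B:} both Case~A alternatives fail, giving $\pi(S_{i})\geq\half(\pi(S)-s)$ for $i=1,2$. Here I invoke the three-set form of the isoperimetric inequality on $\mcal'$: since $S_{1},S_{2}\subset\mcal'$ are disjoint with $d_{g}(S_{1},S_{2})>\Delta$, the partition $\mcal'=S_{1}\sqcup S_{2}\sqcup(\mcal'\setminus(S_{1}\cup S_{2}))$ satisfies
\[ \pi(\mcal'\setminus(S_{1}\cup S_{2}))\geq \psi_{\mcal'}\,\Delta\,\min(\pi(S_{1}),\pi(S_{2}))\gtrsim \psi_{\mcal'}\Delta(\pi(S)-s). \]
The ``annulus'' decomposes as $\bigl((S\cap\mcal')\setminus S_{1}\bigr)\sqcup\bigl((S^{c}\cap\mcal')\setminus S_{2}\bigr)$; integrating $\tcal_{x}(S^{c})\geq\tfrac{1}{40}$ over the first piece, integrating $\tcal_{x}(S)\geq\tfrac{1}{40}$ over the second, and averaging the two expressions $\int_{S}\tcal_{x}(S^{c})\pi\,dx=\int_{S^{c}}\tcal_{x}(S)\pi\,dx$ via reversibility yields $\int_{S}\tcal_{x}(S^{c})\pi\,dx\gtrsim\psi_{\mcal'}\Delta(\pi(S)-s)$. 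Dividing by $\pi(S)-s$ then gives $\Phi_{s}\gtrsim\psi_{\mcal'}\Delta$, as claimed.

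The main obstacle is the passage from the infinitesimal Cheeger-type definition of $\psi_{\mcal'}$ to the integrated three-set inequality used in Case~B. On a manifold with boundary this is a standard but non-trivial fact, proved by a level-set/coarea sweep-out argument much as in Lemma~13 of \cite{lee2018convergence}: letting $f(t)=\pi(\{x\in\mcal':d_{g}(x,S_{1})\leq t\})$, one differentiates and applies the Cheeger bound to conclude $f'(t)\geq\psi_{\mcal'}\min(f(t),\pi(\mcal')-f(t))$, then integrates from $0$ to $\Delta$. Care is required because $\mcal'$ itself may have boundary, but the infinitesimal definition already counts only the portion of the $\delta$-neighborhood lying inside $\mcal'$, which is exactly what makes the sweep-out well-posed here; I would either cite the Riemannian version off the shelf or reproduce this short co-area argument.
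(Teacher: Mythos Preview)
Your proposal is correct and follows exactly the standard Lov\'asz--Simonovits argument that the paper defers to (the paper gives no proof of its own, only the remark ``It can be proved similarly as Lemma~13 in \cite{lee2018convergence}''). The adaptation you make for the good region $\mcal'$---intersecting $S$ and $S^{c}$ with $\mcal'$ before defining $S_{1},S_{2}$, and absorbing the lost mass $\pi(\mcal\setminus\mcal')\le\rho\le s$ into the $-s$ slack---is precisely the modification needed, and your identification of the one nontrivial passage (from the infinitesimal Cheeger constant $\psi_{\mcal'}$ to the integrated three-set inequality via the coarea sweep $f(t)=\pi(\{x\in\mcal':d_{g}(x,S_{1})\le t\})$) is on point.
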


\begin{figure}[t]
\centering \begin{tikzpicture}[thick]%[>=stealth,every node/.style={shape=rectangle,draw,rounded corners, minimum width=4.5cm,}, ]

\node[label=left:$\delta_x$] at (0, 2) {};
\node[label=left:$\delta_y$] at (0, -2) {};
\node[label=right:{$\dtv(\pcal_x, \pcal_y)$: Between ideal RHMCs (Lemma~\ref{lem:onestepIdeal})}] at (5.1, 0) {};
\node[label=right:{$\dtv(\pcal_x, \oprop_x)$: Between ideal and discretized (Lemma~\ref{lem:tvidealdis})}] at (5.1, 0.9) {};
\node[label=right:{$\dtv(\pcal_y, \oprop_y)$}] at (5.1, -0.9) {};

\draw[thick, fill=black] (0, 2) circle (1mm); 
\draw[thick, fill=black] (0, -2) circle (1mm); 

\draw[thick, ->] (0, 2) ..   controls (2, 1) and (4, 1) .. (5, 0.5);
\draw[thick, ->] (0, -2) ..   controls (2, -1) and (4, -1) .. (5, -0.5);
\draw[decorate, 
	decoration = {brace}] (5.1, 0.5) --  (5.1, -0.5);
\draw[decorate, 
	decoration = {brace}] (5.1, 1.3) --  (5.1, 0.6);
\draw[decorate, 
	decoration = {brace}] (5.1, -0.6) --  (5.1, -1.3);

\draw[dashed, ->] (0, 2) ..   controls (1, 1.5) and (4, 1.5) .. (5, 1.3);
\draw[dashed, ->] (0, -2) ..   controls (1, -1.5) and (4, -1.5) .. (5, -1.3);\end{tikzpicture} \vspace{-5mm} \caption{An illustration of our approach to one-step coupling. The thick line
indicates the ideal RHMC, and the dashed line indicates the discretized
RHMC. \label{fig:illustration1}}
\vspace{-5mm}
\end{figure}
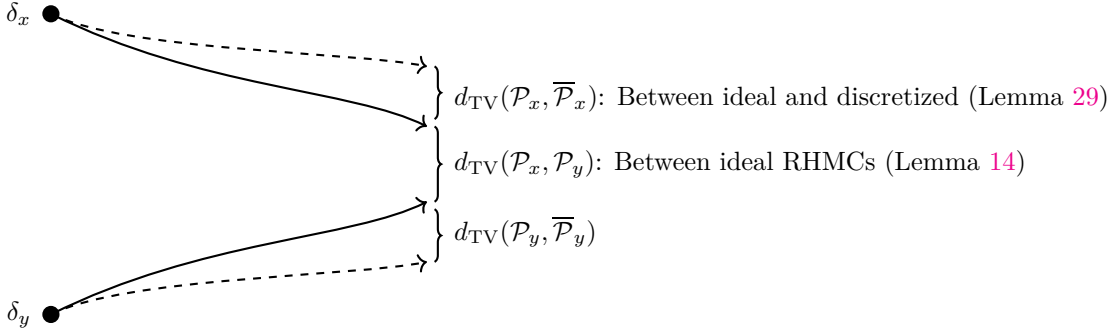

\subsection{One-step coupling of discretized RHMC\label{subsec:pfOutline}}

In light of Proposition~\ref{prop:conductance}, we can focus on coupling the one-step distributions of the discretized RHMC starting from two close-by points. 
Let $\pcal_{x}$ and $\oprop_{x}$ be the one-step distributions on $\mcal$ of the ideal and the discretized RHMC starting from $x$, respectively. 
We use $\oprop_{x}'$ to denote the discretized RHMC without the Metropolis filter. 
As illustrated in Figure~\ref{fig:illustration1}, for two close points $x$ and $y$, the triangle inequality leads to
\begin{align*}
\dtv(\oprop_{x},\oprop_{y}) & \leq\dtv(\oprop_{x},\pcal_{x})+\dtv(\pcal_{x},\pcal_{y})+\dtv(\pcal_{y},\oprop_{y})\\
 & \leq\Par{\dtv(\oprop_{x}',\pcal_{x})+\dtv(\pcal_{x},\pcal_{y})+\dtv(\pcal_{y},\oprop_{y}')}+\Par{\dtv(\oprop_{x}',\oprop_{x})+\dtv(\oprop_{y}',\oprop_{y})}.
\end{align*}
Hence, it suffices to bound $\dtv(\pcal_{x},\pcal_{y}), \dtv(\oprop_{x}',\pcal_{x})$
and $\dtv(\oprop_{x}',\oprop_{x})$, respectively. 
We bound in Section~\ref{sec:idealRHMC} the first term $\dtv(\pcal_{x},\pcal_{y})$, the TV distance of one-step distributions of the ideal RHMC. 
For the remaining terms, when numerical integrators do not preserve the Hamiltonian, a Metropolis filter is necessary to ensure that the discretized RHMC converges to a target distribution. 
Due to the filter, we need to handle a point-mass distribution at $x$.
We address this by first bounding the second term $\dtv(\oprop_{x}',\pcal_{x})$, the TV distance between the ideal and discretized RHMC \emph{without} the Metropolis filter in Section \ref{subsec:couple-ideal-dis}. 
We then separately bound the rejection probability $\dtv(\oprop_{x}',\oprop_{x})$ in Section \ref{subsec:rej-prob}.

\subsubsection{Coupling of ideal RHMC}
We summarize how to bound $\dtv(\pcal_{x}{},\pcal_{y})$ here (see Section~\ref{sec:idealRHMC} for the full version).

\begin{lem*}(Informal, Lemma~\ref{lem:onestepIdeal}) For most of $x$ and $y$, and step size $h$ small enough, if $d_{\phi}(x, y) \leq \frac{1}{100}$, then $\dtv(\pcal_{x}{},\pcal_{y})\leq O\Par{\frac{1}h}d_{\phi}(x,y) + \frac{1}{25}$.	
\end{lem*}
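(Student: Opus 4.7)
The plan is to directly compare the pushforward measures $\pcal_{x}=(T_{x,h})_{*}\gamma_{x}$ and $\pcal_{y}=(T_{y,h})_{*}\gamma_{y}$, where $\gamma_{x}=\ncal(0,g(x))$ and $\gamma_{y}=\ncal(0,g(y))$, using a change-of-variables argument on a carefully chosen ``good'' set. First I would couple the initial Gaussians: because $d_{\phi}(x,y)\leq1/100$, self-concordance of $\phi$ gives multiplicative closeness $g(x)\approx g(y)$, and a standard Gaussian-TV estimate shows $\dtv(\gamma_{x},\gamma_{y})=O(d_{\phi}(x,y))$. Under the optimal coupling, the same $v$ is drawn from both, and one only needs to track $\dtv(\pcal_{x},\pcal_{y})$ under this shared initial velocity.

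Second, I would identify a good set of velocities $V\subset\Rn$ of probability at least $1-1/50$ under both $\gamma_{x}$ and $\gamma_{y}$, on which $\norm{v}_{g(x)^{-1}}^{2}=O(n\log\frac{1}{\veps})$, $T_{x,h}$ and $T_{y,h}$ are diffeomorphisms onto their images, and the Hamiltonian flow behaves regularly (ensuring uniqueness of preimages and non-degeneracy of the Jacobians). The restriction to ``most $x,y$'' in the lemma supplies the additional regularity (e.g.\ control of $\norm{\grad f(x)}_{g(x)^{-1}}$ and tail behavior of $v$) that makes such a good set exist.

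Third, for $z$ in the image of $V$, I would express the densities via the pushforward formula
\[
p_{x}(z)=\frac{(2\pi)^{-n/2}(\det g(x))^{1/2}\,e^{-\half\normginv{v_{x}}^{2}}}{\Abs{\det DT_{x,h}(v_{x})}},\qquad v_{x}=T_{x,h}^{-1}(z),
\]
and similarly for $p_{y}(z)$. Bounding $|\log(p_{x}(z)/p_{y}(z))|$ then decomposes into three pieces: (i) the log-det-metric prefactor, controlled by self-concordance via $|\log\det g(x)-\log\det g(y)|\lesssim n\,d_{\phi}(x,y)$; (ii) the Gaussian exponents, where Hamiltonian preservation $H(x,v_{x})=H(z,v_{x}(h))$ and $H(y,v_{y})=H(z,v_{y}(h))$ at the common endpoint $z$ lets one relate $\normginv{v_{x}}^{2}$ and $\normginv{v_{y}}^{2}$ to the difference $H_{1}(x)-H_{1}(y)$ (which is $O(d_{\phi}(x,y))$ for highly self-concordant $\phi$ and regular $f$); and (iii) the log-det-Jacobian of the flow maps. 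Heuristically, to leading order $DT_{x,h}(v)\approx h\ginv{x}$, so an $x$-perturbation of size $d_{\phi}(x,y)$ induces an $O(d_{\phi}(x,y)/h)$ relative change in $\det DT_{x,h}$, which is the source of the advertised $O(1/h)$ factor and also the dominant term. Integrating $|p_{x}-p_{y}|$ over the image of $V$, and paying a total $1/25$ slack for the complement, yields the claim.

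The main obstacle is piece (iii): controlling the variation of $\det DT_{x,h}(v)$ as a function of the starting point $x$ along $\mcal$. The flow $F_{h}$ is the solution of a coupled second-order ODE in which all three of $\grad f$, $\grad\log\det g$, and $Dg[dx/dt,dx/dt]$ enter, so a naive Gr\"{o}nwall-type bound picks up factors of $\norm{v}$ and of third derivatives of $\phi$. Taming these requires (a) the typical-velocity restriction $\normginv v\lesssim\sqrt{n\log(1/\veps)}$, which the definition of the good set supplies, (b) highly self-concordance of $\phi$ to bound $D^{3}\phi$ and hence the sensitivity of $Dg$, and (c) a sufficiently refined Taylor expansion of $T_{x,h}$ in $h$ so that the Jacobian variation with $x$ is extracted with the correct power of $h$ in the denominator. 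Assembling these ingredients and combining with (i)--(ii) gives $\dtv(\pcal_{x},\pcal_{y})\leq O(h^{-1})\,d_{\phi}(x,y)+\tfrac{1}{25}$, as claimed.
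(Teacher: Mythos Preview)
Your decomposition into pieces (i)--(iii) is the right instinct, but the argument for (ii) via Hamiltonian preservation does not close, and the source of the $O(1/h)$ is misidentified. Subtracting the two preservation identities gives
\[
\bigl[H_{1}(x)-H_{1}(y)\bigr]+\tfrac{1}{2}\bigl[\normginv{v_{x}}^{2}-\norm{v_{y}}_{g(y)^{-1}}^{2}\bigr]=\tfrac{1}{2}\bigl[\norm{v_{x}(h)}_{g(z)^{-1}}^{2}-\norm{v_{y}(h)}_{g(z)^{-1}}^{2}\bigr],
\]
and the right-hand side is \emph{not} small: by time-reversal, $-v_{x}(h)$ and $-v_{y}(h)$ are the initial velocities at $z$ that send the flow to $x$ and to $y$ respectively, so $v_{x}(h)-v_{y}(h)$ is of size $O(d_{\phi}(x,y)/h)$, and the terminal squared-norm difference is $O\bigl(\norm{v}\cdot d_{\phi}(x,y)/h\bigr)$. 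Thus Hamiltonian preservation merely exchanges the unknown initial-velocity gap for an equally large terminal one. Separately, your heuristic $DT_{x,h}(v)\approx h\,g(x)^{-1}$ shows that the leading $\log\det$ of the Jacobian is $-\log\det g(x)$ up to a constant, which exactly cancels the metric prefactor in (i); the residual Jacobian variation is higher order (this is what the paper controls via the trace formula and the parameter $R_{2}$), so (iii) is not where the $1/h$ lives.

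The paper instead tracks the velocity field $v(c(s))$ along a path $c(s)$ from $x$ to $y$ directly. The implicit function theorem (Lemma~\ref{lem:localUnique}) gives not just $\norm{\nabla_{\eta}v}_{x}\lesssim\norm{\eta}_{x}/h$ but the sharper structural fact $\nabla_{\eta}v(x)\approx -\eta/h$ up to an $O(R_{1}h)$ error. Consequently $\tfrac{d}{ds}\norm{v(c(s))}_{c(s)}^{2}\approx -\tfrac{2}{h}\inner{v,\eta}_{c(s)}$, and since $v$ is standard Gaussian in the local metric, $\abs{\inner{v,\eta}}=O(1)$ (not $O(\sqrt{n})$) with high probability. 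Integrating over $s$ yields the $O(1/h)\,d_{\phi}(x,y)$ term. In short, the dominant contribution is the Gaussian exponent (your piece (ii)), and the mechanism is a first-variation estimate for $v(\cdot)$ along the path---not Hamiltonian preservation and not the Jacobian.
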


% Commment about previous RHMC analysis
\paragraph{Previous approach}
\cite{lee2018convergence} provided a general framework for computing the mixing rate of RHMC on a manifold embedded in $\Rn$, in terms of the isoperimetry and smoothness
parameters depending on the manifold and step size. 
One of the major proof ingredients is one-step coupling: for two close points $x$ and $y$, the one-step distributions at $x$ and $y$ have large overlap.

They use the notion of a `regular' Hamiltonian curve, which enables them to handle this task in low level, where the regularity can be understood as \emph{average} behavior of Hamiltonian curves with high probability and is quantified by some auxiliary functions.
As the starting point of a regular Hamiltonian curve changes from $x$ to $y$ along a length-minimizing geodesic $c(s)$ joining $x$ and $y$, they find a one-to-one correspondence between regular Hamiltonian curves started at $x$ and $y$, and bound $\Abs{\frac{d}{ds}\dtv(\pcal_x, \pcal_{c(s)})}$ over $s$. 
They achieve this by quantifying the rate of changes of the probability density (see \eqref{eq:densityOneStep}).

It is daunting to directly work with the exact density function, so they make use of the following techniques: (1) Show that the determinant of Jacobian is close to $h^{n}$ up to small step size by applying a matrix-ODE theory to the second-order ODE of the Hamiltonian equation (see Lemma~\ref{lem:HamODE}). 
It allows them to work with an approximate but simpler density with the Jacobian replaced by $h^n$ (see \eqref{eq:approxDensity}).
(2) Establish the one-to-one correspondence along variations of Hamiltonian curves by the implicit function theorem; for a given endpoint $z$, as the starting point of a Hamiltonian curve moves along $c(s)$, there exists a unique initial velocity $v_{c(s)}$ at each point on $c(s)$ that brings $c(s)$ to the endpoint $z$ in step size $h$ (i.e., $\ham_{c(s),h}(v_{c(s)})=z$).
At the same time, by using the matrix-ODE theory again they show that the regularity of Hamiltonian curves does not blow up along $c(s)$ and quantify how much the proper initial velocity changes.

% These are enough to achieve one-step coupling in terms of only $R_{1}$.
% For further improvement, a more accurate estimate (\ref{eq:oneStepApproximate})
% of the determinant of Jacobian is used instead, leading to an improved
% bound via $R_{2}$ and $R_{3}$.

\paragraph{Refined analysis}
\cite{lee2018convergence}  bounded $\Abs{\frac{d}{ds}\dtv(\pcal_x, \pcal_{c(s)})}$ in terms of smoothness parameters, sumpremum bounds on some quantities defined over the regular Hamiltonian curves starting from \emph{any} point in $\mcal$. 
However, considering all starting points leads to a weaker coupling in the end.
In fact, this makes sampling from an exponential density have dependence on then condition number, since one of the smoothness parameters requires the supremum bound on $\norm{\alpha}_{g(x)^{-1}}$ over $x\in \mcal$, which can be as large as $\norm{\alpha}_2$ times the diameter of the convex body.

To achieve a condition-number independent mixing time, we work in a convex subset $\mcal_{\rho}$ (call a good region) instead of $\mcal$, which requires refinement of the framework by generalizing the smoothness parameters (Section~\ref{sec:smoothparam}) and theorems in their paper accordingly. 
It allows us to obtain a stronger coupling by only considering Hamiltonian curves starting from $\mcal_\rho$. This region is the region $\mcal'$ in Proposition~\ref{prop:conductance}.

This simple change, however, yields technical difficulties in following how \cite{lee2018convergence} proceeds with the original parameters.
Recall in the one-step coupling, they consider a Hamiltonian variation along a geodesic joining two points, but the geodesic might step out of the good region. 
To address this issue, we use the straight line between the points instead of the geodesic, as the straight line is contained in $\mcal_\rho$ due to the convexity.
We elaborate on how the technical details of the previous approach can be modified accordingly under the redefined parameters and new variation curve in order to get valid one-step coupling on this smaller region in Section~\ref{subsec:onestepofIeal}.

\subsubsection{Coupling between ideal and discretized RHMC \& Rejection probability}

We provide a summary of Section~\ref{sec:discretizedRHMC}, where we prove the following lemma and Theorem~\ref{thm:discGen}.
\begin{lem*}(Informal, Lemma~\ref{lem:tvidealdis} and Lemma~\ref{lem:rejFinal})
	For most of $(x,v)$, if step size $h$ is small enough and falls under a sensitivity regime at $(x,v)$ of a numerical integrator, then $\dtv(\oprop_{x}',\pcal_{x}) < \frac{1}{10}$ and $\dtv(\oprop_{x}',\oprop_{x}) <\frac{1}{10^3}$.
\end{lem*}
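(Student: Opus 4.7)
The statement splits into two independent TV bounds. Fix $x$ in the good region and let $v\sim\ncal(0,g(x))$ be the initial velocity; call $v$ \emph{good} if all the stepsize conditions of Theorem~\ref{thm:discPoly} on $h,h_{0}(x,v),C_{x}(x,v),C_{v}(x,v)$ hold at $(x,v)$. By hypothesis this event has probability at least $0.99$, so the bad event contributes at most $0.01$ to either TV distance and I focus on the good event below.

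\textbf{Part 1 ($\dtv(\oprop_{x}',\pcal_{x})$).} The two one-step laws are pushforwards of the same Gaussian on $T_{x}\mcal$ under $T_{x,h}$ and $\overline{T}_{x,h}$, so I compare them by change of variables. For $z=\overline{T}_{x,h}(v)$, let $v'$ be the nearby ideal velocity with $T_{x,h}(v')=z$, well-defined locally by the implicit function theorem (as in Section~\ref{sec:idealRHMC}). Then
\[
\frac{\bar p_{x}(z)}{p_{x}(z)}\;=\;\frac{\Abs{DT_{x,h}(v')}}{\Abs{D\overline{T}_{x,h}(v)}}\cdot\exp\Par{\half\norm{v'}_{g(x)^{-1}}^{2}-\half\norm{v}_{g(x)^{-1}}^{2}}.
\]
Sensitivity (Definition~\ref{def:stable}) caps the Jacobian ratio at $1/0.998$. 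For the exponential factor, second-orderness gives $d_{g}(\bar x_{h},x_{h})\leq C_{x}h^{2}$, and Lemma~\ref{lem:HamODE} tells us $DT_{x,h}\approx h\,g(x)^{-1}$, which together yield $\norm{v-v'}_{g(x)^{-1}}\lesssim C_{x}h$. Combined with the Gaussian tail bound $\norm{v}_{g(x)^{-1}}\lesssim\sqrt{n\log(1/\delta)}$ and the hypothesis $hC_{x}\leq 10^{-20}/\sqrt{n}$, the exponent is $\ll 1$ off a set of small mass, giving $\bar p_{x}/p_{x}\in[0.9,1.1]$ there and hence $\dtv(\oprop_{x}',\pcal_{x})\leq 1/10$.

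\textbf{Part 2 ($\dtv(\oprop_{x}',\oprop_{x})$).} These two chains differ only on the rejection event, so
\[
\dtv(\oprop_{x}',\oprop_{x})\;\leq\;\E_{v}\Brack{\Par{1-e^{-\Delta H}}_{+}},\qquad\Delta H=H(\bar x_{h},\bar v_{h})-H(x,v).
\]
Hamiltonian preservation of the exact flow gives $H(x_{h},v_{h})=H(x,v)$, so $\Delta H=H(\bar x_{h},\bar v_{h})-H(x_{h},v_{h})$. Taylor expanding around $(x_{h},v_{h})$ and invoking $d_{g}(\bar x_{h},x_{h})\leq C_{x}h^{2}$ and $\norm{\bar v_{h}-v_{h}}_{g(x_{h})^{-1}}\leq C_{v}h^{2}$ yields
\[
\Abs{\Delta H}\;\lesssim\;\norm{\grad_{x}H(x_{h},v_{h})}_{g(x_{h})^{-1}}\,C_{x}h^{2}+\norm{\grad_{v}H(x_{h},v_{h})}_{g(x_{h})}\,C_{v}h^{2}+\text{(higher order)}.
\]
On $\mcal_{\rho}$, the position gradient is controlled by $\norm{\alpha}_{g(x)^{-1}}\lesssim n\log(1/\rho)$ together with a $\norm{v}_{g(x)^{-1}}^{2}$ contribution from the kinetic piece (via self-concordance), while $\grad_{v}H=g(x)^{-1}v$ has norm $\norm{v}_{g(x)^{-1}}$. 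Feeding in the hypothesized $h^{2}C_{x}\leq 10^{-10}/(n\log(\Lambda/\veps))$ and $h^{2}C_{v}\leq 10^{-10}/\sqrt{n\log(\Lambda/\veps)}$ and the Gaussian tail for $\norm{v}_{g(x)^{-1}}$ produces $\E_{v}\Abs{\Delta H}\leq 10^{-3}$, as required.

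\textbf{Main obstacle.} The crux is the velocity comparison $\norm{v-v'}_{g(x)^{-1}}\lesssim C_{x}h$ in Part~1: turning an $O(C_{x}h^{2})$ endpoint error into an $O(C_{x}h)$ initial-velocity error demands a uniform \emph{lower} bound on the singular values of $DT_{x,h}$ in the $g(x)$-metric, which requires a careful application of the matrix ODE underlying Lemma~\ref{lem:HamODE} and a check that this bound coexists with the sensitivity neighborhood $h\leq h_{0}(x,v)$. A secondary difficulty is controlling the $-\half Dg(x)[v,v]$ contribution to $\grad_{x}H$ uniformly in the random $v$ via self-concordance of the barrier.
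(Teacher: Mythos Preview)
Your proposal is essentially correct and follows the same architecture as the paper: compare densities via change of variables, use sensitivity for the Jacobian ratio and the bound $\norm{v-v'}_{g(x)^{-1}}\lesssim C_{x}h$ for the Gaussian ratio in Part~1; use Hamiltonian preservation plus a first-order expansion of $H$ at $(x_{h},v_{h})$ in Part~2.

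The one substantive difference is in how the velocity comparison is obtained. You propose to get $v'$ from the implicit function theorem and then deduce $\norm{v-v'}_{g(x)^{-1}}\lesssim C_{x}h$ from a singular-value lower bound on $DT_{x,h}$ via the matrix ODE. The paper instead runs a Banach fixed-point argument (Proposition~\ref{prop:dynamics}, item~5): it shows the map $\Upsilon(u)=u-\tfrac{1}{h}g(x)\bigl(T_{x}(u)-\overline{T}_{x}(v)\bigr)$ is a contraction on a ball of radius $4\sqrt{n}$ around $v$ in the $g(x)^{-1}$-norm. This simultaneously delivers existence of $v^{*}$, the one-to-one correspondence between $v$ and $v^{*}$ (needed so the summands in $p_{x}$ and $\overline{p}_{x}$ can be paired), and the quantitative bound $\norm{v-v^{*}}_{g^{-1}}\lesssim\tfrac{1}{h}\norm{T_{x}(v)-\overline{T}_{x}(v)}_{g}$, all with explicit constants coming only from self-concordance. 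Your route would work too, but you would separately need to check that $\overline{T}_{x}(v)$ lies in the image of the IFT neighborhood (i.e., make the IFT radius explicit), and you would still have to argue injectivity to justify pairing summands; the fixed-point approach packages all three for free. For Part~2, your Taylor-expansion sketch is exactly Lemma~\ref{lem:rejprob}, which bounds each term of $H$ directly using the self-concordance identities in Lemma~\ref{lem:sc_facts}.
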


For the former (bound on $\dtv(\oprop_{x}',\pcal_{x})$), we show that the densities of the ideal and discretized RHMC are similar by relating two velocities $v$ and $v^*$, where $\overline{T}_x(v) = T_x(v^*)$. It can be reduced to establishing a constant lower bound on 
$\frac{p_{x}^{\ast}(v^{*})}{p_{x}^{\ast}(v)} \frac{\Abs{D\bTx(v)}}{\Abs{DT_{x}(v^{*})}}$ 
for the probability density $p_x^{\ast}$ of Gaussian $\ncal(0,g(x))$.

We first define numerical integrators' analogues of the smoothness parameters.
Then, we elaborate the idea above in Section~\ref{subsec:couple-ideal-dis}, where we study the dynamics of the ideal and discretized RHMC. 
In particular, we show the existence of $v^*$ for a given $v$ by the Banach fixed-point theorem and a one-to-one correspondence between them, together with an upper bound on $\norm{v-v^*}_{g^{-1}}$. This upper bound allows us to bound the ratio of $p_{x}^{\ast}(v^{*})/p_{x}^{\ast}(v)$.
We note that these results heavily rely on the stability of local norm (see Section~\ref{subsec:stability_SC}), which follows from that the local metric is given by the Hessian of self-concordant barriers. 
The ratio of the Jacobian follows from the sensitivity of the integrators. 

For the latter (bound on $\dtv(\oprop_{x}',\oprop_{x})$), we observe that the acceptance probability comes down to bounding the difference of the Hamiltonian at ideal and numerical solutions.
To bound this, we heavily use the stability of local norm as well as the quantitative relationships between the ideal and discretized RHMC established above. 
Putting these pieces together, we can obtain the mixing-time bound of the discretized RHMC in Theorem~\ref{thm:discGen}.

\subsection{Analysis of numerical integrators \& Parameter estimation in polytopes} \label{sec:CandD}

To apply the framework established so far, we analyze in Section~\ref{sec:numerical} two practical numerical integrators, IMM (Section~\ref{subsec:imm}) and LM (Section~\ref{subsec:leapfrog}), by estimating second-order parameters $C_x$ and $C_v$ and quantifying their sensitivity regimes, and then apply these estimations to sampling from distribution $e^{-\alpha^\top x}$ on a polytope in Section~\ref{sec:polyotopes}.

\begin{lem*}(Informal, adapted to polytope setting)
	For most of $(x,v)$, both IMM and LM have $C_x(x,v)= \widetilde{O}(n)$ and $C_v(x,v) = \widetilde{O}(n^{3/2})$ for step size $h=\widetilde{O}(1/\sqrt n)$.
	The sensitivity region is $h = \widetilde{O}(1/n)$ for IMM and $h = \widetilde{O}(1/n^{3/2})$ for LM.
\end{lem*}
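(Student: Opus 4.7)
The plan is to analyze both IMM and LM by Taylor-expanding the integrator maps around $h=0$ and comparing with the Taylor expansion of the exact flow $F_h$, then specializing the resulting tensor expressions to the polytope with logarithmic barrier and linear potential $f(x)=\alpha^{\top}x$. Since both integrators are time-reversible (hence even-order) and in particular second-order, the one-step errors admit expansions of the form $\bar x_h - x_h = h^{3}E_{x}(x,v) + O(h^{4})$ and $\bar v_h - v_h = h^{3}E_{v}(x,v) + O(h^{4})$, where $E_x,E_v$ are explicit polynomials in $\alpha=\nabla f$, $v$, and the derivatives $Dg^{-1}, D^{2}g^{-1}, D\log\det g$ at $(x,v)$. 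Bounding $C_x h^2$ and $C_v h^2$ in the paper's normalization thus reduces to bounding $h\|E_x\|_{g(x)}$ and $h\|E_v\|_{g(x)^{-1}}$.

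First I would carry out the expansion for IMM. Writing the implicit midpoint equation $\xi = (x,v) + \tfrac{h}{2}\Phi(\xi)$, where $\Phi$ is the Hamiltonian vector field, one solves by Banach's fixed-point theorem and obtains a convergent $h$-expansion $\xi_h = (x,v) + \tfrac{h}{2}\Phi(x,v) + O(h^2)$. Plugging into $(\bar x_h,\bar v_h) = (x,v) + h\Phi(\xi_h)$ and matching against the Taylor expansion of $F_h$ up to $O(h^3)$ yields closed-form expressions for $E_x,E_v$ in terms of $\alpha$, $v$, $g$ and its derivatives. In the polytope/log-barrier setting, the available smoothness bounds --- which follow from self-concordance as in Section~\ref{subsec:stability_SC} --- are: $\|v\|_{g(x)^{-1}} = \widetilde{O}(\sqrt{n})$ with high probability for $v \sim \ncal(0,g(x))$; $\|\alpha\|_{g(x)^{-1}} = \widetilde{O}(n)$ on the good region $\mcal_\rho$; and the multilinear derivatives of $\phi$ of orders $2,3,4$ are bounded against the local norm by universal constants. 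Combining these yields $\|E_x\|_{g(x)} = \widetilde{O}(n^{3/2})$ and $\|E_v\|_{g(x)^{-1}} = \widetilde{O}(n^{2})$, so with $h \leq \widetilde{O}(1/\sqrt{n})$ one obtains $C_x = h\|E_x\|_{g(x)} = \widetilde{O}(n)$ and $C_v = h\|E_v\|_{g(x)^{-1}} = \widetilde{O}(n^{3/2})$. The LM analysis is structurally identical: its $H_1$-$H_2$-$H_1$ splitting produces its own implicit half- and full-steps, and after the same Taylor expansion one obtains error polynomials involving the same commutators with the same scaling in $n$.

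The sensitivity condition of Definition~\ref{def:stable} requires comparing $|D\bar T_{x,h}(v)|$ to $|DT_{x,h}(v^{\ast})|$ where $T_{x,h}(v^{\ast}) = \bar T_{x,h}(v)$. Differentiating the implicit fixed-point equations in $v$ and expanding in $h$ shows that $D\bar T_{x,h}(v)$ and $DT_{x,h}(v^{\ast})$ agree up to $h^3$ times operator-norm-bounded tensors built from $(Dg^{-1},D^2g^{-1},\alpha,v)$. Forcing the relative Jacobian error below $0.002$ therefore imposes a step-size bound that, after substituting the polytope smoothness numbers, is $h = \widetilde{O}(1/n)$ for IMM. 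For LM the analogous computation is stiffer because the three sequential implicit substeps accumulate one additional factor of $\|v\|_{g(x)^{-1}}\leq\widetilde{O}(\sqrt n)$ in the Jacobian derivative, yielding the tighter $h = \widetilde{O}(1/n^{3/2})$ sensitivity regime claimed.

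The main obstacle I anticipate is not any individual Taylor coefficient computation but rather ensuring that all intermediate iterates produced by the implicit solver --- the midpoint $\xi_h$ for IMM and the intermediate position/momentum pairs for LM --- stay inside a Dikin-type ball around $x$ of radius $O(h\|v\|_{g(x)^{-1}}) = \widetilde{O}(h\sqrt n)$, so that local norms at those intermediate points are equivalent to local norms at $x$ up to a universal constant. Only under this control can one legitimately absorb evaluations of $Dg^{-1}$, $D^2 g^{-1}$, $\log\det g$, and their derivatives at intermediate points into the base-point bounds used above. This in turn requires a quantitative fixed-point contraction argument bounding both the contraction factor and the iteration radius, combined with repeated application of the self-concordance stability results of Section~\ref{subsec:stability_SC} to transfer local bounds from $x$ to nearby points. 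Once this neighborhood control is in place, the remaining work is routine multilinear bookkeeping that produces the stated $C_x$, $C_v$ and sensitivity thresholds.
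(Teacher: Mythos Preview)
Your approach for bounding $C_x$ and $C_v$ via third-order Taylor expansion differs from the paper's. The paper never expands to $O(h^3)$; instead it bounds $\|T_x(v)-\bar T_x(v)\|_g$ directly by writing both as integrals of $\partial H/\partial v$ and estimating $\max_t\|g_t^{-1}v_t - g_{\mathrm{mid}}^{-1}v_{\mathrm{mid}}\|_g$ (for IMM) or the analogous half-step expression (for LM) via the self-concordance stability lemmas of Section~\ref{subsec:stability_SC}, after first establishing the Dikin-ball control you correctly flag as the main obstacle. This yields $C_x=O(n+\sqrt{M_1})$ \emph{independent of $h$}, i.e.\ only an $O(h^2)$ error bound, not the sharper $O(h^3)$ your expansion targets. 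Your route could work and might even tighten some downstream step-size conditions, but it requires justifying the asserted scalings $\|E_x\|=\widetilde O(n^{3/2})$, $\|E_v\|=\widetilde O(n^{2})$ and controlling the $O(h^4)$ remainder explicitly---more bookkeeping than the paper's direct estimate.

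The genuine gap is in the sensitivity analysis. The paper does not compare $D\bar T_{x,h}(v)$ to $DT_{x,h}(v^*)$ by Taylor expansion; it lower-bounds $|D\bar T_{x,h}(v)|$ directly against $h^n/\sqrt{|g(x)||g(\bar x)|}$ (Proposition~\ref{prop:stability}) through an exact Neumann-series inversion of the linearized implicit step. For IMM, after writing $(I-\tfrac h2 B)\,DY=I+\tfrac h2 B$ and inverting, the upper-right $n\times n$ block of $\sum_{i\ge 0}(hB/2)^i$ is computed via a special algebraic identity (Lemma~\ref{lem:matrixSeries}): because $B$ has the block form $\bigl[\begin{smallmatrix}C&I\\-C^2+R&-C\end{smallmatrix}\bigr]$ with $C$ symmetric, all contributions from $C$ cancel and only even powers $h^{2i}R^i$ survive. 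The resulting perturbation is $O(h^2\|R\|)$ with $\|R\|\lesssim n+\sqrt{M_1}$, which is what produces the $h=\widetilde O(1/n)$ regime. LM enjoys no such cancellation: its Jacobian perturbation is $E=\tfrac h2 g^{-1/2}Dg[g^{-1}v_{1/2}]g^{-1/2}$, genuinely first-order in $h$ with $\tr(E)\sim hn\|v_{1/2}\|_{g^{-1}}\sim hn\sqrt{n}$, forcing $h=\widetilde O(1/n^{3/2})$. Your explanation that LM's ``three sequential implicit substeps accumulate one additional factor of $\sqrt n$'' misidentifies the mechanism; the actual difference is a structural cancellation in IMM that a generic $h$-expansion would not reveal, and without it your IMM sensitivity bound would likely degrade to the LM rate.
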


To analyze one-step process of each numerical integrator, we need to keep track all the quantities explicitly to obtain the condition-number independence. 
However, the implicit nature of both integrators lead to \emph{coupled} equations for $x$ and $v$, making the  analysis complicated.
To address this, we handle these coupled equations parallelly by moving back and forth between local norms at different points, where we use the stability of local norm due to self-concordance.
We remark that our approach to analyze each integrator depends on the specific implementation of the integrator, so each integrator requires slightly different techniques in this task.

For the sensitivity, we apply implicit differentiation to the one-step equation of each integrator, obtaining a matrix equation in the form of $(I-h E)D\overline{F}_h = h C$ for matrices $E, C \in \R^{2n\times 2n}$.
We use matrix-perturbation theory to quantify a sufficient condition on the step size $h$ that ensures the invertibility of $(I-h E)$, obtaining an equation of the form $D\overline{F}_h = \sum_{i=0}^{\infty} (hC')^i$ for a matrix $C'\in \R^{2n\times 2n}$.
By extracting the upper-right $n\times n$ block matrix, it follows that $D\overline{T}_h = I + E'$ for $E' = \sum_{i=1}^{\infty} (hC^{*})^i $ with a matrix $C^{*}\in \R^{n\times n}$.
Using the self-concordance of the local metric, we get upper bounds on matrix quantities of $C^*$ including the trace, two-norm and Frobenius norm.
With $E'$ viewed as perturbation, we apply matrix-perturbation theory again to estimate a lower bound on $\Abs{D\overline{T}_h}$.

Lastly in Section~\ref{sec:polyotopes}, we show that $\mcal_{\rho} = \Brace{x\in \mcal : \norm{\alpha}^2_{g(x)^{-1}} \leq 10n^2 \log^2\frac{1}{\rho}}$ is convex by checking the second-order condition and that $\mcal_{\rho} $ has large measure by using a functional inequality. 
Then we compute all parameters discussed so far -- isoperimetry, smoothness parameters of the manifold and numerical integrator -- for the polytope setting, putting them together to obtain the results in Section~\ref{sec:result}.

%!TEX root = ./main.tex
\section{Convergence rate of ideal RHMC\label{sec:idealRHMC}}

\cite{lee2018convergence} provided a general framework
for computing the mixing rate of RHMC on a manifold embedded in $\Rn$.
They represent the mixing rate in terms of the isoperimetry and smoothness
parameters depending on the manifold and step size. In particular,
they explicitly compute those parameters and isoperimetry for the
uniform distribution on a polytope with $m$ constraints, concluding
that the mixing rate of RHMC on the Hessian manifold induced by the
logarithmic barrier of the polytope is $O\Par{mn^{2/3}}$. Notably,
this mixing rate is independent of the condition number of the polytope.
Independence of the condition number is desirable in practice, since
real-world instances are highly skewed and thus make it challenging
for sampling algorithms to sample efficiently.

Going beyond uniform sampling, we would like to obtain the condition-number-independence
of RHMC for more densities. However, even an extension to an exponential
density needs care to avoid dependence on a condition number (such
as the diameter of the domain).

In this section, we refine this framework by working on a subset $\mcal_{\rho}$
instead of $\mcal$ and extending the smoothness parameters and theorems
developed in their paper accordingly. It enables us to couple the
one-step distributions of the ideal RHMC starting at two close points
by bounding the TV distance in terms of the smoothness parameters. 

\subsection{Auxiliary function and smoothness parameters \label{sec:smoothparam}}

We redefine those smoothness parameters in \cite{lee2018convergence}
that depend on a subset $\mcal_{\rho}$ of manifold (internally parameterized
by $\rho>0$) and step size $h$, pointing out how ours differ from
the original ones. We then develop the theory for one-step coupling
based on the new parameters.

\subsubsection{Working in high probability region\label{subsec:highprobregion}}

When defining smoothness parameters, \cite{lee2018convergence} pays attention to ``well-behaved'' Hamiltonian curves $\gamma$ starting at \emph{any} point in $\mcal$, where the well-behavedness may be viewed as the
average behavior of Hamiltonian curves with \emph{high probability}
and is quantified by some auxiliary function. Then the smoothness
parameters are estimated by bounding some quantities along the curves.
To do so, they should give supremum bounds on those parameters over
all points in $\mcal$, which lead to a weaker mixing rate in the
end.

For a refined analysis, we apply a high-probability idea once again
to starting points of curves this time. In other words, we consider
well-behaved Hamiltonian curves starting only from a good region that
has high probability. Then we couple the one-step distributions at
two close-by points only in this region. This region will serve as
$\mcal'$ in Proposition \ref{prop:conductance}.

This simple change, however, turns out to yield technical difficulties
in following how \cite{lee2018convergence} proceeds with the original
parameters. In bounding the overlap of the one-step distributions,
they deal with Hamiltonian curves and Hamiltonian variations, starting
points of which are on a geodesic between two points, but the geodesic
might step out of the good region. Hence, it leads to us considering
a different path joining two points instead of the geodesic. We choose
the straight line between two points instead and carefully check if
the original approach to one-step coupling still goes through. In
addition to this, we have to redefine each of the smoothness parameters
and modify most of the statements proven in \cite{lee2018convergence}
accordingly, as we work in the region smaller than the entire domain.
We now formalize this approach.
\begin{defn}
Let $\pi$ be a target distribution on $\mcal$ such that $\frac{d\pi}{dx}\sim e^{-f(x)}$.
Given $\rho>0$, we call a measurable subset $\mcal_{\rho}$ of $\mcal$
a \emph{good region }if it is convex and has measure $\pi(\mcal_{\rho})\geq1-\rho$.
\end{defn}

\paragraph{Good region for exponential density.}

As mentioned earlier, the necessity of a refined analysis naturally
arises in attempts to obtain a condition-number-independent mixing
rate of RHMC for $f(x)=e^{-\alpha^{\top}x}$. One of parameters in
\cite{lee2018convergence} depends on $\sup_{x\in\mcal}\norm{\grad f(x)}_{g(x)^{-1}}^{2}=\sup_{x\in\mcal}\norm{\alpha}_{g(x)^{-1}}^{2}$,
but this supremum bound can be worsened by scaling up $\alpha$, and
even for fixed $\alpha$ it can be as large as the diameter of $\mcal$.
To address this issue, for given $\rho>0$ we work on a smaller convex
region that has probability at least $1-\rho$, in which the quantity
only depends on the dimension $n$ and $\rho$, and set it to be a
good region. To be precise, we will take $\mcal_{\rho}=\Brace{x\in\mcal:\norm{\grad f(x)}_{g(x)^{-1}}^{2}\leq10n^{2}\log^{2}\frac{1}{\rho}}$
for the exponential densities.

\subsubsection{Auxiliary function $\ell$ with parameters $\ell_{0}$ and $\ell_{1}$}

Initial velocities of Hamiltonian trajectories drawn from $\ncal(0,g(x)^{-1})$
can be large even though it rarely happens, as seen in the standard
concentration inequality for Gaussian distributions. Since those worst-case
trajectories lead to a weaker coupling, \cite{lee2018convergence}
focuses on ``well-behaved'' Hamiltonian trajectories rather than
all trajectories. They formalize this idea by defining an auxiliary
function $\ell$, which measures how regular a Hamiltonian trajectory
is, along with two parameters $\ell_{0}$ and $\ell_{1}$.
\begin{defn}
\label{def:auxiliary} An auxiliary function $\ell$ with parameters
$\ell_{0}$ and $\ell_{1}$ is a function that assigns a non-negative
real value to any Hamiltonian curve with step size $h$, such that 
\begin{itemize}
\item For any $x\in\mcal_{\rho}$, we have 
\[
\P_{\gamma}\Par{\ell(\gamma)>\half\ell_{0}}<\frac{1}{100}\min\Par{1,\frac{\ell_{0}}{\ell_{1}h}},
\]
where $\gamma$ is a Hamiltonian trajectory starting at $x$ with
an initial random velocity drawn from $\ncal(0,g(x)^{-1})$.
\item For any variations $\gamma_{s}$ starting from $\mcal_{\rho}$ with
$\ell(\gamma_{s})\leq\ell_{0}$, we have 
\[
\Abs{\frac{d}{ds}\ell(\gamma_{s})}\leq\ell_{1}\cdot\Par{\norm{\frac{d}{ds}\gamma_{s}(0)}_{\gamma_{s}(0)}+\delta\norm{D_{s}\gamma_{s}'(0)}_{\gamma_{s}(0)}},
\]
where the variations $\gamma_{s}$ satisfy the Hamiltonian equations,
and $D_{s}$ denotes the covariant derivative of the velocity field
$\gamma_{s}'(0)$ along a curve of starting points of the variations.
\end{itemize}
\end{defn}

In the original definitions, the parameter $\ell_{0}$ is defined
over the Hamiltonian curves starting from any $x\in\mcal$, and the
parameter $\ell_{1}$ is defined over the variations starting from
any $x\in\mcal$ with $\ell(\gamma_{s})\leq\ell_{0}$. 

Intuitions behind these parameters can be understood in the following
way. The auxiliary function $\ell$ measures how regular Hamiltonian
trajectories are, and $\ell_{0}$ serves as a threshold that allows
us to consider only Hamiltonian curves with regularity below the threshold,
while it is large enough to capture most trajectories.

To see the role of $\ell_{1}$, we run through a high-level idea for
one-step coupling. For a given endpoint $z$, we consider the set
of regular Hamiltonian curves $\gamma_{x}$ stating at $x$ with $\ell(\gamma_{x})\leq\half\ell_{0}$,
which takes into account most trajectories due to the definition of
$\ell_{0}$. Along the straight line joining $x$ and $y$, we smoothly
vary the starting point of the Hamiltonian curve to obtain a Hamiltonian
curve $\gamma_{y}$ starting at $y$ with the same endpoint $z$ and
then find a correspondence between $\gamma_{x}$ and $\gamma_{y}$.
In doing so, it is desirable to maintain the regularity of Hamiltonian
curves. In other words, the auxiliary function should not change rapidly
so that $\ell(\gamma_{y})$ is still bounded by $\ell_{0}$. We enforce
this situation via the parameter $\ell_{1}$ that bounds the rate
of change of the auxiliary function, $\frac{d}{ds}\ell(\gamma_{s})$,
along the straight line.

\subsubsection{Smoothness parameters $R_{1},R_{2},R_{3}$}

In relating the regular Hamiltonian curves $\gamma_{x}$ and $\gamma_{y}$,
some quantities naturally arise from the proof. We begin with the
definition of Riemannian curvature tensor and then define three important
parameters that govern those quantities.
\begin{defn}
The Riemannian curvature tensor is a map $R:V(\mcal)\times V(\mcal)\times V(\mcal)\to V(\mcal)$
for $V(\mcal)$, the collection of vector fields on $\mcal$, defined
by
\[
R(u,v)w=\grad_{u}\grad_{v}w-\grad_{v}\grad_{u}w-\grad_{[u,v]}w\quad\text{for }u,v,w\in V(\mcal),
\]
where $\grad$ is the Levi-Civita connection on $\mcal$, and $[u,v]\defeq\grad_{u}v-\grad_{v}u$
is the Lie bracket of the vector fields $u$ and $v$.
\end{defn}

\begin{defn}
\label{def:smoothness} Given an auxiliary function $\ell$ with parameters
$\ell_{0}$ and $\ell_{1}$ and the operator $\Phi(\gamma,t):V(\mcal)\to V(\mcal)$
defined by $\Phi(\gamma,t)u\defeq D_{u}\mu(\gamma(t))-R(u,\gamma'(t))\gamma'(t)$, 
\begin{itemize}
\item $R_{1}$ is a parameter such that for any $t\in[0,h]$ and any Hamiltonian
curves $\gamma$ starting from $\mcal_{\rho}$ with step size $h$
and $\ell(\gamma)\leq\ell_{0}$ 
\[
\norm{\Phi(\gamma,t)}_{F,\gamma(t)}\defeq\sqrt{\E_{v,w\sim\ncal(0,g(\gamma(t))^{-1})}\inner{v,\Phi(\gamma,t)w}_{g(\gamma(t))}^{2}}\leq R_{1}.
\]
\item $R_{2}$ is a parameter such that for any $t\in[0,h]$, any Hamiltonian
curves $\gamma$ starting from $\mcal_{\rho}$ with step size $h$
and $\ell(\gamma)\leq\ell_{0}$, any curve $c(s)$ starting from $\gamma(t)$
and any vector field $v(s)$ along the curve $c(s)$ with $v(0)=\gamma'(t)$,
\[
\Abs{\frac{d}{ds}\tr\Phi(v(s))\bigg\vert_{s=0}}\leq R_{2}\cdot\Par{\norm{\frac{dc}{ds}\bigg\vert_{s=0}}_{\gamma(t)}+h\norm{D_{s}v(s)\vert_{s=0}}_{\gamma(t)}},
\]
 where $v(s)$ in $\Phi$ indicates a Hamiltonian curve at time $t$
starting with an initial condition $(c(s),v(s))$.
\item $R_{3}$ is a parameter such that for any Hamiltonian curves $\gamma$
starting from $\mcal_{\rho}$ with step size $h$ and $\ell(\gamma)\leq\ell_{0}$,
if $\zeta(t)\in T_{\gamma(t)}\mcal$ is the parallel transport of
the vector $\gamma'(0)$ along $\gamma$, then 
\[
\sup_{t\in[0,h]}\norm{\Phi(\gamma,t)\zeta(t)}_{\gamma(t)}\leq R_{3}.
\]
\end{itemize}
\end{defn}

\subsection{One-step coupling and convergence rate}\label{subsec:onestepofIeal}

In this section, we bound the TV distance of two one-step distributions
of the ideal RHMC starting at two close points in terms of the redefined
parameters and step size. The following result is a slight tweak of
Theorem 29 in \cite{lee2018convergence}.
\begin{lem}
\label{lem:onestepIdeal} For $x,y\in\mcal_{\rho}$ and step size
$h\leq\min\Par{\frac{1}{10^{5}R_{1}^{1/2}},\Par{\frac{\ell_{0}}{10^{3}R_{1}^{2}\ell_{1}}}^{1/5}}$,
if $d_{\phi}(x,y)\leq\frac{1}{100}\min\Par{1,\frac{\ell_{0}}{\ell_{1}}}$,
then
\[
\dtv(\pcal_{x},\pcal_{y})\leq O\Par{\frac{1}{h}+h^{2}R_{2}+hR_{3}}d_{\phi}(x,y)+\frac{1}{25}.
\]
\end{lem}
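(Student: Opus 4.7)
The plan is to adapt the coupling argument from \cite{lee2018convergence} to the good-region parameters $R_1, R_2, R_3, \ell_0, \ell_1$, substituting the Euclidean straight line from $x$ to $y$ for the Riemannian geodesic used there. I will write each one-step density by change of variables as $p_x(z) = \sum_v p_x^\ast(v)/\Abs{DT_{x,h}(v)}$, where $p_x^\ast$ is the density of $\ncal(0,g(x)^{-1})$ and $v$ ranges over initial velocities with $T_{x,h}(v)=z$, and split this into a ``regular'' piece from curves with $\ell(\gamma)\le\tfrac{1}{2}\ell_0$ and an ``irregular'' remainder. The first bullet of Definition~\ref{def:auxiliary} bounds the total mass of irregular initial velocities at any starting point in $\mcal_\rho$ by $\tfrac{1}{100}$, so the irregular pieces at $x$ and $y$ combine via the triangle inequality into at most the $\tfrac{1}{25}$ additive slack in the claim.

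For the regular piece I interpolate along $c(s) = (1-s)x + sy$, which lies in $\mcal_\rho$ by convexity, so all smoothness parameters apply along the whole path. For a fixed endpoint $z$ and a regular Hamiltonian curve starting at $x$, the implicit function theorem applied to $(s,v)\mapsto \ham_{c(s),h}(v) - z$ yields a smooth family $v(s)$ with $\ham_{c(s),h}(v(s)) = z$; the required non-degeneracy of $D_v \ham_{c(s),h}$ follows from the Jacobi-type matrix ODE governing $DT$ together with the assumption $h\le 1/(10^5 R_1^{1/2})$, which keeps $DT_{c(s),h}$ close to its Euclidean scaling. To keep $\gamma_{c(s)}$ regular throughout, I invoke the second bullet of Definition~\ref{def:auxiliary}:
\[
\Abs{\tfrac{d}{ds}\ell(\gamma_{c(s)})} \;\le\; \ell_1\Par{\Abs{\dot c(s)}_{c(s)} + h\,\Abs{D_s v(s)}_{c(s)}}.
\]
Using self-concordance to get $\Abs{\dot c(s)}_{c(s)}\lesssim d_\phi(x,y)$ and the matrix ODE to get $\Abs{D_s v(s)}_{c(s)}\lesssim \Abs{\dot c(s)}_{c(s)}/h$, the total growth of $\ell$ over $s\in[0,1]$ is $O(\ell_1\,d_\phi(x,y))\le \tfrac{1}{2}\ell_0$ under the hypothesis $d_\phi(x,y)\le \tfrac{1}{100}\ell_0/\ell_1$, so $R_1,R_2,R_3$ remain in force along the entire interpolation.

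Differentiating $\log\Par{p_{c(s)}^\ast(v(s))/\Abs{DT_{c(s),h}(v(s))}}$ in $s$ then produces two contributions. The Gaussian factor yields an $O(1/h)\cdot\Abs{\dot c(s)}_{c(s)}$ contribution via the standard HMC-style coupling: shifting the start by $\dot c$ forces a compensating shift of order $\Abs{\dot c}/h$ in $v(s)$ to preserve the endpoint. The Jacobian factor, via $\tfrac{d}{ds}\log\Abs{DT}=\tr(DT^{-1}\tfrac{d}{ds}DT)$ and the ODE $\ddot M + \Phi(\gamma,t) M = 0$ for the Jacobi-type field $M$ underlying $DT$, splits into a trace-drift-of-$\Phi$ piece bounded by $h^2 R_2\cdot\Abs{\dot c(s)}_{c(s)}$ and a parallel-transported-direction piece bounded by $hR_3\cdot\Abs{\dot c(s)}_{c(s)}$. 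Integrating these pointwise log-density rates in $s$ and then in $z$, and combining with the irregular slack, yields the claimed bound $\dtv(\pcal_x,\pcal_y)\lesssim (1/h + h^2 R_2 + hR_3)\,d_\phi(x,y) + \tfrac{1}{25}$.

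\textbf{Main obstacle.} The principal difficulty is that the straight line $c$ is not a geodesic, so identities like $D_s\dot c\equiv 0$ and the constancy of $\Abs{\dot c(s)}_{c(s)}$ used implicitly in \cite{lee2018convergence} no longer hold. Re-running their matrix ODE and regularity-preservation estimates with $c$ a straight line generates extra drift terms proportional to $D_s\dot c$, which I plan to absorb using the self-concordance of $\phi$: over a Riemannian ball of radius $O(1)$ the metric $g$ varies by a constant factor, so the straight line deviates from the geodesic only by lower-order contributions controllable by the step-size hypotheses. Tracking these corrections through the Jacobian analysis, and verifying that every smoothness quantity in the original proof can be replaced by its $\mcal_\rho$-restricted version without worsening constants, is where the bulk of the careful work lies.
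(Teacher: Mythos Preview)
Your overall strategy matches the paper's: interpolate along the straight segment $c(s)$ inside $\mcal_\rho$, separate regular from irregular curves via the $\ell_0$ threshold, propagate the initial velocity along $c$ by the implicit function theorem, and differentiate the one-step density in $s$. Two points deserve correction.

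First, you have misplaced the $hR_3$ contribution. In the paper (following \cite{lee2018convergence}) the Jacobian factor is not differentiated directly; it is first replaced, via Lemma~\ref{lem:invertHam}, by the explicit approximation $h^{-n}\exp\bigl(\int_0^h \tfrac{t(h-t)}{h}\tr\Phi(\gamma_s,t)\,dt\bigr)$, at the cost of a multiplicative error controlled by $(h^2R_1)^2$ (this is where the hypothesis $h^5\le \ell_0/(10^3 R_1^2\ell_1)$ is used). Differentiating this approximation in $s$ produces only the $h^2R_2$ term (Lemma~\ref{lem:R2bound}). The $hR_3$ term comes instead from the \emph{Gaussian} factor: Lemma~\ref{lem:R3bound} gives $\tfrac{h}{2}\,\bigl|\nabla_\eta\|v(x)\|_x^2\bigr|\le |\langle v_x,\eta\rangle_x|+3h^2R_3\|\eta\|_x$, and after taking expectation over $v$ the first summand yields the $O(1/h)$ piece while the second yields the $O(hR_3)$ piece. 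If you try to extract $R_3$ from $\tr(DT^{-1}\tfrac{d}{ds}DT)$ you will not find it there, and if you only retain the leading $-\eta/h$ behaviour of $\nabla_\eta v$ in the Gaussian derivative you will drop the $hR_3$ term entirely.

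Second, the ``main obstacle'' you flag is much lighter than advertised. The argument never uses $D_s\dot c\equiv0$ or constancy of $\|\dot c(s)\|_{c(s)}$; all that is needed is a uniform bound on $\|\dot c(s)\|_{c(s)}$, and self-concordance gives $\|\dot c(s)\|_{c(s)}\le (1+\|c(s)-x\|_x)\,\|\dot c(0)\|_x\le 1.01$ immediately when the path has $\|\cdot\|_x$-length at most $1/100$. No extra drift terms proportional to $D_s\dot c$ appear, because Lemmas~\ref{lem:localUnique}--\ref{lem:R3bound} require only that the starting points lie in $\mcal_\rho$ and that the tangent has bounded local norm, not that $c$ be a geodesic. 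The genuine work is in the two displacements above, not in handling the curve.
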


This provides the convergence rate of the ideal RHMC for a general
density $e^{-f}$, which is a slight generalization of Theorem 30
in \cite{lee2018convergence}.

\begin{restatable}{propre}{propidealConv} \label{prop:idealConv}
Let $\pi_{T}$ be the distribution obtained after $T$ steps of a
lazy ideal RHMC with the stationary distribution $\pi$ satisfying
$\frac{d\pi}{dx}\sim e^{-f(x)}$. Let $\Lambda=\sup_{S\subset\mcal}\frac{\pi_{0}(S)}{\pi(S)}$
be the warmness of an initial distribution $\pi_{0}$. For any $\veps>0$,
let $\rho=\frac{\veps}{2\Lambda}$ and $\mcal_{\rho}$ a good region.
If step size $h$ satisfies 
\[
h^{2}\leq\frac{1}{10^{10}R_{1}},\ h^{5}\leq\frac{\ell_{0}}{10^{3}R_{1}^{2}\ell_{1}},\ h^{3}R_{2}+h^{2}R_{3}\leq\frac{1}{10^{10}}\text{ and }h\leq\frac{1}{10^{10}}\min\Par{1,\frac{\ell_{0}}{\ell_{1}}},
\]
where the parameters are defined in Definition \ref{def:auxiliary}
and \ref{def:smoothness}, then for the isoperimetry $\psi_{\mcal_{\rho}}$
of $\mcal_{\rho}$ there exists $T=O\Par{\Par{h\psi_{\mcal_{\rho}}}^{-2}\log\frac{1}{\rho}}$
such that $\dtv(\pi_{T},\pi)\leq\veps$. 
\end{restatable}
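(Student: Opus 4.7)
The plan is to chain together Lemma \ref{lem:onestepIdeal}, Proposition \ref{prop:conductance}, and Lemma \ref{lem:tvDecrease} in a standard $s$-conductance argument, with the step-size conditions tuned exactly so that the one-step coupling constant of Lemma \ref{lem:onestepIdeal} reduces to $O(1/h)$. First I would check that the listed conditions imply the hypotheses of Lemma \ref{lem:onestepIdeal}: the constraints $h^{2}\leq 1/(10^{10}R_{1})$ and $h^{5}\leq \ell_{0}/(10^{3}R_{1}^{2}\ell_{1})$ give $h\leq \min\Par{1/(10^{5}R_{1}^{1/2}),\,(\ell_{0}/(10^{3}R_{1}^{2}\ell_{1}))^{1/5}}$, and $h\leq \min(1,\ell_{0}/\ell_{1})/10^{10}$ guarantees $\Delta := c_{0}h \leq \tfrac{1}{100}\min(1,\ell_{0}/\ell_{1})$ for a sufficiently small absolute constant $c_{0}>0$. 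Moreover the inequality $h^{3}R_{2}+h^{2}R_{3}\leq 10^{-10}$ gives $h^{2}R_{2}+hR_{3}\leq 10^{-10}/h$, so the prefactor in Lemma \ref{lem:onestepIdeal} collapses to $O(1/h)$. Consequently, for any $x,y\in\mcal_{\rho}$ with $d_{\phi}(x,y)\leq\Delta$, choosing $c_{0}$ small enough yields $\dtv(\pcal_{x},\pcal_{y})\leq \tfrac{1}{10}$.

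Next I would pass to the lazy chain and apply the conductance proposition. Writing the transition kernel of the lazy ideal RHMC as $\tcal_{x}=\tfrac{1}{2}\delta_{x}+\tfrac{1}{2}\pcal_{x}$, the triangle inequality gives
\[
\dtv(\tcal_{x},\tcal_{y})\leq \tfrac{1}{2}+\tfrac{1}{2}\dtv(\pcal_{x},\pcal_{y})\leq \tfrac{1}{2}+\tfrac{1}{20}<0.9
\]
whenever $x,y\in\mcal_{\rho}$ and $d_{\phi}(x,y)\leq\Delta$. Since $\mcal_{\rho}$ is convex with $\pi(\mcal_{\rho})\geq 1-\rho$ by definition of a good region, Proposition \ref{prop:conductance} (applied with $\mcal'=\mcal_{\rho}$ and coupling radius $\Delta=c_{0}h$) yields
\[
\Phi_{s}\geq \Omega\Par{\psi_{\mcal_{\rho}}h}\qquad \text{for all }\rho\leq s<\tfrac{1}{2}.
\]

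Finally I would invoke Lemma \ref{lem:tvDecrease} with $s=\rho=\veps/(2\Lambda)$. The warmness bound $\pi_{0}\leq \Lambda\pi$ shows that for every $A\subset\mcal$ with $\pi(A)\leq s$ we have $\Abs{\pi_{0}(A)-\pi(A)}\leq \pi_{0}(A)\leq \Lambda s=\veps/2$, so $H_{s}\leq \veps/2$ and $H_{s}/s\leq \Lambda$. Plugging into Lemma \ref{lem:tvDecrease},
\[
\dtv(\pi_{T},\pi)\leq \tfrac{\veps}{2}+\Lambda\exp\Par{-\tfrac{T\Phi_{s}^{2}}{2}},
\]
which is at most $\veps$ once $T=\Omega\Par{\Phi_{s}^{-2}\log(\Lambda/\veps)}=O\Par{(h\psi_{\mcal_{\rho}})^{-2}\log(1/\rho)}$, giving the claimed bound.

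The main obstacle in this overall argument is not the aggregation step sketched above but rather Lemma \ref{lem:onestepIdeal} itself, whose proof requires comparing Hamiltonian variations along a path joining $x$ and $y$ that \emph{stays inside} the good region $\mcal_{\rho}$. Since the Riemannian geodesic between two points of $\mcal_{\rho}$ may leave the region, one must instead use the Euclidean straight line (which is contained in $\mcal_{\rho}$ by convexity) and reverify, with the refined parameters of Definitions \ref{def:auxiliary} and \ref{def:smoothness}, that the regularity and the implicit-function-theorem correspondence between initial velocities still go through. Once that lemma is in hand, the proposition reduces to the routine bookkeeping above.
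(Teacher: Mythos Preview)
Your proposal is correct and follows essentially the same route as the paper: Lemma~\ref{lem:onestepIdeal} gives the one-step coupling on $\mcal_{\rho}$, Proposition~\ref{prop:conductance} converts this to an $s$-conductance lower bound $\Phi_{s}=\Omega(h\psi_{\mcal_{\rho}})$ with $s=\rho$, and Lemma~\ref{lem:tvDecrease} finishes. Your write-up is in fact a bit more careful than the paper's own proof, since you explicitly pass to the lazy kernel $\tcal_{x}=\tfrac{1}{2}\delta_{x}+\tfrac{1}{2}\pcal_{x}$ before invoking Proposition~\ref{prop:conductance} and you explicitly bound $H_{s}\leq\Lambda s$; the paper leaves both of these implicit.
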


Toward this result, we walk through how each lemma and theorem should
change so that they can be put together well, along with the modified
smoothness parameters and auxiliary function. We start with the formula
of the probability density of the one-step distribution at $x$.
\begin{lem}[\cite{lee2018convergence}, Lemma 10] The probability density
of one-step distribution of RHMC at $x\in\mcal\subset\Rn$ is 
\begin{equation}
p_{x}(z)=\sum_{v_{x}:\ham_{x,h}(v_{x})=z}\underbrace{\Abs{D\ham_{x,h}(v_{x})}^{-1}\sqrt{\frac{\Abs{g(z)}}{(2\pi)^{n}}}\exp\Par{-\half\norm{v_{x}}_{x}^{2}}}_{\defeq p_{x}^{0}(v_{x})}.\label{eq:densityOneStep}
\end{equation}
\end{lem}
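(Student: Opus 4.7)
The plan is to obtain the formula by a direct change-of-variables argument applied to the one-step map $v_{x}\mapsto \ham_{x,h}(v_{x})$, being careful about Riemannian versus Euclidean volume elements. Recall that one step of the ideal RHMC from $x$ first samples an initial velocity $v_{x}\in T_{x}\mcal$ from the Gaussian whose density with respect to Lebesgue on $T_{x}\mcal\cong\Rn$ is
\[
q_{x}(v_{x})\defeq \frac{\sqrt{\Abs{g(x)}}}{(2\pi)^{n/2}}\exp\Par{-\half\norm{v_{x}}_{x}^{2}},
\]
and then returns $z=\ham_{x,h}(v_{x})$. Hence $p_{x}(z)\,dz$ is exactly the pushforward of $q_{x}(v_{x})\,dv_{x}$ under $\ham_{x,h}$, and the task is to compute it.

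First I would appeal to smooth dependence of the Hamiltonian flow on its initial condition to see that $\ham_{x,h}$ is a smooth map from $T_{x}\mcal$ to $\mcal$. At any $z$ for which $D\ham_{x,h}$ is non-singular at every preimage (so almost every $z$, by Sard's theorem), the preimage set $\{v_{x}:\ham_{x,h}(v_{x})=z\}$ is at most countable and $\ham_{x,h}$ is a local diffeomorphism near each preimage, so the standard change-of-variables formula gives
\[
p_{x}(z)=\sum_{v_{x}:\ham_{x,h}(v_{x})=z}\frac{q_{x}(v_{x})}{\Abs{D\ham_{x,h}^{\mathrm{Euc}}(v_{x})}},
\]
where $\Abs{D\ham_{x,h}^{\mathrm{Euc}}}$ is the Jacobian determinant in Euclidean coordinates on both $T_{x}\mcal$ and $\mcal$. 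The sum accounts for the possibility of multiple velocities that drive the Hamiltonian curve from $x$ to $z$ in time $h$.

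The remaining step is to convert $\Abs{D\ham_{x,h}^{\mathrm{Euc}}}$ into the Riemannian Jacobian $\Abs{D\ham_{x,h}}$ used in the statement, which is the one taken with $T_{x}\mcal$ and $\mcal$ endowed with the volume forms $\sqrt{\Abs{g(x)}}\,dv_{x}$ and $\sqrt{\Abs{g(z)}}\,dz$. By definition these are related by $\sqrt{\Abs{g(x)}}\cdot\Abs{D\ham_{x,h}^{\mathrm{Euc}}(v_{x})} = \sqrt{\Abs{g(z)}}\cdot\Abs{D\ham_{x,h}(v_{x})}$. Substituting this into the previous display cancels the factor $\sqrt{\Abs{g(x)}}$ in $q_{x}$ and produces the factor $\sqrt{\Abs{g(z)}}$ appearing in the claim, which gives the formula after pulling $p_{x}^{0}(v_{x})$ out of the sum.

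The main subtlety here is not any single calculation but rather the bookkeeping: one has to commit to a single convention (Riemannian or Euclidean) for every Jacobian and volume form and carry it through consistently, and one has to remember that volume preservation under the Hamiltonian flow is natively a statement on the cotangent bundle in canonical symplectic coordinates $(x,p)$ with $p=g(x)v$, which in the $(x,v)$ chart is exactly the ratio $\Abs{g(x)}/\Abs{g(z)}$ of determinants that lets the factor $\sqrt{\Abs{g(z)}}$ appear. Once the conventions are fixed, the lemma is a one-line change of variables, and multiple preimages do not cause any trouble because, generically, $\ham_{x,h}$ is a local diffeomorphism by the implicit function theorem applied to the Hamiltonian ODE.
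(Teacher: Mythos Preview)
Your approach is correct and is exactly the derivation the paper sketches right after stating the lemma (the paper itself cites \cite{lee2018convergence} for the proof and then rewrites the formula in Euclidean coordinates via the same change of variables you use).

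One correction: your displayed relation between the Euclidean and Riemannian Jacobians is inverted. Writing $D\ham_{x,h}$ with respect to $g(x)$- and $g(z)$-orthonormal bases gives the matrix $g(z)^{1/2}\,D\ham_{x,h}^{\mathrm{Euc}}\,g(x)^{-1/2}$, so
\[
\sqrt{\Abs{g(z)}}\cdot\Abs{D\ham_{x,h}^{\mathrm{Euc}}(v_{x})}=\sqrt{\Abs{g(x)}}\cdot\Abs{D\ham_{x,h}(v_{x})},
\]
the reverse of what you wrote. (This is also the form the paper uses: $\Abs{DT_{x,h}(v_x')}=\Abs{D\ham_{x,h}(v_x)}/\sqrt{\Abs{g(x)}\Abs{g(z)}}$ together with $v_x'=g(x)v_x$.) With the correct relation, dividing $q_x(v_x)$ by $\Abs{D\ham_{x,h}^{\mathrm{Euc}}}$ does cancel $\sqrt{\Abs{g(x)}}$ and produce $\sqrt{\Abs{g(z)}}$ in the numerator, exactly as you describe verbally; only the displayed identity needs to be flipped.

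Your closing remark about symplectic $(x,p)$ coordinates is not needed here and is a bit of a red herring: measure-preservation of the full flow $(x,v)\mapsto(x_h,v_h)$ plays no role in this formula, which is purely a change of variables for the map $v_x\mapsto \ham_{x,h}(v_x)$ on a single tangent space.
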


Note that the velocity $v_{x}$ is normalized by $g(x)^{-1}$, since
the domain of $\ham_{x,h}$ is endowed with the local metric $g(x)$.
In the Euclidean coordinate, the density can be rewritten as 
\begin{align}
p_{x}(z) & =\sum_{v_{x}':T_{x,h}(v_{x}')=z}\Abs{DT_{x,h}(v_{x}')}^{-1}\frac{1}{\sqrt{(2\pi)^{n}\Abs{g(x)}}}\exp\Par{-\half\norm{v_{x}'}_{g(x)^{-1}}^{2}}\nonumber \\
 & =\sum_{v_{x}':T_{x,h}(v_{x}')=z}\Abs{DT_{x,h}(v_{x}')}^{-1}p_{x}^{*}(v_{x}'),\label{eq:onestepEuclidean}
\end{align}
where $p_{x}^{*}$ is the probability density of the Gaussian distribution
$\ncal(0,g(x))$. This relation follows from $v_{x}'=g(x)v_{x}$ and
$\Abs{DT_{x,h}(v_{x}')}=\frac{\Abs{D\ham_{x,h}(v_{x})}}{\sqrt{\Abs{g(x)}\Abs{g(x')}}}$
(see the proof of Proposition \ref{prop:stability}) for $x'=\ham_{x,h}(v_{x})=T_{x,h}(v_{x}')$.

We can derive (\ref{eq:onestepEuclidean}) in the following way as
well. Intuitively, the probability of moving from $x$ to $z$ through
one step of RHMC is the summation of the probability of choosing a
proper initial velocity that brings $x$ to $z$, which is the probability
density function of $\ncal(0,g(x))$ divided by $\Abs{DT_{x,h}(v_{x}')}$.
This Jacobian term comes from the change of variables used when moving
back from the position space $z$ to the velocity space $v_{x}'$.

\paragraph{High-level idea.}

We are ready to run through a high-level idea of the one-step coupling
proof in \cite{lee2018convergence}. For two close-by points, it is
plausible that the probability densities at $x$ and $y$ are similar,
and one should relate those two densities to quantify how close they
are, which in turn results in a bound on the overlap of two one-step
distributions. It is a Hamiltonian curve that enables them to handle
this task in low level. Then they find a one-to-one correspondence
between the set of regular Hamiltonian curves from $x$ and $y$.

As one varies a starting point of a regular Hamiltonian curve from
$x$ to $y$ along a curve $c(s)$ joining $x$ and $y$, one should
quantify how fast each term in (\ref{eq:densityOneStep}) changes.
To this end, they first prove that the determinant of Jacobian is
close to $h^{n}$ and that $\ham_{x,h}$ is locally injective, which
makes it possible to work with an approximate but simpler density
\begin{equation} \label{eq:approxDensity}
    \tilde{p}_{x}(z)\defeq\sum_{v_{x}:\ham_{x,h}(v_{x})=z}\sqrt{\frac{\Abs{g(z)}}{(2\pi h^{2})^{n}}}\exp\Par{-\half\norm{v_{x}}_{x}^{2}}.
\end{equation}

Next, they prove the following on variations of Hamiltonian curves;
for a given endpoint $z$, as the starting point of a Hamiltonian
curve moves along $c(s)$, there exists a unique initial velocity
$v_{c(s)}$ at each point on $c(s)$ that brings $c(s)$ to the fixed
endpoint $z$ in step size $h$ (i.e., $\ham_{c(s),h}(v_{c(s)})=z$).
At the same time, they prove that the regularity of Hamiltonian curves,
$\ell(\gamma_{c(s)})$, does not change too rapidly along $c(s)$,
quantifying how much the proper initial velocity changes as well.
These are enough to achieve one-step coupling in terms of only $R_{1}$.
For further improvement, a more accurate estimate (\ref{eq:oneStepApproximate})
of the determinant of Jacobian is used instead, leading to an improved
bound via $R_{2}$ and $R_{3}$.

Following this approach, we elaborate on how each of the proof ingredients
can be formalized under the redefined parameters. The first ingredient
about the local injectivity of $\ham_{x,h}$ and an approximation
of its Jacobian follows from Lemma 22 in \cite{lee2018convergence}
by restricting starting points of Hamiltonian curves to $\mcal_{\rho}$
in the statement.
\begin{lem}
\label{lem:invertHam} Let $\gamma(t)=\ham_{x,t}(v_{x})$ be a Hamiltonian
curve starting at $x\in\mcal_{\rho}$ with $\ell(\gamma)\leq\ell_{0}$
and step size $h$ satisfying $h^{2}\leq1/R_{1}$. Then $D\ham_{x,h}$
is invertible and 
\[
\Abs{\log\Abs{\frac{1}{h}D\ham_{x,h}(v_{x})}-\int_{0}^{h}\frac{t(h-t)}{h}\tr\Phi(t)dt}\leq\frac{(h^{2}R_{1})^{2}}{10}.
\]
\end{lem}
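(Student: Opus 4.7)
The plan is to reduce the lemma to a second-order linear matrix ODE along $\gamma$ and then extract the log-determinant via a Volterra expansion. Fix $x\in\mcal_{\rho}$ and a vector $w\in T_x\mcal$, and form the variation $\gamma_s(t)=\ham_{x,t}(v_x+sw)$. The Jacobi field $J(t)=\partial_s\gamma_s(t)\vert_{s=0}$ satisfies $J(0)=0$ and $D_tJ(0)=w$, and a direct linearization of the Hamiltonian equation $\nabla_{\dot\gamma}\dot\gamma = -\mu(\gamma)$ in $s$, combined with the definition $\Phi u = D_u\mu - R(u,\gamma')\gamma'$, produces a Jacobi-type equation of the form $D_t^2 J = -\Phi(\gamma,t)\,J$ (the precise sign being fixed by the curvature and Hamiltonian conventions). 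In a parallel orthonormal frame along $\gamma$, write $J(t)=M(t)\,w$, where $M:[0,h]\to\R^{n\times n}$ satisfies $M(0)=0$, $\dot M(0)=I$, and $\ddot M = -\Phi M$, with $\Phi(t)$ now viewed as the matrix representation of the operator in that frame. Since the frame is orthonormal with respect to $g$, the Riemannian determinant of $D\ham_{x,h}(v_x)$ equals $\det M(h)$, and the lemma reduces to estimating $\log\det(M(h)/h)$.

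Next, integrating the ODE twice yields the Volterra equation
\[
M(t) = tI - \int_0^t (t-s)\,\Phi(s)\,M(s)\,ds.
\]
Because $x\in\mcal_\rho$ and $\ell(\gamma)\le\ell_0$, Definition~\ref{def:smoothness} gives $\|\Phi(t)\|_{F,\gamma(t)}\le R_1$ along the whole curve, which in the parallel orthonormal frame is the ordinary Frobenius bound $\|\Phi(t)\|_F\le R_1$. A standard Grönwall/contraction argument on the Volterra equation, using the hypothesis $h^2R_1\le 1$, gives $\|M(t)/t - I\|_{\mathrm{op}}\lesssim t^2R_1$ for every $t\in(0,h]$; this alone proves invertibility of $M(h)$, and hence of $D\ham_{x,h}(v_x)$.

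For the log-determinant estimate, iterate the Volterra equation once to write $M(h)/h = I + A_1 + A_2$ with
\[
A_1 = -\frac{1}{h}\int_0^h s(h-s)\,\Phi(s)\,ds, \qquad A_2 = -\frac{1}{h}\int_0^h (h-s)\,\Phi(s)\bigl(M(s)-sI\bigr)\,ds,
\]
and apply the expansion $\log\det(I+A) = \tr A - \tfrac{1}{2}\tr(A^2) + O(\|A\|_{\mathrm{op}}\|A\|_F^2)$. The leading term $\tr A_1 = -\int_0^h \tfrac{s(h-s)}{h}\tr\Phi(s)\,ds$ matches the stated integral up to the convention sign on $\Phi$. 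The main obstacle, and the reason Definition~\ref{def:smoothness} bounds $\Phi$ in a Frobenius-type norm rather than in operator norm, is controlling the two quadratic contributions so that the total error is at most $(h^2R_1)^2/10$. The term $\tfrac12\tr(A_1^2)$ is dominated by $\|A_1\|_F^2\lesssim(h^2R_1)^2$ directly from the Frobenius bound on $\Phi$; the cross term $\tr A_2$ is handled by Cauchy-Schwarz, $|\tr(\Phi(s)(M(s)-sI))|\le\|\Phi(s)\|_F\|M(s)-sI\|_F\lesssim s^3R_1^2$, where $\|M(s)-sI\|_F\lesssim s^3R_1$ comes from plugging $\|M(r)\|_{\mathrm{op}}\le 2r$ back into the Volterra integral; weighting against $(h-s)/h$ gives $|\tr A_2|\lesssim h^4R_1^2 = (h^2R_1)^2$. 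Tracking the absolute constants through this expansion yields the stated bound $(h^2R_1)^2/10$.
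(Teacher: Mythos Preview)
Your argument is correct and is precisely the matrix-ODE/Volterra approach of Lemma~22 in \cite{lee2018convergence}, which is exactly what the paper invokes here: the paper does not give an independent proof of Lemma~\ref{lem:invertHam} but simply observes that the cited argument carries over verbatim once starting points are restricted to $\mcal_\rho$ (since $R_1$ is now defined over curves starting in $\mcal_\rho$). One small remark: with the paper's conventions, Lemma~\ref{lem:HamODE} gives $D_t\dot\gamma=\mu(\gamma)$ (not $-\mu$), so the Jacobi equation reads $D_t^2 J=\Phi J$ and hence $\ddot M=+\Phi M$; this fixes the sign of your $A_1$ to $A_1=+\tfrac{1}{h}\int_0^h s(h-s)\Phi(s)\,ds$, giving $\tr A_1=\int_0^h\tfrac{t(h-t)}{h}\tr\Phi(t)\,dt$ and matching the lemma without any residual sign ambiguity.
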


As a corollary, we obtain the following estimate on the Jacobian of
the Hamiltonian map.
\begin{cor}
\label{lem:jacoIdeal} Let $(x(t),v(t))$ be the Hamiltonian curve
starting with $(x,v)\in\mcal_{\rho}\times T_{x}\mcal$, where $T_{x}\mcal\subset\Rn$
is endowed with the local metric $g(x)$. For step size $h$ with
$h^{2}\leq\frac{1}{10^{5}\sqrt{n}R_{1}}$, and $v\in T_{x}\mcal$
with $\ell(\ham_{x,t}(v))\leq\ell_{0}$, we have 
\[
h^{n}e^{-\frac{1}{600}}\leq\Abs{D\ham_{x,h}(v)}\leq h^{n}e^{\frac{1}{600}}.
\]
Namely, $\Abs{\frac{1}{h^{n}}\Abs{D\ham_{x,h}(v)}-1}\leq0.002$
\end{cor}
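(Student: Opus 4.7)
The plan is to derive this corollary directly from Lemma~\ref{lem:invertHam} by estimating the trace integral using the Frobenius-norm parameter $R_1$. Since the hypothesis $h^2 \leq \frac{1}{10^5 \sqrt{n}\, R_1}$ immediately implies $h^2 \leq 1/R_1$, Lemma~\ref{lem:invertHam} applies to the Hamiltonian curve $\gamma(t) = \ham_{x,t}(v)$ and gives
\[
\Abs{\log\Abs{\tfrac{1}{h}D\ham_{x,h}(v)} - \int_0^h \frac{t(h-t)}{h}\,\tr\Phi(t)\,dt} \leq \frac{(h^2 R_1)^2}{10},
\]
where I read $\Abs{D\ham_{x,h}(v)}$ as the absolute value of the determinant, so that $\Abs{\tfrac{1}{h}D\ham_{x,h}(v)} = h^{-n}\Abs{D\ham_{x,h}(v)}$.

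The key step is to bound the trace integral. Since $\Phi(\gamma,t)$ is a linear operator on the $n$-dimensional tangent space $T_{\gamma(t)}\mcal$, the Cauchy--Schwarz inequality in an orthonormal basis with respect to $g(\gamma(t))$ yields $\Abs{\tr \Phi(\gamma,t)} \leq \sqrt{n}\, \norm{\Phi(\gamma,t)}_{F,\gamma(t)} \leq \sqrt{n}\, R_1$, where the last inequality uses the hypothesis $\ell(\gamma) \leq \ell_0$ together with Definition~\ref{def:smoothness}. Combined with the elementary identity $\int_0^h \frac{t(h-t)}{h}\,dt = h^2/6$, this gives
\[
\Abs{\int_0^h \frac{t(h-t)}{h}\,\tr\Phi(t)\,dt} \leq \frac{\sqrt{n}\, R_1 h^2}{6}.
\]
Plugging back and using the step-size hypothesis $h^2 \leq \frac{1}{10^5\sqrt{n}\,R_1}$ to bound both the integral and the quadratic error, I obtain
\[
\Abs{\log\Par{h^{-n}\Abs{D\ham_{x,h}(v)}}} \leq \frac{\sqrt{n}\, R_1 h^2}{6} + \frac{(h^2 R_1)^2}{10} \leq \frac{1}{6\cdot 10^5} + \frac{1}{10^{11}} \leq \frac{1}{600}.
\]

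Exponentiating yields $h^n e^{-1/600} \leq \Abs{D\ham_{x,h}(v)} \leq h^n e^{1/600}$, which is the first claim. The ``namely'' consequence then follows from $\max(e^{1/600}-1,\, 1-e^{-1/600}) \leq 0.002$. I do not foresee any substantive obstacle: the proof is essentially a trace-vs-Frobenius Cauchy--Schwarz estimate plus arithmetic to verify the numerical tolerance $1/600$ against the step-size bound. The only care needed is the notational convention that $\Abs{\cdot}$ applied to a Jacobian denotes the absolute determinant, so that scaling by $1/h$ inside $\Abs{\cdot}$ pulls out the factor $h^n$ claimed in the statement.
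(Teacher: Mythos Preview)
Your proposal is correct and follows exactly the route the paper implicitly intends: the corollary is stated immediately after Lemma~\ref{lem:invertHam} with no separate proof, and the only missing ingredient is the estimate $\Abs{\tr\Phi(\gamma,t)}\leq\sqrt{n}\,\norm{\Phi(\gamma,t)}_{F,\gamma(t)}\leq\sqrt{n}\,R_1$ together with $\int_0^h t(h-t)/h\,dt=h^2/6$, which you supply. The arithmetic and the notational reading of $\Abs{\tfrac{1}{h}D\ham_{x,h}(v)}$ as $h^{-n}\Abs{D\ham_{x,h}(v)}$ are both fine.
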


The next one is the local uniqueness and existence of Hamiltonian
variations, obtained by adjusting Lemma 23 in \cite{lee2018convergence}.
\begin{lem}
\label{lem:localUnique} Let $\gamma(t)=\ham_{x,t}(v_{x})$ be a Hamiltonian
curve starting at $x\in\mcal_{\rho}$ with $\ell(\gamma)\leq\ell_{0}$
and step size $h$ satisfying $h^{2}\leq1/R_{1}$. Let $x=\gamma(0)$
and $z=\gamma(h)$ be its endpoints. 
\begin{itemize}
\item For a neighborhood $U$ of $x\in\mcal_{\rho}$ and neighborhood $V$
of $v_{x}$, there exists a unique smooth invertible vector field
$v:U\to V$ such that $v(x)=v_{x}$ and $z=\ham_{x,h}(v(y))$ for
any $y\in U$.
\item For $\eta\in T_{x}\mcal$, we have that $\norm{\grad_{\eta}v(x)}_{x}\leq\frac{5}{2h}\norm{\eta}_{x}$
and $\norm{\frac{1}{h}\eta+\grad_{\eta}v(x)}_{x}\leq\frac{3}{2}R_{1}h\norm{\eta}_{x}$.
\item Let $\gamma_{s}(t)=\ham_{c(s),h}(v(c(s))$ be a variation of $\gamma$
along a path $c(s)$ in $U$ with $c(0)=x$ and $c'(0)=\eta$. For
$t\in[0,h]$, we have $\norm{\frac{\del\gamma_{s}(t)}{\del s}\big|_{s=0}}_{\gamma(t)}\leq5\norm{\eta}_{x}$
and $\norm{D_{s}\gamma_{s}'(t)\big|_{s=0}}_{\gamma(t)}\leq\frac{10}{h}\norm{\eta}_{x}$.
\end{itemize}
\end{lem}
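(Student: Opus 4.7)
The plan is to adapt the argument of Lemma 23 in \cite{lee2018convergence} to our refined setting where starting points are restricted to $\mcal_{\rho}$. The three bullets will be handled respectively by (i) an application of the implicit function theorem to the endpoint map, (ii) solving a two-point boundary-value problem for the Jacobi-type equation obtained by linearizing the Hamiltonian flow, and (iii) propagating the resulting initial-velocity bounds forward in $t$ through a Picard-type estimate controlled by $R_{1}$.

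For the first bullet, I will use Lemma \ref{lem:invertHam} to note that $D\ham_{x,h}(v_{x})$ is invertible, and then apply the implicit function theorem to the smooth map $(y,v)\mapsto\ham_{y,h}(v)$ at $(x,v_{x})$. This produces neighborhoods $U\ni x$ and $V\ni v_{x}$ and a unique smooth map $v:U\to V$ with $v(x)=v_{x}$ and $\ham_{y,h}(v(y))=z$ for every $y\in U$; invertibility of $v$ as a vector field is inherited from invertibility of the Jacobian. I will shrink $U$ so that $U\subset\mcal_{\rho}$ and so that $\ell(\gamma_{y})\leq\ell_{0}$ continues to hold on $U$ via the rate-of-change bound on $\ell$ of Definition \ref{def:auxiliary}; this is precisely what legitimates later use of $R_{1}$.

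For the second and third bullets, pick a smooth curve $c(s)\subset U$ with $c(0)=x$, $c'(0)=\eta$, form the Hamiltonian variation $\gamma_{s}(t)=\ham_{c(s),h}(v(c(s)))$, and consider the Jacobi field $J(t)=\frac{\partial\gamma_{s}(t)}{\partial s}\big|_{s=0}$. Linearizing (\ref{eq:hmc_intro}) along this variation yields the second-order covariant ODE $D_{t}^{2}J(t)=\Phi(\gamma,t)J(t)$ with boundary data $J(0)=\eta$, $D_{t}J(0)=\grad_{\eta}v(x)$, and $J(h)=0$ (since $\gamma_{s}(h)\equiv z$). In the free case $\Phi\equiv0$ the solution is the linear interpolation $(1-t/h)\eta$, and the boundary condition forces $D_{t}J(0)=-\eta/h$. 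Twice-integrating the ODE in the general case and imposing $J(h)=0$ produces the identity
\[
\grad_{\eta}v(x)=-\frac{\eta}{h}-\frac{1}{h}\int_{0}^{h}(h-s)\,\Phi(\gamma,s)J(s)\,ds.
\]
The plan is then to treat this integral as a Picard-type perturbation: under the hypothesis $h^{2}R_{1}\leq1$ and the Frobenius bound $\norm{\Phi(\gamma,s)}_{F,\gamma(s)}\leq R_{1}$, a short bootstrap (assume $\norm{J(s)}_{\gamma(s)}\leq3\norm{\eta}_{x}$ on $[0,h]$ and close the loop through the identity above) delivers both inequalities of the second bullet, namely $\norm{\grad_{\eta}v(x)+\eta/h}_{x}\leq\tfrac{3}{2}R_{1}h\norm{\eta}_{x}$ and $\norm{\grad_{\eta}v(x)}_{x}\leq\tfrac{5}{2h}\norm{\eta}_{x}$. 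Plugging these initial-velocity bounds back into the integrated Jacobi equation from $0$ to $t$ then yields the pointwise estimates $\norm{J(t)}_{\gamma(t)}\leq5\norm{\eta}_{x}$ and $\norm{D_{t}J(t)}_{\gamma(t)}\leq(10/h)\norm{\eta}_{x}$ of the third bullet.

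\textbf{Main obstacle.} The delicate point, specific to this refined setting, is that both the regularity bound $\ell(\gamma)\leq\ell_{0}$ and the Frobenius bound on $\Phi$ by $R_{1}$ apply only to Hamiltonian curves starting inside $\mcal_{\rho}$, so I must verify that the whole family $\gamma_{s}$ is admissible, i.e., $c(s)\in\mcal_{\rho}$ and $\ell(\gamma_{s})\leq\ell_{0}$ throughout. Convexity of $\mcal_{\rho}$ handles the former whenever $c$ is taken to be the straight segment joining two points of $\mcal_{\rho}$, which is exactly the situation in the one-step coupling of Lemma \ref{lem:onestepIdeal}; the latter is enforced by shrinking $U$ further and invoking the rate-of-change bound on $\ell$ together with the step-size hypothesis $h^{2}\leq 1/R_{1}$. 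Once admissibility is in place, $R_{1}$ applies uniformly along the variation and the remainder is bookkeeping against the explicit constants.
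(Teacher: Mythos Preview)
Your proposal is correct and follows the same route the paper takes: the paper simply observes that the proof of Lemma 23 in \cite{lee2018convergence} carries over verbatim once one notes that its two ingredients---invertibility of $D\ham_{x,h}(v_{x})$ (Lemma \ref{lem:invertHam}) for the implicit function theorem, and the bound $\norm{\Phi(\gamma,t)}_{F,\gamma(t)}\leq R_{1}$ for the Jacobi-field estimates---are available because $x\in\mcal_{\rho}$ and $\ell(\gamma)\leq\ell_{0}$. One minor point: your ``main obstacle'' is not actually an obstacle for this lemma, since the second and third bullets are evaluated at $s=0$ and therefore only require the $R_{1}$ bound along the single curve $\gamma=\gamma_{0}$; the admissibility of $\gamma_{s}$ for $s\neq 0$ is the concern of the \emph{next} lemma (Lemma \ref{lem:extendGeo}), not this one.
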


The first item reveals the local uniqueness and existence of proper
initial velocities at any starting point around $x$. The second item
bounds how fast the initial velocity at $x$ change in a given direction
$\eta$. The last item extends the second result to each point on
$\gamma(t)$.

The corresponding result in \cite{lee2018convergence} is made for
all regular Hamiltonian curves from $\mcal$. Since its proof relies
on Lemma \ref{lem:invertHam} to apply the implicit function theorem
to $f(y,w)=\ham_{y,h}(w)$ and also on the definition of $R_{1}$
for the second result, the statement should be restricted to Hamiltonian
curves starting from $\mcal_{\rho}$.

We can now prove that regular Hamiltonian curves starting at $x$
with an endpoint $z$ can be smoothly varied along the straight line
between $x$ and $y$, with the regularity of variations almost preserved.
\begin{lem}
\label{lem:extendGeo} Let $\gamma(t)=\ham_{x,t}(v_{x})$ be a Hamiltonian
curve starting at $x\in\mcal_{\rho}$ with $\ell(\gamma)\leq\half\ell_{0}$
and step size $h$ satisfying $h^{2}\leq1/R_{1}$. Let $x=\gamma(0)$
and $z=\gamma(h)$ be its endpoints. For $y\in\mcal_{\rho}$ and $\beta=\frac{y-x}{\norm{y-x}_{x}}$,
let $c(s)=s\beta+x$ be a straight line joining $x$ and $y$ with
$c(0)=x$ and $c(s')=y$. Let $s'\leq\frac{1}{100}\min\Par{1,\frac{\ell_{0}}{\ell_{1}}}$.
\begin{itemize}
\item There exists a unique velocity field $v$ along $c$ such that $z=\ham_{c(s),h}(v(c(s)))$.
Furthermore, this vector field is also uniquely determined by $c(s)$
and $v(c(s))$ on $c(s)$.
\item $\ell(\ham_{c(s),h}(v(c(s)))\leq\ell_{0}$ for all $s$.
\end{itemize}
\end{lem}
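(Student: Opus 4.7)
My plan is to prove both bullets simultaneously by a continuation argument along $s \in [0,s']$, using Lemma \ref{lem:localUnique} as the local existence input and Definition \ref{def:auxiliary} to propagate the regularity bound. The first observation is that $\mcal_{\rho}$ is convex by definition of a good region, so the straight line $c(s)=x+s\beta$ joining $x,y\in\mcal_{\rho}$ lies entirely in $\mcal_{\rho}$. This is exactly the place where the switch from a length-minimizing geodesic (as in \cite{lee2018convergence}) to the straight line matters: we never leave the region where Lemma \ref{lem:localUnique} and the smoothness parameters are defined.

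Define the maximal subinterval $I\subseteq[0,s']$ on which there exists a smooth velocity field $v$ along $c|_I$ with $z=\ham_{c(s),h}(v(c(s)))$ and $\ell(\gamma_s)\leq\ell_0$, where $\gamma_s(t)\defeq\ham_{c(s),t}(v(c(s)))$. Nonemptiness follows from the hypothesis $\ell(\gamma_0)\leq\half\ell_0$. Openness is exactly the content of Lemma \ref{lem:localUnique}: at any $s_0\in I$, since $c(s_0)\in\mcal_{\rho}$ and $\ell(\gamma_{s_0})\leq\ell_0$, the lemma produces a neighborhood $U\ni c(s_0)$ in $\mcal_{\rho}$ and a unique smooth vector field on $U$ agreeing with $v(c(s_0))$ and preserving the endpoint property. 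Gluing these local pieces via the uniqueness clause also yields the ``uniquely determined'' addendum in the first bullet.

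The heart of the proof is the uniform bound $\ell(\gamma_s)<\ell_0$ on $I$, which will force $I=[0,s']$ by a standard open-closed argument. Fix $s\in I$ and apply the variation property of Definition \ref{def:auxiliary} to $\gamma_s$. Since $\frac{d}{ds}\gamma_s(0)=c'(s)=\beta$ and $D_s\gamma_s'(0)=\grad_{c'(s)}v(c(s))$, the second item of Lemma \ref{lem:localUnique} applied at $c(s)$ gives $\norm{\grad_{c'(s)}v(c(s))}_{c(s)}\leq\frac{5}{2h}\norm{\beta}_{c(s)}$, hence
\[
\Abs{\tfrac{d}{ds}\ell(\gamma_s)}\leq\ell_1\Par{\norm{\beta}_{c(s)}+\delta\cdot\tfrac{5}{2h}\norm{\beta}_{c(s)}}\lesssim\ell_1\norm{\beta}_{c(s)}.
\]
By self-concordance of $\phi$ and $\norm{c(s)-x}_x=s\leq s'\leq\frac{1}{100}$, the local norm is stable along $c$, giving $\norm{\beta}_{c(s)}\leq 2\norm{\beta}_x=2$. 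Integrating over $[0,s]$ with the length budget $s'\leq\frac{1}{100}\min(1,\ell_0/\ell_1)$,
\[
\ell(\gamma_s)\leq\tfrac{1}{2}\ell_0+O(\ell_1\cdot s')\leq\tfrac{1}{2}\ell_0+O(\ell_0/100)<\ell_0,
\]
so $\ell$ stays strictly below $\ell_0$; this yields closedness of $I$ in $[0,s']$ and hence $I=[0,s']$, proving both bullets.

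The main obstacle is the two-way coupling between existence and regularity: Lemma \ref{lem:localUnique} requires $\ell(\gamma)\leq\ell_0$ in order to produce the quantitative bounds that in turn control the growth of $\ell$. The bootstrap is made possible by the two cushions in the hypotheses --- the factor-two gap between $\ell(\gamma_0)\leq\half\ell_0$ and the budget $\ell_0$, and the short length $s'\leq\frac{1}{100}\min(1,\ell_0/\ell_1)$ of the segment --- which together keep the integrated change of $\ell$ strictly smaller than $\half\ell_0$ and allow the continuation to proceed all the way to $s=s'$.
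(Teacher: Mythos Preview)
Your proposal is correct and follows essentially the same approach as the paper's proof: both use Lemma~\ref{lem:localUnique} for local existence and the covariant-derivative bound, apply the $\ell_1$-clause of Definition~\ref{def:auxiliary}, control $\norm{\beta}_{c(s)}$ via self-concordance (using $\norm{c(s)-x}_x=s\leq s'\leq\tfrac{1}{100}$), and integrate over $[0,s']$ to keep $\ell(\gamma_s)$ below $\ell_0$. The only difference is presentational: you spell out the open--closed continuation argument explicitly, whereas the paper defers the existence/uniqueness part to \cite{lee2018convergence} and writes the integration bound directly as $\ell(\gamma_y)\leq\tfrac{1}{2}\ell_0+\ell_1\int_0^{s'}\tfrac{7}{2}\norm{\beta}_{c(s)}\,ds\leq\tfrac{1}{2}\ell_0+4\ell_1 s'\leq\ell_0$.
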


Compared to the original result, we use a straight line instead of
a unit-speed geodesic between $x$ and $y$, as the geodesic might
escape the good region $\mcal_{\rho}$. The first item implies that
there exists an initial velocity $v(c(s))$ at each point $c(s)$
such that we can reach the fixed endpoint $z$ via the Hamiltonian
trajectory with the initial condition $(c(s),v(c(s))$. The second
item indicates that the regularity of such Hamiltonian trajectories
is preserved up to constant along the straight line.
\begin{proof}
The first result can be proven similarly as in \cite{lee2018convergence}.
For the second, we denote by $\gamma_{s}$ the Hamiltonian trajectory
starting at $c(s)$ with the proper initial velocity $v(c(s))$. We
note that $s'=\norm{y-x}_{x}$ and $\frac{dc(s)}{ds}=\beta$. By self-concordance
of $g$, we have that $\norm{\beta}_{c(s)}\leq\Par{1+\norm{x-c(s)}_{x}}\norm{\beta}_{x}\leq1+s'$.
Thus by Lemma \ref{lem:localUnique}, 
\[
\norm{D_{s}v(s)}_{c(s)}\leq\frac{5}{2h}\norm{\beta}_{c(s)}\leq\frac{5}{2h}\Par{1+\norm{x-c(s)}_{x}},
\]
and 
\begin{align*}
\ell(\gamma_{y}) & \leq\ell(\gamma_{x})+\int_{0}^{s'}\Abs{\frac{d}{ds}\ell(\gamma_{s})}ds\leq\half\ell_{0}+\ell_{1}\int_{0}^{s'}\Par{\norm{\beta}_{c(s)}+\frac{5}{2}\norm{\beta}_{c(s)}}ds\\
 & \leq\half\ell_{0}+\ell_{1}s'\cdot\frac{7}{2}(1+s')\leq\half\ell_{0}+\ell_{1}s'\cdot4\\
 & \leq\ell_{0},
\end{align*}
where we used that $1+s'\leq1.01$ and $s'\leq\frac{1}{100}\frac{\ell_{0}}{\ell_{1}}$.
\end{proof}
The next two lemmas provide bounds on some quantities via $R_{2}$
and $R_{3}$, which are modifications of Lemma 34 and 32 in \cite{lee2018convergence}.
\begin{lem}
\label{lem:R2bound} Let $\gamma_{s}$ be a family of Hamiltonian
curves joining $c(s)$ and $z$ defined in Lemma \ref{lem:extendGeo}
with $\ell(\gamma_{s})\leq\ell_{0}$ and step size $h$ satisfying
$h^{2}\leq1/R_{1}$. Then, 
\[
\Abs{\int_{0}^{h}\frac{t(h-t)}{h}\frac{d}{ds}\tr\Phi\Par{\gamma_{s}'(t)}dt}\leq O\Par{h^{2}R_{2}}.
\]
\end{lem}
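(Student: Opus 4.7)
The plan is to apply the defining inequality of $R_{2}$ pointwise at each $t\in[0,h]$, bound the resulting ``speed'' factors using Lemma \ref{lem:localUnique}, and then integrate against the kernel $\frac{t(h-t)}{h}$.

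First, I would fix $t\in[0,h]$ and view $s\mapsto\gamma_{s}(t)$ as the base curve and $s\mapsto\gamma_{s}'(t)$ as the velocity field required in Definition \ref{def:smoothness}. Since a Hamiltonian trajectory is uniquely determined by its initial data, $\tr\Phi(\gamma_{s}'(t))$ in the sense of $R_{2}$'s definition coincides with $\tr\Phi(\gamma_{s},t)$ in our setting, so the definition directly yields
\[
\Abs{\frac{d}{ds}\tr\Phi(\gamma_{s}'(t))}\leq R_{2}\Par{\norm{\frac{\del\gamma_{s}(t)}{\del s}}_{\gamma_{s}(t)}+h\norm{D_{s}\gamma_{s}'(t)}_{\gamma_{s}(t)}}.
\]

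Next, I would bound the two norms on the right uniformly in $s$ and $t$. Lemma \ref{lem:extendGeo} guarantees $\ell(\gamma_{s})\leq\ell_{0}$ at every $s\in[0,s']$, so Lemma \ref{lem:localUnique} applies at each base point $c(s)=x+s\beta$ with direction $\eta=c'(s)=\beta$, yielding $\norm{\del_{s}\gamma_{s}(t)}_{\gamma_{s}(t)}\leq 5\norm{\beta}_{c(s)}$ and $\norm{D_{s}\gamma_{s}'(t)}_{\gamma_{s}(t)}\leq\frac{10}{h}\norm{\beta}_{c(s)}$. Self-concordance of $g$ together with $\norm{x-c(s)}_{x}=s\leq s'\leq\frac{1}{100}$ gives $\norm{\beta}_{c(s)}\leq(1+s')\norm{\beta}_{x}\leq 1.01$. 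Consequently $\Abs{\frac{d}{ds}\tr\Phi(\gamma_{s}'(t))}$ is at most a universal constant multiple of $R_{2}$, uniformly in $s$ and $t$, and multiplying by $\frac{t(h-t)}{h}$ and integrating using $\int_{0}^{h}\frac{t(h-t)}{h}\,dt=\frac{h^{2}}{6}$ produces the required $O(h^{2}R_{2})$ bound.

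The only real subtlety, essentially bookkeeping, is that Lemma \ref{lem:localUnique} must be applied not only at $s=0$ but at every base point $c(s)$ along the segment. This is possible precisely because Lemma \ref{lem:extendGeo} propagates the regularity bound $\ell(\gamma_{s})\leq\ell_{0}$ along the straight line (chosen in place of a geodesic so as to remain inside the good region $\mcal_{\rho}$ by convexity), and because self-concordance keeps $\norm{\beta}_{c(s)}$ within a constant of $\norm{\beta}_{x}=1$. Once these ingredients are in place, the estimate is immediate.
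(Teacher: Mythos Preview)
Your argument is correct and is essentially the same approach as the paper's: the paper does not spell out a proof but simply notes that the original argument from \cite{lee2018convergence}, which uses the definition of $R_{2}$ together with Lemma~\ref{lem:localUnique}, goes through verbatim once the parameters are restricted to $\mcal_{\rho}$. Your write-up supplies exactly those details, including the one point the paper emphasizes (that $\ell(\gamma_{s})\leq\ell_{0}$ and $c(s)\in\mcal_{\rho}$ hold for every $s$ along the straight segment, so both the $R_{2}$ bound and Lemma~\ref{lem:localUnique} apply at each base point, with $\norm{\beta}_{c(s)}$ controlled by self-concordance).
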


Recall that $\gamma_{s}$ given in Lemma \ref{lem:extendGeo} has
a starting point in $\mcal_{\rho}$ with $\ell(\gamma_{s})\leq\ell_{0}$.
Since its original proof uses the definition of $R_{2}$ and Lemma
\ref{lem:localUnique}, and they are applicable to regular Hamiltonian
curves starting from $\mcal_{\rho}$ with $\ell(\gamma_{s})\leq\ell_{0}$,
the original proof of this lemma still works with our new definitions
of the parameters.
\begin{lem}
\label{lem:R3bound} Let $\gamma(t)=\ham_{x,t}(v_{x})$ be a Hamiltonian
curve starting at $x\in\mcal_{\rho}$ with $\ell(\gamma)\leq\ell_{0}$
and step size $h$ satisfying $h^{2}\leq1/R_{1}$. Then, 
\[
\frac{h}{2}\Abs{\grad_{\eta}\norm{v(x)}_{x}^{2}}\leq\Abs{\inner{v_{x},\eta}_{x}}+3h^{2}R_{3}\norm{\eta}_{x}.
\]
\end{lem}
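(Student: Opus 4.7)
The plan is to refine the naive bound one could obtain from Lemma~\ref{lem:localUnique}, which only provides $\norm{\grad_{\eta}v(x)+\eta/h}_{x}\leq\frac{3}{2}R_{1}h\norm{\eta}_{x}$ and hence a bound involving $R_{1}$ together with an unwanted $\norm{v_{x}}_{x}$ factor. The sharper statement requires extracting the \emph{projection} of the error $\grad_{\eta}v(x)+\eta/h$ onto the specific direction $v_{x}$; this is exactly what $R_{3}$ is designed to control, since it bounds $\Phi(\gamma,t)\zeta(t)$ along the parallel transport of $v_{x}=\gamma'(0)$. By metric compatibility of the Levi--Civita connection, $\grad_{\eta}\norm{v(x)}_{x}^{2}=2\inner{v_{x},\grad_{\eta}v(x)}_{x}$, so it suffices to show $h\Abs{\inner{v_{x},\grad_{\eta}v(x)}_{x}}\leq\Abs{\inner{v_{x},\eta}_{x}}+3h^{2}R_{3}\norm{\eta}_{x}$.

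Take the Hamiltonian variation $\gamma_{s}(t)=\ham_{c(s),t}(v(c(s)))$ for a curve $c$ with $c(0)=x$, $c'(0)=\eta$, and $\gamma_{s}(h)=z$ fixed (the field $v$ is from Lemma~\ref{lem:localUnique}). Its variation field $J(t)\defeq\del_{s}\gamma_{s}(t)\vert_{s=0}$ is a Jacobi-like field along $\gamma$ satisfying $D_{t}^{2}J=\Phi(\gamma,t)J$ with $J(0)=\eta$, $J(h)=0$, $D_{t}J(0)=\grad_{\eta}v(x)$. Introduce a \emph{test} Jacobi-like field $K$ solving the same equation with $K(0)=v_{x}$ and $K(h)=0$. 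Using the (algebraic) self-adjointness of $\Phi$, the Wronskian-type quantity $W(t)\defeq\inner{K,D_{t}J}_{\gamma(t)}-\inner{D_{t}K,J}_{\gamma(t)}$ is conserved, and since $W(h)=0$, evaluating at $t=0$ yields the key identity
\[
\inner{v_{x},\grad_{\eta}v(x)}_{x}=\inner{D_{t}K(0),\eta}_{x}.
\]

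It remains to estimate $D_{t}K(0)$. Let $\zeta(t)$ be the parallel transport of $v_{x}$ along $\gamma$ and set $\tilde{K}(t)\defeq(1-t/h)\zeta(t)$, which satisfies $D_{t}^{2}\tilde{K}=0$ with matching boundary data; in particular $D_{t}\tilde{K}(0)=-v_{x}/h$. The perturbation $\delta\defeq K-\tilde{K}$ obeys $D_{t}^{2}\delta=\Phi(\tilde{K}+\delta)$ with homogeneous Dirichlet conditions on $[0,h]$. Representing $\delta$ through the Green's function for $D_{t}^{2}$ on $[0,h]$ (equivalently, integrating twice and enforcing $\int_{0}^{h}D_{t}\delta(t)\,dt=0$), the leading term is
\[
D_{t}\delta(0)=-\frac{1}{h}\int_{0}^{h}(h-s)\Par{1-\tfrac{s}{h}}\Phi(\gamma,s)\zeta(s)\,ds+\text{higher order in }\Phi,
\]
whose norm is bounded by $\frac{R_{3}}{h}\int_{0}^{h}(h-s)(1-s/h)\,ds=\frac{1}{3}hR_{3}$ via the definition of $R_{3}$. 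The higher-order terms, governed by $\norm{\Phi\delta}$, are absorbed using $h^{2}\leq 1/R_{1}$ together with a Picard/contraction argument on the Jacobi BVP (analogous to how Lemma~\ref{lem:invertHam} exploits the same step-size condition). Combining, $\norm{D_{t}K(0)+v_{x}/h}_{x}\leq 3hR_{3}$. Pairing with $\eta$ and applying Cauchy--Schwarz gives $\Abs{\inner{D_{t}K(0),\eta}_{x}+\tfrac{1}{h}\inner{v_{x},\eta}_{x}}\leq 3hR_{3}\norm{\eta}_{x}$, which after multiplication by $h$ and substitution into the identity above yields the claim.

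The main obstacle is the Jacobi boundary value problem estimate: one must show that, under $h^{2}\leq 1/R_{1}$, the solution $K$ exists, is unique, and is well-approximated by $\tilde{K}$ with the correction driven \emph{only} by $\Phi\zeta$ (not by the more pessimistic $\norm{\Phi}\norm{v_{x}}$). This is precisely where the distinction between $R_{1}$ (operator-level control) and $R_{3}$ (control along $\zeta$) is exploited, and it is what prevents an extraneous $\norm{v_{x}}_{x}$ factor from appearing in the final bound.
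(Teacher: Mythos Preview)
Your argument is correct. The Wronskian identity is valid because $\Phi$ is self-adjoint (indeed $\mu=-\mathrm{grad}_{g}H_{1}$, so $D_{\cdot}\mu$ is a Riemannian Hessian, and the curvature term is self-adjoint by the usual symmetries), and the contraction for $\delta=K-\tilde K$ does close under $h^{2}R_{1}\le 1$ once you split the forcing as $\Phi K=(1-s/h)\Phi\zeta+\Phi\delta$: the first piece is bounded by $R_{3}$, and the second gives $\sup_{t}\norm{\delta(t)}\le \tfrac{1}{8}h^{2}R_{3}+\tfrac{1}{8}\sup_{t}\norm{\delta(t)}$, yielding $\norm{D_{t}\delta(0)}\le \tfrac{17}{42}\,hR_{3}<3hR_{3}$.

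That said, the route the paper points to (following \cite{lee2018convergence}) is more direct and makes the use of Lemma~\ref{lem:localUnique} explicit. One simply sets $f(t)=\inner{\zeta(t),J(t)}_{\gamma(t)}$ with $J=\partial_{s}\gamma_{s}\vert_{s=0}$ the variation field; then $f(0)=\inner{v_{x},\eta}_{x}$, $f(h)=0$, $f'(0)=\inner{v_{x},\grad_{\eta}v(x)}_{x}$, and by self-adjointness $f''(t)=\inner{\Phi\zeta,J}$, so $\lvert f''\rvert\le R_{3}\cdot 5\norm{\eta}_{x}$ using the bound $\norm{J(t)}_{\gamma(t)}\le 5\norm{\eta}_{x}$ from Lemma~\ref{lem:localUnique}. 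Taylor expansion at $t=0$ with $f(h)=0$ gives $h\lvert f'(0)\rvert\le\lvert f(0)\rvert+\tfrac{5}{2}h^{2}R_{3}\norm{\eta}_{x}$, which is the claim. Your dualized approach trades this single pairing for an auxiliary Jacobi BVP and a fixed-point estimate; it avoids quoting the $\norm{J}\le 5\norm{\eta}$ bound and yields a slightly sharper constant, at the cost of more machinery.
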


We can follow its original proof by using Lemma \ref{lem:localUnique}
and the definition of $R_{3}$, as the regular Hamiltonian curve considered
starts at $x\in\mcal_{\rho}$. We are now ready to prove Lemma \ref{lem:onestepIdeal}.

\begin{proof}[Proof of Lemma~\ref{lem:onestepIdeal}]
Let $c(s)$ be the straight line joining $x$ and $y$, contained
in $\mcal_{\rho}$ due to the convexity of $\mcal_{\rho}$. We denote
$\tilde{\ell}\defeq\min\Par{1,\frac{\ell_{0}}{\ell_{1}h}}$. For $x\in\mcal_{\rho}$,
let $V_{x}$ be the set of velocities $v_{x}$ such that $\ell(\ham_{x,h}(v_{x}))\leq\half\ell_{0}$.
Note that $\pcal_{x}^{*}(V_{x}^{c})\leq\frac{1}{100}\tilde{\ell}$
by the definition of $\ell_{0}$, where $\pcal_{x}^{*}$ is the one-step
distribution over velocities (not position) at $x$. Since $c(s)$
is contained in $\mcal_{\rho}$ and $\gamma(t)=\ham_{x,t}(v_{x})$
has regularity at most $\half\ell_{0}$, Lemma \ref{lem:extendGeo}
guarantees the existence of a family of Hamiltonian variations $\gamma_{s}(t)$
joining $c(s)$ and $\gamma(h)$ with $\ell(\gamma_{s})\leq\ell_{0}$
for all $s\in[0,\norm{y-x}_{x}]$.

We define an approximate probability density $\tilde{p}_{c(s)}$ of
$p_{c(s)}$, where $p_{c(s)}$ is the probability density of $\pcal_{c(s)}$.
Driven by Lemma \ref{lem:invertHam}, for $z\in\mcal$ we define 
\begin{equation}
\tilde{p}_{c(s)}(z)\defeq\sum_{v:\,\ham_{c(s),h}(v)=z}\underbrace{\sqrt{\frac{\Abs{g\Par{\ham_{c(s),h}(v)}}}{(2\pi h^{2})^{n}}}\exp\Par{-\int_{0}^{h}\frac{t(h-t)}{h}\tr\Phi(\gamma_{s},t)dt}\cdot\exp\Par{-\half\norm v_{c(s)}^{2}}}_{\defeq\tilde{p}_{c(s)}^{0}(v)},\label{eq:oneStepApproximate}
\end{equation}
which is obtained by using $\exp\Par{-\int_{0}^{h}\frac{t(h-t)}{h}\tr\Phi(t)dt}$
in place of $\Abs{D\ham_{c(s),h}(v)}^{-1}$ in $p_{x}(z)$ (see (\ref{eq:densityOneStep})). 

We now relate $\tilde{p}_{c(s)}$ to $p_{c(s)}$. Note that the ratio
of the summand of $\tilde{p}_{c(s)}(z)$ and $p_{c(s)}(z)$ is equal
to $\tilde{p}_{c(s)}^{0}(v)/p_{c(s)}^{0}(v)=\frac{\frac{1}{h^{n}}\exp\Par{-\int_{0}^{h}\frac{t(h-t)}{h}\tr\Phi(\gamma_{s},t)dt}}{\Abs{D\ham_{c(s),h}(v)}^{-1}}$
(see (\ref{eq:densityOneStep})). Due to $\ell(\gamma_{s})\leq\ell_{0}$,
we can apply Lemma \ref{lem:invertHam} to $\gamma_{s}(t)$, obtaining
\[
\exp\Par{-\frac{(h^{2}R_{1})^{2}}{10}}\leq\frac{\frac{1}{h^{n}}\exp\Par{-\int_{0}^{h}\frac{t(h-t)}{h}\tr\Phi(\gamma_{s},t)dt}}{\Abs{D\ham_{c(s),h}(v)}^{-1}}\leq\exp\Par{\frac{(h^{2}R_{1})^{2}}{10}}.
\]
Using the conditions on the step size $h$, we can show that for $C\defeq1+\frac{1}{10^{3}}\tilde{\ell}$
\begin{align*}
\exp\Par{\frac{(h^{2}R_{1})^{2}}{10}} & \leq1+2\frac{(h^{2}R_{1})^{2}}{10}\leq1+2\min\Par{\frac{1}{10^{10}},\frac{\ell_{0}}{10^{3}\ell_{1}}}\leq C.
\end{align*}
Thus, the ratio is bounded below by $C^{-1}$ and above by $C$, and
it implies that 
\begin{equation}
C^{-1}\cdot p_{c(s)}^{0}(v)\leq\tilde{p}_{c(s)}^{0}(v)\leq C\cdot p_{c(s)}^{0}(v).\label{eq:relateApproximate}
\end{equation}
By Lemma \ref{lem:extendGeo}, for each $v_{x}\in V_{x}$ with $\ham_{x,h}(v_{x})=z$
there is a one-to-one correspondence between $v_{x}$ and $v_{y}$,
where $v_{y}$ satisfies $\ham_{y,h}(v_{y})=z$. For this $v_{y}$,
(\ref{eq:relateApproximate}) leads to 
\begin{align}
p_{x}^{0}(v_{x})-p_{y}^{0}(v_{y}) & \leq C\cdot\tilde{p}_{x}^{0}(v_{x})-C^{-1}\cdot\tilde{p}_{y}^{0}(v_{y})\nonumber \\
 & =\Par{C^{2}-1}C^{-1}\tilde{p}_{x}^{0}(v_{x})+C^{-1}\Par{\tilde{p}_{x}^{0}(v_{x})-\tilde{p}_{y}^{0}(v_{y})}\nonumber \\
 & \leq\Par{C^{2}-1}p_{x}^{0}(v_{x})+C^{-1}\Par{\tilde{p}_{x}^{0}(v_{x})-\tilde{p}_{y}^{0}(v_{y})}.\label{eq:ineq2}
\end{align}
In a similar way, we can show that 
\begin{equation}
\Par{C^{-2}-1}p_{x}^{0}(v_{x})+C\Par{\tilde{p}_{x}^{0}(v_{x})-\tilde{p}_{y}^{0}(v_{y})}\leq p_{x}^{0}(v_{x})-p_{y}^{0}(v_{y}).\label{eq:ineq3}
\end{equation}
Using this,
\begin{align}
p_{x}(z)-p_{y}(z) & =\sum_{v_{x}:\ham_{x,h}(v_{x})=z}p_{x}^{0}(v_{x})-\sum_{v_{y}:\ham_{y,h}(v_{y})=z}p_{y}^{0}(v_{y})\nonumber \\
 & \leq\sum_{v_{x}\notin V_{x}:\ham_{x,h}(v_{x})=z}p_{x}^{0}(v_{x})+\sum_{v_{x}\in V_{x}:\ham_{x,h}(v_{x})=z}\Par{p_{x}^{0}(v_{x})-p_{y}^{0}(v_{y})},\label{eq:ineq1}
\end{align}
where in the inequality we only left $v_{y}$ such that $\ham_{y,h}(v_{y})=z$
and that $v_{y}$ is the counterpart of $v_{x}\in V_{x}$ given by
the one-to-one correspondence.

We now bound the TV distance between $\pcal_{x}$ and $\pcal_{y}$
as follow:
\begin{align}
\dtv(\pcal_{x},\pcal_{y}) & =\half\int\Abs{p_{x}(z)-p_{y}(z)}dz\nonumber \\
 & \underset{\eqref{eq:ineq1}}{\leq}\int_{z}\sum_{v_{x}\notin V_{x}:\ham_{x,h}(v_{x})=z}p_{x}^{0}(v_{x})dz+\int_{z}\sum_{v_{x}\in V_{x}:\ham_{x,h}(v_{x})=z}\Abs{p_{x}^{0}(v_{x})-p_{y}^{0}(v_{y})}dz\nonumber \\
 & \underset{\eqref{eq:ineq2},\,\eqref{eq:ineq3}}{\leq}\pcal_{x}^{*}(V_{x}^{c})+\Par{C^{2}-1}\int_{z}\sum_{v_{x}\in V_{x}:\ham_{x,h}(v_{x})=z}p_{x}^{0}(v_{x})dz\nonumber \\
 & \qquad\qquad+2\int_{z}\sum_{v_{x}\in V_{x}:\ham_{x,h}(v_{x})=z}\Abs{\tilde{p}_{x}^{0}(v_{x})-\tilde{p}_{y}^{0}(v_{y})}dz\nonumber \\
 & \leq\frac{\tilde{\ell}}{100}+\frac{\tilde{\ell}}{100}\int_{V_{x}}p_{x}^{*}(v)dv+2\int_{z}\sum_{v_{x}\in V_{x}:\ham_{x,h}(v_{x})=z}\int_{s}\Abs{\frac{d}{ds}\tilde{p}_{c(s)}^{0}(v_{c(s)})}dsdz\nonumber \\
 & \leq\frac{\tilde{\ell}}{50}+2\int_{s}\underbrace{\int_{z}\sum_{v_{x}\in V_{x}:\ham_{x,h}(v_{x})=z}\Abs{\frac{d}{ds}\tilde{p}_{c(s)}^{0}(v_{c(s)})}dz}_{\defeq F_{s}}ds,\label{eq:tvineq1}
\end{align}
where we used that $\int_{V_{x}}p_{x}^{*}(v)dv\leq1$ in the last
inequality, and $v_{c(s)}$ is the initial velocity at $c(s)$ corresponding
to $v_{x}\in V_{x}$ (via the one-to-one correspondence).

Let us bound $F_{s}$ in terms of the parameters. From direct computation
\[
\frac{d}{ds}\tilde{p}_{c(s)}^{0}(v_{c(s)})=\Par{-\int_{0}^{h}\frac{t(h-t)}{h}\frac{d}{ds}\tr\Phi(\gamma_{s}'(t))dt-\half\frac{d}{ds}\norm{v_{c(s)}}_{c(s)}^{2}}\tilde{p}_{c(s)}^{0}(v_{c(s)}).
\]
Due to $\tilde{p}_{c(s)}^{0}(v_{c(s)})\leq2p_{c(s)}^{0}(v_{c(s)})$,
we have
\[
\Abs{\frac{d}{ds}\tilde{p}_{c(s)}^{0}(v_{c(s)})}\leq2\Par{\Abs{\int_{0}^{h}\frac{t(h-t)}{h}\frac{d}{ds}\tr\Phi(\gamma_{s}'(t))dt}+\half\Abs{\frac{d}{ds}\norm{v_{c(s)}}_{c(s)}^{2}}}p_{c(s)}^{0}(v_{c(s)}).
\]
As $\ell(\gamma_{s})\leq\ell_{0}$ due to Lemma \ref{lem:extendGeo},
it follow from Lemma \ref{lem:R2bound} that 
\begin{align*}
F_{s} & \leq4\int_{z}\sum_{v_{x}\in V_{x}:\ham_{x,h}(v_{x})=z}\Par{\Abs{\int_{0}^{h}\frac{t(h-t)}{h}\frac{d}{ds}\tr\Phi(\gamma_{s}'(t))dt}+\half\Abs{\frac{d}{ds}\norm{v_{c(s)}}_{c(s)}^{2}}}p_{c(s)}^{0}(v_{c(s)})dz\\
 & \leq O\Par{h^{2}R_{2}}\int_{V_{x}}p_{v}^{*}(v)dv+2\int_{z}\sum_{v_{x}\in V_{x}:\ham_{x,h}(v_{x})=z}\Abs{\frac{d}{ds}\norm{v_{c(s)}}_{c(s)}^{2}}p_{c(s)}^{0}(v_{c(s)})dz\\
 & \leq O\Par{h^{2}R_{2}} + 2\underbrace{\int_{ \Brace{v\,:\,\ell\Par{\ham_{c(s),h}(v)}\leq\ell_{0}}  }\Abs{\frac{d}{ds}\norm v_{c(s)}^{2}}p_{c(s)}^{*}(v)dv}_{\defeq S},
\end{align*}
where we used that $\int_{V_{x}}p_{v}^{*}(v)dv\leq1$ again for the
first term and that $\ell(\gamma_s)\leq \ell_0$ as well as the change of variable with $z = \ham_{c(s), h}(v(c(s)))$ for the second term.

We now bound $S$ in terms of $R_{3}$. As $\ell(\gamma_{s})=\ell(\ham_{c(s),h}(v(c(s))))\leq\ell_{0}$,
we use Lemma \ref{lem:R3bound} to show that 
\begin{align*}
S
 & =\E_{\ell(\gamma_{s})\leq\ell_{0}}\Abs{\frac{d}{ds}\norm v_{c(s)}^{2}}\\
 & \leq\frac{2}{h}\E_{\ell(\gamma_{s})\leq\ell_{0}}\Abs{\inner{v,\frac{d}{ds}c(s)}_{c(s)}}+6hR_{3}\E_{\ell(\gamma_{s})\leq\ell_{0}}\norm{\frac{d}{ds}c(s)}_{c(s)}.
\end{align*}
We recall from the proof of Lemma \ref{lem:extendGeo} that $\norm{\frac{d}{ds}c(s)}_{c(s)}\leq1.01$.
In addition to this, as $v$ is a Gaussian vector with respect to
the local metric, $\Abs{\inner{v,\frac{d}{ds}c(s)}_{c(s)}}=O(1)$
with high probability, which easily follows from the standard concentration
inequality for the Gaussian distributions. Therefore,
\[
S\leq O\Par{\frac{1}{h}}+7hR_{3}.
\]
Substituting this back to the inequality for $F_{s}$, we have 
\[
F_{s}\leq O\Par{h^{2}R_{2}+\frac{1}{h}+hR_{3}}.
\]
Putting this to (\ref{eq:tvineq1}), it follows that 
\[
\dtv(\pcal_{x},\pcal_{y})\leq O\Par{h^{2}R_{2}+\frac{1}{h}+hR_{3}}\norm{x-y}_{x}+\frac{\tilde{\ell}}{50}.
\]
Due to $\norm{x-y}_{g(x)}\leq2d_{\phi}(x,y)$ by Lemma \ref{lem:dist_const}
and $\tilde{\ell}\leq\frac{1}{100}$, it follows that 
\[
\dtv(\pcal_{x},\pcal_{y})\leq O\Par{h^{2}R_{2}+\frac{1}{h}+hR_{3}}d_{\phi}(x,y)+\frac{1}{5000}.
\]
\end{proof}

Using this one-step coupling, we can prove Proposition \ref{prop:idealConv}
on the mixing time of the ideal RHMC for a general density $e^{-f}$.
\begin{proof}
Due to the assumptions on the step size $h$, Lemma \ref{lem:onestepIdeal}
implies that if $d_{\phi}(x,y)\lesssim h$, then $\dtv(\pcal_{x},\pcal_{y})\leq\frac{1}{1000}$.
By Proposition \ref{prop:conductance} with $\rho=s=\frac{\epsilon}{2\Lambda}$,
we can obtain the following lower bound on the $s$-conductance:
\[
\Phi_{s}=\Omega\Par{h\psi_{\mcal_{\rho}}}.
\]
By Lemma \ref{lem:tvDecrease}, we have 
\[
\dtv(\pi_{t},\pi)\leq s\Lambda+\Lambda\Par{1-\frac{\Phi_{s}^{2}}{2}}^{t}.
\]
Therefore, it suffices to choose $T=O\Par{\Par{h\psi_{\mcal_{\rho}}}^{-2}\log\frac{1}{\rho}}$
to ensure $\dtv(\pi_{T},\pi)\leq\veps$.
\end{proof}

%!TEX root = ./main.tex
\section{Convergence rate of discretized RHMC \label{sec:discretizedRHMC}}

We bound the remaining two terms, $\dtv(\oprop_{x}',\pcal_{x})$ in
Section \ref{subsec:couple-ideal-dis} and $\dtv(\oprop_{x},\oprop_{x}')$
in Section \ref{subsec:rej-prob}, obtaining a result on the one-step
coupling of RHMC discretized by a numerical integrator with parameters
$C_{x}$ and $C_{v}$. To analyze the convergence rate of the discretized
RHMC, we define additional parameters.
\begin{defn}
\label{def:Mparameters} Given an auxiliary function $\ell$, a good
region $\mcal_{\rho}$ and step size $h$, we define new parameters
$M_{1},M_{1}^{*},M_{2},M_{2}^{*}$ and $\bar{\ell}_{0},\bar{\ell}_{1},\bar{R}_{1}$.
\begin{itemize}
\item $M_{1}$ is a parameter such that for any $t\in[0,h]$ and any Hamiltonian
curve $\gamma$ starting at $x\in\mcal_{\rho}$ with step size $h$
and $\ell(\gamma)\leq\ell_{0}$
\[
n\leq M_{1}\quad\text{and}\quad\norm{\grad f(\gamma(t))}_{g(x)^{-1}}^{2}\leq M_{1}.
\]
\item $M_{2}$ is a parameter such that for any $t\in[0,h]$ and any two
Hamiltonian curves $\gamma_{1},\gamma_{2}$ starting at $x\in\mcal_{\rho}$
with step size $h$ and $\ell(\gamma_{i})\leq\ell_{0}$ for $i=1,2$
\[
\frac{\norm{\grad f(\gamma_{1}(t))-\grad f(\gamma_{2}(t))}_{g(x)^{-1}}}{\norm{\gamma_{1}(t)-\gamma_{2}(t)}_{x}}\leq M_{2}.
\]
\item Let $\gamma$ be any Hamiltonian curve $\gamma$ starting from $(x,v)\in\mcal_{\rho}\times T_{x}\mcal$
with step size $h$ and $\ell(\gamma)\leq\ell_{0}$. Let $\bx_{j}$'s
be intermediate points produced by a numerical integrator with step
size $h$ and an initial condition $(x,v)$. We define $M_{1}^{*}$
to be the smallest number such that for any $t\in[0,h]$ 
\[
\frac{\Abs{f(\gamma(t))-f(\bx_{j})}}{\norm{\gamma(t)-\bx_{j}}_{x}}\leq\sqrt{M_{1}^{*}}\ \text{for all }j.
\]
We define $M_{2}^{*}$ to be the smallest number such that for any
$t\in[0,h]$ 
\[
\frac{\norm{\grad f(\gamma(t))-\grad f(\bx_{j})}_{g(x)^{-1}}}{\norm{\gamma(t)-\bx_{j}}_{x}}\leq M_{2}^{*}\ \text{for all }j.
\]
\item Let $\overline{\mcal_{\rho}}$ be a convex subset of $\mcal$ that
contains $\bx_{h}$ and $\gamma(h)$. We call an auxiliary function
$\bar{\ell}$ symmetric if $\bar{\ell}(\ham_{x,h}(v))=\bar{\ell}(\ham_{x',h}(-v'))$
for $F_{h}(x,v)=(x',v')$. For a symmetric auxiliary function $\bar{\ell}$,
the parameters $\bar{\ell}_{0},\bar{\ell}_{1}$ and $\bar{R}_{1}$
are defined as in Definition \ref{def:auxiliary} and \ref{def:smoothness}
with $\overline{\mcal_{\rho}}$ in place of $\mcal_{\rho}$.
\end{itemize}
\end{defn}
Note that such $\overline{\mcal_{\rho}}$ always exists, as $\mcal$
is convex. We are now ready to formalize the informal statement on
the convergence rate of RHMC with a sensitive integrator
for a density $e^{-f}$ on the Hessian manifold induced by the highly
self-concordant barrier of $\mcal$.

\begin{restatable}{thmre}{thmdiscGen} \label{thm:discGen} 
Let $\pi$ be a target distribution on a convex set $\mcal\subset\Rn$
and $\Lambda=\sup_{S\subset\mcal}\frac{\pi_{0}(S)}{\pi(S)}$ be the
warmness of the initial distribution $\pi_{0}$. Let $\mcal$ be the
Hessian manifold with its metric induced by the Hessian of a strongly
self-concordant barrier and $\pi_{T}$ the distribution obtained after
$T$ steps of RHMC discretized by a numerical integrator
on $\mcal$. For any $\veps>0$, let $\rho=\frac{\veps}{2\Lambda}$
and $\mcal_{\rho}$ any good region. If step size $h$ guarantees
the sensitivity of the integrator and 
\begin{align*}
h^{2} & \leq\frac{10^{-10}}{\max(R_{1},\bar{R}_{1})},\ h^{5}\leq\frac{\ell_{0}}{10^{3}R_{1}^{2}\ell_{1}},\ h^{3}R_{2}+h^{2}R_{3}\leq1,\ h\leq\frac{1}{10^{10}}\min\Par{1,\frac{\ell_{0}}{\ell_{1}}},\ h^{2}\leq\frac{10^{-10}}{n+\sqrt{M_{1}}+M_{2}},\\
h & C_{x}(x,v)\leq\frac{10^{-10}}{\sqrt{n}},\ h^{2}C_{x}(x,v)\leq10^{-10}\min\Par{1,\frac{\bar{\ell}_{0}}{\bar{\ell}_{1}},\frac{1}{n+\sqrt{M_{1}}+\sqrt{M_{1}^{*}}}},\ h^{2}C_{v}(x,v)\leq\frac{10^{-10}}{\sqrt{n+\sqrt{M_{1}}}}
\end{align*}
for $x\in\mcal_{\rho}$ and $v\in V_{\text{good}}^{x}=\Brace{v\in\Rn:\norm v_{g^{-1}}\leq128\sqrt{n},\ \bar{\ell}(\ham_{x,t}(g(x)^{-1}v))\leq\half\bar{\ell}_{0}}$
(see (\ref{eq:vgood})), where the parameters are defined in Definition
\ref{def:auxiliary}, \ref{def:smoothness} and \ref{def:Mparameters},
then for the isoperimetry $\psi_{\mcal_{\rho}}$ of $\mcal_{\rho}$
there exists $T=O\Par{\Par{h\psi_{\mcal_{\rho}}}^{-2}\log\frac{1}{\rho}}$
such that $\dtv(\pi_{T},\pi)\leq\veps$. 
\end{restatable}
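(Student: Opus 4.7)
The plan is to apply the $s$-conductance framework of Proposition~\ref{prop:conductance} with $\mcal' = \mcal_{\rho}$ and $s = \rho = \veps/(2\Lambda)$, combined with Lemma~\ref{lem:tvDecrease}. The goal reduces to establishing a one-step coupling statement of the form: for any $x, y \in \mcal_{\rho}$ with $d_{\phi}(x, y) \leq \Delta$ for some $\Delta = \Omega(h)$, one has $\dtv(\oprop_{x}, \oprop_{y}) \leq 0.9$. Once this is in hand, Proposition~\ref{prop:conductance} yields $\Phi_{s} = \Omega(h \psi_{\mcal_{\rho}})$, and Lemma~\ref{lem:tvDecrease} with the warmness $\Lambda$ gives the claimed mixing-time bound $T = O((h \psi_{\mcal_{\rho}})^{-2} \log(1/\rho))$.

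To prove the one-step coupling, I would follow the triangle-inequality decomposition indicated in Figure~\ref{fig:illustration1}: write
\begin{align*}
\dtv(\oprop_{x}, \oprop_{y})
&\leq \dtv(\oprop_{x}, \oprop_{x}') + \dtv(\oprop_{x}', \pcal_{x}) + \dtv(\pcal_{x}, \pcal_{y}) + \dtv(\pcal_{y}, \oprop_{y}') + \dtv(\oprop_{y}', \oprop_{y}).
\end{align*}
The middle term is controlled by Lemma~\ref{lem:onestepIdeal}: the step-size conditions $h^{2} \leq 10^{-10}/R_{1}$, $h^{5} \leq \ell_{0}/(10^{3} R_{1}^{2} \ell_{1})$, $h^{3} R_{2} + h^{2} R_{3} \leq 1$, and $h \leq 10^{-10}\min(1, \ell_{0}/\ell_{1})$ guarantee that for $d_{\phi}(x, y)$ at most a sufficiently small constant multiple of $h$, one has $\dtv(\pcal_{x}, \pcal_{y}) \leq 1/25 + O(1) \cdot (d_{\phi}(x,y)/h)$, hence $\leq 1/10$ after choosing $\Delta \lesssim h$. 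The coupling-between-ideal-and-discretized terms $\dtv(\oprop_{x}', \pcal_{x})$ and $\dtv(\oprop_{y}', \pcal_{y})$ are each bounded by Lemma~\ref{lem:tvidealdis}, provided the sensitivity regime $h \leq h_{0}(x,v)$ holds together with the second-order bounds $h C_{x} \leq 10^{-10}/\sqrt{n}$, $h^{2} C_{x} \leq 10^{-10}\min(1, \bar{\ell}_{0}/\bar{\ell}_{1}, 1/(n + \sqrt{M_{1}} + \sqrt{M_{1}^{*}}))$ and $h^{2} C_{v} \leq 10^{-10}/\sqrt{n + \sqrt{M_{1}}}$; the rejection terms $\dtv(\oprop_{x}, \oprop_{x}')$ and $\dtv(\oprop_{y}, \oprop_{y}')$ are each bounded by Lemma~\ref{lem:rejFinal} under the same conditions. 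All five pieces together stay below $9/10$.

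The subtlety is that Lemmas~\ref{lem:tvidealdis} and~\ref{lem:rejFinal} only give their bounds for velocities in the good set $V_{\text{good}}^{x}$. So the actual argument needs a preliminary step: show that a velocity drawn from $\ncal(0, g(x))$ lies in $V_{\text{good}}^{x}$ with probability at least $1 - 1/100$. The norm bound $\norm{v}_{g^{-1}} \leq 128\sqrt{n}$ is a standard Gaussian tail estimate, while $\bar{\ell}(\ham_{x,t}(g(x)^{-1} v)) \leq \bar{\ell}_{0}/2$ holds with high probability by the defining property of the auxiliary function $\bar{\ell}$ on $\overline{\mcal_{\rho}}$ (Definition~\ref{def:Mparameters} combined with Definition~\ref{def:auxiliary}). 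The symmetry of $\bar{\ell}$ is what allows the rejection-probability bound to be symmetric in the forward and reverse integrator steps. On the complement of $V_{\text{good}}^{x}$, the contribution to TV is absorbed into the constant $1/100$ budget. The main technical obstacle in orchestrating everything is ensuring that the step-size conditions written for the random point $(x, v)$ propagate through every intermediate estimate (in particular, that the stability-of-local-norm estimates of Section~\ref{subsec:stability_SC} give uniform control across $\gamma(t)$ and $\bx_{j}$ within one step), so that the ``with probability $0.99$'' assumption in the theorem statement indeed implies a deterministic one-step coupling bound good enough for Proposition~\ref{prop:conductance}.
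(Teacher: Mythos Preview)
Your proposal is correct and follows essentially the same approach as the paper's own proof: combine Lemma~\ref{lem:onestepIdeal}, Lemma~\ref{lem:tvidealdis}, and Lemma~\ref{lem:rejFinal} via the triangle-inequality decomposition to get $\dtv(\oprop_x,\oprop_y)\le 9/10$ whenever $d_\phi(x,y)\lesssim h$, then invoke Proposition~\ref{prop:conductance} and Lemma~\ref{lem:tvDecrease} exactly as in the proof of Proposition~\ref{prop:idealConv}. The paper's proof is only two sentences, but your more detailed outline (including the handling of $V_{\text{good}}^{x}$ and the budgeting of the bad-velocity mass) accurately expands what those two sentences rely on.
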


\subsection{Stability via self-concordance} \label{subsec:stability_SC}

We summarize computational lemmas used in coupling one-step distributions
and bounding rejection probability. Going forward, the self-concordance
of $g$ is repetitively used to relate local metrics $g$ at two close
points (see Lemma \ref{lem:sc_facts}). We recall that $(1-\norm{x-y}_{g(x)})^{2}g(x)\preceq g(y)\preceq\frac{1}{(1-\norm{x-y}_{g(x)})^{2}}g(x)$
for the local metric $g$ induced by the Hessian of a self-concordant
barrier when $\norm{x-y}_{g(x)}<c<1$. It implies that the local norm
of a vector with respect to $g(x)$ is within a constant factor of
the local norm with respect to $g(y)$ (and vice versa). Namely, for
a vector $v$ we have $\norm v_{g(x)}\leq O(1)\cdot\norm v_{g(y)}$
and $\norm v_{g(y)}\leq O(1)\cdot\norm v_{g(x)}$. It enables us to
move back and forth between the local metric $g(x)$ and $g(y)$ whenever
$x$ and $y$ are sufficiently close in the local metric $g(x)$ or
$g(y)$.
\begin{lem}
\label{lem:sc_facts} Let $g(x)=\nabla^{2}\phi(x)$ for some highly
self-concordant barrier $\phi$.
\begin{itemize}
\item $(1-\|y-x\|_{g(x)})^{2}g(x)\preceq g(y)\preceq\frac{1}{(1-\|y-x\|_{g(x)})^{2}}g(x).$ 
\item $\|Dg(x)[v,v]\|_{g(x)^{-1}}\leq2\|v\|_{g(x)}^{2}.$ 
\item $\|Dg(x)[v,v]-Dg(y)[v,v]\|_{g(x)^{-1}}\leq\frac{6}{(1-\|y-x\|_{g(x)})^{3}}\|v\|_{g(x)}^{2}\|y-x\|_{g(x)}.$ 
\item $\norm{Dg(x)[v,v]-Dg(x)[w,w]}_{g(x)^{-1}}\leq2\norm{v-w}_{g(x)}\norm{v+w}_{g(x)}.$ 
\end{itemize}
\end{lem}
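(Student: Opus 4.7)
The plan is to derive all four items from two standard facts about (highly) self-concordant barriers: (a) the cubic estimate $\Abs{D^{3}\phi(x)[h,h,h]}\leq 2\norm{h}_{g(x)}^{3}$ together with its polarized trilinear form $\Abs{D^{3}\phi(x)[a,b,c]}\leq 2\norm{a}_{g(x)}\norm{b}_{g(x)}\norm{c}_{g(x)}$; and (b) the analogous quartic bound $\Abs{D^{4}\phi(x)[h_{1},h_{2},h_{3},h_{4}]}\leq C\prod_{i}\norm{h_{i}}_{g(x)}$, which is the content of the ``highly'' self-concordant hypothesis. Ingredient~(a) is enough for items~1,~2,~4; item~3 additionally uses~(b).

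For item~1 (the classical Dikin ellipsoid estimate), I would fix $h\in\R^{n}$ and let $\psi(t)=\inner{h,\hess\phi(x+t(y-x))h}$. Ingredient~(a) yields $\Abs{\psi'(t)}\leq 2\psi(t)\norm{y-x}_{g(x+t(y-x))}$; applying the same ODE argument to $\rho(t)\defeq\norm{y-x}_{g(x+t(y-x))}$ gives $\rho(t)\leq\norm{y-x}_{g(x)}/(1-t\norm{y-x}_{g(x)})$, which fed back into the $\psi$-ODE and integrated by Gr\"onwall produces the two-sided L\"owner bound of item~1. For items~2 and~4, identify $Dg(x)[v,v]$ with the covector $u\mapsto D^{3}\phi(x)[v,v,u]$ so that $\norm{Dg(x)[v,v]}_{g(x)^{-1}}=\sup_{\norm{u}_{g(x)}\leq 1}\Abs{D^{3}\phi(x)[v,v,u]}$; item~2 is then the trilinear form of~(a) applied to $(v,v,u)$, and item~4 follows from the bilinearity identity $Dg(x)[v,v]-Dg(x)[w,w]=Dg(x)[v-w,v+w]$ and the same trilinear bound applied to $(v-w,v+w,u)$.

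For item~3, I would parameterize $c(t)=x+t(y-x)$ and invoke the fundamental theorem of calculus,
\[
Dg(y)[v,v]-Dg(x)[v,v]=\int_{0}^{1}D^{2}g(c(t))[v,v,y-x]\,dt,
\]
with integrand the covector $u\mapsto D^{4}\phi(c(t))[v,v,y-x,u]$. Bounding via~(b) at $c(t)$ gives an integrand majorized (after taking $\sup_{\norm{u}_{g(x)}\leq 1}$) by a multiple of $\norm{v}_{g(c(t))}^{2}\norm{y-x}_{g(c(t))}\norm{u}_{g(c(t))}$; converting each of these four local norms to the corresponding $g(x)$-norms via item~1 produces a factor $(1-t\norm{y-x}_{g(x)})^{-4}$, and the elementary inequality $\int_{0}^{1}(1-tr)^{-4}dt\leq(1-r)^{-3}$ for $r\in(0,1)$ delivers the claimed $(1-\norm{y-x}_{g(x)})^{-3}$ factor, with the numerical constant~$6$ coming from the combined constants in~(b) and the integration.

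The main obstacle is item~3: one must pin down the precise form of the quartic bound~(b) corresponding to the paper's notion of ``highly'' self-concordant and carefully count the four norm conversions (three from the vector slots $v,v,y-x$ and one from the cotangent conversion) so that the $t$-integration produces exactly the sharp $(1-r)^{-3}$ power and the stated constant. Items~1,~2,~4 are standard and reduce to the cubic bound~(a), polarization, and a classical Gr\"onwall-type ODE argument.
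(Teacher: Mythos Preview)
Your proposal is correct and follows essentially the same route as the paper. The paper cites Nesterov's book for items~1 and~2 and Nesterov--Nemirovskii for the polarized quartic bound used in item~3, whereas you sketch the underlying Gr\"onwall/ODE argument and the polarization explicitly; the line-integral computation for item~3 (one $(1-tr)^{-1}$ from the cotangent conversion, one each from $v,v,y-x$, then $\int_0^1(1-tr)^{-4}\,dt\leq(1-r)^{-3}$) and the bilinear identity $Dg(x)[v,v]-Dg(x)[w,w]=Dg(x)[v-w,v+w]$ for item~4 match the paper's proof exactly.
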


\begin{proof}
The first fact follows from Theorem 4.1.6 in \cite{nesterov2003introductory}.
The second fact follows from Lemma 4.1.2 in \cite{nesterov2003introductory}.
To be precise, 
\[
\norm{Dg(x)[v,v]}_{g(x)^{-1}}=\max_{\norm v_{g(x)}=1}Dg(x)[v,v,u]\leq2\norm v_{g(x)}^{2}.
\]

The third fact is from the following calculation: 
\begin{align*}
 & \|Dg(y)[v,v]-Dg(x)[v,v]\|_{g(x)^{-1}}\\
\leq & \int_{0}^{1}\|D^{2}g(x+t(y-x))[v,v,y-x]\|_{g(x)^{-1}}dt\\
\leq & \int_{0}^{1}\frac{1}{1-t\|y-x\|_{g(x)}}\|D^{2}g(x+t(y-x))[v,v,y-x]\|_{g(x+t(y-x))^{-1}}dt\\
\leq & \int_{0}^{1}\frac{6}{1-t\|y-x\|_{g(x)}}\|v\|_{g(x+t(y-x))}^{2}\|y-x\|_{g(x+t(y-x))}dt\\
\leq & \int_{0}^{1}\frac{6}{(1-t\|y-x\|_{g(x)})^{4}}dt\cdot\|v\|_{g(x)}^{2}\|y-x\|_{g(x)}\\
\leq & \frac{6}{(1-\|y-x\|_{g(x)})^{3}}\|v\|_{g(x)}^{2}\|y-x\|_{g(x)}.
\end{align*}
where the third and fifth line above follow from the first fact, and
the fourth line follows from Proposition 9.1.1 in \cite{nesterov1994interior}.

The fourth fact is from the following calculation: 
\begin{align*}
 & \norm{Dg(x)[v,v]-Dg(x)[w,w]}_{g(x)^{-1}}\\
= & \max_{\norm u_{g(x)}=1}Dg(x)[v,v,u]-Dg(x)[w,w,u]\\
= & \max_{\norm u_{g(x)}=1}Dg(x)[v-w,v,u]+Dg(x)[w,v-w,u]\\
= & \max_{\norm u_{g(x)}=1}Dg(x)[v-w,v,u]+Dg(x)[v-w,w,u]\\
= & \max_{\norm u_{g(x)}=1}Dg(x)[v-w,v+w,u]\\
\leq & \max_{\norm u_{g(x)}=1}2\norm{v-w}_{g(x)}\norm{v+w}_{g(x)}\norm u_{g(x)}\\
\leq & 2\norm{v-w}_{g(x)}\norm{v+w}_{g(x)}.
\end{align*}
\end{proof}
\begin{lem}
\label{lem:computation_sc} For $x,x'\in\mcal$, let $g=g(x)$ and
$g'=g(x')$. Let $\dx:=\norm{x-x'}_{g}<0.99$ and $\dv:=\norm{v-v'}_{g^{-1}}$.
\begin{enumerate}
\item $(1-O(\dx))g\preceq g'\preceq(1+O(\dx))g.$ 
\item $(1-O(\dx))g'\preceq g\preceq(1+O(\dx))g'.$ 
\item $(1-O(\dx))g^{-1}\preceq g'^{-1}\preceq(1+O(\dx))g^{-1}.$ 
\item $(1-O(\dx))g'^{-1}\preceq g^{-1}\preceq(1+O(\dx))g'^{-1}.$ 
\item $-O(\dx)I\preceq I-g^{\half}g'^{-1}g^{\half}\preceq O(\dx)I.$ 
\item $-O(\dx)I\preceq I-g'^{\half}g^{-1}g'^{\half}\preceq O(\dx)I.$ 
\item $\norm{g'^{-\half}g^{\half}}_{2}\leq1+O(\dx)\quad\&\quad\norm{g'^{\half}g^{-\half}}_{2}\leq1+O(\dx).$ 
\item $\norm{g^{\half}g'^{-\half}}_{2}\leq1+O(\dx)\quad\&\quad\norm{g^{-\half}g'^{\half}}_{2}\leq1+O(\dx).$ 
\item $\norm{(\ginverse-g'^{-1})p}_{g}\lesssim\dx\norm p_{\ginverse}.$ 
\item $\norm{\ginverse p-g'^{-1}q}_{g}\leq\norm{p-q}_{\ginverse}+O(\dx)\norm q_{\ginverse}.$ 
\item $\norm{\frac{\del H}{\del v}(x,v)-\frac{\del H}{\del v}(x',v')}_{g}\leq\dv+O(\dx)\norm{v'}_{\ginverse}.$ 
\item $\norm{\frac{\del H}{\del x}(x,v)-\frac{\del H}{\del x}(x',v')}_{\ginverse}\lesssim(\dv+\dx\norm v_{\ginverse})(\norm v_{\ginverse}+\norm{v'}_{\ginverse})+n\delta_{x}+\norm{\grad f(x)-\nabla f(x')}_{g^{-1}}.$
\end{enumerate}
\end{lem}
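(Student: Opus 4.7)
My plan is to derive items 1--12 in order, with each building on its predecessors and on Lemma~\ref{lem:sc_facts}. Items 1--8 are all algebraic consequences of the basic self-concordance PSD inequality $(1-\dx)^{2}g \preceq g' \preceq (1-\dx)^{-2}g$. For items 1 and 2 I expand $(1-\dx)^{\pm 2} = 1 \pm O(\dx)$ using $\dx < 0.99$; item 2 follows from reversing the roles of $x$ and $x'$ in the PSD bound, after first noting that item 1 already gives $\norm{x-x'}_{g'} \leq (1+O(\dx))\dx$ so Lemma~\ref{lem:sc_facts} is applicable with base point $x'$. Items 3 and 4 follow by inverting the PSD inequalities of items 1 and 2, and items 5 and 6 by conjugating those inverted inequalities with $g^{1/2}$ or $g'^{1/2}$, which preserves the PSD order. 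Items 7 and 8 then translate these PSD bounds to operator norms via $\norm{g'^{-1/2}g^{1/2}}_{2}^{2} = \norm{g^{1/2}g'^{-1}g^{1/2}}_{2} \leq 1+O(\dx)$ and the fact that $\sqrt{1+O(\dx)} = 1+O(\dx)$.

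Items 9--11 are short spectral consequences. For item 9, I use the identity
\[
\ginverse - g'^{-1} \;=\; g^{-1/2}\Par{I - g^{1/2} g'^{-1} g^{1/2}} g^{-1/2}
\]
together with item 5 to compute $\norm{(\ginverse-g'^{-1})p}_{g} = \norm{(I - g^{1/2} g'^{-1} g^{1/2}) g^{-1/2} p}_{2} \leq O(\dx)\norm{p}_{\ginverse}$. Item 10 follows from item 9 and the triangle inequality $\norm{\ginverse p - g'^{-1} q}_{g} \leq \norm{\ginverse(p-q)}_{g} + \norm{(\ginverse - g'^{-1})q}_{g}$, and item 11 is immediate from item 10 since $\frac{\del H}{\del v}(x,v) = \ginverse v$.

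Item 12 is the substantive step. I split $\frac{\del H}{\del x}(x,v) = \grad f(x) + \half\tr[\ginverse Dg(x)] - \half Dg(x)[\ginverse v, \ginverse v]$ into three pieces and bound the $\ginverse$-norm of the difference of each piece at $(x,v)$ versus $(x',v')$. The $\grad f$ piece contributes $\norm{\grad f(x) - \grad f(x')}_{\ginverse}$ verbatim. For the $Dg[\cdot,\cdot]$ piece, I insert the intermediate expression $Dg(x)[g'^{-1}v', g'^{-1}v']$ and apply the third and fourth facts of Lemma~\ref{lem:sc_facts}: the vector change is bounded by $2\norm{\ginverse v - g'^{-1}v'}_{g}\norm{\ginverse v + g'^{-1}v'}_{g}$, which by item 10 is at most $(\dv + O(\dx)\norm{v}_{\ginverse})(\norm{v}_{\ginverse}+\norm{v'}_{\ginverse})$, while the base-point change contributes $O(\dx)\norm{g'^{-1}v'}_{g}^{2} = O(\dx)\norm{v'}_{\ginverse}^{2}$, which is absorbed into the vector-change term. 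For the trace piece $\half\tr[\ginverse Dg(x)] = \half\grad\log\det g(x)$, I need $\norm{\grad\log\det g(x) - \grad\log\det g(x')}_{\ginverse} \lesssim n\dx$; I obtain this by integrating $\grad^{2}\log\det g$ along the straight segment from $x$ to $x'$, where each entry expands via $\partial_{j}\tr[\ginverse\partial_{i} g] = \tr[-\ginverse\partial_{j}g \cdot \ginverse\partial_{i} g + \ginverse\partial_{j}\partial_{i}g]$, and then applying the higher-order self-concordance bounds on $D^{3}\phi$ and $D^{4}\phi$ together with Cauchy--Schwarz over a $\ginverse$-orthonormal basis. The dimension factor $n$ arises from a trace of the identity in that basis. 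This trace-piece bound, which genuinely needs the \emph{strong} (highly) self-concordance hypothesis rather than plain self-concordance, is the main obstacle; once it is in hand item 12 assembles by the triangle inequality across the three pieces.
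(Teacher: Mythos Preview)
Your proposal is correct and items 1--11 match the paper's proof essentially verbatim. For item 12 there is one genuine methodological difference worth noting. Your $Dg[\cdot,\cdot]$ split uses the intermediate $Dg(x)[g'^{-1}v',g'^{-1}v']$ whereas the paper inserts $Dg(x')[g^{-1}v,g^{-1}v]$; both work equally well via Lemma~\ref{lem:sc_facts}-3,4. The real divergence is in the trace piece $\tfrac12\tr[g^{-1}Dg] - \tfrac12\tr[g'^{-1}Dg']$: you bound it by integrating the Hessian $\grad^{2}\log\det g$ along the segment and controlling $\tr[-g^{-1}Dg[w]g^{-1}Dg[u] + g^{-1}D^{2}g[u,w]]$ via self-concordance and a basis trace estimate. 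The paper instead splits the difference as $\tr\bigl((g^{-1}-g'^{-1})Dg\bigr) + \tr\bigl(g'^{-1}(Dg-Dg')\bigr)$ and evaluates each trace by the Gaussian identity $\tr(M)=\E_{z\sim\ncal(0,I)}[z^{\top}Mz]$, reducing directly to the multilinear bound $|D^{3}\phi[a,b,c]|\le 2\norm{a}_{g}\norm{b}_{g}\norm{c}_{g}$ together with item 9 for the first piece and Lemma~\ref{lem:sc_facts}-3 for the second. Both routes give the same $O(n\dx)$ bound and both ultimately invoke the highly self-concordant $D^{4}\phi$ control (yours through the $\tr[g^{-1}D^{2}g]$ term, the paper's through Lemma~\ref{lem:sc_facts}-3); the paper's stochastic-trace route is a bit slicker in that it avoids differentiating and re-integrating, while your Hessian route makes the role of strong self-concordance more explicitly visible.
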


\begin{proof}
The first four lemmas follow from Lemma \ref{lem:sc_facts}-1. For
5 (and 6, 7, 8 similarly), using 3 
\[
(1-O(\dx))I\preceq g^{\half}g'^{-1}g^{\half}\preceq(1+O(\dx))I.
\]
Thus $-O(\dx)I\preceq I-g^{\half}g'^{-1}g^{\half}\preceq O(\dx)I$.
Also by the definition of two-norm, it follows that 
\[
\norm{g'^{-\half}g^{\half}}_{2}\leq1+O(\dx).
\]
Fact 9 follows from the following computation:
\begin{align*}
\norm{(\ginverse-g'^{-1})p}_{g} & =\norm{(I-g^{\half}g'^{-1}g^{\half})g^{-\half}p}_{2}\leq O(\dx)\norm p_{\ginverse}.\quad(\text{Fact 5})
\end{align*}
Fact 10 follows from the following computation: 
\begin{align*}
\norm{\ginverse p-g'^{-1}q}_{g} & \leq\norm{\ginverse(p-q)+(\ginverse-g'^{-1})q}_{g}\leq\norm{p-q}_{\ginverse}+\underbrace{O(\dx)\norm q_{\ginverse}}_{\text{Fact 9}}.
\end{align*}
Fact 11 follows from the following computation and Fact 10: 
\begin{align*}
\norm{\frac{\del H}{\del v}(x,v)-\frac{\del H}{\del v}(x',v')}_{g} & =\norm{g^{-1}v-g'^{-1}v'}_{g}\leq\dv+O(\dx)\norm{v'}_{\ginverse}.
\end{align*}
For Fact 12, we note that 
\begin{align*}
\frac{\del H}{\del x}(x,v)-\frac{\del H}{\del x}(x',v') & =\Par{\grad f(x)-\grad f(x')}\\
 & \ \ -\half\Par{Dg\Brack{\ginverse v,\ginverse v}-Dg'\Brack{g'^{-1}v,g'^{-1}v}}+\half\Par{\tr(g^{-1}Dg)-\tr(g'^{-1}Dg')}\\
 & =-\half\Par{\underbrace{Dg\Brack{\ginverse v,\ginverse v}-Dg'\Brack{\ginverse v,\ginverse v}}_{F}+\underbrace{Dg'\Brack{\ginverse v,\ginverse v}-Dg'\Brack{g'^{-1}v,g'^{-1}v}}_{S}}\\
 & ~~+\half\Par{\underbrace{\tr(g^{-1}Dg-g'^{-1}Dg)}_{T}+\underbrace{\tr(g'^{-1}Dg-g'^{-1}Dg')}_{R}}+\Par{\grad f(x)-\grad f(x')}.
\end{align*}
For $F$, by the third fact in Lemma \ref{lem:sc_facts} 
\begin{align*}
\norm F_{g^{-1}} & \lesssim\frac{1}{(1-\dx)^{3}}\norm{g^{-1}v}_{g}^{2}\norm{x-x'}_{g}=\frac{1}{(1-\dx)^{3}}\norm v_{\ginverse}^{2}\dx\lesssim\dx\norm v_{\ginverse}^{2}.
\end{align*}
For $S$, by the fourth fact in Lemma \ref{lem:sc_facts} 
\begin{align*}
\norm S_{g^{-1}} & \lesssim\norm S_{g'^{-1}}\lesssim\norm{g^{-1}v-g'^{-1}v'}_{g'}\norm{g^{-1}v+g'^{-1}v'}_{g'}\\
 & \lesssim\Par{\norm{v-v'}_{g'^{-1}}+O(\dx)\norm v_{g'^{-1}}}\Par{\norm v_{g'^{-1}}+\norm{v'}_{g'^{-1}}}\\
 & \lesssim\Par{\dv+\dx\norm v_{g^{-1}}}\Par{\norm v_{g'^{-1}}+\norm{v'}_{g'^{-1}}}.
\end{align*}
For $T$, using the stochastic estimator of trace 
\begin{align*}
\norm{\tr\Par{g^{-1}Dg-g'^{-1}Dg}}_{\ginverse} & =\max_{\norm u_{g}=1}\tr\Par{(\ginverse-g'^{-1})Dg[u]}\\
 & =\max_{\norm u_{g}=1}\tr\Par{g^{\half}(\ginverse-g'^{-1})Dg[u]g^{-\half}}\\
 & =\max_{\norm u_{g}=1}\E_{z\sim\ncal(0,I)}\Brack{z^{\top}g^{\half}(g^{-1}-g'^{-1})Dg[u]g^{-\half}z}\\
 & =\max_{\norm u_{g}=1}\E Dg\Brack{u,g^{-\half}z,(g^{-1}-g'^{-1})g^{\half}z}\\
 & \leq2\E\max_{\norm u_{g}=1}\norm u_{g}\norm{g^{-\half}z}_{g}\underbrace{\norm{\Par{g^{-1}-g'^{-1}}g^{\half}z}_{g}}_{\text{Fact 9}}\\
 & \leq O(\dx)\E\norm z_{2}\norm{g^{\half}z}_{g^{-1}}=O(\dx)\E\norm z_{2}^{2}\\
 & =O(n\dx).
\end{align*}
For $R$, in a similar way that we bounded $\norm T_{g^{-1}}$ 
\begin{align*}
\norm{\tr\Par{g'^{-1}Dg-g'^{-1}Dg'}}_{\ginverse} & =\norm{\E_{z\sim\ncal(0,I)}\Par{Dg[g'^{-\half}z,g'^{-\half}z]-Dg'[g'^{-\half}z,g'^{-\half}z]}}_{g^{-1}}\\
 & \leq\E\underbrace{\ginvnorm{Dg[g'^{-\half}z,g'^{-\half}z]-Dg'[g'^{-\half}z,g'^{-\half}z]}}_{\text{Use Lemma~\ref{lem:sc_facts}}}\\
 & \leq O(\dx)\E\norm{g'^{-\half}z}_{g'}^{2}\leq O(\dx)\E\norm z_{2}^{2}\\
 & =O(n\dx).
\end{align*}
By adding up these bounds, we obtain
\begin{equation*}
\norm{\frac{\del H}{\del x}(x,v)-\frac{\del H}{\del x}(x',v')}_{\ginverse}\lesssim\Par{\dv+\dx\norm v_{\ginverse}}\Par{\norm v_{\ginverse}+\norm{v'}_{\ginverse}}+n\dx+\norm{\grad f(x)-\nabla f(x')}_{g^{-1}}.
\end{equation*}

\end{proof}
We now bound the partial derivatives of $H$ with respect to $x$
and $v$. For $H_{1}$ and $H_{2}$ given by 
\[
H_{1}(x,v)=f(x)+\half\log\det g(x)\ \text{ and }\ H_{2}(x,v)=\frac{1}{2}v^{\top}g(x)^{-1}v,
\]
we recall from \eqref{eq:hmc_intro} that 
\begin{align*}
\frac{\del H_{1}}{\partial x}(x,v) & =\nabla f(x)+\frac{1}{2}\tr(g^{-1}Dg),\\
\frac{\partial H_{2}}{\partial x}(x,v) & =-\frac{1}{2}Dg[g^{-1}v,g^{-1}v]\ \text{ and }\ \frac{\partial H_{2}}{\partial v}(x,v)=\ginverse v.
\end{align*}

\begin{lem}
\label{lem:partH_dist} For $x\in\mcal$ and $g:=g(x)$, the following
inequalities hold. 
\begin{align*}
\norm{\frac{\partial H_{1}(x,v)}{\partial x}}_{g^{-1}} & \leq\norm{\grad f(x)}_{g^{-1}}+n,\\
\norm{\frac{\partial H_{2}(x,v)}{\partial v}}_{g}\leq\norm v_{g^{-1}}\quad & \&\quad\norm{\frac{\partial H_{2}(x,v)}{\partial x}}_{g^{-1}}\leq\norm v_{g^{-1}}^{2}
\end{align*}
\end{lem}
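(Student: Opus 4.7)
The plan is to prove each of the three bounds directly from the formulas for $\partial H_1/\partial x$, $\partial H_2/\partial v$ and $\partial H_2/\partial x$ given just before the lemma, using only the self-concordance estimates from Lemma \ref{lem:sc_facts} and the definition of the local norm. Each bound reduces to a short computation, so the main work is organizing them cleanly; I expect no real obstacle since every ingredient has already been assembled.

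For bound (2), I would simply observe that $\partial H_2/\partial v = g^{-1}v$, so $\|g^{-1}v\|_g^2 = v^{\top} g^{-1} g g^{-1} v = v^{\top} g^{-1} v = \|v\|_{g^{-1}}^2$, giving the claim. For bound (3), from $\partial H_2/\partial x = -\tfrac{1}{2} Dg[g^{-1}v, g^{-1}v]$, I would apply the second bullet of Lemma \ref{lem:sc_facts} to the vector $w = g^{-1}v$ to get
\[
\Bigl\|\tfrac{1}{2} Dg[g^{-1}v,\, g^{-1}v]\Bigr\|_{g^{-1}} \leq \tfrac{1}{2}\cdot 2 \|g^{-1}v\|_g^2 = \|v\|_{g^{-1}}^2,
\]
where the last equality is the same computation used in (2).

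For bound (1), the triangle inequality gives $\|\partial H_1/\partial x\|_{g^{-1}} \leq \|\grad f(x)\|_{g^{-1}} + \tfrac{1}{2}\|\tr(g^{-1} Dg)\|_{g^{-1}}$, so it suffices to show the trace term is bounded by $n$. I would rewrite the trace as a Gaussian expectation in the style of the proof of Lemma \ref{lem:computation_sc}: for any $u$ with $\|u\|_g = 1$,
\[
\tr(g^{-1} Dg[u]) = \E_{z\sim\ncal(0,I)}\bigl[z^{\top} g^{-1/2} Dg[u]\, g^{-1/2} z\bigr] = \E\, Dg\bigl[u,\, g^{-1/2}z,\, g^{-1/2}z\bigr].
\]
By symmetry of $Dg$ as a $3$-tensor and the second bullet of Lemma \ref{lem:sc_facts} applied to $g^{-1/2}z$,
\[
\E\,Dg[u, g^{-1/2}z, g^{-1/2}z] \leq \E\bigl[\|u\|_g\,\|Dg[g^{-1/2}z, g^{-1/2}z]\|_{g^{-1}}\bigr] \leq 2\E\|g^{-1/2}z\|_g^2 = 2\E\|z\|_2^2 = 2n.
\]
Taking the supremum over $u$ yields $\|\tr(g^{-1}Dg)\|_{g^{-1}} \leq 2n$, and then $\tfrac{1}{2}$ of this is $n$, which combines with the triangle inequality to give (1).

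Since each step is a one-line calculation invoking an already-proven self-concordance estimate, there is no significant obstacle; the only care needed is keeping the normalizations straight (i.e. writing $\tr(g^{-1}Dg[u])$ as an expectation over a standard Gaussian so that $g^{-1/2}z$ has unit local norm in expectation). This matches exactly the pattern already used to bound the $T$-term in the proof of Lemma \ref{lem:computation_sc}, so I would lift that argument verbatim.
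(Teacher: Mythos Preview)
Your proof is correct and, for bounds (2) and (3), identical to the paper's. For bound (1) there is a small difference: you estimate $\tr(g^{-1}Dg[u])$ via the Gaussian trace identity and then apply $\|Dg[w,w]\|_{g^{-1}}\le 2\|w\|_g^2$ inside the expectation, whereas the paper argues more directly at the matrix level: self-concordance gives $h^\top Dg[u]\,h \le 2\|h\|_g^2$ for all $h$, i.e.\ $Dg[u]\preceq 2g$, hence $g^{-1/2}Dg[u]g^{-1/2}\preceq 2I$ and $\tr(g^{-1}Dg[u])\le 2n$. Both routes yield the same bound; the paper's is one step shorter, while yours reuses exactly the stochastic-trace pattern from the $T$-term in Lemma~\ref{lem:computation_sc}, so either is perfectly acceptable here.
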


\begin{proof}
For $\frac{\del H_{1}(x,v)}{\partial x}$, 
\begin{align*}
\norm{\frac{\del H_{1}(x,v)}{\partial x}}_{\ginverse} & \leq\norm{\nabla f(x)+\frac{1}{2}\tr(\ginverse Dg)}_{g^{-1}}\leq\norm{\grad f(x)}_{g^{-1}}+\norm{\frac{1}{2}\tr(g^{-1}Dg)}_{g^{-1}}.
\end{align*}
Note that 
\begin{align*}
\norm{\frac{1}{2}\tr(g^{-1}Dg)}_{g^{-1}} & =\frac{1}{2}\max_{\norm u_{g}=1}\tr(g^{-1}Dg[u])
\end{align*}
By self-concordance, for any $h\in\Rn$ we have $h^{\top}Dg[u]h\leq2\norm h_{g}^{2}$
and thus $Dg[u]\preceq2g$, resulting in $g^{-\half}Dg[u]g^{-\half}\preceq2I$.
Then 
\begin{align*}
\tr(g^{-1}Dg[u]) & \leq2\tr(I)\leq2n.
\end{align*}
For $\frac{\del H_{2}(x,v)}{\partial v}$, 
\begin{align*}
\norm{\frac{\del H_{2}(x,v)}{\partial v}}_{g} & =\norm{g^{-1}v}_{g}=\norm v_{g^{-1}}.
\end{align*}
For $\frac{\partial{H}_{2}(x,v)}{\partial x}$, 
\begin{align*}
\norm{\frac{\del H_{2}(x,v)}{\partial x}}_{g^{-1}} & \leq\norm{\frac{1}{2}Dg\Brack{g^{-1}v,g^{-1}v}}_{g^{-1}}\leq\norm{g^{-1}v}_{g}^{2}=\norm v_{g^{-1}}^{2},
\end{align*}
where the second step follows from Lemma \ref{lem:sc_facts}.
\end{proof}

\subsection{Coupling between ideal and discretized RHMC\label{subsec:couple-ideal-dis}}

We bound $\dtv(\oprop_{x}',\,\pcal_{x})$, the TV distance between
the one-step distributions of the ideal RHMC and the discretized RHMC
without the rejection step. We use $\oprop_{x}$ to indicate $\oprop_{x}'$
for simplicity in this section only. We denote by $p_{x}$ and $\overline{p}_{x}$
the probability density functions of $\pcal_{x}$ and $\oprop_{x}$
respectively. We let $g=g(x)$ and $g_{t}=g(x_{t})$.

Let us elaborate on our approach. We work with the Euclidean metric
this time, as we find it easier to handle numerical integrators with
the Euclidean representation. As mentioned in (\ref{eq:onestepEuclidean}),
the one-step distributions $\pcal_{x}$ and $\oprop_{x}$ of the ideal
and discretized RHMC on $\mcal$ are the pushforwards by $T_{x}$
and $\bTx$ of the Gaussian distribution of initial velocities on
the tangent space $T_{x}\mcal$. Thus, it follows by the change of
variables that for $z=\Tx(v^{*})=\bTx(v)$ these two probability densities
on the different spaces (one on $\mcal$ and another on $T_{x}\mcal$)
are related as follows. For $p_{x}^{\ast}$ the probability density
function of $\ncal(0,g(x))$,
\[
p_{x}(z)=\sum_{v^{*}:T_{x}(v^{*})=z}\frac{p_{x}^{\ast}(v^{*})}{\Abs{DT_{x}(v^{*})}}\quad\text{and}\quad\bp_{x}(z)=\sum_{v:\bTx(v)=z}\frac{p_{x}^{\ast}(v)}{\Abs{D\bTx(v)}}.
\]

We aim to couple these $v^{*}$ and $v$ on $T_{x}\mcal$. In this
coupling, we can exclude `bad' velocities, as long as such velocities
have small measure. To see this, let $\vbad$ be a set of bad initial
velocities of measure $\varepsilon<1$ and $\vgood$ be the rest.
Assuming a one-to-one correspondence between $v$ and $v^{*}$ for
$v\in\vgood$, we have that for $x\in\mcal_{\rho}$
\begin{align*}
\dtv(\oprop_{x},\pcal_{x}) & =\sup_{A\subset\mcal}\int_{A}(\overline{p}_{x}(z)-p_{x}(z))dz\\
 & \leq\sup_{A\subset\mcal}\int_{A}\Par{\sum_{v:\bTx(v)=z}\frac{p_{x}^{\ast}(v)}{\Abs{D\bTx(v)}}-\sum_{v^{*}:T_{x}(v^{*})=z}\frac{p_{x}^{\ast}(v^{*})}{\Abs{DT_{x}(v^{*})}}}dz\\
 & \leq\int_{\vbad}p_{x}^{*}(v)dv+\sup_{A\subset\mcal}\int_{A}\sum_{v\in\vgood:\bTx(v)=z}\Par{\frac{p_{x}^{\ast}(v)}{\Abs{D\bTx(v)}}-\frac{p_{x}^{\ast}(v^{*})}{\Abs{DT_{x}(v^{*})}}}dz\\
 & =P_{x}^{\ast}(\vbad)+\sup_{A\subset\mcal}\int_{A}\sum_{v\in\vgood:\bTx(v)=z}\frac{p_{x}^{\ast}(v)}{\Abs{D\bTx(v)}}\Par{1-\frac{p_{x}^{\ast}(v^{*})}{p_{x}^{\ast}(v)}\frac{\Abs{D\bTx(v)}}{\Abs{DT_{x}(v^{*})}}}dz,
\end{align*}
where $P_{x}^{\ast}$ is $\ncal(0,g(x))$, and in the third line we
used the one-to-one correspondence between $v$ and $v^{*}$ to pair
them in the summation. If we show that on $v\in\vgood$ the term of
$1-\frac{p_{x}^{\ast}(v^{*})}{p_{x}^{\ast}(v)}\frac{\Abs{D\bTx(v)}}{\Abs{DT_{x}(v^{*})}}$
is bounded by a small constant (say, $\eta$), then 
\begin{align*}
\dtv(\oprop_{x},\pcal_{x}) & \leq P_{x}^{\ast}(\vbad)+\eta\sup_{A\subset\mcal}\int_{A}\sum_{v\in\vgood:\bTx(v)=z}\frac{p_{x}^{\ast}(v)}{\Abs{D\bTx(v)}}dz\\
 & \leq P_{x}^{\ast}(\vbad)+\eta\int_{\vgood}p_{x}^{\ast}(v)dv\\
 & =P_{x}^{\ast}(\vbad)+\eta P_{x}^{\ast}(\vgood)\\
 & \leq\eta+(1-\eta)\varepsilon.
\end{align*}
By taking $\varepsilon$ sufficiently small, we can bound
$\dtv(\oprop_{x},\pcal_{x})$ smaller then $1/10$.

For each $x\in\mcal$, our bad set $\vbad$ of velocities is the union
of the following sets: 
\begin{align*}
V_{1} & =\Brace{v\in\Rn:\norm v_{g^{-1}}>128\sqrt{n}},\\
V_{2} & =\Brace{v\in\Rn:\bar{\ell}(\ham_{x,t}(g(x)^{-1}v))>\half\bar{\ell}_{0}},
\end{align*}
and thus 
\begin{equation}
\vgood=\Brace{v\in\Rn:\norm v_{g^{-1}}\leq128\sqrt{n},\ \bar{\ell}(\ham_{x,t}(g(x)^{-1}v))\leq\half\bar{\ell}_{0}}.\label{eq:vgood}
\end{equation}
We remark that a velocity $v\in\Rn$ should be normalized by $g(x)^{-1}$
before feeding into $\ham_{x,t}$, since the domain $T_{x}\mcal$
of $\ham_{x,t}$ is endowed with the local metric. Since the standard
concentration inequality for the Gaussian distributions implies that
$P_{x}^{*}(V_{1})<\frac{1}{100}$, and the definition $\bar{\ell}_{0}$
implies that $P_{x}^{\ast}(V_{2})<\frac{1}{100}$, it follows that
$P_{x}^{\ast}(\vbad)<0.02$.

\subsubsection{Dynamics of ideal and discretized RHMC}

We study the dynamics of the ideal and discretized RHMC.
\begin{prop}
\label{prop:dynamics} For $x\in\mcal_{\rho}$ and $v,v'\in\vgood$
, let $g:=g(x)$ and $h$ step size satisfying 
\[
\underbrace{h^{2}\leq\frac{10^{-10}}{n+\sqrt{M_{1}}+M_{2}}}_{\onecirc},\underbrace{h^{2}\leq\frac{10^{-10}}{\bar{R}_{1}}}_{\twocirc},\underbrace{C_{x}(x,v)h^{2}\leq\frac{1}{10^{10}}\min\Par{1,\frac{\bar{\ell}_{0}}{\bar{\ell}_{1}}}}_{\threecirc},\underbrace{C_{x}(x,v)h\leq\frac{\sqrt{n}}{10^{10}}}_{\fourcirc}.
\]
For $t\in[0,h]$, we let $(x_{t},v_{t})$ and $(x_{t}',v_{t}')$ be
the Hamiltonian curves of the ideal RHMC at time $t$ with initial
conditions $(x,v)$ and $(x,v')$, respectively. Let $(\overline{x},\overline{v})$
be the point obtained from RHMC with a sensitive second-order numerical
integrator with the step size $h$ and initial condition $(x,v)$.
Let $\phi\defeq\sup_{t\in[0,h]}\norm{x_{t}-x_{t}'}_{g(x_{t})}$, $\psi\defeq\sup_{t\in[0,h]}\norm{v_{t}-v_{t}'}_{g(x_{t})^{-1}}$
and $\Gamma_{t}(v)\defeq g(x_{t})^{-1}(v_{t}-v)$. 
\end{prop}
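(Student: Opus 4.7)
My plan is to handle each of the three coupled objects---the two ideal Hamiltonian curves $(x_t,v_t)$ and $(x_t',v_t')$, the discretized endpoint $(\bar x,\bar v)$, and the auxiliary quantity $\Gamma_t$---by Gronwall-type arguments on the Hamiltonian ODE \eqref{eq:hmc_intro}, supported by the stability estimates of Lemma \ref{lem:computation_sc} and the partial-derivative bounds of Lemma \ref{lem:partH_dist}. Throughout, $v\in\vgood$ gives $\norm{v}_{g^{-1}}\leq 128\sqrt{n}$; Hamiltonian preservation combined with self-concordance (Lemma \ref{lem:sc_facts}) then propagates this to $\norm{v_t}_{g_t^{-1}}=O(\sqrt{n})$ for all $t\in[0,h]$, which is the uniform velocity bound that the rest of the argument relies on.

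First I would couple the two ideal curves by writing ODEs for $x_t-x_t'$ and $v_t-v_t'$ from \eqref{eq:hmc_intro}. Items 11 and 12 of Lemma \ref{lem:computation_sc} control the right-hand sides in local norm by $\psi+\phi\norm{v_t}_{g_t^{-1}}$ and $(\psi+\phi\norm{v_t}_{g_t^{-1}})\norm{v_t}_{g_t^{-1}}+n\phi+M_2\phi$ respectively, with $M_2$ absorbing the $\grad f$-Lipschitz contribution from Definition \ref{def:Mparameters}. Gronwall on $[0,h]$, together with condition \onecirc\ (which subsumes $h^2(n+\sqrt{M_1}+M_2)\ll 1$), closes the bootstrap and yields $\phi\lesssim h\norm{v-v'}_{g^{-1}}$ and $\psi\leq(1+O(h^2(n+M_2)))\norm{v-v'}_{g^{-1}}$, so both are within constant factors of $\norm{v-v'}_{g^{-1}}$.

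Next, for the discretized endpoint, time-reversibility forces the integrator to be second-order, so by the definition of $C_x,C_v$ we have $d_g(\bar x,x_h)\leq C_x(x,v)h^2$ and $\norm{\bar v-v_h}_{g^{-1}}\leq C_v(x,v)h^2$. Condition \threecirc\ with sensitivity ensures $\bar x\in\overline{\mcal_\rho}$ and that the symmetric auxiliary function $\bar\ell$ remains below $\bar\ell_0$ along the discretized trajectory, so condition \twocirc\ (via $\bar R_1$) lets me re-invoke Lemmas \ref{lem:invertHam} and \ref{lem:localUnique} with the barred parameters. Condition \fourcirc\ ensures the first-order contribution of the velocity error, of order $\sqrt{n}\cdot h\cdot C_x$, stays negligible in local norm.

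Finally, for $\Gamma_t(v)=g(x_t)^{-1}(v_t-v)$, I differentiate to get
\[
\dot\Gamma_t\;=\;-g(x_t)^{-1}\tfrac{\partial H}{\partial x}(x_t,v_t)\;-\;g(x_t)^{-1}Dg(x_t)[\dot x_t]\,\Gamma_t,
\]
bound the first term by $n+\sqrt{M_1}+\norm{v_t}_{g_t^{-1}}^{2}\lesssim n+\sqrt{M_1}$ using Lemma \ref{lem:partH_dist} and $M_1$ from Definition \ref{def:Mparameters}, and control the second via Lemma \ref{lem:sc_facts} by $2\norm{v_t}_{g_t^{-1}}\norm{\Gamma_t}_{g_t}$. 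Integrating on $[0,t]$ yields $\norm{\Gamma_t(v)}_{g_t}\lesssim t(n+\sqrt{M_1})$, possibly with a Gronwall correction absorbed by \onecirc. The main obstacle I anticipate is the circular structure of these bounds: every estimate requires that intermediate points stay in (or near) $\mcal_\rho$ with controlled velocity norms and that all auxiliary functions remain below their thresholds, while these facts themselves depend on the very bounds being proved. Making this self-consistent is precisely what conditions \onecirc--\fourcirc\ together with the symmetric extension $\overline{\mcal_\rho}$ and symmetric $\bar\ell$ from Definition \ref{def:Mparameters} are designed for, and the core of the proof will be a careful bootstrap that processes them in the right order.
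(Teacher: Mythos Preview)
Your proposal captures the Gronwall/bootstrap flavor of items 2 and 3, but it misreads the role of $\Gamma_t$ and misses the main conclusion of the proposition entirely.

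The statement you were given is only the preamble; the proposition has five enumerated conclusions, and the punchline is item~5: for $z=\bTx(v)$ there exists $v^{*}$ with $T_{x}(v^{*})=z$, $\norm{v-v^{*}}_{g^{-1}}=O(h^{-1})\norm{T_{x}(v)-\bTx(v)}_{g}$, and a one-to-one correspondence between $v$ and $v^{*}$. The paper obtains $v^{*}$ by the Banach fixed-point theorem applied to the map $\Upsilon(u)=u-\tfrac{1}{h}g\,T_{x}(u)+\tfrac{1}{h}g\,z$ on a ball around $v$; the entire purpose of items~2--4 is to verify that $\Upsilon$ is a contraction and maps the ball into itself. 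Your proposal never mentions a fixed-point argument, so the central mechanism is absent.

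Relatedly, item~4 is not a bound on $\norm{\Gamma_t(v)}$ but a Lipschitz estimate $\norm{\Gamma_{t}(v)-\Gamma_{t}(v')}_{g}\leq L\norm{v-v'}_{g^{-1}}$ with $L<1/10$. This is exactly what makes $\Upsilon$ contractive: after writing $T_{x}(u)-T_{x}(u')=\int_{0}^{h}(g_{t}^{-1}u_{t}-g_{t}'^{-1}u_{t}')\,dt$ and separating out the $I-gg_{t}^{-1}$ pieces, the residual term is $\int_0^h g(\Gamma_t(u)-\Gamma_t(u'))\,dt$, and its smallness is item~4. Your computation of $\dot\Gamma_t$ and the ensuing bound on $\norm{\Gamma_t(v)}_{g_t}$ is a different (and unneeded) estimate.

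Finally, your opening step---invoking Hamiltonian preservation to get $\norm{v_t}_{g_t^{-1}}=O(\sqrt{n})$---is circular as stated: extracting the kinetic term from $H(x_t,v_t)=H(x,v)$ requires control of $f(x_t)-f(x)$ and $\log\det g(x_t)-\log\det g(x)$, hence of $\norm{x_t-x}_g$, which is part of what item~1 establishes. The paper instead runs a direct coupled Gronwall on $\phi(t)=\norm{x-x_t}_g$ and $\psi(t)=\norm{v-v_t}_{g^{-1}}$ under the bootstrap hypothesis $\phi(t)<1/4$, using Lemma~\ref{lem:partH_dist} and $x\in\mcal_\rho$ to bound $\norm{\partial H/\partial x}_{g^{-1}}$ by $O(n+\sqrt{M_1}+\psi(t)^2)$; condition~$\onecirc$ then closes the bootstrap and yields $\psi(t)=O(t(n+\sqrt{M_1}))$ and $\phi(t)=O(t\sqrt{n}+t^2(n+\sqrt{M_1}))$. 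You should replace the energy-conservation shortcut with this argument.
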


\begin{enumerate}
\item $\norm{x-x_{t}}_{g}=O\Par{t\sqrt{n}+t^{2}(n+\sqrt{M_{1}})}<\frac{1}{4}$
and $\norm{v-v_{t}}_{g^{-1}}=O\Par{t(n+\sqrt{M_{1}})}$. 
\item $(1-o(1))\norm{v_{h}-v_{h}'}_{g^{-1}}\leq\psi\leq(1+o(1))\norm{v-v'}_{g^{-1}}$.
\item $(1-o(1))\norm{\Tx(v)-\Tx(v')}_{g}\leq\phi\leq(1+o(1))h\psi\leq(1+o(1))h\norm{v-v'}_{g}$.
\item $\norm{\Gamma_{t}(v)-\Gamma_{t}(v')}_{g}\leq L\norm{v-v'}_{g^{-1}}$
for some $L<1/10$.
\item For $z=\bTx(v)$, there exists $v^{*}\in\Rn$ with $z=T_{x}(v^{*})$
such that $\bar{\ell}(\ham_{x,t}(g(x)^{-1}v^{*}))\leq\bar{\ell}_{0}$ and $\norm{v-v^{*}}_{\ginverse}=O\Par{\frac{1}{h}}\norm{T_{x}(v)-\bTx(v)}_{g}$. Moreover,
there is a one-to-one correspondence between $v$ and $v^{*}$.
\end{enumerate}
\textbf{Proof of 1}. For $0\leq t\leq h$, let us define $\phi(t):=\norm{x-x_{t}}_{g}$
and $\psi(t):=\norm{v-v_{t}}_{\ginverse}$. Note that 
\begin{align*}
2\norm{x-x_{t}}_{g}\frac{d\phi(t)}{dt} & =\frac{d\phi^{2}(t)}{dt}=\frac{d}{dt}(x_{t}-x)^{\top}g(x_{t}-x)\\
 & =2\Par{\frac{\partial H}{\partial v}(x_{t},v_{t})}^{\top}g(x_{t}-x).
\end{align*}
Hence, 
\[
\Abs{\norm{x-x_{t}}_{g}\phi'(t)}=\Abs{\Par{\frac{\partial H}{\partial v}(x_{t},v_{t})}^{\top}g(x_{t}-x)}\leq\norm{\frac{\partial H}{\partial v}(x_{t},v_{t})}_{g}\norm{x_{t}-x}_{g},
\]
and $|\phi'(t)|\leq\norm{\frac{\partial H}{\partial v}(x_{t},v_{t})}_{g}$.
When $\norm{x-x_{t}}_{g}<\frac{1}{4}$ for $0\leq t\leq h$, since
the local norms at $x$ and $x_{t}$ are within a small constant factor
as follows, we have 
\begin{align*}
\norm{\frac{\partial H}{\partial v}(x_{t},v_{t})}_{g} & \leq2\norm{\frac{\partial H}{\partial v}(x_{t},v_{t})}_{g_{t}}=2\norm{v_{t}}_{g_{t}^{-1}}\quad(\text{Lemma \ref{lem:partH_dist}})\\
 & \leq4\norm{v_{t}}_{\ginverse}\leq4(\norm{v_{t}-v}_{\ginverse}+\norm v_{\ginverse}),
\end{align*}
and thus 
\begin{equation}
\phi'(t)\leq10^{3}\sqrt{n}+4\psi(t)\quad\text{if }\norm{x-x_{t}}_{g}<\frac{1}{4}.\label{ineq1}
\end{equation}
Similarly, we can obtain 
\begin{align*}
2\norm{v-v_{t}}_{\ginverse}\frac{d\psi}{dt} & =\frac{d\psi^{2}(t)}{dt}=\frac{d}{dt}(v_{t}-v)^{\top}\ginverse(v_{t}-v)\\
 & =-2\Par{\frac{\partial H}{\partial x}(x_{t},v_{t})}^{\top}\ginverse(v_{t}-v),
\end{align*}
and thus $|\psi'(t)|\leq\norm{\frac{\partial H}{\partial x}(x_{t},v_{t})}_{\ginverse}$.
If $\norm{x-x_{t}}_{g}<\frac{1}{4}$ for $0\leq t\leq h$, then by
Lemma \ref{lem:partH_dist} 
\begin{align*}
\norm{\frac{\partial H}{\partial x}(x_{t},v_{t})}_{\ginverse} & \leq2(\norm{v_{t}}_{g_{t}^{-1}}^{2}+\sqrt{M_{1}}+n)\quad(\because x\in\mcal_{\rho})\\
 & \leq2\Par{4\norm{v_{t}}_{\ginverse}^{2}+\sqrt{M_{1}}+n}\\
 & \leq2\Par{8(\norm v_{\ginverse}^{2}+\norm{v_{t}-v}_{\ginverse}^{2})+\sqrt{M_{1}}+n}\quad(\because(a+b)^{2}\leq2(a^{2}+b^{2}))\\
 & \leq10^{6}n+16\psi^{2}(t)+2\sqrt{M_{1}},
\end{align*}
and thus 
\begin{equation}
\psi'(t)\leq10^{6}n+16\psi^{2}(t)+2\sqrt{M_{1}}\quad\text{if }\norm{x-x_{t}}_{g}<\frac{1}{4}.\label{ineq2}
\end{equation}
Now let us solve the coupled inequalities (\ref{ineq1}) and (\ref{ineq2}).
When $\psi(t)\leq1000t(n+\sqrt{M_{1}})$, (\ref{ineq2}) becomes 
\[
\psi'(t)\leq10^{6}n+2\sqrt{M_{1}}+16\cdot10^{6}t^{2}(n+\sqrt{M_{1}})^{2},
\]
and this inequality holds up until $h$ satisfying $\int_{0}^{h}\Par{10^{6}n+2\sqrt{M_{1}}+16\cdot10^{6}t^{2}(n+\sqrt{M_{1}})^{2}}dt\leq1000h(n+\sqrt{M_{1}})$.
We can check that for any $h\leq\frac{10^{-10}}{\sqrt{n+\sqrt{M_{1}}}}$
(i.e., condition $\onecirc$) this integral inequality is satisfied.
Recall that we have to ensure that $\phi(t)<\frac{1}{4}$ for $t\leq h$.
By substituting $\psi(t)\leq1000t(n+\sqrt{M_{1}})$ into (\ref{ineq1}),
we have 
\[
\phi'(t)\leq10^{3}\sqrt{n}+4000t(n+\sqrt{M_{1}}),
\]
as long as $\phi(t)=\norm{x_{t}-x}_{g}<\frac{1}{4}$. It is straightforward
to see that for $t\leq h$ one has 
\begin{align*}
\phi(t) & \leq10^{3}\sqrt{n}t+2000t^{2}(n+\sqrt{M_{1}})\leq10^{-7}\sqrt{\frac{n}{n+\sqrt{M_{1}}}}+\frac{2000}{10^{10}}\\
 & <\frac{1}{10^{6}}.
\end{align*}

\textbf{Proof of 2.} From the first item, $\norm{x_{t}-x_{t}'}_{g}\leq\norm{x-x_{t}}_{g}+\norm{x-x_{t}'}_{g}<10^{-5}$.
Due to Lemma \ref{lem:computation_sc}, we can switch the local norms
among $g,g_{t}=g(x_{t})$ and $g_{t}'=g(x_{t}')$ by losing a multiplicative
constant like $1+10^{-4}$.

For $\dx=\norm{x_{t}-x_{t}'}_{g_{t}}$ and $\dv=\norm{v_{t}-v_{t}'}_{g_{t}^{-1}}$,
\begin{align*}
\norm{v_{t}-v_{t}'}_{g_{t}} & =\norm{v-v'+\int_{0}^{t}\Par{\frac{\del H}{\del x}(x_{s},v_{s})-\frac{\del H}{\del x}(x_{s}',v_{s}')}ds}_{g_{t}^{-1}}\\
 & \leq\norm{v-v'}_{g_{t}^{-1}}+O(h)\sup_{t\in[0,h]}\norm{\frac{\del H}{\del x}(x_{t},v_{t})-\frac{\del H}{\del x}(x_{t}',v_{t}')}_{g_{t}^{-1}}\\
 & \leq\norm{v-v'}_{g_{t}^{-1}}+O(h)\sup_{t\in[0,h]}\Par{(\dv+\dx\norm{v_{t}}_{g^{-1}})(\norm{v_{t}}_{g^{-1}}+\norm{v_{t}'}_{g^{-1}})+(n+M_{2})\dx},
\end{align*}
where the last step follows from Lemma \ref{lem:computation_sc}-12.
By the first item and $\onecirc$, we have $\norm{v_{t}}_{g^{-1}},\norm{v_{t}'}_{g^{-1}}\leq7\sqrt{n+\sqrt{M_{1}}}$
and thus 
\begin{align*}
\norm{v_{t}-v_{t}'}_{g_{t}^{-1}} & \leq\norm{v-v'}_{g_{t}^{-1}}+O(h)\Par{\psi\sqrt{n+\sqrt{M_{1}}}+\phi\Par{n+\sqrt{M_{1}}+M_{2}}}\\
 & \leq(1+o(1))\norm{v-v'}_{g^{-1}}+O(h)\psi\Par{\sqrt{n+\sqrt{M_{1}}}+h\Par{n+\sqrt{M_{1}}+M_{2}}},
\end{align*}
where we used $\phi\leq(1+o(1))O(h)\psi$ that we prove in the next
item. Taking the supremum over $t\in[0,h]$, we obtain 
\[
\Par{1-O(h)\Par{\sqrt{n+\sqrt{M_{1}}}+h\Par{n+\sqrt{M_{1}}+M_{2}}}}\psi\leq(1+o(1))\norm{v-v'}_{g^{-1}}.
\]
Taking a sufficiently small constant in $h$ and using $\onecirc$,
it follows that $\psi\leq(1+o(1))\norm{v-v'}_{g^{-1}}$.

\textbf{Proof of 3. }By Lemma \ref{lem:computation_sc}-11, 
\begin{align*}
\norm{x_{t}-x_{t}'}_{g_{t}} & =\norm{x+\int_{0}^{t}\frac{\del H}{\del v}(x_{s},v_{s})ds-\Par{x+\int_{0}^{t}\frac{\del H}{\del v}(x_{s}',v_{s}')ds}}_{g_{t}}\\
 & \leq O(h)\sup_{t\in[0,h]}\norm{\frac{\del H}{\del v}(x_{t},v_{t})-\frac{\del H}{\del v}(x_{t}',v_{t}')}_{g_{t}}\\
 & \leq O(h)\Par{\psi+\phi\sqrt{n+\sqrt{M_{1}}}}.
\end{align*}
By using $\onecirc$ and taking the supremum over $t\in[0,h]$ and
a sufficiently small constant in $h$, we obtain the inequality of
$\phi\leq(1+o(1))O(h)\psi$ as we promised, and the second item implies
that 
\[
\phi\leq(1+o(1))O(h)\norm{v-v'}_{g^{-1}}.
\]

\textbf{Proof of 4. }By Lemma \ref{lem:computation_sc}-12, 
\begin{align*}
\norm{\Gamma_{t}(v)-\Gamma_{t}(v')}_{g_{t}}= & \norm{g_{t}^{-1}(v_{t}-v)-g_{t}'^{-1}(v_{t}'-v')}_{g_{t}}\\
\leq & \norm{v_{t}-v-v_{t}'+v'}_{g_{t}^{-1}}+O(1)\underbrace{\norm{x_{t}-x_{t}'}_{g_{t}}}_{\leq\phi}\underbrace{\norm{v_{t}'-v'}_{g_{t}^{-1}}}_{\leq O(h(n+\sqrt{M_{1}}))}\\
\leq & \norm{\int_{0}^{t}\Par{\frac{\del H}{\del x}(x_{s},v_{s})-\frac{\del H}{\del x}(x_{s}',v_{s}')}ds}_{g_{t}^{-1}}+O\Par{h\Par{n+\sqrt{M_{1}}}}\phi\\
\leq & O(h)\sup_{t\in[0,h]}\norm{\frac{\del H}{\del x}(x_{t},v_{t})-\frac{\del H}{\del x}(x_{t}',v_{t}')}_{g_{t}^{-1}}+O\Par{h\Par{n+\sqrt{M_{1}}}}\phi.
\end{align*}
We can bound the first term by $O(h^{2})\Par{\sqrt{n+\sqrt{M_{1}}}+h\Par{n+\sqrt{M_{1}}+M_{2}}}\psi$ by following the proof of the second item.
Using the second and third items with the condition $\onecirc$, and
taking a sufficiently small constant in $h$, for some $L<1/10$ 
\begin{align*}
\norm{\Gamma_{t}(v)-\Gamma_{t}(v')}_{g} & \lesssim h^{2}\sqrt{n+\sqrt{M_{1}}}+h^{3}\Par{n+\sqrt{M_{1}}+M_{2}}+h^{2}\Par{n+\sqrt{M_{1}}}\norm{v-v'}_{g^{-1}}\\
 & \leq L\norm{v-v'}_{g^{-1}}.
\end{align*}

\textbf{Proof of 5.} Let $z=\bTx(v)$ for $v\in\vgood$. We show that
the map defined on $u\in V_{\text{dom}}=\{v'\in\Rn:\norm{v-v'}_{g^{-1}}\leq4\sqrt{n}\}$
by 
\[
\Upsilon(u)=u-\frac{1}{h}g\Tx(u)+\frac{1}{h}gz,
\]
is Lipschitz in $u$ with respect to the local norm $g^{-1}$, and
then apply the Banach fixed-point theorem to obtain the unique fixed-point
$v^{*}$. Note that it satisfies $g(\Tx(v^{*})-\bTx(v))=0$ and thus
$\Tx(v^{*})=\bTx(v)$.

For Lipschitzness, let $(x_{t},u_{t})$ and $(x_{t}',u_{t}')$ be
the Hamiltonian curves of the ideal RHMC starting from $(x,u)$ and
$(x,u')$ for $u,u'\in V_{\text{dom}}$, respectively. Observe that
\begin{align}
 & \norm{\Upsilon(u)-\Upsilon(u')}_{g^{-1}}=\norm{u-u'-\frac{1}{h}g\Par{T_{x}(u)-T_{x}(u')}}_{g^{-1}}\label{eq:repeat}\\
 & =\norm{u-u'-\frac{1}{h}\int_{0}^{h}g\Par{g_{t}^{-1}u_{t}-g_{t}^{-1}u_{t}'}dt}_{g^{-1}}\nonumber \\
 & =\norm{\Par{I-\frac{1}{h}g\int_{0}^{h}g_{t}^{-1}dt}u-\Par{I-\frac{1}{h}g\int_{0}^{h}g_{t}'^{-1}dt}u'-\frac{1}{h}\int_{0}^{h}g\Par{\Gamma_{t}(u)-\Gamma_{t}(u')}dt}_{g^{-1}}\nonumber \\
 & =\norm{\underbrace{\frac{1}{h}\Par{\int_{0}^{h}(I-gg_{t}^{-1})dt}}_{I_{u}}u-\underbrace{\frac{1}{h}\Par{\int_{0}^{h}(I-gg_{t}'^{-1})dt}}_{I_{u'}}u'-\frac{1}{h}\int_{0}^{h}g\Par{\Gamma_{t}(u)-\Gamma_{t}(u')}dt}_{g^{-1}}\nonumber \\
 & \leq\norm{I_{u}(u-u')+(I_{u}-I_{u'})u'}_{g^{-1}}+\frac{1}{h}\norm{\int_{0}^{h}g\Par{\Gamma_{t}(u)-\Gamma_{t}(u')}dt}_{g^{-1}}\nonumber \\
 & \leq\norm{I_{u}(u-u')}_{g^{-1}}+\norm{(I_{u}-I_{u'})u'}_{g^{-1}}+\sup_{t\in[0,h]}\norm{\Gamma_{t}(u)-\Gamma_{t}(u')}_{g}\nonumber \\
 & \leq\underbrace{\norm{I_{u}(u-u')}_{g^{-1}}}_{F}+\underbrace{\norm{(I_{u}-I_{u'})u'}_{g^{-1}}}_{S}+L\norm{u-u'}_{g^{-1}},\label{eq:lastbound}
\end{align}
where the last inequality follows from the fourth item.

For $F$, let $p=u-u'$ and observe that 
\begin{align*}
\norm{I_{u}p}_{g^{-1}} & \leq\frac{1}{h}\int_{0}^{h}\norm{(I-gg_{t}^{-1})p}_{g^{-1}}dt\leq\sup_{t\in[0,h]}\norm{(I-gg_{t}^{-1})p}_{g^{-1}}\\
 & \leq O(1)\sup_{t\in[0,h]}\norm{I-g^{\half}g_{t}^{-1}g^{\half}}_{2}\norm p_{g^{-1}}\leq O(1)\sup_{t\in[0,h]}\norm{x-x_{t}}_{g}\norm p_{g^{-1}}\\
 & \lesssim\Par{h\sqrt{n}+h^{2}\Par{n+\sqrt{M_{1}}}}\norm{u-u'}_{g^{-1}}.\quad(\text{First item})
\end{align*}

For $S$, we can bound it as follows:
\begin{align*}
\norm{(I_{u}-I_{u'})u'}_{g^{-1}} & \leq\frac{1}{h}\int_{0}^{h}\norm{g(g_{t}'^{-1}-g_{t}^{-1})u'}_{g^{-1}}dt\leq\sup_{t\in[0,h]}\norm{(g_{t}'^{-1}-g_{t}^{-1})u'}_{g}\\
 & \lesssim\sup_{t\in[0,h]}\norm{(g_{t}'^{-1}-g_{t}^{-1})u'}_{g_{t}'}\lesssim\sup_{t\in[0,h]}\norm{x_{t}-x_{t}'}_{g_{t}}\norm{u'}_{g_{t}'}\\
 & \lesssim\phi\sqrt{n+\sqrt{M_{1}}}\lesssim h\sqrt{n+\sqrt{M_{1}}}\norm{u-u'}_{g^{-1}}.\quad(\text{Third item})
\end{align*}
Substituting the bounds on $F$ and $S$ into (\ref{eq:lastbound})
with a small constant in $h$ taken, we can conclude that 
\[
\norm{\Upsilon(u)-\Upsilon(u')}_{g^{-1}}<\frac{1}{3}\norm{u-u'}_{g^{-1}}.
\]

Next, we show that the image of $\Upsilon$ is included in $V_{\text{dom}}$.
For $u\in V_{\text{dom}}$,
\begin{align*}
\norm{\Upsilon(u)-v}_{g^{-1}} & =\ginvnorm{u-v-\frac{1}{h}g(\Tx(u)-\bTx(v))}\\
 & =\ginvnorm{u-v-\frac{1}{h}g(\Tx(u)-\Tx(v))+\frac{1}{h}g(\Tx(v)-\bTx(v))}\\
 & \leq\norm{u-v-\frac{1}{h}g(\Tx(u)-\Tx(v))}_{g^{-1}}+\frac{1}{h}\norm{\Tx(v)-\bTx(v)}_{g}.
\end{align*}
Repeating the proof for the first item\footnote{For $u\in V_{\text{dom}}$, we might have $\norm u_{g^{-1}}\geq128\sqrt{n}$
though, it is still bounded above by $132\sqrt{n}$. The proofs of
the second to fifth items can be exactly reproduced for $V_{\text{relaxed}}\defeq\{v'\in\Rn:\norm{v'}_{g^{-1}}\leq132\sqrt{n}\}$,
leading to a similar conclusion like $\norm{\Upsilon(u)-\Upsilon(u')}_{g^{-1}}<(\frac{1}{3}+\epsilon)\norm{u-u'}_{g^{-1}}$
for a small constant $\epsilon>0$.} (see (\ref{eq:repeat})), we can bound the first term by $\half\norm{u-v}_{g^{-1}}$
and thus by $2\sqrt{n}$, due to $u\in V_{\text{dom}}$. By Lemma
\ref{lem:dist_const} and $\fourcirc$, we can bound the second term
by 
\begin{align*}
\frac{1}{h}\norm{\Tx(v)-\bTx(v)}_{g} & \leq\frac{2}{h}d_{g}(\Tx(v),\bTx(v))\leq\frac{2}{h}C_{x}(x,v)h^{2}\leq2\sqrt{n}.
\end{align*}
Putting them together, we obtain $\norm{\Upsilon(u)-v}_{g^{-1}}\leq4\sqrt{n}$.

By the Banach fixed-point theorem, there is a unique fixed point $v^{*}$
of $\Upsilon$ such that $\Tx(v^{*})=\bTx(v)$ and 
\[
\norm{\Upsilon(v)-v^{*}}_{g^{-1}}<\frac{1}{3}\norm{v-v^{*}}_{g^{-1}}.
\]
Moreover, $\norm{\Upsilon(v)-v^{*}}_{g^{-1}}=\norm{v-v^{*}-\frac{1}{h}g(\Tx(v)-z)}_{g^{-1}}\geq\norm{v-v^{*}}_{g^{-1}}-\frac{1}{h}\norm{\Tx(v)-\bTx(v)}_{g}$.
Relating these two inequalities, we obtain 
\[
\norm{v-v^{*}}_{g^{-1}}\lesssim\frac{1}{h}\norm{\Tx(v)-\bTx(v)}_{g^{-1}}.
\]

 We now show a one-to-one correspondence between $v\in\vgood$ and
$v^{*}$. Let $F_{h}(x,v)=(x_{2},v_{2})$ and $F_{h}(x,v^{*})=(z,v')$.
By the reversibility of the Hamiltonian trajectories, we have a one-to-one
correspondence between $v^{*}$ and $v'$ in a sense that $F_{h}(x,v^{*})=(z,v')$
and $F_{h}(z,-v')=(x,-v^{*})$. Similarly, we also have a one-to-one
correspondence between $v$ and $v_{2}$. Thus, it suffices to show
a one-to-one correspondence between $v_{2}\in T_{x_{2}}\mcal$ and
$v'\in T_{z}\mcal$. 

Consider the straight line between $z$ and $x_{2}$. We have that
$\norm{z-x_{2}}_{g}=\norm{\bTx(v)-\Tx(v)}_{g}\leq2C_{x}(x,v)h^{2}\leq10^{-9}\min\Par{1,\frac{\bar{\ell}_{0}}{\bar{\ell}_{1}}}$
by $\threecirc$ and $\bar{\ell}(\ham_{x_{2},h}(-g(x_{2})^{-1}v_{2}))=\bar{\ell}(\ham_{x,h}(g(x)^{-1}v))\leq\bar{\ell}_{0}/2$
by the symmetry of $\bar{\ell}$. Due to $\twocirc$, we can apply
Lemma \ref{lem:extendGeo} to $x_{2}$ with an initial velocity $-g(x_{2})^{-1}v_{2}$.
Thus, a one-to-one correspondence between $v'$ and $v_{2}$ follows,
and we also have that $\bar{\ell}_{0}\geq\bar{\ell}(\ham_{z,h}(-g(z)^{-1}v'))=\bar{\ell}(\ham_{x,h}(g(x)^{-1}v^{*})$.

\subsubsection{One-step coupling}

As elaborated in Section \ref{subsec:couple-ideal-dis}, it suffices
to prove that for $v\in\vgood$ the term of $1-\frac{p_{x}^{\ast}(v^{*})}{p_{x}^{\ast}(v)}\frac{\Abs{D\bTx(v)}}{\Abs{DT_{x}(v^{*})}}$
is bounded by a constant smaller than $1$.
\begin{lem}
\label{lem:tvidealdis} For $x\in\mcal_{\rho}$ and $v\in\vgood$,
let step size $h$ guarantee the sensitivity of a numerical
integrator at $(x,v)$, and satisfy
\[
\underbrace{h^{2}\leq\frac{10^{-10}}{n+\sqrt{M_{1}}+M_{2}}}_{\onecirc},\underbrace{h^{2}\leq\frac{10^{-10}}{\bar{R}_{1}}}_{\twocirc},\underbrace{C_{x}(x,v)h^{2}\leq\frac{1}{10^{10}}\min\Par{1,\frac{\bar{\ell}_{0}}{\bar{\ell}_{1}}}}_{\threecirc},\underbrace{C_{x}(x,v)h\leq\frac{1}{10^{10}\sqrt{n}}}_{\fourcirc}.
\]
Then $\dtv(\oprop_{x},\pcal_{x})\leq\frac{1}{10}$.
\end{lem}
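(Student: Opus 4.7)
The proof plan is to execute the decomposition sketched immediately before the lemma's statement. First, I would verify that $P_x^*(\vbad) \leq 0.02$. Since $\vbad = V_1 \cup V_2$, the standard Gaussian tail bound under $\ncal(0,g(x))$ yields $P_x^*(V_1) < 1/100$, while $P_x^*(V_2) < 1/100$ is immediate from the defining inequality of $\bar\ell_0$ applied to the symmetric auxiliary function $\bar\ell$ along the Hamiltonian curve $\ham_{x,t}(\ginv x v)$ whose initial velocity is distributed as $\ncal(0,g(x)^{-1})$ when $v\sim\ncal(0,g(x))$.

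Next, for each $v \in \vgood$, Proposition~\ref{prop:dynamics}-5 supplies a unique $v^* \in \Rn$ with $T_x(v^*) = \bTx(v)$ together with the quantitative estimate
\[
\ginvnorm{v - v^*} = O\!\Par{\tfrac{1}{h}} \norm{T_x(v) - \bTx(v)}_{g} \lesssim C_x(x,v)\,h,
\]
where the last step combines Lemma~\ref{lem:dist_const} with the second-order property $d_g(\bTx(v), T_x(v)) \leq C_x(x,v)\,h^2$ that is automatic for any sensitive integrator. Condition $\fourcirc$ then forces $\ginvnorm{v-v^*} \leq c \cdot 10^{-10}/\sqrt n$ for an absolute constant $c$, while the one-to-one correspondence part of Proposition~\ref{prop:dynamics}-5 is what allows us to pair summands indexed by $v$ and $v^*$ in the density formulas.

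The heart of the argument is the pointwise estimate
\[
1 - \frac{p_x^*(v^*)}{p_x^*(v)} \cdot \frac{|D\bTx(v)|}{|DT_x(v^*)|} \leq \eta
\]
for some small $\eta$, valid on $\vgood$. The Jacobian factor is bounded below by $0.998$ \emph{directly} by the sensitivity hypothesis (Definition~\ref{def:stable}), which is exactly tailored to the $v^*$ produced above. For the Gaussian factor, since $p_x^* \propto \exp(-\half\ginvnorm{\cdot}^2)$, I would expand
\[
\Abs{\log \frac{p_x^*(v^*)}{p_x^*(v)}} = \half \Abs{\inner{v - v^*,\, v + v^*}_{\ginverse}} \leq \half \ginvnorm{v - v^*}\Par{2\ginvnorm{v} + \ginvnorm{v - v^*}} \lesssim C_x(x,v)\,h\,\sqrt n,
\]
using $\ginvnorm{v} \leq 128\sqrt n$ from the definition of $\vgood$. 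Condition $\fourcirc$ makes this $O(10^{-8})$, so $p_x^*(v^*)/p_x^*(v) \geq 1 - O(10^{-8})$, and the full product is $\geq 0.997$, giving $\eta \leq 0.003$.

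Assembling via the change-of-variables bookkeeping preceding the lemma yields
\[
\dtv(\oprop_x, \pcal_x) \leq P_x^*(\vbad) + \eta\, P_x^*(\vgood) \leq 0.02 + 0.003 < \frac{1}{10}.
\]
The main obstacle is the orchestration: one needs Proposition~\ref{prop:dynamics}-5 to apply (which consumes $\onecirc$, $\twocirc$, $\threecirc$) \emph{and simultaneously} needs the resulting $\ginvnorm{v - v^*}$ to be small relative to $\sqrt n$ rather than merely $O(1)$, because $\ginvnorm{v}$ is as large as $128\sqrt n$ on $\vgood$ and the Cauchy--Schwarz above pays a factor of $\ginvnorm{v}$. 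This is precisely what forces the dimension-dependent restriction $C_x(x,v)\,h \leq 10^{-10}/\sqrt n$ in $\fourcirc$ rather than a weaker dimension-free bound.
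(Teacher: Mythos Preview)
Your proposal is correct and follows essentially the same approach as the paper's proof: use the setup preceding the lemma to reduce to bounding $P_x^*(\vbad)$ and the pointwise ratio $1-\frac{p_x^*(v^*)}{p_x^*(v)}\frac{|D\bTx(v)|}{|DT_x(v^*)|}$, invoke Proposition~\ref{prop:dynamics}-5 (consuming $\onecirc$--$\fourcirc$) to obtain $v^*$ with $\ginvnorm{v-v^*}\lesssim C_x h$, control the Gaussian log-ratio by Cauchy--Schwarz against $\ginvnorm{v}\le 128\sqrt n$, and use sensitivity directly for the Jacobian factor. The numerical constants and the final assembly match the paper's argument.
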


\begin{proof}
For given $v\in\vgood$, Proposition \ref{prop:dynamics}-5 ($\onecirc\sim\fourcirc$
required) and the order of the numerical integrator ensure that there
exists $v^{*}\in T_{x}\mcal$ such that $T_{x}(v^{*})=\overline{T}_{x}(v)$
and $\norm{v-v^{*}}_{g^{-1}}\lesssim C_{x}(x,v)h\leq\frac{1}{10^{10}\sqrt{n}}$
by $\fourcirc$. As $p_{x}^{*}$ is the probability density function
of $\ncal(0,g(x))$, 
\[
\Abs{\log\Par{\frac{p_{x}^{\ast}(v^{*})}{p_{x}^{\ast}(v)}}}=\Abs{\norm{v^{*}}_{g^{-1}}^{2}-\norm v_{g^{-1}}^{2}}\leq\norm{v^{*}-v}_{g^{-1}}\Par{\norm v_{g^{-1}}+\norm{v^{*}}_{g^{-1}}}\leq\frac{1}{10^{5}},
\]
and thus the ratio of $\frac{p_{x}^{\ast}(v^{*})}{p_{x}^{\ast}(v)}$
is bounded below by $0.999$. Also, the sensitivity of the numerical
integrator yields $\frac{\Abs{D\bTx(v)}}{\Abs{DT_{x}(v^{*})}}\geq0.998$.
Hence, for any $v\in\vgood$ 
\[
1-\frac{p_{x}^{\ast}(v^{*})}{p_{x}^{\ast}(v)}\frac{\Abs{D\bTx(v)}}{\Abs{DT_{x}(v^{*})}}\leq0.003,
\]
and the claim follows.
\end{proof}
We finish this section by providing a sufficient condition on the
step size for the sensitivity of a numerical integrator, which we find
useful later when checking the sensitivity of IMM and LM.
\begin{prop}
\label{prop:stability} Let $x\in\mcal_{\rho}$ and $v\in\vgood$.
Let step size $h$ satisfy $h^{2}\leq\frac{1}{10^{5}\sqrt{n}\bar{R}_{1}}$
in addition to the step-size conditions in Proposition \ref{prop:dynamics}.
A numerical integrator $\overline{T}_{x,h}$ is sensitive at $(x,v)$
if $\Abs{D\bTx(v)}\geq\frac{(1-10^{-6})h^{n}}{\sqrt{\Abs{g(x')}\Abs{g(x)}}}$
for $x'=\bTx(v)$.
\end{prop}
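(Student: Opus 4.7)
The plan is to upper bound $|DT_{x,h}(v')|$ by something close to $h^n / \sqrt{|g(x')||g(x)|}$ and then divide the hypothesized lower bound on $|D\overline{T}_x(v)|$ by it. Concretely, let $v'$ be the Euclidean velocity guaranteed by Proposition~\ref{prop:dynamics}-5 so that $T_{x,h}(v')=\overline{T}_{x,h}(v)$. Then the same proposition (applicable because the assumed step-size conditions include $\onecirc$--$\fourcirc$) gives the regularity bound $\bar{\ell}(\ham_{x,t}(g(x)^{-1}v'))\leq \bar{\ell}_{0}$ for $t\in[0,h]$, which is precisely the hypothesis needed to invoke Corollary~\ref{lem:jacoIdeal} with $\bar{R}_{1}$ in place of $R_{1}$ under the assumption $h^{2}\leq 1/(10^{5}\sqrt{n}\,\bar{R}_{1})$.

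Next I would pass from the Hamiltonian-map Jacobian to the Euclidean one. Writing $v=g(x)^{-1}v'$ and $x'=T_{x,h}(v')=\ham_{x,h}(v)$, the change-of-variable identity recorded after equation~\eqref{eq:onestepEuclidean},
\[
|DT_{x,h}(v')| \;=\; \frac{|D\ham_{x,h}(v)|}{\sqrt{|g(x)||g(x')|}},
\]
combined with Corollary~\ref{lem:jacoIdeal} yields
\[
|DT_{x,h}(v')| \;\leq\; \frac{h^{n}\,e^{1/600}}{\sqrt{|g(x)||g(x')|}}.
\]
Combining this upper bound with the hypothesized lower bound $|D\overline{T}_{x,h}(v)|\geq (1-10^{-6})h^{n}/\sqrt{|g(x)||g(x')|}$ gives
\[
\frac{|D\overline{T}_{x,h}(v)|}{|DT_{x,h}(v')|} \;\geq\; \frac{1-10^{-6}}{e^{1/600}} \;\geq\; 0.998,
\]
which is exactly the sensitivity condition in Definition~\ref{def:stable}.

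The only real checks are the two compatibility conditions that unlock Corollary~\ref{lem:jacoIdeal}: the step-size bound $h^{2}\leq 1/(10^{5}\sqrt{n}\,\bar{R}_{1})$ is built into the hypothesis, and the regularity $\bar{\ell}\le \bar{\ell}_{0}$ along the ideal curve starting at $(x,v')$ is exactly the last assertion in the proof of Proposition~\ref{prop:dynamics}-5 (which uses the symmetry of $\bar{\ell}$ and time-reversibility to transfer the good-velocity condition from the forward trajectory out of $(x,v)$ to the ideal trajectory out of $(x,v')$). I do not expect a serious obstacle here; the proposition is essentially a translation lemma converting the problem of lower-bounding a ratio of Jacobians into the problem of lower-bounding the single quantity $|D\overline{T}_{x,h}(v)|$ for the numerical integrator, which is what Sections~\ref{sec:numerical} and \ref{sec:polyotopes} will have to verify for IMM and LM.
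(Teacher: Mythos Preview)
Your proposal is correct and mirrors the paper's proof essentially step for step: obtain $v^*$ (your $v'$) with $\bar{\ell}$-regularity from Proposition~\ref{prop:dynamics}-5, convert $|DT_{x,h}|$ to $|D\ham_{x,h}|$ via the change-of-variable identity after \eqref{eq:onestepEuclidean}, apply Corollary~\ref{lem:jacoIdeal} (with $\bar{\ell},\bar{R}_1$), and divide. The only cosmetic issue is that you overload the symbol $v$ when writing ``$v=g(x)^{-1}v'$'', which clashes with the $v$ already fixed in the statement; renaming that local-metric velocity would make the write-up cleaner.
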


\begin{proof}
By Proposition \ref{prop:dynamics}-5, there exists $v^{*}$ such
that $\Tx(v^{*})=\bTx(v)$ and $\bar{\ell}(\ham_{x,t}(g(x)^{-1}v^{*}))\leq\bar{\ell}_{0}$.
Let us estimate $\Abs{D\Tx(v^{*})}$. Recall that $\ham_{x,h}$ is
the Hamiltonian map from $T_{x}\mcal$ to $\mcal$, where both spaces
are endowed with the local metric $g$. Even though $\Tx$ has the
same domain and range, these spaces are endowed with the Euclidean
metric. Therefore, we can relate $\Tx$ to $\ham_{x,h}$ by 
\[
\Tx(v^{*})=\Par{\idftn_{\mcal\to\Rn}\circ\ham_{x,h}\circ\idftn_{\Rn\to T_{x}\mcal}}(g(x)^{-1}v^{*}),
\]
where $\idftn_{\mcal\to\Rn}$ is the embedding with transition of
metric from $g(x)$ to the Euclidean, and $\idftn_{\R^{n}\to T_{x}\mcal}$
is the embedding with transition of metric from the Euclidean to $g(x)$.
Note that we have to normalize $v^{*}$ by $g(x)^{-1}$ before $\ham_{x,h}$
takes it as input. Using this formula and the chain rule, 
\begin{align*}
\Abs{D\Tx(v^{*})} & =\Abs{D\idftn_{\mcal\to\R^{n}}(x')}\Abs{D\ham_{x,h}(g(x)^{-1}v^{*})}\Abs{D\idftn_{\R^{n}\to T_{x}\mcal}(g(x)^{-1}v^{*})}\\
 & \leq\Abs{g(x')}^{-\half}\cdot h^{n}\Par{1+\frac{2}{1000}}\cdot\Abs{g(x)}^{-1}\cdot\Abs{g(x)}^{\half}\\
 & =\frac{h^{n}}{\sqrt{\Abs{g(x')}\Abs{g(x)}}}\Par{1+\frac{2}{1000}},
\end{align*}
where we used Corollary \ref{lem:jacoIdeal} for $\bar{\ell}$ in
the second line. Hence, if $\Abs{D\bTx(v)}\geq\frac{(1-10^{-6})h^{n}}{\sqrt{\Abs{g(x')}\Abs{g(x)}}}$,
then 
\[
\frac{\Abs{D\bTx(v)}}{\Abs{D\Tx(v^{*})}}\geq\frac{1-10^{-6}}{1+0.002}\geq0.998.
\]
\end{proof}

\subsection{Bound on rejection probability\label{subsec:rej-prob}}

Lastly, we bound the rejection probability $\dtv(\oprop_{x}',\oprop_{x})$.
\begin{lem}
\label{lem:rejprob} For $x,x'\in\mcal$, let $g=g(x)$ and $g'=g(x')$.
If for some $0<\delta_{x}<1$ and $0<\delta_{v}$ we have $\norm{x-x'}_{g}\leq\delta_{x}$
and $\norm{v-v'}_{g^{-1}}\leq\delta_{v}$, then 
\[
\Abs{-H(x',v')+H(x,v)}\leq\Theta\Par{\Abs{f(x)-f(x')}+\Par{\dx\norm v_{\ginverse}^{2}+\dv^{2}+\dv\norm v_{\ginverse}}+n\delta_{x}},
\]
where the Hamiltonian is $H(x,v)=f(x)+\frac{1}{2}v^{\top}\ginv xv+\frac{1}{2}\log\det g(x)$.
\end{lem}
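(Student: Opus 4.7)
The plan is to split $H(x,v) = f(x) + \half v^{\top}g(x)^{-1}v + \half\log\det g(x)$ into its three summands and bound $|H(x,v) - H(x',v')|$ via the triangle inequality. The potential contribution is simply $|f(x)-f(x')|$, which already appears on the right-hand side, so the real work is in the kinetic and log-determinant pieces.

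For the kinetic piece, I would apply the algebraic identity
\[
v'^{\top}g'^{-1}v' - v^{\top}g^{-1}v = (v'-v)^{\top}g'^{-1}(v'+v) + v^{\top}(g'^{-1} - g^{-1})v,
\]
and handle the two summands using Lemma~\ref{lem:computation_sc}. The first summand is at most $\norm{v-v'}_{g'^{-1}}\norm{v+v'}_{g'^{-1}}$; parts 3 and 4 of that lemma let me swap between $g^{-1}$ and $g'^{-1}$ at cost $1+O(\dx)$, so this becomes $O(\dv)\cdot(\norm{v}_{g^{-1}} + \dv) = O(\dv\norm{v}_{g^{-1}} + \dv^{2})$. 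For the second summand I would factor $g^{-1} - g'^{-1} = g^{-1/2}(I - g^{1/2}g'^{-1}g^{1/2})g^{-1/2}$ and invoke part 5 to bound the spectral norm of the middle factor by $O(\dx)$, giving $|v^{\top}(g'^{-1}-g^{-1})v| \leq O(\dx)\norm{v}_{g^{-1}}^{2}$.

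For the log-determinant piece, I would write $\log\det g(x') - \log\det g(x) = \log\det(g^{-1/2}g(x')g^{-1/2}) = \sum_{i=1}^{n}\log\lambda_{i}$, where $\lambda_{i}$ are the eigenvalues of $g^{-1/2}g(x')g^{-1/2}$. By Lemma~\ref{lem:sc_facts}-1 these lie in $[1-O(\dx),\,1+O(\dx)]$, so $|\log\lambda_{i}| = O(\dx)$ (using $|\log(1+y)|\leq 2|y|$ for $|y|\leq 1/2$), and summing yields $|\log\det g(x')-\log\det g(x)|\leq O(n\dx)$.

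Adding the three bounds (with the $\half$ factors absorbed into $\Theta$) yields the claim. I do not foresee any substantive obstacle: this is essentially bookkeeping on top of the self-concordance stability facts packaged in Lemma~\ref{lem:computation_sc}. The only mild subtlety is to switch between local norms at $x$ and $x'$ consistently so that the $1+O(\dx)$ factors do not accumulate, which is harmless since $\dx<1$ is assumed small and each such factor can be absorbed into the $\Theta(\cdot)$.
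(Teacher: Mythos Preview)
Your proposal is correct and essentially identical to the paper's proof: both split $H$ into its three terms, decompose the kinetic difference by adding and subtracting $v^{\top}g'^{-1}v$ (your identity is exactly this decomposition), and bound the log-determinant via the eigenvalues of $g^{-1/2}g'g^{-1/2}$ (the paper uses $g'^{-1/2}gg'^{-1/2}$, which is the same argument). The only cosmetic difference is that the paper cites Lemma~\ref{lem:computation_sc}-3 directly for $|v^{\top}(g^{-1}-g'^{-1})v|\leq O(\dx)\norm v_{g^{-1}}^{2}$ rather than factoring through part~5.
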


\begin{proof}
We consider each term separately. For the second term, 
\begin{align*}
\Abs{\frac{1}{2}v^{\top}\ginverse v-\frac{1}{2}v'^{\top}g'^{-1}v} & \leq\half\underbrace{\Abs{v^{\top}g^{-1}v-v^{\top}g'^{-1}v}}_{F}+\half\underbrace{\Abs{v^{\top}g'^{-1}v-v'^{\top}g'^{-1}v'}}_{S}.
\end{align*}
For $F$, we have $F\leq O(\dx)\norm v_{\ginverse}^{2}$ by Lemma
\ref{lem:computation_sc}-3. For $S$, it follows that 
\begin{align*}
S & =\Abs{\norm v_{g'^{-1}}^{2}-\norm{v'}_{g'^{-1}}^{2}}\leq\norm{v-v'}_{g'^{-1}}\norm{v+v'}_{g'^{-1}}\\
 & \leq O(1)\norm{v-v'}_{\ginverse}(\norm v_{\ginverse}+\norm{v'}_{\ginverse})\\
 & \leq O(1)\dv(\dv+\norm v_{\ginverse}).
\end{align*}
Therefore, the second term is bounded by $O(1)\Par{\dx\norm v_{\ginverse}^{2}+\dv^{2}+\dv\norm v_{\ginverse}}$.

For the third term, 
\[
\Abs{\frac{1}{2}\left(\log\det g(x)-\log\det g(x')\right)}=\frac{1}{2}\Abs{\log\det g'^{-\half}gg'^{-\half}}\leq O(n\delta_{x}),
\]
where the inequality follows from Lemma \ref{lem:computation_sc}
and the fact that the determinant is the product of eigenvalues.
\end{proof}
Since the ideal RHMC preserves the Hamiltonian along its Hamiltonian
curve, $H(x,v)=H(x_{h},v_{h})$. Hence, we can obtain a lower bound
on the acceptance probability by computing either 
\[
\min\Par{1,\frac{e^{-H(\bx_{h},\bv_{h})}}{e^{-H(x,v)}}}\ \text{or}\ \min\Par{1,\frac{e^{-H(\bx_{h},\bv_{h})}}{e^{-H(x_{h},v_{h})}}}.
\]

\begin{lem}
\label{lem:rejFinal} Let $(x_{h},v_{h})$ and $(\bx,\bv)$ be the
points obtained by the ideal RHMC and discretized RHMC with a sensitive
numerical integrator starting at $x\in\mcal_{\rho}$
with $v\in\vgood$. If the step size $h$ satisfies 
\[
h^{2}\leq\frac{10^{-10}}{n+\sqrt{M_{1}}+M_{2}},\:h^{2}C_{x}(x,v)\leq\frac{10^{-10}}{n+\sqrt{M_{1}}+\sqrt{M_{1}^{*}}},\,h^{2}C_{v}(x,v)\leq\frac{10^{-10}}{\sqrt{n+\sqrt{M_{1}}}},
\]
then the rejection probability of the Metropolis filter is bounded
by $10^{-3}$.
\end{lem}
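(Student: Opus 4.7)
The plan is to control the Hamiltonian gap between the ideal endpoint and the numerical endpoint, and then convert this into a bound on the rejection probability. Because the ideal flow preserves $H$, we have $H(x,v) = H(x_h,v_h)$, so the Metropolis acceptance probability equals $\min\bigl(1,e^{-(H(\bar x,\bar v)-H(x_h,v_h))}\bigr)$, and the rejection probability is at most $|H(\bar x,\bar v)-H(x_h,v_h)|$ whenever this quantity is small. Thus it suffices to show $|H(\bar x,\bar v)-H(x_h,v_h)| \ll 1$.

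The tool for the Hamiltonian gap is Lemma~\ref{lem:rejprob} applied at the pair $(x_h,v_h)$ versus $(\bar x_h,\bar v_h)$. To use it I must supply $\delta_x := \norm{x_h-\bar x_h}_{g(x_h)}$ and $\delta_v := \norm{v_h-\bar v_h}_{g(x_h)^{-1}}$. The second-orderness of the integrator gives $d_g(\bar x_h,x_h) \leq C_x(x,v) h^2$ and $\norm{\bar v_h-v_h}_{g(x)^{-1}} \leq C_v(x,v) h^2$; converting the Riemannian distance to a local norm via Lemma~\ref{lem:dist_const}, and transferring $g(x)$ to $g(x_h)$ by the self-concordance stability in Lemma~\ref{lem:sc_facts} (valid since $\norm{x-x_h}_{g(x)} < 1/4$ by Proposition~\ref{prop:dynamics}-1), I get $\delta_x \lesssim C_x(x,v) h^2$ and $\delta_v \lesssim C_v(x,v) h^2$. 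In addition, Proposition~\ref{prop:dynamics}-1 combined with $v \in \vgood$ (so $\norm{v}_{g^{-1}} \leq 128\sqrt n$) and the step-size condition $h^2 \leq 10^{-10}/(n+\sqrt{M_1}+M_2)$ gives $\norm{v_h}_{g(x_h)^{-1}} = O\bigl(\sqrt{n+\sqrt{M_1}}\bigr)$.

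Plugging into Lemma~\ref{lem:rejprob}, the gap is controlled by four types of terms. The potential-difference term $|f(x_h)-f(\bar x_h)|$ is handled by the definition of $M_1^*$: since $\bar x_h$ is the last intermediate point of the integrator and $x_h = \gamma(h)$ on the Hamiltonian curve, $|f(x_h)-f(\bar x_h)| \leq \sqrt{M_1^*} \norm{x_h-\bar x_h}_x \lesssim \sqrt{M_1^*}\,C_x h^2$. The kinetic-type term $\delta_x \norm{v_h}^2 + \delta_v^2 + \delta_v \norm{v_h}$ is bounded by $O\bigl((n+\sqrt{M_1})C_x h^2 + (C_v h^2)^2 + C_v h^2\sqrt{n+\sqrt{M_1}}\bigr)$, and the volume term $n\delta_x$ by $O(n C_x h^2)$. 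The three hypotheses $h^2 C_x \leq 10^{-10}/(n+\sqrt{M_1}+\sqrt{M_1^*})$ and $h^2 C_v \leq 10^{-10}/\sqrt{n+\sqrt{M_1}}$ were chosen precisely so that each of these pieces is $O(10^{-10})$, yielding $|H(\bar x,\bar v)-H(x_h,v_h)| = O(10^{-10})$ and hence a rejection probability bounded by $10^{-3}$.

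The only mildly delicate point is the bookkeeping of local norms: the integrator's error parameters $C_x,C_v$ are naturally stated in $g(x)$, whereas Lemma~\ref{lem:rejprob} is applied with base point $x_h$. This is resolved cleanly by the self-concordance stability (Lemma~\ref{lem:computation_sc}) once Proposition~\ref{prop:dynamics}-1 confirms $x$ and $x_h$ are within a small local radius; no new estimate is needed. All remaining work is arithmetic in the assumed step-size bounds.
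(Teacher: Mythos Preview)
Your proposal is correct and follows essentially the same approach as the paper: reduce to bounding $|H(\bar x,\bar v)-H(x_h,v_h)|$ via Hamiltonian preservation, invoke Lemma~\ref{lem:rejprob} with $\delta_x\lesssim C_x h^2$, $\delta_v\lesssim C_v h^2$, control $|f(x_h)-f(\bar x_h)|$ through $M_1^*$, bound $\norm{v_h}$ by $O(\sqrt{n+\sqrt{M_1}})$ via Proposition~\ref{prop:dynamics}-1, and finish by arithmetic in the step-size hypotheses. Your explicit remark about transferring the base norm from $g(x)$ to $g(x_h)$ via self-concordance is a detail the paper leaves implicit, but otherwise the two arguments coincide.
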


\begin{proof}
We use the first condition on the step size to obtain $\norm{v_{h}}_{g^{-1}}=O\Par{h\Par{n+\sqrt{M_{1}}}}=O\Par{\sqrt{n+\sqrt{M_{1}}}}$
by Proposition \ref{prop:dynamics}-1. Then the claims follows from
\begin{align*}
\dtv(\oprop_{x}',\oprop_{x}) & \leq10^{5}\Par{\dx\sqrt{M_{1}^{*}}+\dx\Par{n+\norm{v_{h}}_{g^{-1}}^{2}}+\dv\Par{\dv+\norm{v_{h}}_{g^{-1}}}}\\
 & \leq10^{6}h^{2}\Par{C_{x}(x,v)\Par{n+\sqrt{M_{1}}+\sqrt{M_{1}^{*}}}+C_{v}(x,v)\Par{C_{v}(x,v)h^{2}+\sqrt{n+\sqrt{M_{1}}}}}\\
 & \leq10^{-4}+10^{-20}+10^{-4}\leq10^{-3},
\end{align*}
where we used the second and third step-size conditions in the last
inequality.
\end{proof}
Putting three main parts together, we obtain the result on the mixing
rate of RHMC discretized by a sensitive numerical integrator.

\begin{proof}[Proof of Theorem~\ref{thm:discGen}]
By Lemma \ref{lem:onestepIdeal}, \ref{lem:tvidealdis} and \ref{lem:rejFinal},
we have $\dtv(\oprop_{x},\oprop_{y})\leq\frac{9}{10}$ if $d_{\phi}(x,y)\leq h$
for $x,y\in\mcal_{\rho}$. Then the claim follows by reproducing the
proof of Proposition \ref{prop:idealConv}.
\end{proof}

%!TEX root = ./main.tex
\section{Numerical integrators \label{sec:numerical}}

We examine two numerical integrators commonly used in practice, the
implicit midpoint method (IMM) in Section \ref{subsec:imm} and the
generalized Leapfrog method (LM) in Section \ref{subsec:leapfrog}.
To this end, we bound $C_x$ and $C_v$, the second-orderness parameters, and then find a condition on step size for the sensitivity. We note that these integrators
are symplectic (so measure-preserving) and time-reversible (see \cite{hairer2006geometric}). 

\subsection{Implicit midpoint method \label{subsec:imm}}

For an initial condition $(x,v)$ and step size $h$, the implicit
midpoint method attempts to find the solution $(x',v')$ for the following
implicit equation: 
\[
x'=x+h\frac{\partial H}{\partial v}\Par{\frac{x+x'}{2},\frac{v+v'}{2}},\ v'=v-h\frac{\partial H}{\partial x}\Par{\frac{x+x'}{2},\frac{v+v'}{2}}.
\]
In general, these implicit equations require several iterations so
that an initial guess for this equation converges to the fixed point
$(x',v')$. 

In this section, we consider a variant of IMM in Algorithm \ref{alg:Leap-IMM}
instead. It has computational benefits over the original IMM, since
iterations for finding the fixed point of the integrator run with
a simpler Hamiltonian $H_{2}(x,v)=\half v^{\top}g(x)^{-1}v$ instead
of $H=H_{1}+H_{2}$. We then prove that if for $x\in\mcal_{\rho}$
and $v\in\vgood$ step size $h$ satisfies $h^{2}\Par{n+\sqrt{M_{1}}}\leq10^{-10}$,
then IMM is second-order with $C_{x}(x,v)=O\Par{n+\sqrt{M_{1}}}$
and $C_{v}(x,v)=O\Par{\sqrt{n+\sqrt{M_{1}}}\Par{n+\sqrt{M_{1}}+M_{2}^{*}}}$.
Moreover, if the step size $h$ satisfies $h^{2}\leq\min\Par{\frac{10^{-10}}{\Par{n+\sqrt{M_{1}}}^{2}},\frac{10^{-5}}{\sqrt{n}\bar{R}_{1}}}$
in addition to the step-size conditions in Proposition \ref{prop:dynamics},
then IMM is sensitive at $x\in\mcal_{\rho}$ and $v\in\vgood$.

\begin{algorithm2e}[h!]

\caption{$\texttt{Implicit Midpoint Method}$}\label{alg:Leap-IMM}

\SetAlgoLined

\textbf{Input:} Initial point $x$, velocity $v$, step size $h$

\tcp{Step 1: Solve $\frac{dx}{dt}=\frac{\partial H_{1}(x,v)}{\partial v},\frac{dv}{dt}=-\frac{\partial H_{1}(x,v)}{\partial x}$
}

Set $x_{\frac{1}{3}}\gets x$ and $v_{\frac{1}{3}}\gets v-\frac{h}{2}\frac{\partial H_{1}(x,v)}{\partial x}$. 

\ 

\tcp{Step 2: Solve $\frac{dx}{dt}=\frac{\partial H_{2}(x,v)}{\partial v},\frac{dv}{dt}=-\frac{\partial H_{2}(x,v)}{\partial x}$
(Implicit)}

Find $(x_{{\tt }},v_{{\tt }})$ such that 

\begin{align*}
x_{{\tt }} & =x_{\ot}+h\frac{\partial H_{2}}{\partial v}\Par{\frac{x_{\ot}+x_{{\tt }}}{2},\frac{v_{\ot}+v_{{\tt }}}{2}},\\
v_{{\tt }} & =v_{\ot}-h\frac{\partial H_{2}}{\partial x}\Par{\frac{x_{\ot}+x_{{\tt }}}{2},\frac{v_{\ot}+v_{{\tt }}}{2}}.
\end{align*}

\ 

\tcp{Step 3: Solve $\frac{dx}{dt}=\frac{\partial H_{1}(x,v)}{\partial v},\frac{dv}{dt}=-\frac{\partial H_{1}(x,v)}{\partial x}$
}

Set $x_{1}\gets x_{\frac{2}{3}}$ and $v_{1}\gets v_{\frac{2}{3}}-\frac{h}{2}\frac{\partial H_{1}}{\partial x}\Par{x_{\frac{2}{3}},v_{\frac{2}{3}}}$.

\textbf{Output:} $x_{1,}v_{1}$

\end{algorithm2e}

\subsubsection{Second-order}
\begin{lem}
\label{lem:LeapIMM-second} For $x\in\mcal_{\rho}$ and $v\in\vgood$,
let $g=g(x)$ and $h$ step size of IMM with $h^{2}\Par{n+\sqrt{M_{1}}}\leq10^{-10}$.
Let $(\overline{x},\overline{v})$ be the point obtained from RHMC
discretized by IMM with the step size $h$ and initial condition $(x,v)$.
\begin{enumerate}
\item $\norm{x-\overline{x}}_{g}=O\Par{h\sqrt{n}+h^{2}\Par{n+\sqrt{M_{1}}}}$
and $\norm{v-\overline{v}}_{g^{-1}}=O\Par{h\Par{\norm{\grad f(\bx)}_{g^{-1}}+n+\sqrt{M_{1}}}}$.
\item $C_{x}(x,v)=O\Par{n+\sqrt{M_{1}}}$.
\item $C_{v}(x,v)=O\Par{\sqrt{n+\sqrt{M_{1}}}\Par{n+\sqrt{M_{1}}+M_{2}^{*}}}$
.
\end{enumerate}
\end{lem}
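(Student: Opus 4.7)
The plan is to handle the three sub-steps of IMM separately and then combine. Because $H_1(x,v)=f(x)+\half\log\det g(x)$ depends only on $x$, Steps 1 and 3 are in fact \emph{exact} half-step flows of $H_1$: $x$ is unchanged and $v$ receives the deterministic update $-(h/2)\partial_x H_1$. Hence the only source of discretization error relative to the true Hamiltonian flow is Step 2, the implicit midpoint rule applied to $H_2(x,v)=\half v^{\top}g(x)^{-1}v$. For Item 1 I would bound the drift in each sub-step; for Items 2 and 3 I would compare the composed IMM map to the Taylor expansion of the exact flow $(x_h,v_h)$ to second order in $h$.

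For Item 1, Step 1 gives $\norm{v-v_{1/3}}_{g^{-1}}\leq(h/2)(n+\sqrt{M_1})$ directly from Lemma \ref{lem:partH_dist} and the defining bound $\norm{\grad f(x)}_{g^{-1}}^2\leq M_1$. For Step 2 I would apply the Banach fixed-point theorem to the midpoint map
\[
(x',v')\mapsto\Par{x_{1/3}+h\,g(x_m)^{-1}v_m,\ v_{1/3}+\tfrac{h}{2}Dg(x_m)\Brack{g(x_m)^{-1}v_m,g(x_m)^{-1}v_m}}
\]
with $(x_m,v_m)=((x_{1/3}+x')/2,(v_{1/3}+v')/2)$, on a small ball in the product local norm around $(x_{1/3},v_{1/3})$. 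The contraction factor is small once $h\sqrt{n+\sqrt{M_1}}\ll1$, by self-concordance (Lemma \ref{lem:sc_facts}) together with $\norm{v_{1/3}}_{g^{-1}}=O(\sqrt{n+\sqrt{M_1}})$. The fixed point yields $\norm{x_{2/3}-x_{1/3}}_{g}=O(h\sqrt{n})$ and $\norm{v_{2/3}-v_{1/3}}_{g^{-1}}=O(hn)$. Step 3 contributes only to the velocity, adding $O(h(\norm{\grad f(\bar x)}_{g^{-1}}+n))$; Lemma \ref{lem:computation_sc} lets one switch between local norms at $x$, $x_m$, and $\bar x$, and combining produces the two bounds of Item 1.

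For Items 2 and 3 I would Taylor-expand both $(x_h,v_h)$ and $(\bar x,\bar v)$ to second order in $h$. The exact flow satisfies $x_h=x+hg^{-1}v+\tfrac{h^2}{2}\tfrac{d}{dt}(g(x(t))^{-1}v(t))\big|_{0}+O(h^3)$, with the second derivative expanded via $Dg$ and $-\partial_x H$; the analogous expansion for $v_h$ uses $\partial_x H_1+\partial_x H_2$. On the IMM side, iterating the Step-2 fixed point once gives $x_{2/3}=x_{1/3}+h g(x)^{-1}v_{1/3}+O(h^2)$ with an explicit second-order correction from the midpoint evaluation, and similarly for $v_{2/3}$. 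Substituting $v_{1/3}=v-(h/2)\partial_x H_1(x)$ from Step 1 and $\bar v=v_{2/3}-(h/2)\partial_x H_1(\bar x)$ from Step 3 and collecting powers of $h$, the $O(1)$ and $O(h)$ terms match the exact flow and the $O(h^2)$ mismatch is bounded using Lemma \ref{lem:sc_facts} (for derivatives of $g$) together with $M_1$ and $M_2^*$. The parameter $M_2^*$ enters precisely because Steps 1 and 3 evaluate $\grad f$ at the IMM endpoints $x$ and $\bar x$ rather than along the Hamiltonian curve. The $x$-error involves only derivatives of $H_2$, yielding $C_x=O(n+\sqrt{M_1})$; the $v$-error picks up one extra derivative of $g$ (hence a factor $\norm v_{g^{-1}}=O(\sqrt{n+\sqrt{M_1}})$) plus the additive $M_2^*$ contribution from the two $\grad f$ evaluations, giving $C_v=O(\sqrt{n+\sqrt{M_1}}(n+\sqrt{M_1}+M_2^*))$.

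The hard part will be twofold. Conceptually, Step 2 couples $x'$ and $v'$ implicitly, so even Item 1 requires a joint fixed-point argument with continual metric switching between $g(x)$, $g(x_m)$, and $g(x_{2/3})$ via Lemma \ref{lem:computation_sc}, keeping every constant explicit to preserve condition-number independence. Technically, for Items 2 and 3 one must solve the midpoint equations to $O(h^2)$ accuracy in closed form and match term by term against the exact Taylor expansion. It is precisely this bookkeeping---tracking which $Dg$-type terms arising from expanding $g(x_m)^{-1}$ cancel against which Taylor coefficients of the exact flow---that produces the asymmetry between $C_x$ and $C_v$, and in particular ensures $M_2^*$ appears additively and only in $C_v$ rather than multiplying $(n+\sqrt{M_1})$.
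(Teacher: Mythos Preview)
Your plan is correct in spirit but takes a different, harder route than the paper on both fronts.

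For Item 1, the paper does not set up a Banach fixed-point iteration on Step 2. It simply posits the implicit solution $(x_{2/3},v_{2/3})$, picks the largest $h_0$ with $\norm{x_{2/3}-x}_{g_{\mathrm{mid}}}\leq 10^{-3}$, and then manipulates the IMM equations directly (multiplying by $h_0$, using $h\norm{v_{\mathrm{mid}}}_{g_{\mathrm{mid}}^{-1}}=\norm{x_{2/3}-x}_{g_{\mathrm{mid}}}\leq 10^{-3}$ as a bootstrap) to lower-bound $h_0\geq 1/(10^4\sqrt{n+\sqrt{M_1}})$. This threshold argument is shorter than a contraction-mapping argument and avoids having to verify self-mapping of a ball.

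For Items 2 and 3, the paper never Taylor-expands the exact flow or the integrator. Instead it writes the exact difference as a single integral and bounds the integrand pointwise: for $C_x$,
\[
T_x(v)-\bar T_x(v)=\int_0^h\bigl(g_t^{-1}v_t-g_{\mathrm{mid}}^{-1}v_{\mathrm{mid}}\bigr)\,dt,
\]
and Lemma~\ref{lem:computation_sc}-11 gives $\norm{g_t^{-1}v_t-g_{\mathrm{mid}}^{-1}v_{\mathrm{mid}}}_g\lesssim \norm{v_t-v_{\mathrm{mid}}}_{g^{-1}}+\norm{x_t-x_{\mathrm{mid}}}_g\norm{v_t}_{g^{-1}}$, each of which is $O(h(n+\sqrt{M_1}))$ by Item 1 and Proposition~\ref{prop:dynamics}-1. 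The $h^2$ factor then arrives automatically from the length of the integration interval. The $C_v$ argument is analogous, splitting the integrand into its $H_1$ and $H_2$ parts and using Lemma~\ref{lem:computation_sc}-12; $M_2^*$ enters only through the $\norm{\nabla f(x_t)-\nabla f(\bar x_j)}_{g^{-1}}$ term there, exactly as you anticipated.

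Your Taylor-matching approach would also work, but it forces you to compute $\ddot x(0)$ and $\ddot v(0)$ for the exact flow, expand the implicit midpoint solution to second order, verify cancellation of all $O(h)$ terms, and then separately bound the $O(h^3)$ remainders (which needs high self-concordance for the $D^2g$ terms). The paper's integral-representation route sidesteps all of that bookkeeping: the second-orderness is never used as an algebraic identity to be verified, only as a consequence of the integrand being $O(h)$.
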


\paragraph{Proof of 1.}

Let $\bx=x_{{\tt }}=\bTx(v)$ and $v_{\frac{1}{3}},v_{\frac{2}{3}},\bv$
be the velocity points obtained when starting with $(x,v)$. Then
$x_{\mid}$ and $v_{\mid}$ satisfy 
\begin{align}
x_{{\tt }} & =x_{\ot}+hg_{\mid}^{-1}v_{\mid},\label{eq:mixLI-onestep}\\
v_{{\tt }} & =v_{\ot}+\frac{h}{2}Dg_{\mid}\Brack{g_{\mid}^{-1}v_{\mid},g_{\mid}^{-1}v_{\mid}},\nonumber 
\end{align}
where $x_{\mid}=\frac{x_{\ot}+x_{{\tt }}}{2},v_{\mid}=\frac{v_{\ot}+v_{{\tt }}}{2}$
and $g_{\mid}=g\Par{x_{\mid}}$. Since $\norm{\bx-x}_{g_{\mid}}=\norm{x_{{\tt }}-x}_{g_{\mid}}\to0$
as $h\to0$, we can take $h_{0}>0$ such that $\norm{x_{{\tt }}-x}_{g_{\mid}}\leq\frac{1}{1000}$
for $h\leq h_{0}$ with the equality held at $h=h_{0}$, or $\norm{x_{\tt}-x}_{g_{\mid}}\leq\frac{1}{1000}$
for any $h>0$.

We start with the former. By adding $v_{\ot}$ to the second line
of (\ref{eq:mixLI-onestep}) and dividing by $2$, we have 
\begin{equation}
h_{0}v_{\mid}=h_{0}v_{\ot}+\frac{h_{0}^{2}}{4}Dg_{\mid}\Brack{g_{\mid}^{-1}v_{\mid},g_{\mid}^{-1}v_{\mid}}.\label{eq:IMMeq1}
\end{equation}
As $\norm{x_{{\tt }}-x}_{g_{\mid}}=h_{0}\norm{v_{\mid}}_{g_{\mid}^{-1}}$
from the first line of (\ref{eq:mixLI-onestep}), taking the $g_{\mid}^{-1}$-norm
on both sides of (\ref{eq:IMMeq1}) and using Lemma \ref{lem:partH_dist}
yield
\begin{align*}
\frac{1}{1000} & =h_{0}\norm{v_{\mid}}_{g_{\mid}^{-1}}\\
 & \leq h_{0}\norm{v_{\ot}}_{g_{\mid}^{-1}}+\frac{h_{0}^{2}}{2}\norm{v_{\mid}}_{g_{\mid}^{-1}}^{2}\leq h_{0}\norm{v_{\ot}}_{g_{\mid}^{-1}}+\frac{1}{2000},
\end{align*}
and we obtain $\frac{1}{2000}\leq h_{0}\norm{v_{\ot}}_{g_{\mid}^{-1}}$.
Recall that $\norm{\bx-x}_{g_{\mid}}\leq1/1000$ for $h\leq h_{0}$,
so we can swap the local norms between $x_{\mid}$ and $x$ due to
Lemma \ref{lem:computation_sc}, losing a multiplicative constant
like $1.001$. Using Lemma \ref{lem:partH_dist} on the first step
of the numerical integrator, 
\begin{align}
\norm{v_{\ot}}_{g_{\mid}^{-1}} & \leq1.001\norm{v_{\ot}}_{g^{-1}}\leq1.001\Par{\norm v_{g^{-1}}+h_{0}\Par{n+\sqrt{M_{1}}}}\label{eq:LMIMM-ineq1}\\
 & \leq200\sqrt{n}+2h_{0}\Par{n+\sqrt{M_{1}}}.\nonumber 
\end{align}
Due to $1/2000\leq h_{0}\norm{v_{\ot}}_{g_{\mid}^{-1}}$, it follows
that 
\begin{align*}
\frac{1}{2000} & \leq h_{0}\norm{v_{\ot}}_{g_{\mid}^{-1}}\leq200\sqrt{n}h_{0}+2h_{0}^{2}\Par{n+\sqrt{M_{1}}},
\end{align*}
and solving this for $h_{0}$ we have $h_{0}\geq\frac{1}{10^{4}\sqrt{n+\sqrt{M_{1}}}}$.
For the case of $\norm{x_{\tt}-x}_{g_{\mid}}\leq\frac{1}{1000}$ for
any $h>0$, we can simply think of $h_{0}$ as $\infty$.

Now for $h\leq h_{0}$, we can obtain from (\ref{eq:mixLI-onestep})
\begin{align}
\norm{x_{{\tt }}-x}_{g_{\mid}} & \leq h\norm{v_{\mid}}_{g_{\mid}^{-1}},\label{eq:LMIMM-ineq2}\\
\norm{v_{{\tt }}-v_{\ot}}_{g_{\mid}^{-1}} & \leq h\norm{v_{\mid}}_{g_{\mid}^{-1}}^{2}.\nonumber 
\end{align}
Using this and $h\norm{v_{\mid}}_{g_{\mid}^{-1}}=\norm{x_{{\tt }}-x}_{g_{\mid}}\leq\frac{1}{1000}$,
we can bound $\norm{v_{\mid}}_{g_{\mid}^{-1}}$ by 
\begin{align*}
\norm{v_{\mid}}_{g_{\mid}^{-1}} & =\norm{v_{\ot}+\frac{v_{{\tt }}-v_{\ot}}{2}}_{g_{\mid}^{-1}}\leq\norm{v_{\ot}}_{g_{\mid}^{-1}}+\half h\norm{v_{\mid}}_{g_{\mid}^{-1}}^{2}\\
 & \leq\norm{v_{\ot}}_{g_{\mid}^{-1}}+\frac{1}{2000}\norm{v_{\mid}}_{g_{\mid}^{-1}},
\end{align*}
 and thus 
\begin{align}
\norm{v_{\mid}}_{g_{\mid}^{-1}}\leq\frac{2000}{1999}\norm{v_{\ot}}_{g_{\mid}^{-1}} & \underset{\eqref{eq:LMIMM-ineq1}}{\leq}200\sqrt{n}+2h\Par{n+\sqrt{M_{1}}}.\label{eq:LMIMM-ineq4}
\end{align}
Putting this back to (\ref{eq:LMIMM-ineq2}) for step size $h\leq\frac{1}{10^{5}\sqrt{n+\sqrt{M_{1}}}}$,
we have 
\begin{align*}
\norm{x_{{\tt }}-x}_{g_{\mid}} & \leq200h\sqrt{n}+2h^{2}\Par{n+\sqrt{M_{1}}},\\
\norm{v_{{\tt }}-v_{\ot}}_{g_{\mid}^{-1}} & \leq125h\Par{n+\sqrt{M_{1}}}.
\end{align*}
Hence by substituting the step size into above and switching local
norms properly, we have $\norm{x_{{\tt }}-x}_{g}\leq10^{-8}$.

By applying Lemma \ref{lem:partH_dist} to $\overline{v}=v_{{\tt }}-\frac{h}{2}\frac{\partial H_{1}}{\partial x}\Par{x_{{\tt }},v_{{\tt }}}$
in the third step, we also have 
\begin{align*}
\norm{\bv-v}_{g^{-1}} & \leq\norm{\bv-v_{\frac{2}{3}}}_{g^{-1}}+\norm{v_{\frac{2}{3}}-v_{\frac{1}{3}}}_{g^{-1}}+\norm{v_{\ot}-v}_{g^{-1}}\\
 & \leq1.001h\Par{\norm{\grad f(x_{{\tt }})}_{g^{-1}}+n+125\Par{n+\sqrt{M_{1}}}+\Par{n+\sqrt{M_{1}}}}\\
 & \leq200h\Par{\norm{\grad f(x_{{\tt }})}_{g^{-1}}+n+\sqrt{M_{1}}}.
\end{align*}
In conclusion,
\begin{align*}
\norm{\bx-x}_{g} & \leq200h\sqrt{n}+3h^{2}\Par{n+\sqrt{M_{1}}},\\
\norm{\bv-v}_{g^{-1}} & \leq200h\Par{\norm{\grad f(\bx)}_{g^{-1}}+n+\sqrt{M_{1}}}.
\end{align*}

\paragraph{Proof of 2.}

For $t\in[0,h]$, let $(x_{t},v_{t})$ be the Hamiltonian curve of
the ideal RHMC at time $t$ starting from $(x,v)$. Recall that for
$g_{t}=g(x_{t})$
\begin{align*}
\Tx(v) & =x+\int_{0}^{h}\frac{\partial H}{\partial v}(x_{t},v_{t})dt=x+\int_{0}^{h}g_{t}^{-1}v_{t}dt,\\
\bTx(v) & =\overline{x}=x+hg_{\mid}^{-1}v_{\mid}.
\end{align*}
Thus, 
\begin{align*}
\norm{\Tx(v)-\bTx(v)}_{g_{\mid}} & =\norm{\Par{x+\int_{0}^{h}g_{t}^{-1}v_{t}dt}-\Par{x+hg_{\mid}^{-1}v_{\mid}}}_{g_{\mid}}\\
 & =\norm{\int_{0}^{h}\Par{g_{t}^{-1}v_{t}-g_{\mid}^{-1}v_{\mid}}dt}_{g_{\mid}}\\
 & \leq h\max_{t\in[0,h]}\norm{g_{t}^{-1}v_{t}-g_{\mid}^{-1}v_{\mid}}_{g_{\mid}}
\end{align*}
By Lemma \ref{lem:computation_sc}-11, 
\begin{align*}
\norm{g_{t}^{-1}v_{t}-g_{\mid}^{-1}v_{\mid}}_{g_{\mid}} & \lesssim\norm{v_{t}-v_{\mid}}_{g_{\mid}^{-1}}+\norm{x_{t}-x_{\mid}}_{g_{\mid}}\norm{v_{t}}_{g_{\mid}^{-1}}\\
 & \leq\half\Par{\norm{v_{t}-v_{\ot}}_{g_{\mid}^{-1}}+\norm{v_{t}-v_{{\tt }}}_{g_{\mid}^{-1}}}\\
 & \quad+\half\Par{\norm{x_{t}-x}_{g_{\mid}}+\norm{x_{t}-\overline{x}}_{g_{\mid}}}\norm{v_{t}}_{g_{\mid}^{-1}}\\
 & \lesssim\Par{\norm{v-v_{t}}_{g^{-1}}+\norm{v-v_{\ot}}_{g^{-1}}+\norm{v_{t}-v}_{g^{-1}}+\norm{v-v_{{\tt }}}_{g^{-1}}}\\
 & \quad+\Par{\norm{x_{t}-x}_{g}+\norm{x_{t}-x}_{g}+\norm{\overline{x}-x}_{g}}\Par{\norm{v_{t}-v}_{g^{-1}}+\norm v_{g^{-1}}}\\
 & \lesssim\Par{\norm{v-v_{t}}_{g^{-1}}+\norm{v-v_{\ot}}_{g^{-1}}+\norm{v_{\ot}-v_{{\tt }}}_{g^{-1}}}\\
 & \quad+\Par{\norm{x_{t}-x}_{g}+\norm{\overline{x}-x}_{g}}\Par{\norm{v_{t}-v}_{g^{-1}}+\norm v_{g^{-1}}}.
\end{align*}
Using our bounds on $\norm{\bx-x}_{g},\norm{x_{t}-x}_{g}$ and $\norm v_{g^{-1}},\norm{v_{t}-v}_{g^{-1}},\norm{v-v_{\ot}}_{g^{-1}},\norm{v_{\ot}-v_{{\tt }}}_{g^{-1}}$,
we conclude that $\max_{t\in[0,h]}\norm{g_{t}^{-1}v_{t}-g_{\mid}^{-1}v_{\mid}}_{g_{\mid}}\leq10^{4}h\Par{n+\sqrt{M_{1}}}$,
and thus 
\[
\norm{\Tx(v)-\bTx(v)}_{g}\leq10^{4}h^{2}\Par{n+\sqrt{M_{1}}}.
\]

\paragraph{Proof of 3.}

From the algorithm,
\begin{align*}
v_{h} & =v-\int_{0}^{h}\frac{\partial H}{\partial x}(x_{t},v_{t})dt\\
 & =v-\int_{0}^{h}\frac{\del H_{1}}{\del x}(x_{t},v_{t})dt-\int_{0}^{h}\frac{\del H_{2}}{\del x}(x_{t},v_{t})dt,\\
\bv & =v_{{\tt }}-\frac{h}{2}\frac{\del H_{1}}{\del x}(x_{{\tt }},v_{{\tt }})=v_{\ot}-h\frac{\del H_{2}}{\del x}(x_{\mid},v_{\mid})-\frac{h}{2}\frac{\del H_{1}}{\del x}(x_{{\tt }},v_{{\tt }})\\
 & =v-\frac{h}{2}\Par{\frac{\del H_{1}}{\del x}(x,v)+\frac{\del H_{1}}{\del x}(x_{{\tt }},v_{{\tt }})}-h\frac{\del H_{2}}{\del x}(x_{\mid},v_{\mid}).
\end{align*}
Thus,
\begin{align}
 & \norm{v_{h}-\bv}_{g_{\mid}^{-1}}\label{eq:LMIMM-ineq3}\\
 & =\norm{\int_{0}^{h}\Par{\frac{\del H_{1}}{\del x}(x_{t},v_{t})-\half\Par{\frac{\del H_{1}}{\del x}(x,v)+\frac{\del H_{1}}{\del x}(x_{{\tt }},v_{{\tt }})}}dt+\int_{0}^{h}\Par{\frac{\del H_{2}}{\del x}(x_{t},v_{t})-\frac{\del H_{2}}{\del x}(x_{\mid},v_{\mid})}dt}_{g_{\mid}^{-1}}\nonumber \\
 & \leq h\underbrace{\max_{t\in[0,h]}\norm{\frac{\del H_{1}}{\del x}(x_{t},v_{t})-\half\Par{\frac{\del H_{1}}{\del x}(x,v)+\frac{\del H_{1}}{\del x}(x_{{\tt }},v_{{\tt }})}}_{g_{\mid}^{-1}}}_{F}\nonumber \\
 & \quad+h\underbrace{\max_{t\in[0,h]}\norm{\frac{\del H_{2}}{\del x}(x_{t},v_{t})-\frac{\del H_{2}}{\del x}(x_{\mid},v_{\mid})}_{g_{\mid}^{-1}}}_{S}.\nonumber 
\end{align}
We separately bound $F$ and $S$. For $F$, by following the proof
of Proposition \ref{lem:computation_sc}-12, we have
\begin{align*}
F & \lesssim n\Par{\norm{v_{t}-v}_{g^{-1}}+\norm{v_{t}-v_{{\tt }}}_{g^{-1}}}+M_{2}^{*}\Par{\norm{x_{h}-x}_{g}+\norm{x_{h}-\bx}_{g}}.
\end{align*}
Using our bounds on $\norm{x_{h}-\bx}_{g},\norm{x_{h}-x}_{g}$ and
$\norm{v_{t}-v}_{g^{-1}},\norm{v-v_{{\tt }}}_{g^{-1}}$, we obtain
\[
F\lesssim h\Par{n+\sqrt{M_{1}}}^{3/2}+h\sqrt{n+\sqrt{M_{1}}}M_{2}^{*}.
\]
We remark that the smoothness of $f$ guarantees that $M_{2}^{*}$
is bounded by some constant for all sufficiently small $h$.

Similarly for $S$, we have that for $\dv=\norm{v_{t}-v_{\mid}}_{g_{\mid}^{-1}}$
and $\dx=\norm{x_{t}-x_{\mid}}_{g_{\mid}}$
\[
S\lesssim\max_{t\in[0,h]}\Par{\dv+\dx\norm{v_{\mid}}_{g_{\mid}^{-1}}}\Par{\norm{v_{\mid}}_{g_{\mid}^{-1}}+\norm{v_{t}}_{g_{\mid}^{-1}}}.
\]
Using our bounds on $\norm{\bx-x}_{g},\norm{x_{t}-x}_{g}$ and $\norm v_{g^{-1}},\norm{v_{t}-v}_{g^{-1}},\norm{v-v_{\ot}}_{g^{-1}},\norm{v_{\ot}-v_{{\tt }}}_{g^{-1}}$,
it follows that 
\[
S\lesssim h\Par{n+\sqrt{M_{1}}}^{3/2}.
\]
Substituting the bounds on $F$ and $S$ into (\ref{eq:LMIMM-ineq3})
we can conclude that
\[
\norm{v_{h}-\bv}_{g^{-1}}\leq10^{10}h^{2}\sqrt{n+\sqrt{M_{1}}}\Par{n+\sqrt{M_{1}}+M_{2}^{*}}.
\]

\subsubsection{Sensitivity}

We use Proposition \ref{prop:stability} to show that IMM is sensitive.
Here we assume that $\log\det g(x)$ is convex in $\mcal$, which
is the case for the logarithmic barriers of polytopes.
\begin{lem}
\label{lem:IMMstable} For $x\in\mcal_{\rho}$ and $v\in\vgood$,
if $\log\det g(x)$ is convex in $x$, then IMM is sensitive at $(x,v)$
for step size $h$ satisfying $h^{2}\leq\min\Par{\frac{10^{-10}}{\Par{n+\sqrt{M_{1}}}^{2}},\frac{10^{-5}}{\sqrt{n}\bar{R}_{1}}}$
and the step-size conditions in Proposition \ref{prop:dynamics}.
\end{lem}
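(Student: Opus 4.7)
The plan is to apply Proposition~\ref{prop:stability}, which reduces sensitivity of IMM at $(x,v)$ to establishing
\[
\Abs{D\bTx(v)} \;\geq\; (1 - 10^{-6})\,\frac{h^{n}}{\sqrt{\Abs{g(x')}\,\Abs{g(x)}}}, \qquad x' := \bTx(v).
\]
I would compute $D\bTx(v)$ exactly by implicit differentiation of the IMM update (Algorithm~\ref{alg:Leap-IMM}) and then lower-bound its determinant, with the convexity of $\log\det g$ supplying exactly the geometric mean $\sqrt{\Abs{g(x)}\Abs{g(x')}}$ in the denominator.

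Since Steps~1 and 3 of IMM update only the velocity (as $H_1$ is independent of $v$), we have $\partial v_{1/3}/\partial v = I$ and $\bTx(v) = x_{2/3}$, so $D\bTx(v) = A$ where $A := \partial x_{2/3}/\partial v_{1/3}$. Writing Step~2 as
\[
x_{2/3} = x + h\, g_{\mid}^{-1}\, v_{\mid}, \qquad v_{2/3} = v_{1/3} + \tfrac{h}{2}\, Dg_{\mid}[u, u],
\]
with $g_{\mid} := g((x+x_{2/3})/2)$, $v_{\mid} := (v_{1/3} + v_{2/3})/2$ and $u := g_{\mid}^{-1} v_{\mid}$, and setting $B := \partial v_{2/3}/\partial v_{1/3}$, $Q_{2} := Dg_{\mid}[u]$, and $Q_{1}$ the matrix satisfying $Q_{1} w = D^{2} g_{\mid}[w][u,u]$, implicit differentiation yields
\[
(I + \tfrac{h}{2} g_{\mid}^{-1} Q_{2})\,A \;=\; \tfrac{h}{2}\, g_{\mid}^{-1}(I + B), \qquad B \;=\; I + \bigl(\tfrac{h}{4} Q_{1} + Q_{2}\bigr)\,A.
\]
Substituting the $B$-equation into the $A$-equation produces a clean cancellation of the $\tfrac{h}{2} g_{\mid}^{-1} Q_{2} A$ terms, leaving
\[
A \;=\; h\, \bigl(I - \tfrac{h^{2}}{8}\, g_{\mid}^{-1} Q_{1}\bigr)^{-1}\, g_{\mid}^{-1}.
\]

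Taking determinants gives $\Abs{D\bTx(v)} = h^{n}\,\det g_{\mid}^{-1} \big/ \det\bigl(I - \tfrac{h^{2}}{8} g_{\mid}^{-1} Q_{1}\bigr)$. Applying the midpoint inequality for the convex function $\log\det g$ at $x_{\mid} = (x+x')/2$ gives $\det g_{\mid}^{-1} \geq 1/\sqrt{\Abs{g(x)}\Abs{g(x')}}$, precisely the factor Proposition~\ref{prop:stability} demands. It therefore suffices to prove $\det(I - \tfrac{h^{2}}{8} g_{\mid}^{-1} Q_{1}) \leq 1 + 10^{-6}$, which I would do by expanding $\log\det(I - M) = -\sum_{k\ge 1}\tr(M^{k})/k$ for $M := \tfrac{h^{2}}{8}\, g_{\mid}^{-1} Q_{1}$ and controlling $\Abs{\tr M}$, $\|M\|_{F}^{2}$ and $\|M\|_{2}$. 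The inputs are the fourth-derivative bounds on $\phi$ coming from high self-concordance (to control the tensor contraction defining $Q_{1}$) together with $\|u\|_{g_{\mid}}^{2} = \|v_{\mid}\|_{g_{\mid}^{-1}}^{2} \lesssim n$, which follows from $v \in \vgood$ and the midpoint-velocity estimate obtained inside the proof of Lemma~\ref{lem:LeapIMM-second}. The step-size bound $h^{2} \leq 10^{-10}/(n+\sqrt{M_{1}})^{2}$ makes each such contracted quantity, multiplied by $h^{2}$, tiny enough that $\det(I - M) = 1 + O(10^{-6})$, while $h^{2} \leq 10^{-5}/(\sqrt{n}\,\bar{R}_{1})$ and conditions $\onecirc$--$\fourcirc$ from Proposition~\ref{prop:dynamics} guarantee convergence of the Neumann series for $(I-M)^{-1}$ and the auxiliary controls that Proposition~\ref{prop:stability} itself requires.

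The main technical obstacle is deriving the trace and Frobenius estimates for $g_{\mid}^{-1} Q_{1}$ from the highly self-concordant barrier: this is a contraction of $D^{4}\phi$ with $g^{-1}$ and $u u^{\top}$, which goes beyond the third-derivative estimates collected in Lemma~\ref{lem:sc_facts} and relies on the full fourth-derivative inequalities available for such barriers. Once those bounds are in place, combining the three displayed inequalities yields $\Abs{D\bTx(v)} \geq (1-10^{-6})\,h^{n}/\sqrt{\Abs{g(x)}\Abs{g(x')}}$, and Proposition~\ref{prop:stability} then gives sensitivity at $(x, v)$.
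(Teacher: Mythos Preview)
Your proposal is correct and arrives at the same destination as the paper (Proposition~\ref{prop:stability} plus convexity of $\log\det g$ plus trace/Frobenius/operator-norm bounds on the $D^{2}g_{\mid}[u,u]$ term), but the route to the key formula for $A=D\bTx(v)$ is genuinely different and, in fact, cleaner. The paper treats Step~2 as a single implicit map on $\R^{2n}$, normalizes by $G(x_{\mid})=\diag(g_{\mid}^{1/2},g_{\mid}^{-1/2})$, inverts $I_{2n}-\tfrac{h}{2}G J\nabla^{2}H_{2}G^{-1}$ via a Neumann series, and then invokes a separate algebraic identity (Lemma~\ref{lem:matrixSeries}) about powers of block matrices of the special form $\begin{pmatrix}C&I\\-C^{2}+R&-C\end{pmatrix}$ to extract the upper-right $n\times n$ block in closed form. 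You instead differentiate the two scalar equations of Step~2 directly, and the observation $u=(x_{2/3}-x)/h$ gives $\partial u/\partial v_{1/3}=A/h$; this is exactly what makes the $Q_{2}$ terms cancel and produces $A=h\bigl(I-\tfrac{h^{2}}{8}g_{\mid}^{-1}Q_{1}\bigr)^{-1}g_{\mid}^{-1}$ in one line, bypassing the $2n\times 2n$ machinery and the auxiliary lemma entirely. The subsequent determinant bound, the use of midpoint convexity to turn $\det g_{\mid}^{-1}$ into $1/\sqrt{\Abs{g(x)}\Abs{g(\bx)}}$, and the appeal to Proposition~\ref{prop:stability} are identical to the paper. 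Your remark that the trace/Frobenius control on $g_{\mid}^{-1}Q_{1}$ requires the fourth-derivative inequality of \emph{highly} self-concordant barriers (not just ordinary self-concordance) is exactly right, and is what the paper uses as well.
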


\begin{proof}
Let $\overline{F}(x,v)=(\bx,\bv)$ and $\bTx(v)=\bx$. We lower bound
$\Abs{D\bTx(v)}.$ Recall that one iteration of IMM consists of three
steps with input $(x,v)$ and output $(x_{1},v_{1})$, as described
in the following diagram: 
\[
\Par{x,v}\overset{X}{\longrightarrow}\Par{x_{\ot},v_{\ot}}\overset{Y}{\longrightarrow}\Par{x_{{\tt }},v_{{\tt }}}\overset{Z}{\longrightarrow}\Par{x_{1},v_{1}},
\]
where each of the maps $X,Y$ and $Z$ is defined by 
\begin{align*}
X\Par{x,v} & =\Par{x,v-\frac{h}{2}\frac{\partial H_{1}(x,v)}{\partial x}},\\
Y\Par{x_{\ot},v_{\ot}} & =\Par{x_{\ot}+h\frac{\partial H_{2}(x_{\mid},v_{\mid})}{\partial v},v_{\frac{1}{3}}-h\frac{\partial H_{2}(x_{\mid},v_{\mid})}{\partial x}},\\
Z\Par{x_{{\tt }},v_{{\tt }}} & =\Par{x_{{\tt }},v_{{\tt }}-\frac{h}{2}\frac{\partial H_{1}(x_{{\tt }},v_{{\tt }})}{\partial x}},
\end{align*}
for $x_{\mid}=\frac{x_{\ot}+x_{{\tt }}}{2}$ and $v_{\mid}=\frac{v_{\ot}+v_{{\tt }}}{2}$.
Due to $\bTx(v)=\pi_{x}\circ(Z\circ Y\circ X)(x,v)$, it follows that
$D\bTx(v)$ is the upper-right $n\times n$ submatrix of $D(Z\circ Y\circ X)(x,v)$.
From direct computation, we have 
\begin{align*}
DX(x,v) & =\left[\begin{array}{cc}
I & 0\\*
* & I
\end{array}\right],\\
DY(x_{\ot},v_{\ot}) & =\left[\begin{array}{cc}
P & Q\\
R & S
\end{array}\right],\\
DZ(x_{{\tt }},v_{{\tt }}) & =\left[\begin{array}{cc}
I & 0\\*
* & I
\end{array}\right],
\end{align*}
and due to $D(Z\circ Y\circ X)=DZ\cdot DY\cdot DX$ we have $D\bTx(v)=Q$.
Thus, it suffices to focus on the second step only (i.e., the map
$Y$).

Now let us represent the map $Y$ in a compact way. With two symbols
\[
r=\left[\begin{array}{c}
x\\
v
\end{array}\right]\in\R^{2n}\ \text{and}\ J=\left[\begin{array}{cc}
0 & I_{n}\\
-I_{n} & 0
\end{array}\right]\in\R^{2n\times2n},
\]
the second step can be rewritten as 
\[
r_{{\tt }}=r_{\ot}+hJ\nabla_{(x,v)}H_{2}(r_{\mid}),
\]
where $r_{*}=\left[\begin{array}{c}
x_{*}\\
v_{*}
\end{array}\right]$ for $*\in\Brace{\ot,{\tt ,\mid}}$ and $H_{2}(x,v)=\half v^{\top}g(x)^{-1}v$.
Differentiating both sides by $r_{\ot}$,
\[
\frac{\partial r_{{\tt }}}{\del r_{\ot}}=I_{2n}+hJ\nabla^{2}H_{2}(r_{\mid})\Par{\half I_{2n}+\half\frac{\partial r_{{\tt }}}{\del r_{\ot}}}.
\]
As $DY\Par{r_{\ot}}=\frac{\partial r_{{\tt }}}{\del r_{\ot}}$, we
have that
\[
\Par{I_{2n}-\frac{h}{2}J\nabla^{2}H_{2}(r_{\mid})}DY\Par{r_{\ot}}=I_{2n}+\frac{h}{2}J\nabla^{2}H_{2}(r_{\mid}).
\]
 For $G(x)\defeq\left[\begin{array}{cc}
g(x)^{\half} & 0\\
0 & g(x)^{-\half}
\end{array}\right]$, we have 
\begin{align*}
G(x_{\mid}) & \Par{I_{2n}-\frac{h}{2}J\nabla^{2}H_{2}(r_{\mid})}G(x_{\mid})^{-1}G(x_{\mid})DY\Par{r_{\ot}}G(x_{\mid})^{-1}\\
 & =G(x_{\mid})\Par{I_{2n}+\frac{h}{2}J\nabla^{2}H_{2}(r_{\mid})}G(x_{\mid})^{-1}
\end{align*}
and 
\begin{align}
 & \Par{I_{2n}-\frac{h}{2}G(x_{\mid})J\nabla^{2}H_{2}(r_{\mid})G(x_{\mid})^{-1}}G(x_{\mid})DY\Par{r_{\ot}}G(x_{\mid})^{-1}\nonumber \\
 & =I_{2n}+\frac{h}{2}G(x_{\mid})J\nabla^{2}H_{2}(r_{\mid})G(x_{\mid})^{-1}.\label{eq:impdiff}
\end{align}
Let us look into the term $B\defeq G(x_{\mid})J\nabla^{2}H_{2}(r_{\mid})G(x_{\mid})^{-1}$.
By direct computation, for block matrices $B_{1},B_{2},B_{3},B_{4}$
of size $n\times n$ we have
\begin{align*}
B & \defeq\left[\begin{array}{cc}
B_{1} & B_{2}\\
B_{3} & B_{4}
\end{array}\right]\\
 & =\left[\begin{array}{cc}
g(x_{\mid})^{\half}\\
 & g(x_{\mid})^{-\half}
\end{array}\right]\left[\begin{array}{cc}
\frac{\del^{2}H_{2}}{\del v\del x}(r_{\mid}) & \frac{\del^{2}H_{2}}{\del v^{2}}(r_{\mid})\\
-\frac{\del^{2}H_{2}}{\del x^{2}}(r_{\mid}) & -\Par{\frac{\del^{2}H_{2}}{\del x\del v}(r_{\mid})}^{\top}
\end{array}\right]\left[\begin{array}{cc}
g(x_{\mid})^{-\half}\\
 & g(x_{\mid})^{\half}
\end{array}\right]
\end{align*}
and thus 
\begin{align}
B_{1} & =g(x_{\mid})^{-\half}Dg(x_{\mid})\Brack{g(x_{\mid})^{-1}v_{\mid}}g(x_{\mid})^{-\half},\label{eq:LMIMM-eq4}\\
B_{2} & =I_{n},\nonumber \\
B_{3} & =g(x_{\mid})^{-\half}\bigg(-v_{\mid}^{\top}g(x_{\mid})^{-1}Dg(x_{\mid})g(x_{\mid})^{-1}Dg(x_{\mid})g(x_{\mid})^{-1}v_{\mid}\nonumber \\
 & ~~+\half v_{\mid}^{\top}g(x_{\mid})^{-1}D^{2}g(x_{\mid})g(x_{\mid})^{-1}v_{\mid}\bigg)g(x_{\mid})^{-\half}\nonumber \\
 & =-B_{1}^{2}+\half g(x_{\mid})^{-\half}D^{2}g(x_{\mid})\Brack{g(x_{\mid})^{-1}v_{\mid},g(x_{\mid})^{-1}v_{\mid}}g(x_{\mid})^{-\half},\nonumber \\
B_{4} & =-B_{1}^{\top}.\nonumber 
\end{align}
Now we bound the operator norm of $B_{i}$ for each $i\in[4]$ as
follows. 
\begin{align*}
\norm{B_{1}} & =\norm{B_{4}}\\
 & =\max_{p,q:\norm p_{2},\norm q_{2}\leq1}p^{\top}g(x_{\mid})^{-\half}Dg(x_{\mid})\Brack{g(x_{\mid})^{-1}v_{\mid}}g(x_{\mid})^{-\half}q\\
 & =\max_{p,q}Dg(y_{\mid})\Brack{g(x_{\mid})^{-1}v_{\mid},g(x_{\mid})^{-\half}p,g(x_{\mid})^{-\half}q}\\
 & \leq2\max_{p,q}\norm{g(x_{\mid})^{-1}v_{\mid}}_{g(x_{\mid})}\norm{g(x_{\mid})^{-\half}p}_{g(x_{\mid})}\norm{g(x_{\mid})^{-\half}q}_{g(x_{\mid})}\\
 & =2\max_{p,q}\norm{g(x_{\mid})^{-1}v_{\mid}}_{g(x_{\mid})}\norm p_{2}\norm q_{2}\leq2\norm{v_{\mid}}_{g_{\mid}^{-1}}\\
 & \leq O\Par{\sqrt{n+\sqrt{M_{1}}}},
\end{align*}
where we used (\ref{eq:LMIMM-ineq4}) guaranteed by the condition
of $h^{2}\Par{n+\sqrt{M_{1}}}\leq10^{-10}$ ($\onecirc$ in Proposition
\ref{prop:dynamics}). For $B_{2}$ and $B_{3}$, we have
\begin{align*}
\norm{B_{2}} & =1,\\
\norm{B_{3}} & \leq\norm{B_{1}}^{2}+\half\max_{p,q:\norm p_{2},\norm q_{2}\leq1}D^{2}g(x_{\mid})\Brack{g(x_{\mid})^{-1}v_{\mid},g(x_{\mid})^{-1}v_{\mid},g(x_{\mid})^{-\half}p,g(x_{\mid})^{-\half}q}\\
 & \leq O\Par{n+\sqrt{M_{1}}}+3\max_{p,q}\norm{g(x_{\mid})^{-1}v_{\mid}}_{g(x_{\mid})}^{2}\norm{g(x_{\mid})^{-\half}p}_{g(x_{\mid})}\norm{g(x_{\mid})^{-\half}q}_{g(x_{\mid})}\\
 & =O\Par{n+\sqrt{M_{1}}}+O\Par{n+\sqrt{M_{1}}}\max_{p,q}\norm p_{2}\norm q_{2}\\
 & =O\Par{n+\sqrt{M_{1}}},
\end{align*}
where the second inequality for $\norm{B_{3}}$ follows from the highly
self-concordance of $\phi$. Due to $\norm B\leq\sum_{i=1}^{4}\norm{B_{i}}$,
we have $\norm{\frac{h}{2}B}=O\Par{h\Par{n+\sqrt{M_{1}}}}$. Hence,
the condition of $h^{2}\Par{n+\sqrt{M_{1}}}^{2}\leq10^{-10}$ ensures
that the inverse of $I_{2n}-\frac{h}{2}B$ exists, and it can be written
as a series of matrices, 
\[
\Par{I_{2n}-\frac{h}{2}B}^{-1}=\sum_{i=0}^{\infty}(hB/2)^{i}.
\]
By substituting this series into (\ref{eq:impdiff}), 
\begin{align*}
G(x_{\mid})DY\Par{r_{\ot}}G(x_{\mid})^{-1} & =\sum_{i=0}^{\infty}(hB/2)^{i}\Par{I_{2n}+\frac{h}{2}B}=\sum_{i=0}^{\infty}(hB/2)^{i}+\sum_{i=1}^{\infty}(hB/2)^{i}\\
 & =I_{2n}+2\sum_{i=1}^{\infty}(hB/2)^{i}.
\end{align*}
By multiplying $\left[\begin{array}{cc}
I_{n} & 0\end{array}\right]^{\top}$ to the left and $\left[\begin{array}{c}
0\\
I_{n}
\end{array}\right]$ to the right on both sides, 
\[
g(x_{\mid})^{\half}D\bTx(v)g(x_{\mid})^{\half}=2\sum_{i=1}^{\infty}\left[\begin{array}{cc}
I_{n} & 0\end{array}\right]^{\top}(hB/2)^{i}\left[\begin{array}{c}
0\\
I_{n}
\end{array}\right].
\]
From (\ref{eq:LMIMM-eq4}), $B$ is of the form
\[
B=\left[\begin{array}{cc}
C & I_{n}\\
-C^{2}+R & -C
\end{array}\right],
\]
where $C\in\R^{n\times n}$ is symmetric and $R\in\R^{n\times n}$,
and thus by Lemma \ref{lem:matrixSeries}
\[
g(x_{\mid})^{\half}D\bTx(v)g(x_{\mid})^{\half}=h\sum_{i=0}^{\infty}(h^{2}R/2)^{i},
\]
where $R=\half g(x_{\mid})^{-\half}D^{2}g(x_{\mid})\Brack{g(x_{\mid})^{-1}v_{\mid},g(x_{\mid})^{-1}v_{\mid}}g(x_{\mid})^{-\half}$.
Thus for $E\defeq\sum_{i=1}^{\infty}(h^{2}R/2)^{i}$ 
\begin{equation}
\frac{1}{h}g(x_{\mid})^{\half}D\bTx(v)g(x_{\mid})^{\half}=I+E.\label{eq:LMIMM-last}
\end{equation}

We now bound its operator norm, trace and Frobenius norm. It is easy
to see that 
\begin{align*}
\norm E_{2} & \lesssim\sum_{i\geq1}\Par{\frac{h^{2}}{2}\Par{n+\sqrt{M_{1}}}}^{i},\\
\tr(E) & \lesssim\sum_{i\geq1}\Par{\frac{h^{2}}{2}\tr(R)}^{i}\leq\sum_{i\geq1}\Par{\frac{h^{2}}{2}n\Par{n+\sqrt{M_{1}}}},\\
\norm E_{F} & \lesssim\sum_{i\geq1}\Par{\frac{h^{2}}{2}\sqrt{n}\Par{n+\sqrt{M}}}^{i},
\end{align*}
where we used the following estimations
\begin{align*}
\norm R_{2} & \leq O\Par{n+\sqrt{M_{1}}}\\
\tr(R) & =\half\E_{p\sim\ncal(0,I)}p^{\top}g(x_{\mid})^{-\half}D^{2}g(x_{\mid})[g(x_{\mid})^{-1}v_{\mid},g^{-1}v_{\mid}]g(x_{\mid})^{-\half}p\\
 & \leq\E D^{2}g(x_{\mid})\Brack{g(x_{\mid})^{-1}v_{\mid},g(x_{\mid})^{-1}v_{\mid},g(x_{\mid})^{-\half}p,g(x_{\mid})^{-\half}p}\\
 & \leq\E\norm{v_{\mid}}_{g_{\mid}^{-1}}^{2}\norm p_{2}^{2}=O\Par{n\Par{n+\sqrt{M_{1}}}},\\
\norm R_{F} & \leq\sqrt{n}\norm R_{2}=O\Par{\sqrt{n}\Par{n+\sqrt{M}}}.
\end{align*}
Therefore, the step-size condition of $h^{2}\Par{n+\sqrt{M_{1}}}^{2}\leq10^{-10}$
ensures that these three quantities can be made smaller than $10^{-8}$.
Applying Lemma \ref{lem:boundLogDet} to (\ref{eq:LMIMM-last}), we
have 
\begin{align*}
e^{\tr(E)}e^{-\norm E_{F}^{2}}\leq\Abs{\frac{1}{h}g(x_{\mid})^{\half}D\bTx(v)g(x_{\mid})^{\half}} & \leq e^{\tr(E)}e^{\norm E_{F}^{2}},
\end{align*}
and thus 
\[
\Abs{D\bTx(v)}\geq(1-10^{-6})\cdot\frac{h^{n}}{\Abs{g(x_{\mid})}}.
\]
Since $\log\det g(x)=\log\det\nabla^{2}\phi(x)$ is convex in $x$,
it follows that 
\begin{align*}
\log\Abs{g(x_{\mid})} & =\log\Abs{g\Par{\frac{x_{\ot}+x_{{\tt }}}{2}}}\leq\half\Par{\log\Abs{g(x_{\ot})}+\log\Abs{g(x_{{\tt }})}}=\half\log\Abs{g(x_{\ot})g(x_{{\tt }})}\\
 & =\log\sqrt{\Abs{g(x)}\Abs{g(\bx)}},
\end{align*}
and thus $\Abs{D\bTx(v)}\geq\frac{(1-10^{-6})h^{n}}{\sqrt{\Abs{g(x)}\Abs{g(\bx)}}}$.
Due to the step-size conditions of $h^{2}\sqrt{n}\bar{R}_{1}\leq10^{-5}$
and in Proposition \ref{prop:dynamics}, we can use Proposition \ref{prop:stability}
to conclude that $\bTx(v)$ is sensitive at $(x,v)$.
\end{proof}

\subsection{Generalized Leapfrog method (Störmer--Verlet) \label{subsec:leapfrog}}

We now analyze the generalized Leapfrog method (Algorithm \ref{alg:LM}),
which is symplectic and reversible in the Riemannian settings. In
a similar way we analyzed IMM, we show that if step size $h$ satisfies
$h^{2}\Par{n+\sqrt{M_{1}}}\leq10^{-10}$, then LM is second-order
for $x\in\mcal_{\rho}$ and $v\in\vgood$ with $C_{x}(x,v)=O\Par{n+\sqrt{M_{1}}}$
and $C_{v}(x,v)=O\Par{\sqrt{n+\sqrt{M_{1}}}\Par{n+\sqrt{M_{1}}+M_{2}^{*}}}$.
Next, if the step size $h$ satisfies $h^{2}\leq\min\Par{\frac{10^{-20}}{n^{2}(n+\sqrt{M_{1}})},\frac{10^{-5}}{\sqrt{n}\bar{R}_{1}}}$
in addition to the step-size conditions in Proposition \ref{prop:dynamics},
then LM is sensitive at $x\in\mcal_{\rho}$ and $v\in\vgood$.

\begin{algorithm2e}[h!]

\caption{$\texttt{Generalized Leapfrog Method}$}\label{alg:LM}

\SetAlgoLined

\textbf{Input:} Initial point $x$, velocity $v$, step size $h$

\tcp{Step 1: Update $v$ (Implicit)}

Find $v_{\half}$ such that $v_{\half}\gets v-\frac{h}{2}\frac{\partial H(x,v)}{\partial x}$. 

\ 

\tcp{Step 2: Update $x$ (Implicit)}

Find $x_{1}$ such that 
\begin{align*}
x_{1} & =x+\frac{h}{2}\Par{\frac{\del H}{\del v}\Par{x,v_{\half}}+\frac{\del H}{\del v}\Par{x_{1},v_{\half}}}.
\end{align*}

\ 

\tcp{Step 3: Update $v$ (Explicit) }

Set $v_{1}\gets v_{\half}-\frac{h}{2}\frac{\partial H}{\partial x}\Par{x_{1},v_{\half}}$.

\textbf{Output:} $x_{1,}v_{1}$

\end{algorithm2e}

\subsubsection{Second-order}
\begin{lem}
\label{lem:secondLeap} For $x\in\mcal_{\rho}$ and $v\in\vgood$,
let $g=g(x)$ and $h$ step size of LM with $h^{2}\Par{n+\sqrt{M_{1}}}\leq10^{-10}$.
Let $(\overline{x},\overline{v})$ be the point obtained from RHMC
discretized by LM with the step size $h$ and initial condition $(x,v)$.
\begin{enumerate}
\item $\norm{x-\overline{x}}_{g}=O\Par{h\sqrt{n}+h^{2}\Par{n+\sqrt{M_{1}}}}$
and $\norm{v-\overline{v}}_{g^{-1}}=O\Par{h\Par{\norm{\grad f(\bx)}_{g^{-1}}+n+\sqrt{M_{1}}}}$.
\item $C_{x}(x,v)=O\Par{n+\sqrt{M_{1}}}$.
\item $C_{v}(x,v)=O\Par{\sqrt{n+\sqrt{M_{1}}}\Par{n+\sqrt{M_{1}}+M_{2}^{*}}}$.
\end{enumerate}
\end{lem}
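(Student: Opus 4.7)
\medskip

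The plan is to mirror the structure of the IMM analysis (Lemma~\ref{lem:LeapIMM-second}), adapting each of the three parts to the three-stage structure of LM. The main difference is that LM's $x$-update has a trapezoidal-rule flavor ($x_1 = x + \tfrac{h}{2}(g(x)^{-1}v_{1/2} + g(x_1)^{-1}v_{1/2})$) whereas IMM's uses a midpoint evaluation, so a slightly different bookkeeping is needed in the $C_x$ and $C_v$ estimates, but the overall scaling matches.

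First, for the one-step distance bounds, I would resolve Steps 1 and 2 by the Banach fixed-point theorem in the appropriate local norm. Step 1 is implicit in $v_{\half}$ satisfying $v_\half = v - \tfrac{h}{2}\bigl(\grad f(x) + \tfrac{1}{2}\tr(g^{-1}Dg)\bigr) + \tfrac{h}{4}Dg(x)\bigl[g^{-1}v_\half, g^{-1}v_\half\bigr]$, so using Lemma~\ref{lem:partH_dist} to bound the linear term and the self-concordance bound $\norm{Dg[u,u]}_{g^{-1}} \leq 2\norm{u}_g^2$ (Lemma~\ref{lem:sc_facts}) for the quadratic term, the map is contracting on $\{u : \norm{u-v}_{g^{-1}}\le C\sqrt{n}\}$ under the step size hypothesis $h^2(n+\sqrt{M_1})\leq 10^{-10}$, and the unique fixed point obeys $\norm{v_{\half} - v}_{g^{-1}} = O(h(\norm{\grad f(x)}_{g^{-1}}+n+\sqrt{M_1}))$. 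Step 2 is implicit in $x_1$, and contraction in $\norm{\cdot}_g$ follows from the stability of $g(\cdot)^{-1}$ in Lemma~\ref{lem:computation_sc}: the fixed point satisfies $\norm{x_1-x}_g \leq h\norm{v_{\half}}_{g^{-1}} \cdot (1+o(1)) = O(h\sqrt{n}+h^2(n+\sqrt{M_1}))$. Step 3 is explicit, and another use of Lemma~\ref{lem:partH_dist} at $(x_1, v_\half)$ (after swapping local norms via Lemma~\ref{lem:computation_sc}) gives the claimed bound on $\norm{\bv-v}_{g^{-1}}$.

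For $C_x$, the task is to compare $T_x(v) = x + \int_0^h g(x_t)^{-1}v_t\,dt$ with $\bTx(v) = x + \tfrac{h}{2}\bigl(g(x)^{-1}v_\half + g(x_1)^{-1}v_\half\bigr)$. I would expand both expressions around the half-step time $h/2$ (treating the LM expression as the trapezoidal rule for $\int_0^h g(\cdot)^{-1}v_\half\,dt$ with $g(\cdot)$ interpolated linearly between the endpoints), and then split the integrand difference $g(x_t)^{-1}v_t - g(x_s)^{-1}v_\half$ using Lemma~\ref{lem:computation_sc}-11 so that only $\norm{v_t-v_\half}_{g^{-1}}$ and $\norm{x_t-x_s}_g\cdot\norm{v_t}_{g^{-1}}$ remain to be controlled. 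Each of these is $O(h(n+\sqrt{M_1}))$ by the one-step bound above combined with the corresponding bounds for the ideal curve from Proposition~\ref{prop:dynamics}-1, giving $\norm{T_x(v)-\bTx(v)}_g = O(h^2(n+\sqrt{M_1}))$.

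For $C_v$, write $\bv = v - \tfrac{h}{2}\partial_x H(x, v_\half) - \tfrac{h}{2}\partial_x H(x_1, v_\half)$ and $v_h = v - \int_0^h \partial_x H(x_t, v_t)\,dt$, so the difference is the error between the trapezoidal rule applied at $(x,v_\half),(x_1,v_\half)$ and the exact integral along $(x_t,v_t)$. Splitting $H = H_1 + H_2$, the $H_1$-part is controlled by $\sqrt{M_1^*}$-Lipschitzness (using the definition of $M_1^\ast$ and the bound on $\norm{x_h - \bx}_g$ and $\norm{x_h - x}_g$) together with the stability of $\tfrac{1}{2}\tr(g^{-1}Dg)$ in $x$, and the $H_2$-part is controlled by Lemma~\ref{lem:computation_sc}-12 combined with the already-established velocity deviation bounds. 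Summing, the total is $O(h^2\sqrt{n+\sqrt{M_1}}(n+\sqrt{M_1}+M_2^\ast))$. The main obstacle is not any single hard inequality but keeping local norms consistent across the three stages of LM and the continuous Hamiltonian curve --- in particular, repeatedly switching between $g(x)$, $g(x_1)$, and $g(x_t)$ while tracking only $O(h^2)$ corrections --- so the proof is more bookkeeping than conceptual innovation once the IMM template is in hand.
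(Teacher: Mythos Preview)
Your proposal is essentially correct and tracks the paper's proof closely. Two small points: first, in your $C_v$ sketch, the $H_1$-part involves differences of $\nabla f$, which are controlled by the parameter $M_2^{*}$ (the gradient Lipschitz bound in Definition~\ref{def:Mparameters}), not $\sqrt{M_1^{*}}$; your final bound correctly has $M_2^{*}$, so this is just a slip in the description. Second, for Part~1 the paper does not invoke Banach fixed-point but instead uses a continuity/threshold argument (define $h_0$ as the largest step size for which an a~priori bound like $h(\|v_{1/2}\|_{g^{-1}}+\|g_1^{-1}v_{1/2}\|_g)\le 2/1000$ holds, bootstrap the estimates under that assumption, and then lower-bound $h_0$); your Banach approach is a valid alternative, though you should take the ball radius to be $O(h(n+\sqrt{M_1}))$ rather than $C\sqrt n$, since $\|v_{1/2}-v\|_{g^{-1}}$ can exceed $\sqrt n$ when $\sqrt{M_1}\gg n$. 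For Parts~2 and~3 the paper proceeds exactly as you outline, splitting $\tfrac12(g^{-1}+g_1^{-1})v_{1/2}$ into two halves and comparing each to $g_t^{-1}v_t$ via Lemma~\ref{lem:computation_sc}-11, and then using Lemma~\ref{lem:computation_sc}-12 for the velocity error.
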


\paragraph{Proof of 1.}

Let $\bx=x_{1}=\bTx(v)$ and $v_{1}(=\bv),v_{\half}$ be the velocity
obtained from LM with the initial condition $(x,v)$ and the step
size $h$. Let $g_{1}=g(x_{1})$. As $v_{\half}\to v$ as $h\to0$,
we can take $h_{0}>0$ such that $h\Par{\norm{v_{\half}}_{g^{-1}}+\norm{g_{1}^{-1}v_{\half}}_{g}}\leq\frac{2}{1000}$
for $h\leq h_{0}$ with the equality held at $h=h_{0}$. Thus for
$h\leq h_{0}$ we have $h\norm{v_{\half}}_{g^{-1}}\leq\frac{1}{500}$.

From the first step of Algorithm \ref{alg:LM} and Lemma \ref{lem:partH_dist},
for step size $h\leq h_{0}$ it follows from $x\in\mcal_{\rho}$ that
\[
\norm{v_{\half}}_{g^{-1}}\leq\norm v_{g^{-1}}+\frac{h}{2}\Par{\sqrt{M_{1}}+n+\norm{v_{\half}}_{g^{-1}}^{2}}.
\]
Multiplying $h$ to both sides, and using $h\norm{v_{\half}}_{g^{-1}}\leq\frac{1}{500}$
and $v\in\vgood$,

\begin{align*}
h\norm{v_{\half}}_{g^{-1}} & \leq150h\sqrt{n}+\frac{h^{2}}{2}\Par{\sqrt{M_{1}}+n}+\frac{h^{2}}{2}\norm{v_{\half}}_{g^{-1}}^{2}\\
 & \leq150h\sqrt{n}+\frac{h^{2}}{2}\Par{\sqrt{M_{1}}+n}+\frac{1}{1000}h\norm{v_{\half}}_{g^{-1}},
\end{align*}
and thus 
\begin{equation}
\norm{v_{\half}}_{g^{-1}}\leq200\sqrt{n}+h^{2}\Par{n+\sqrt{M_{1}}}.\label{eq:LM-ineq1}
\end{equation}

From the second step of Algorithm \ref{alg:LM} and Lemma \ref{lem:partH_dist},
for step size $h\leq h_{0}$
\begin{align*}
\norm{x_{1}-x}_{g} & \leq\frac{h}{2}\Par{\norm{\frac{\del H}{\del v}\Par{x,v_{\half}}}_{g}+\norm{\frac{\del H}{\del v}\Par{x_{1},v_{\half}}}_{g}}\leq\frac{h}{2}\Par{\norm{v_{\half}}_{g^{-1}}+\norm{g_{1}^{-1}v_{\half}}_{g}},
\end{align*}
and thus its is obvious that $\norm{x-x_{1}}_{g}\leq\frac{1}{500}$.
We now lower bound $h_{0}$ as follows:
\begin{align*}
\frac{1}{500} & =h_{0}\Par{\norm{v_{\half}}_{g^{-1}}+\norm{g_{1}^{-1}v_{\half}}_{g}}\leq h_{0}\Par{\norm{v_{\half}}_{g^{-1}}+1.1\norm{v_{\half}}_{g^{-1}}}\\
 & \leq3h_{0}\norm{v_{\half}}_{g^{-1}}\\
 & \leq600h_{0}\sqrt{n}+3h_{0}^{2}\Par{n+\sqrt{M_{1}}},
\end{align*}
where in the first inequality we switched the local norm at from $x_{1}$
to $x$ due to $\norm{x_{1}-x}_{g}\leq\frac{1}{500}$ and used (\ref{eq:LM-ineq1})
in the last inequality. Therefore, $h_{0}\geq\frac{1}{10^{4}\sqrt{n+\sqrt{M_{1}}}}$
and for step size $h\leq\frac{1}{10^{5}\sqrt{n+\sqrt{M_{1}}}}$ we
have 
\begin{align*}
\norm{x-x_{1}}_{g} & \leq600h\sqrt{n}+31h^{2}\Par{n+\sqrt{M_{1}}},\\
\norm{v-v_{\half}}_{g^{-1}} & \leq h\Par{20000n+\sqrt{M_{1}}}.
\end{align*}
Similarly, we can bound $\norm{v_{1}-v_{\half}}_{g^{-1}}$ by Lemma
\ref{lem:partH_dist}:
\begin{align*}
\norm{v_{1}-v_{\half}}_{g^{-1}} & \leq\frac{h}{2}\norm{\frac{\del H}{\del x}\Par{x_{1},v_{\half}}}_{g^{-1}}\leq\frac{h}{2}\Par{\norm{\grad f(x_{1})}_{g^{-1}}+n+\norm{v_{\half}}_{g_{\mid}^{-1}}^{2}}\\
 & \leq40000h\Par{\norm{\grad f(x_{1})}_{g^{-1}}+n+\sqrt{M_{1}}},
\end{align*}
and thus by adding it to the inequality for $\norm{v-v_{\half}}_{g^{-1}}$
we have 
\[
\norm{v-v_{1}}_{g^{-1}}\leq40000\Par{\norm{\grad f(x_{1})}_{g^{-1}}+n+\sqrt{M_{1}}}.
\]

\paragraph{Proof of 2.}

For $t\in[0,h]$, let $(x_{t},v_{t})$ be the Hamiltonian curve of
the ideal RHMC at time $t$ starting from $(x,v)$. Recall that 
\begin{align*}
\Tx(v) & =x+\int_{0}^{h}\frac{\partial H}{\partial v}(x_{t},v_{t})dt=x+\int_{0}^{h}g_{t}^{-1}v_{t}dt,\\
\bTx(v) & =\overline{x}=x+\frac{h}{2}\Par{g^{-1}+g_{1}^{-1}}v_{\half}.
\end{align*}
Thus, 
\begin{align*}
\norm{T_{x}(v)-\bTx(v)}_{g} & =\norm{\Par{x+\int_{0}^{h}g_{t}^{-1}v_{t}dt}-\Par{x+\frac{h}{2}\Par{g^{-1}+g_{1}^{-1}}v_{\half}}}_{g}\\
 & =\norm{\int_{0}^{h}\Par{g_{t}^{-1}v_{t}-\half\Par{g^{-1}+g_{1}^{-1}}v_{\half}}dt}_{g}\\
 & \leq h\max_{t\in[0,h]}\norm{g_{t}^{-1}v_{t}-\half\Par{g^{-1}+g_{1}^{-1}}v_{\half}}_{g}.
\end{align*}
By Lemma \ref{lem:computation_sc}-11, 
\begin{align*}
\norm{g_{t}^{-1}v_{t}-\half\Par{g^{-1}+g_{1}^{-1}}v_{\half}}_{g} & \leq\half\norm{g_{t}^{-1}v_{t}-g^{-1}v_{\half}}_{g}+\half\norm{g_{t}^{-1}v_{t}-g_{1}^{-1}v_{\half}}_{g}\\
 & \lesssim\norm{v_{t}-v_{\half}}_{g^{-1}}+\norm{x_{t}-x}_{g}\norm{v_{t}}_{g^{-1}}\\
 & \quad+\norm{v_{t}-v_{\half}}_{g^{-1}}+\norm{x_{t}-x_{1}}_{g}\norm{v_{t}}_{g^{-1}}\\
 & \lesssim\Par{\norm{v_{t}-v}_{g^{-1}}+\norm{v-v_{\half}}_{g^{-1}}}\\
 & \quad+\Par{\norm{x_{t}-x}_{g}+\norm{x-x_{1}}_{g}}\Par{\norm{v_{t}-v}_{g^{-1}}+\norm v_{g^{-1}}}.
\end{align*}
Using our bounds on $\norm{x_{1}-x}_{g},\norm{x_{t}-x}_{g}$ and $\norm v_{g^{-1}},\norm{v_{t}-v}_{g^{-1}},\norm{v-v_{\half}}_{g^{-1}}$,
we conclude that \\
$\max_{t\in[0,h]}\norm{g_{t}^{-1}v_{t}-g_{\mid}^{-1}v_{\mid}}_{g_{\mid}}\leq10^{4}h\Par{n+\sqrt{M_{1}}}$
and thus 
\[
\norm{T_{x}(v)-\bTx(v)}_{g}\leq10^{4}h^{2}\Par{n+\sqrt{M_{1}}}.
\]

\paragraph{Proof of 3.}

From the algorithm,
\begin{align*}
v_{h} & =v-\int_{0}^{h}\frac{\partial H}{\partial x}(x_{t},v_{t})dt,\\
\bv & =v_{\half}-\frac{h}{2}\frac{\del H}{\del x}\Par{x_{1},v_{\half}}\\
 & =v-\frac{h}{2}\frac{\del H}{\del x}\Par{x_{1},v_{\half}}-\frac{h}{2}\frac{\partial H(x,v)}{\partial x}.
\end{align*}
Thus,
\begin{align*}
\norm{v_{h}-\bv}_{g^{-1}} & =\norm{\int_{0}^{h}\half\Par{\frac{\del H}{\del x}(x_{t},v_{t})-\frac{\del H}{\del x}\Par{x_{1},v_{\half}}}dt+\int_{0}^{h}\half\Par{\frac{\del H}{\del x}(x_{t},v_{t})-\frac{\del H}{\del x}\Par{x,v}}dt}_{g^{-1}}\\
 & \leq\frac{h}{2}\underbrace{\max_{t\in[0,h]}\norm{\frac{\del H}{\del x}(x_{t},v_{t})-\frac{\del H}{\del x}\Par{x_{1},v_{\half}}}_{g^{-1}}}_{F}++\frac{h}{2}\underbrace{\max_{t\in[0,h]}\norm{\frac{\del H}{\del x}(x_{t},v_{t})-\frac{\del H}{\del x}\Par{x,v}}_{g^{-1}}}_{S}.
\end{align*}
For $\dv=\norm{v_{t}-v_{\half}}_{g^{-1}}$ and $\dx=\norm{x_{t}-x_{1}}_{g}$,
we use Proposition \ref{lem:computation_sc}-12 to show that 
\begin{align*}
F & \lesssim\max_{t\in[0,h]}\Par{\dv+\dx\norm{v_{\half}}_{g^{-1}}}\Par{\norm{v_{\half}}_{g^{-1}}+\norm{v_{t}}_{g^{-1}}}+M_{2}^{*}\dx.
\end{align*}
In a similar way, $S$ can be bounded as follows:
\[
S\lesssim h\Par{n+\sqrt{M_{1}}}^{3/2}+M_{2}^{*}\dx.
\]
Using our bounds on $\norm{x_{1}-x}_{g},\norm{x_{t}-x}_{g}$ and $\norm v_{g^{-1}},\norm{v_{t}-v}_{g^{-1}},\norm{v-v_{\half}}_{g^{-1}}$,
\[
F+S\lesssim h\Par{n+\sqrt{M_{1}}}^{3/2}+h\sqrt{n+\sqrt{M_{1}}}M_{2}^{*}.
\]
Substituting the bounds on $F$ and $S$, we can conclude that
\[
\norm{v_{h}-\bv}_{g^{-1}}\leq10^{10}\Par{\Par{n+\sqrt{M_{1}}}^{3/2}+\sqrt{n+\sqrt{M_{1}}}M_{2}^{*}}h^{2}.
\]

\subsubsection{Sensitivity}

We show that for some step size $h$ the generalized Leapfrog integrator
is sensitive at $(x,v)$ for $x\in\mcal_{\rho}$ and $v\in\vgood$.
\begin{lem}
\label{lem:stableLeap} For $x\in\mcal_{\rho}$ and $v\in\vgood$,
LM is sensitive at $(x,v)$ if step size $h$ satisfies $h^{2}\leq\min\Par{\frac{10^{-10}}{n^{2}\Par{n+\sqrt{M_{1}}}},\frac{10^{-5}}{\sqrt{n}\bar{R}_{1}}}$
and the step-size conditions in Proposition \ref{prop:dynamics}.
\end{lem}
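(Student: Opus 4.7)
The plan is to invoke Proposition~\ref{prop:stability}, which reduces the sensitivity claim to the determinant lower bound $\Abs{D\bTx(v)} \geq (1-10^{-6})h^{n}/\sqrt{\Abs{g(x)}\Abs{g(x_{1})}}$ for $x_{1} := \bTx(v)$. I would first write the three steps of Algorithm~\ref{alg:LM} as maps $X,Y,Z$ in analogy with the IMM proof and apply implicit differentiation to each. The three Jacobians are block-triangular, and composing them and extracting the upper-right $n\times n$ block gives
\[
D\bTx(v) \;=\; \tfrac{h}{2}\Par{I - \tfrac{h}{2}A_{1}^{\top}}^{-1}\Par{g(x)^{-1} + g(x_{1})^{-1}}\Par{I + \tfrac{h}{2}A_{0}}^{-1},
\]
where $A_{j} := \frac{\partial^{2} H}{\partial x\, \partial v}\big|_{(x_{j}, v_{\half})}$ with $x_{0} := x$ (only $H_{2}$ contributes since $H_{1}$ is $v$-independent).

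Taking determinants decouples the problem: $\Abs{D\bTx(v)} = (h/2)^{n} \cdot \Abs{g(x)^{-1}+g(x_{1})^{-1}} / \bigl(\Abs{I-\tfrac{h}{2}A_{1}^{\top}}\cdot\Abs{I+\tfrac{h}{2}A_{0}}\bigr)$. The middle factor is handled by Minkowski's determinant inequality for positive-definite matrices, which yields $\Abs{\tfrac{g(x)^{-1}+g(x_{1})^{-1}}{2}} \geq \sqrt{\Abs{g(x)^{-1}}\Abs{g(x_{1})^{-1}}}$, so $\Abs{g(x)^{-1}+g(x_{1})^{-1}} \geq 2^{n}/\sqrt{\Abs{g(x)}\Abs{g(x_{1})}}$ directly, bypassing the $\log\det g$-convexity assumption invoked for IMM. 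It therefore suffices to prove $\Abs{I - \tfrac{h}{2}A_{1}^{\top}}\cdot\Abs{I + \tfrac{h}{2}A_{0}} \leq 1 + 10^{-6}$.

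For the remaining piece, direct computation gives $(A_{j})_{ik} = -\bigl[g(x_{j})^{-1}(\partial_{i}g(x_{j}))g(x_{j})^{-1}v_{\half}\bigr]_{k}$. Conjugating by $g(x_{j})^{1/2}$ and applying Lemma~\ref{lem:sc_facts} (in particular the bound $\norm{Dg[v,v]}_{g^{-1}}\leq 2\norm v_{g}^{2}$) together with $v\in\vgood$ and Lemma~\ref{lem:secondLeap} yields $\norm{g(x_{j})^{-1/2}A_{j}g(x_{j})^{1/2}}_{2} = O(\sqrt{n+\sqrt{M_{1}}})$. Hence $\Abs{\tr(A_{j})} = O(n\sqrt{n+\sqrt{M_{1}}})$ and $\norm{A_{j}}_{F} = O(\sqrt{n(n+\sqrt{M_{1}})})$ (conjugation-invariantly, for the purpose of Lemma~\ref{lem:boundLogDet}). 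Applying Lemma~\ref{lem:boundLogDet} to both determinants and summing,
\[
\log\Par{\Abs{I-\tfrac{h}{2}A_{1}^{\top}}\cdot\Abs{I+\tfrac{h}{2}A_{0}}} \leq \tfrac{h}{2}\bigl(\Abs{\tr A_{0}} + \Abs{\tr A_{1}}\bigr) + \tfrac{h^{2}}{4}\bigl(\norm{A_{0}}_{F}^{2} + \norm{A_{1}}_{F}^{2}\bigr),
\]
which is at most $10^{-7}$ under the hypothesized $h^{2}\leq 10^{-10}/(n^{2}(n+\sqrt{M_{1}}))$. Combined with $h^{2}\leq 10^{-5}/(\sqrt{n}\,\bar{R}_{1})$ from Proposition~\ref{prop:stability} and the step-size conditions in Proposition~\ref{prop:dynamics}, the claim follows.

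The main obstacle is the trace term: $\Abs{\tr(A_{j})}$ is only $O(n\sqrt{n+\sqrt{M_{1}}})$, which is precisely what forces the extra factor of $n^{2}$ in the step-size bound compared to Lemma~\ref{lem:IMMstable}. In IMM the relevant perturbation matrix $B$ had vanishing trace because its upper-left and lower-right $n\times n$ blocks were $C$ and $-C$ respectively, so all odd-order corrections in $h$ cancelled. The three-step LM composition instead produces genuine first-order corrections $\tfrac{h}{2}A_{0}$ and $\tfrac{h}{2}A_{1}^{\top}$ whose traces are not symplectically constrained to vanish. One could try to exploit the near-cancellation $\tr(A_{0})-\tr(A_{1}) = O(h)$ (since $x$ and $x_{1}$ differ by $O(h\sqrt{n})$ by Lemma~\ref{lem:secondLeap}), but the individual trace bound already yields the claimed step-size condition up to universal constants. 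The remaining work---block-Jacobian bookkeeping and $Dg$-contraction bounds from self-concordance---is a mechanical extension of the techniques already developed in Sections~\ref{subsec:stability_SC} and~\ref{subsec:imm}.
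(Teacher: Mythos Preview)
Your proposal is correct and follows essentially the same approach as the paper's proof. Both arguments implicitly differentiate the first two steps of Algorithm~\ref{alg:LM} to obtain the same product formula for $D\bTx(v)$, handle the $\frac{g^{-1}+g_{1}^{-1}}{2}$ factor via the concavity of $\log\det$ on positive-definite matrices (your ``Minkowski'' argument is the same inequality), and control the remaining $(I\pm\tfrac{h}{2}A_{j})^{-1}$ factors via Lemma~\ref{lem:boundLogDet} using self-concordance bounds on the operator norm, trace, and Frobenius norm of the conjugated matrices $g_{j}^{-1/2}Dg_{j}[g_{j}^{-1}v_{\half}]g_{j}^{-1/2}$; your observation that the trace term is what forces the $n^{2}$ factor in the step-size bound is exactly the bottleneck the paper encounters as well.
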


\begin{proof}
Let $\bTx(v)=\bx=x_{1}$. We lower bound $\Abs{D\bTx(v)}$. It suffices
to look into the determinant of composition of first two steps in
Algorithm \ref{alg:LM}, since the third step only changes $v$. The
first two steps are 

\begin{align*}
v_{\half} & =v-\frac{h}{2}\frac{\del H}{\del x}(x,v_{\half}),\\
x_{1} & =x+\frac{h}{2}\Par{\frac{\del H}{\del v}(x,v_{\half})+\frac{\del H}{\del v}(x_{1},v_{\half})}.
\end{align*}
Differentiating the first equation with respect to $v$, we have 
\[
\frac{\del v_{\half}}{\del v}=I_{n}-\frac{h}{2}\frac{\del^{2}H}{\del x\del v}\Par{x,v_{\half}}\frac{\del v_{\half}}{\del v},
\]
and so
\begin{equation}
\Par{I+\frac{h}{2}\frac{\del^{2}H}{\del x\del v}\Par{x,v_{\half}}}\frac{\del v_{\half}}{\del v}=I_{n}.\label{eq:step1diff}
\end{equation}
Differentiating the second equation with respect to $v$, we obtain
\[
\frac{\del x_{1}}{\del v}=\frac{h}{2}\Par{\frac{\del^{2}H}{\del v^{2}}\Par{x,v_{\half}}\frac{\del v_{\half}}{\del v}+\frac{\del^{2}H}{\del x\del v}\Par{x_{1},v_{\half}}\frac{\del x_{1}}{\del v}+\frac{\del^{2}H}{\del v^{2}}\Par{x_{1},v_{\half}}\frac{\del v_{\half}}{\del v}}.
\]
Collecting all $\del x_{1}/\del v$ terms from this equation, for
$g=g(x)$ and $g_{1}=g(x_{1})$
\begin{align*}
\Par{I_{n}-\frac{h}{2}\frac{\del^{2}H}{\del x\del v}\Par{x_{1},v_{\half}}}\frac{\del x_{1}}{\del v} & =\frac{h}{2}\Par{\frac{\del^{2}H}{\del v^{2}}\Par{x,v_{\half}}+\frac{\del^{2}H}{\del v^{2}}\Par{x_{1},v_{\half}}}\frac{\del v_{\half}}{\del v}\\
 & =\frac{h}{2}\Par{g^{-1}+g_{1}^{-1}}\Par{I+\frac{h}{2}\frac{\del^{2}H}{\del x\del v}\Par{x,v_{\half}}}^{-1},
\end{align*}
where we used \eqref{eq:step1diff}. Hence,
\begin{align*}
\frac{\del x_{1}}{\del v} & =h\Par{I_{n}-\frac{h}{2}\frac{\del^{2}H}{\del x\del v}\Par{x_{1},v_{\half}}}^{-1}\Par{\frac{g^{-1}+g_{1}^{-1}}{2}}\Par{I_{n}+\frac{h}{2}\frac{\del^{2}H}{\del x\del v}\Par{x,v_{\half}}}^{-1}\\
 & =h\Par{I_{n}-\frac{h}{2}g_{1}^{-1}Dg_{1}\Brack{g_{1}^{-1}v_{\half}}}^{-1}\Par{\frac{g^{-1}+g_{1}^{-1}}{2}}\Par{I_{n}+\frac{h}{2}g^{-1}Dg\Brack{g^{-1}v_{\half}}}^{-1}\\
 & =hg_{1}^{\half}\Par{I_{n}-\frac{h}{2}g_{1}^{-\half}Dg_{1}\Brack{g_{1}^{-1}v_{\half}}g_{1}^{-\half}}^{-1}g_{1}^{-\half}\Par{\frac{g^{-1}+g_{1}^{-1}}{2}}g^{\half}\Par{I_{n}+\frac{h}{2}g^{-\half}Dg\Brack{g^{-1}v_{\half}}g^{-\half}}^{-1}g^{-\half}.
\end{align*}
Due to the concavity of log-determinant in the set of positive definite
matrices, we have 
\begin{align*}
\log\Abs{\frac{g^{-1}+g_{1}^{-1}}{2}} & \geq\half\Par{\log\Abs{g^{-1}}+\log\Abs{g_{1}^{-1}}}=\log\frac{1}{\sqrt{\Abs g\Abs{g_{1}}}},
\end{align*}
and thus 
\begin{align*}
\Abs{D\bTx(v)} & =\Abs{\frac{\del x_{1}}{\del v}}\\
 & \geq\frac{h^{n}}{\sqrt{\Abs g\Abs{g_{1}}}}\Abs{I_{n}-\frac{h}{2}g_{1}^{-\half}Dg_{1}\Brack{g_{1}^{-1}v_{\half}}g_{1}^{-\half}}^{-1}\Abs{I_{n}+\frac{h}{2}g^{-\half}Dg\Brack{g^{-1}v_{\half}}g^{-\half}}^{-1}.
\end{align*}

For $E\defeq\frac{h}{2}g^{-\half}Dg\Brack{g^{-1}v_{\half}}g^{-\half}$,
as bounded in  (\ref{eq:LMIMM-eq4}), we have that
\begin{align*}
\norm E_{2} & \leq\frac{h}{2}\norm{v_{\half}}_{g^{-1}},\\
\tr(E) & \lesssim hn\norm{v_{\half}}_{g^{-1}},\\
\norm E_{F} & \leq\frac{h\sqrt{n}}{2}\norm{v_{\half}}_{g^{-1}}.
\end{align*}
Due to $h^{2}\leq\frac{10^{-10}}{n^{2}\Par{n+\sqrt{M_{1}}}}$, it
follows from (\ref{eq:LM-ineq1}) that $\norm{v_{\half}}_{g^{-1}}\leq O\Par{\sqrt{n+\sqrt{M_{1}}}}$.
This condition also allows us to make all these three quantities smaller
than $10^{-5}$. By Lemma \ref{lem:boundLogDet}, 
\[
\Abs{I_{n}-\frac{h}{2}g_{1}^{-\half}Dg_{1}\Brack{g_{1}^{-1}v_{\half}}g_{1}^{-\half}}^{-1}\geq1-10^{-8}.
\]
Similarly, we obtain 
\[
\Abs{I_{n}+\frac{h}{2}g^{-\half}Dg\Brack{g^{-1}v_{\half}}g^{-\half}}^{-1}\geq1-10^{-8},
\]
and thus $\Abs{D\bTx(v)}\geq\frac{(1-10^{-6})h^{n}}{\sqrt{\Abs{g(x)}\Abs{g(\bx)}}}$.
Using the step-size conditions in Proposition \ref{prop:dynamics},
we use Proposition \ref{prop:stability} to show that $\bTx$ is sensitive
at $(x,v)$.
\end{proof}

%!TEX root = ./main.tex
\section{Convergence rate of RHMC in polytopes \label{sec:polyotopes}}

In this section, we present the mixing times of the ideal and discretized
RHMC for an exponential density in a polytope. We set $f(x)=\alpha^{\top}x$
for $\alpha\in\Rn$. For a full-rank matrix $A\in\R^{m\times n}$
and $b\in\R^{m}$, the polytope is represented by $\Brace{x\in\Rn:Ax\geq b}$,
equipped with the logarithmic barrier $\phi(x)=-\sum_{i=1}^{m}\log(a_{i}^{\top}x-b_{i}),$
where $a_{i}$ is the $i^{th}$ row of $A$ and $b_{i}$ is the $i^{th}$
entry of $b$. We can check by direct computation that the logarithmic
barriers are highly self-concordant. We view this polytope as the
Hessian manifold $\mcal$ induced by the local norm $g(x)=\hess\phi(x)$.
We denote a slack vector by $s_{x}=Ax-b\in\R^{m}$ and its diagonalization
by $S_{x}=\Diag(s_{x})\in\R^{m\times m}$. We also define $A_{x}=S_{x}^{-1}A$
and $s_{v}=A_{x}v$ for $v\in T_{x}\mcal$, where $T_{x}\mcal$ is
endowed with the local metric $g$. One can check by direct computation
that $\hess\phi(x)=A_{x}^{\top}A_{x}$.

In this setting, we can compute all the parameters we have defined,
obtaining the mixing time of RHMC discretized by a sensitive numerical integrator.

\subsection{Isoperimetry of convex set \label{subsec:isoCvx}}

An isoperimetry inequality is one of the two main ingredients for
bounding the mixing rate. We use the Riemannian version of some isoperimetry
inequality. To state it, we need another distance called \emph{Hilbert
distance} in addition to Riemannian distance $d_{\phi}$. 
\begin{defn}
For a convex body $\kcal$, the cross-ratio distance $d_{\kcal}(x,y)$
between $x$ and $y$ is 
\[
d_{\kcal}(x,y)=\frac{|x-y||p-q|}{|p-x||y-q|},
\]
where $p$ and $q$ are on the boundary of $\kcal$ such that $p,x,y,q$
are on the straight line $\overline{xy}$ and are in order. The Hilbert
distance $d_{H}$ between $x,y\in\kcal$ is 
\[
d_{H}(x,y)=\log(1+d_{\kcal}(x,y))=\log\Par{1+\frac{|x-y||p-q|}{|p-x||y-q|}}.
\]
For sets $X$ and $Y$, we define $d_{*}(X,Y)=\inf_{x\in X,y\in Y}d_{*}(x,y)$
for $*\in\{\kcal,H,\phi\}$.
\end{defn}

\begin{lem}[\cite{vempala2005geometric}, Theorem 4.4] \label{lem: iso_general}
Let $\pi$ be a log-concave distribution supported on a convex body
$\kcal$. Let $S_{1},S_{2},S_{3}$ be a partition of $\kcal$. Then,
\[
\pi(S_{3})\geq d_{\kcal}(S_{1},S_{2})\pi(S_{1})\pi(S_{2}).
\]
\end{lem}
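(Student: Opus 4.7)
The plan is to follow the standard Lovász--Simonovits strategy for proving weighted isoperimetric inequalities on convex bodies: apply the localization lemma to reduce the $n$-dimensional statement to a one-dimensional one on a "needle" carrying a log-concave density, and then verify the 1D inequality directly. Concretely, suppose for contradiction that the conclusion fails on $(S_1,S_2,S_3)$ with some strict deficit $\eta>0$; rewriting the desired inequality as $\int_{\kcal} F_3 \cdot \int_{\kcal} 1 \geq \int_{\kcal} F_1 \cdot \int_{\kcal} F_2 \cdot d_{\kcal}(S_1,S_2)$ where $F_i(x) = \mathbf{1}_{S_i}(x)\,e^{-f(x)}$ with $\pi \propto e^{-f}$, $f$ convex, the localization lemma produces a segment $[a,b]\subset\kcal$ and a log-concave weight $F:[a,b]\to\R_{\geq 0}$ on which the same inequality, restricted to the sets $S_i\cap[a,b]$ and with $d_{\kcal}$ replaced by the cross-ratio of the chord containing $[a,b]$, still fails. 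Since $d_{\kcal}$ restricted to any subsegment of a chord only decreases relative to the chord's own 1D cross-ratio, it suffices to prove the inequality on a single log-concave needle.

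For the one-dimensional reduction, parametrize the chord containing $[a,b]$ so that its intersection with $\kcal$ is $[0,1]$ with boundary points $p=0,q=1$, and let $F(t)=e^{\psi(t)}$ on $[0,1]$ with $\psi$ concave. By a symmetrization and merging argument I may assume $S_1$ lies entirely to the left of $S_2$ with $S_3$ the separating set, since any piece of $S_1$ (or $S_2$) lying on the "wrong side" of $S_3$ can be reassigned to whichever of $S_1,S_2,S_3$ only strengthens the inequality (this is the standard trick that the worst configuration places $S_1$ and $S_2$ on opposite sides of a single interval $S_3$). Then the claim reduces to showing, for $x\in S_1$ and $y\in S_2$ with $x\le y$,
\[
\int_{S_3} e^{\psi}\,dt \;\cdot\;\int_0^1 e^{\psi}\,dt \;\ge\; \frac{|x-y|\,|p-q|}{|p-x|\,|y-q|}\,\int_{S_1} e^{\psi}\,dt \;\cdot\; \int_{S_2} e^{\psi}\,dt.
\]
One first argues that among concave $\psi$ with fixed values of the three partial integrals and the total integral, the extremal case is affine $\psi(t)=\alpha t$; this is the standard "exponential needle" reduction, obtained by noting that tilting $\psi$ by a linear function preserves concavity and can be used to decrease the left side while preserving the right side up until $\psi$ becomes affine. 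On an exponential weight the three integrals admit closed forms, the change of variables $u=e^{\alpha t}$ turns the statement into an inequality for four points on a line weighted by Lebesgue measure, and the remaining claim becomes the elementary cross-ratio identity, completed by taking $S_1,S_2$ to shrink to $\{x\},\{y\}$, which is the extremal configuration.

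The main obstacle is the one-dimensional exponential case: although each individual ingredient (the needle lemma, the reduction to affine $\psi$, and the cross-ratio identity) is standard, assembling them requires care because both the weight $e^{\alpha t}$ and the cross-ratio depend nonlinearly on the endpoints, and one must show that the worst case of the left side is achieved simultaneously with the best case of the right side. The cleanest route is to verify the inequality with equality in the degenerate limit (needle of length $|x-y|$, $S_i=\{x\},\{y\}$, affine $\psi$) and then use the monotonicity properties of $\int e^{\psi}$ under extending/retracting $S_1,S_2,S_3$ to show that any nonextremal configuration strictly improves the inequality. This step is where all of the problem's geometry (the cross-ratio) is actually used; every other step is bookkeeping.
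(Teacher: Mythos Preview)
The paper does not give its own proof of this lemma; it is quoted verbatim as Theorem~4.4 of \cite{vempala2005geometric} and used as a black box in the proof of Lemma~\ref{lem:iso_kkp}. So there is nothing in the paper to compare against beyond the citation.

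Your plan is the standard one and is essentially how the cited result is proved: localization to a needle, reduction to an exponential weight, and a one-dimensional cross-ratio computation. Two remarks on the write-up as it stands. First, your rewriting is slightly off: with $F_i=\mathbf{1}_{S_i}e^{-f}$ the normalizing factor on the left should be $\int_{\kcal} e^{-f}$, not $\int_{\kcal} 1$, since $\pi(S_3)=\int F_3\big/\int e^{-f}$; the four-function form you want is $\bigl(\int F_3\bigr)\bigl(\int e^{-f}\bigr)\ge d_{\kcal}(S_1,S_2)\bigl(\int F_1\bigr)\bigl(\int F_2\bigr)$. Second, if you invoke the Kannan--Lov\'asz--Simonovits form of the localization lemma you already get an exponential (affine-exponent) needle directly, so the separate ``reduction from concave $\psi$ to affine $\psi$'' step is unnecessary; conversely, if you use the Lov\'asz--Simonovits two-function version you will need to be more explicit about how the four-integral inequality is packaged into that framework. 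Either way, the part you flag as ``the main obstacle''---the one-dimensional exponential case---is where all the content lies, and your sketch of it (pass to the extremal configuration $S_1=\{x\}$, $S_2=\{y\}$, then check the cross-ratio identity) is correct but would need to be written out to count as a proof rather than a plan.
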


The following lemma is a generalization of Theorem 26 in \cite{lee2017geodesic}
to a subset $\kcal'$.
\begin{lem}
\label{lem:iso_kkp} Let $\pi$ be a log-concave distribution supported
on a convex body $\kcal$, and $\phi$ a self-concordant barrier of
$\kcal$. Let $\kcal'$ be a convex subset of $\kcal$, and $S_{1},S_{2},S_{3}$
a partition of $\kcal'$. Then 
\[
\pi(S_{3})\pi(\kcal')\geq\frac{d_{\phi}(S_{1},S_{2})}{G}\pi(S_{1})\pi(S_{2}),
\]
where $G=\sup_{x,y\in\kcal}\frac{d_{\phi}(x,y)}{d_{H}(x,y)}$.
\end{lem}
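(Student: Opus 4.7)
The plan is to reduce the claim to Lemma~\ref{lem: iso_general} applied to the conditional measure on $\kcal'$, and then compare the cross-ratio distance on $\kcal'$ with $d_{\phi}/G$ on $\kcal$. Concretely, I define $\hat\pi \defeq \pi(\cdot \cap \kcal')/\pi(\kcal')$, which is a log-concave probability measure on the convex body $\kcal'$ because $\pi$ is log-concave on $\kcal$ and $\kcal'\subseteq\kcal$ is convex (the density of $\hat\pi$ is the restriction of the density of $\pi$ to $\kcal'$, renormalized, and restriction to a convex set preserves log-concavity). Since $S_{1},S_{2},S_{3}$ partition $\kcal'$, Lemma~\ref{lem: iso_general} immediately gives
\[
\hat\pi(S_{3}) \;\geq\; d_{\kcal'}(S_{1},S_{2})\,\hat\pi(S_{1})\,\hat\pi(S_{2}),
\]
where $d_{\kcal'}$ is the cross-ratio distance computed using the boundary of the smaller body $\kcal'$. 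Multiplying through by $\pi(\kcal')^{2}$ turns this into $\pi(S_{3})\pi(\kcal')\ge d_{\kcal'}(S_{1},S_{2})\,\pi(S_{1})\pi(S_{2})$, so it suffices to bound $d_{\kcal'}(S_{1},S_{2})\ge d_{\phi}(S_{1},S_{2})/G$.

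For that, I would establish the pointwise chain, valid for every $x,y\in\kcal'$,
\[
d_{\kcal'}(x,y)\;\geq\;d_{\kcal}(x,y)\;\geq\;d_{H}(x,y)\;\geq\;\frac{d_{\phi}(x,y)}{G}.
\]
The middle inequality is the elementary $\log(1+t)\le t$ applied with $t=d_{\kcal}(x,y)$, and the right one is just the definition $G=\sup_{x,y\in\kcal}d_{\phi}(x,y)/d_{H}(x,y)$. The left inequality follows from a short calculation on the line through $x$ and $y$: if $p,q\in\partial\kcal$ and $p',q'\in\partial\kcal'$ are the boundary points with $p,x,y,q$ (resp.\ $p',x,y,q'$) collinear and in order, then $\kcal'\subseteq\kcal$ forces $p'\in[p,x]$ and $q'\in[y,q]$, so writing $a=|p-x|$, $b=|y-q|$, $d=|x-y|$ and analogously $a'\le a$, $b'\le b$, we get
\[
d_{\kcal'}(x,y)=d\Bigl(\tfrac{1}{a'}+\tfrac{1}{b'}+\tfrac{d}{a'b'}\Bigr)\;\geq\;d\Bigl(\tfrac{1}{a}+\tfrac{1}{b}+\tfrac{d}{ab}\Bigr)=d_{\kcal}(x,y).
\]

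Taking the infimum of the pointwise chain over $(x,y)\in S_{1}\times S_{2}$ yields $d_{\kcal'}(S_{1},S_{2})\geq d_{\phi}(S_{1},S_{2})/G$, and substituting this into the displayed consequence of Lemma~\ref{lem: iso_general} produces exactly $\pi(S_{3})\pi(\kcal')\geq \frac{d_{\phi}(S_{1},S_{2})}{G}\pi(S_{1})\pi(S_{2})$. The only step with any content is the monotonicity of the cross-ratio distance under shrinking the ambient convex body; everything else is routine bookkeeping with the normalizing constant $\pi(\kcal')$ and invocations of the already established isoperimetry on $\kcal'$ and the definition of $G$.
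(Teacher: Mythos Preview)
Your proposal is correct and follows essentially the same route as the paper: restrict $\pi$ to $\kcal'$, apply Lemma~\ref{lem: iso_general} there, then use the chain $d_{\kcal'}\ge d_{\kcal}\ge d_H\ge d_\phi/G$. The only difference is that you spell out the monotonicity $d_{\kcal'}(x,y)\ge d_{\kcal}(x,y)$ with an explicit cross-ratio computation, whereas the paper asserts it as ``simple algebra''.
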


\begin{proof}
Applying Lemma \ref{lem: iso_general} to the distribution $\pi_{\kcal'}$
defined by $\pi$ restricted to $\kcal'$, we have 
\[
\pi(S_{3})\pi(\kcal')\geq d_{\kcal'}(S_{1},S_{2})\pi(S_{1})\pi(S_{2}).
\]
Due to $\kcal'\subseteq\kcal$, one can check $d_{\kcal'}(S_{1},S_{2})\geq d_{\kcal}(S_{1},S_{2})$
by simple algebra. As $d_{\kcal}(x,y)\geq d_{H}(x,y)$, it follows
that 
\[
\pi(S_{3})\pi(\kcal')\geq\frac{d_{\phi}(S_{1},S_{2})}{\frac{d_{\phi}(S_{1},S_{2})}{d_{H}(S_{1},S_{2})}}\pi(S_{1})\pi(S_{2})\geq\frac{d_{\phi}(S_{1},S_{2})}{G}\pi(S_{1})\pi(S_{2}).
\]
\end{proof}
We now define the symmetric self-concordance parameter of the barrier
$\phi$.
\begin{defn}[\cite{laddha2020strong}] For a convex body $\kcal\subseteq\Rn$,
the symmetric self-concordance parameter $\bar{\nu}_{\phi}$ of $\kcal$
is the smallest number such that for any $x\in\kcal$ 
\[
D(x)\subseteq\kcal\cap(2x-\kcal)\subseteq\sqrt{\bar{\nu}_{\phi}}D(x),
\]
where $D(x)=\Brace{y\in\Rn:\norm{y-x}_{\hess\phi(x)}\leq1}$ is the
Dikin ellipsoid at $x$.
\end{defn}

In general, it is known that $\bar{\nu}_{\phi}=O(\nu_{\phi}^{2})$
for the self-concordance parameter $\nu_{\phi}$ (see Definition \ref{def:sc-barrier}),
but a tighter bound of $\bar{\nu}_{\phi}=O(\nu_{\phi})$ holds for
important barriers such as the logarithmic barrier and Lee-Sidford
barrier \cite{lee2014path}.
\begin{lem}[\cite{laddha2020strong}, Lemma 2.3] \label{lem:ddh_ratio} $d_{\phi}(x,y)\lesssim\sqrt{\bar{\nu}_{\phi}}d_{H}(x,y)$
for any $x,y\in\kcal$.
\end{lem}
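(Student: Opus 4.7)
The plan is to integrate an infinitesimal comparison between the local Riemannian norm $\|\cdot\|_{g(z)}$ and the Hilbert Finsler norm along the Euclidean segment from $x$ to $y$. Since $\kcal$ is convex, this segment lies in $\kcal$, so its Riemannian length upper bounds $d_{\phi}(x,y)$; its total Hilbert length equals $d_H(x,y)$ because straight lines are Hilbert geodesics. Thus it suffices to establish a pointwise factor-$\sqrt{\bar{\nu}_{\phi}}$ comparison between the two Finsler densities at each $z$ on the segment.

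For the pointwise bound I would fix an interior $z$ and a unit direction $\hat u$, and let $t_{\pm}(z,\hat u)$ denote the Euclidean distances from $z$ to $\partial\kcal$ along $\pm\hat u$. The point $z+\min(t_{-},t_{+})\hat u$ lies in $\kcal\cap(2z-\kcal)$, so the right-hand containment $\kcal\cap(2z-\kcal)\subseteq\sqrt{\bar{\nu}_{\phi}}\,D(z)$ from the definition of $\bar{\nu}_{\phi}$ gives
\[
\|\hat u\|_{g(z)}\le\frac{\sqrt{\bar{\nu}_{\phi}}}{\min(t_{-}(z,\hat u),\,t_{+}(z,\hat u))}\le\sqrt{\bar{\nu}_{\phi}}\Par{\frac{1}{t_{-}(z,\hat u)}+\frac{1}{t_{+}(z,\hat u)}}.
\]
This is the crucial step, and it is where the \emph{symmetric} self-concordance parameter is essential: using only $\nu_{\phi}$ would produce a strictly weaker estimate involving $\nu_{\phi}$ in place of $\sqrt{\bar{\nu}_{\phi}}$.

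Finally I would parametrize $c(s)=x+s\hat u$ with $\hat u=(y-x)/|y-x|$ and $s\in[0,|y-x|]$, let $p,q$ be the intersections of the line $\overline{xy}$ with $\partial\kcal$ ordered as $p,x,y,q$, and compute
\[
d_{\phi}(x,y)\le\int_{0}^{|y-x|}\|\hat u\|_{g(c(s))}\,ds\le\sqrt{\bar{\nu}_{\phi}}\int_{0}^{|y-x|}\Par{\frac{1}{|p-x|+s}+\frac{1}{|q-x|-s}}\,ds,
\]
and the right-hand integral evaluates to $\sqrt{\bar{\nu}_{\phi}}\log\frac{|p-y||q-x|}{|p-x||q-y|}$. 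A short algebra using $|p-q|=|p-x|+|x-y|+|y-q|$ identifies $\log\frac{|p-y||q-x|}{|p-x||q-y|}$ with $\log(1+d_{\kcal}(x,y))=d_H(x,y)$, completing the bound. Points approaching $\partial\kcal$ are handled by standard continuity arguments since both sides are continuous in their arguments on the interior.
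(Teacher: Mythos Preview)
The paper does not supply its own proof of this lemma; it is quoted directly from \cite{laddha2020strong}. Your argument is correct and is essentially the standard one: the containment $\kcal\cap(2z-\kcal)\subseteq\sqrt{\bar{\nu}_{\phi}}\,D(z)$ gives the pointwise bound $\|\hat u\|_{g(z)}\le\sqrt{\bar{\nu}_{\phi}}/\min(t_-,t_+)$, and integrating the looser estimate $\sqrt{\bar{\nu}_{\phi}}\bigl(\tfrac{1}{t_-}+\tfrac{1}{t_+}\bigr)$ along the chord produces $\sqrt{\bar{\nu}_{\phi}}\log\frac{|p-y||q-x|}{|p-x||q-y|}$, which your algebra correctly identifies with $\sqrt{\bar{\nu}_{\phi}}\,d_H(x,y)$ under the paper's definition $d_H=\log(1+d_{\kcal})$. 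One small remark: your opening sentence appeals to straight lines being Hilbert geodesics, but you never actually use that fact---the identification with $d_H$ comes out of the explicit integral, so the geodesic comment can be dropped.
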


Using Lemma \ref{lem:iso_kkp} and \ref{lem:ddh_ratio} together,
we have 
\[
\pi(S_{3})\pi(\kcal')\geq\frac{d_{\phi}(S_{1},S_{2})}{\sqrt{\bar{\nu}_{\phi}}}\pi(S_{1})\pi(S_{2}),
\]
 and it implies that the isoperimetry of $\kcal'$ is at least $1/\sqrt{\bar{\nu}_{\phi}}$.
As $\nu_{\phi}=O(m)$ for the logarithmic barrier, $\psi_{\kcal'}\geq1/\sqrt{m}$
for a convex subset $\kcal'$.

\subsection{Good region $\protect\mcal_{\rho}$}

Taking a proper good region $\mcal_{\rho}$ plays an important role
in establishing a condition-number independent mixing rate of RHMC
for an exponential density in a polytope. To this end, we set our
good region to 
\begin{align*}
\mcal_{\rho} & \defeq\Brace{x\in\mcal:\norm{\alpha}_{g(x)^{-1}}^{2}\leq10n^{2}\log^{2}\Par{\frac{1}{\rho}}}.
\end{align*}
To establish the isoperimetry of $\mcal_{\rho}$ following Section
\ref{subsec:isoCvx}, we check its convexity in the following lemma.
Note that the assumption in the lemma is satisfied by the logarithmic
barriers.
\begin{lem}
If the fourth directional derivative of $\phi$ is positive (i.e.,
$D^{4}\phi[a,a,b,b]\geq0$), then $\mcal_{\rho}$ is convex.
\end{lem}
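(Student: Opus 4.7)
The plan is to prove $\mcal_{\rho}$ convex by showing the defining function $h(x) := \alpha^{\top} g(x)^{-1}\alpha$ is convex on $\mcal$, so that $\mcal_{\rho}$ becomes the sublevel set $\{h \le 10n^{2}\log^{2}(1/\rho)\}$ of a convex function. I would proceed by a direct Hessian computation of $h$ and then use the hypothesis $D^{4}\phi[a,a,b,b]\ge 0$ through Cauchy--Schwarz on the naturally appearing positive-semidefinite quadratic form it provides.

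Setting $u(x) := g(x)^{-1}\alpha$ and using $D(g^{-1})[a] = -g^{-1}(Dg[a])g^{-1}$, so that $Du[a] = -g^{-1}(Dg[a])\,u$, the gradient is $Dh[a] = -u^{\top}(Dg[a])\,u = -D^{3}\phi[a,u,u]$. Differentiating once more, the chain-rule contribution through $u$ produces $+2\,u^{\top}(Dg[a])\,g^{-1}(Dg[a])\,u$ and the contribution through $Dg[a]$ produces $-u^{\top}(D^{2}g[a,a])\,u = -D^{4}\phi[a,a,u,u]$, giving the key formula
\[
D^{2}h(x)[a,a] \;=\; 2\,\|(Dg(x)[a])\,u\|_{g(x)^{-1}}^{2} \;-\; D^{4}\phi(x)[a,a,u,u].
\]
The hypothesis makes the symmetric bilinear form $B_{a}(p,q) := D^{4}\phi[a,a,p,q]$ positive semidefinite in $(p,q)\in\Rn\times\Rn$, so Cauchy--Schwarz yields $B_{a}(u,q)^{2}\le B_{a}(u,u)\,B_{a}(q,q)$ for every $q$. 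I would use this to dominate the subtracted $B_{a}(u,u) = D^{4}\phi[a,a,u,u]$ term by the curvature term, via an auxiliary Taylor-type identity along direction $a$ that writes the row slice $D^{4}\phi[a,a,\cdot,u]$ as the derivative of $(Dg[a])\,u$ with respect to $x$, and hence controls $B_{a}(u,u)$ by $\|(Dg[a])u\|_{g^{-1}}^{2}$ after a one-variable integration and a suitable choice of $q$ that turns the PSD-ness of $B_a$ into an honest comparison between the two sides.

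The main obstacle is producing that bridging inequality with constant no larger than $2$. Cauchy--Schwarz in $B_{a}$ by itself only compares contractions of the 4-tensor $D^{4}\phi$ with itself, whereas the positive term in the Hessian formula is built from the 3-tensor $D^{3}\phi$; passing between them requires either a strong self-concordance-type bound on $D^{4}\phi$ in terms of $(D^{2}\phi)^{\otimes 2}$ (which the ``highly self-concordant'' assumption on $\phi$ in the paper's setup should supply, and which in turn is what makes the rank-1 factors of the 4-tensor align correctly with those of $Dg$), or an integral representation along the line $x+ta$ that expresses $D^{4}\phi[a,a,u,u]$ as an average of $\|(Dg(x+ta)[a])\,u\|_{g(x+ta)^{-1}}^{2}$-type quantities using the fundamental theorem of calculus applied to $t\mapsto (Dg(x+ta))\,u$. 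Once that comparison is in place, $\nabla^{2}h(x)\succeq 0$ holds on all of $\mcal$, and convexity of $\mcal_{\rho}$ follows immediately.
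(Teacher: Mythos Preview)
Your Hessian computation is correct; the second directional derivative of $h(x)=\alpha^{\top}g(x)^{-1}\alpha$ is indeed
\[
D^{2}h(x)[a,a]=2\bigl\|(Dg(x)[a])\,u\bigr\|_{g(x)^{-1}}^{2}-D^{4}\phi(x)[a,a,u,u],\qquad u=g(x)^{-1}\alpha,
\]
with a \emph{minus} sign on the fourth-derivative term. The paper's proof carries two sign slips (once in $\partial_{i}(g^{-1})$ and once in $\partial_{j}s$ with $s=g^{-1}\alpha$) that do not cancel; it ends up with $+D^{4}\phi[h,h,s,s]+2\|\cdot\|^{2}$ and declares both summands nonnegative. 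So the obstacle you flag is the real content of the problem, not a technicality you failed to push through.

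In fact no argument along these lines can succeed, because the lemma as stated is false. Take the log barrier $\phi(x)=-\log x-\log(1-x)$ on $(0,1)$, which satisfies $D^{4}\phi[a,a,b,b]=6a^{2}b^{2}\bigl(x^{-4}+(1-x)^{-4}\bigr)\ge 0$. Here $g(x)=x^{-2}+(1-x)^{-2}$ and $h(x)=\alpha^{2}/g(x)$ vanishes at both endpoints but equals $\alpha^{2}/8>0$ at $x=\tfrac12$, so $h$ is not convex; concretely, at $x=\tfrac12$ one has $g'(\tfrac12)=0$ by symmetry while $g''(\tfrac12)=192>0$, giving $h''(\tfrac12)=-3\alpha^{2}<0$. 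Equivalently, the comparison $2\|(Dg[a])u\|_{g^{-1}}^{2}\ge D^{4}\phi[a,a,u,u]$ you would need reads $0\ge 3\alpha^{2}$ there. Hence neither Cauchy--Schwarz in the PSD form $B_{a}$, nor the highly self-concordant bound on $D^{4}\phi$, nor an integral representation can deliver the inequality with constant at most $2$. (The paper's downstream use of $\mcal_{\rho}$ can be repaired by replacing it with the half-space $\{x:\alpha^{\top}x\le \min_{y\in\mcal}\alpha^{\top}y+C\}$, which is trivially convex and, via the Dikin-ellipsoid bound $\|\alpha\|_{g(x)^{-1}}\le \alpha^{\top}x-\min_{y}\alpha^{\top}y$ already established there, still controls $\|\alpha\|_{g(x)^{-1}}$.)
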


\begin{proof}
Let $\Upsilon(x):=\alpha^{\top}g(x)^{-1}\alpha=\alpha^{\top}(\nabla^{2}\phi(x))^{-1}\alpha$.
It suffices to show that $\Upsilon(x)$ is convex. Note that 
\[
\frac{\partial\Upsilon(x)}{\partial x_{i}}=\alpha^{\top}g(x)^{-1}\frac{\partial g(x)}{\partial x_{i}}g(x)^{-1}\alpha,
\]
and thus its directional derivative in $h=(h_{1},...,h_{n})$ is 
\[
\nabla\Upsilon(x)\cdot h=\sum_{i}h_{i}\Par{s(x)^{\top}\frac{\partial g(x)}{\partial x_{i}}s(x)},
\]
where $s(x):=g(x)^{-1}\alpha$. Note that 
\begin{align*}
\frac{\partial}{\partial x_{j}}\Par{s(x)^{\top}\frac{\partial g(x)}{\partial x_{i}}s(x)} & =s(x)^{\top}\frac{\partial^{2}g(x)}{\partial x_{i}\partial x_{j}}s(x)+2s(x)^{\top}\frac{\partial g(x)}{\partial x_{i}}\Par{\frac{\partial s(x)}{\partial x_{j}}}\\
 & =s(x)^{\top}\frac{\partial^{2}g(x)}{\partial x_{i}\partial x_{j}}s(x)+2s(x)^{\top}\frac{\partial g(x)}{\partial x_{i}}g(x)^{-1}\frac{\partial g(x)}{\partial x_{j}}s(x).
\end{align*}
Therefore, 
\begin{align*}
D^{2}\Upsilon(x)[h,h] & =\sum_{i,j}h_{i}h_{j}s(x)^{\top}\frac{\partial^{2}g(x)}{\partial x_{i}\partial x_{j}}s(x)+2\sum_{i,j}\Par{\frac{\partial g(x)}{\partial x_{i}}s(x)h_{i}}^{\top}g(x)^{-1}\Par{\frac{\partial g(x)}{\partial x_{j}}s(x)h_{j}}\\
 & =D^{4}\phi[h,h,s(x),s(x)]+2\sum_{i,j}\Par{\frac{\partial g(x)}{\partial x_{i}}s(x)h_{i}}^{\top}g(x)^{-1}\Par{\frac{\partial g(x)}{\partial x_{j}}s(x)h_{j}}.
\end{align*}
The first term is non-negative due to the assumption, and the second
term is also non-negative since $g(x)^{-1}$ is also positive semi-definite.
\end{proof}
Next, we show that $\mcal_{\rho}$ takes up probability of at least
$1-\rho$ over the stationary distribution $\pi$, where $\frac{d\pi(x)}{dx}\propto\exp(-\alpha^{\top}x)$.
\begin{lem}
$\pi(\mcal_{\rho})\geq1-\rho$.
\end{lem}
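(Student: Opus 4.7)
The plan is to reduce the claim to a tail bound of the form
\[
\pi\bigl(\{x:\|\alpha\|_{g(x)^{-1}}^2>t\}\bigr)\le e^{-\Omega(\sqrt{t}/n)}\quad\text{for }t\gtrsim n^{2},
\]
from which $\pi(\mcal_\rho^c)\le\rho$ follows by setting $t=10n^2\log^2(1/\rho)$, with room to spare.

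Two ingredients drive the bound. The previous lemma established that $\Upsilon(x):=\|\alpha\|_{g(x)^{-1}}^2$ is a convex function on $\mcal$, while $\pi\propto e^{-\alpha^\top x}$ is log-concave on the convex body $\mcal$. Standard concentration of convex functions under log-concave measures (a form of Borell's lemma) then gives $\pi(\Upsilon>(1+s)\,m)\le 2^{-\Omega(s)}$, where $m$ is any upper bound on the median (or mean) of $\Upsilon$. All the work is therefore in showing $m=O(n^{2})$; once this is done, choosing $s\asymp\log^{2}(1/\rho)$ yields the desired tail.

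For the mean bound I would use integration by parts. Writing $\alpha\,e^{-\alpha^\top x}=-\nabla e^{-\alpha^\top x}$ and applying the divergence theorem to the vector field $g(x)^{-1}\alpha\,e^{-\alpha^\top x}$ gives
\[
\mathbb{E}_\pi[\|\alpha\|_{g^{-1}}^2]
=\mathbb{E}_\pi\bigl[\nabla\!\cdot\!\bigl(g(x)^{-1}\alpha\bigr)\bigr]-\frac{1}{Z}\int_{\partial\mcal}\nu^\top g(x)^{-1}\alpha\, e^{-\alpha^\top x}\,dS,
\]
with $\nu$ the outward unit normal. The logarithmic barrier makes the local metric $g(x)=A^\top S_x^{-2}A$ degenerate in the outward normal direction near each facet: on $\{a_i^\top x=b_i\}$ one has $\nu\parallel a_i$ and $a_i^\top g(x)^{-1}a_i\to 0$ as $s_i(x)\to 0$, so the boundary integral vanishes. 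The interior term $\nabla\!\cdot\!(g^{-1}\alpha)$ unfolds into traces of $g^{-1}Dg\,g^{-1}$ contracted with $\alpha$, and by the self-concordance estimates of Lemma~\ref{lem:sc_facts} is bounded by a constant multiple of $n+\|\alpha\|_{g^{-1}}^2$; rearranging gives $\mathbb{E}_\pi[\Upsilon]=O(n)$.

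The main obstacle is the rigorous handling of the boundary contribution: although $g(x)^{-1}a_i\to 0$ in the relevant direction, the exponential density is only bounded (not vanishing) near $\partial\mcal$, so some care is needed to confirm the improper integral vanishes. A purely soft alternative, should the calculus get messy, is a \emph{density-pumping} argument: for $x\in\mcal_\rho^c$ the point $x-\tfrac12 g(x)^{-1}\alpha/\|\alpha\|_{g(x)^{-1}}$ lies inside the Dikin ellipsoid of radius $\tfrac12$ at $x$ (hence in $\mcal$), and the density there is larger by a factor $\exp\bigl(\tfrac12\|\alpha\|_{g(x)^{-1}}\bigr)=e^{\Omega(n\log(1/\rho))}$; a Jacobian/packing argument using Lemma~\ref{lem:sc_facts} to control the associated change-of-variables then directly yields $\pi(\mcal_\rho^c)\le\rho$.
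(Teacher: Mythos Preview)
Your proposal takes a substantially different route from the paper and, as written, has real gaps.

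The paper's proof rests on one elementary pointwise inequality that you do not use: since the Dikin ellipsoid $\{y:\|y-x\|_{g(x)}\le 1\}$ lies inside $\mcal$,
\[
\|\alpha\|_{g(x)^{-1}}=\max_{\|u\|_{g(x)}\le 1}\alpha^\top u=\alpha^\top x-\min_{\|y-x\|_{g(x)}\le 1}\alpha^\top y\;\le\;\alpha^\top x-\min_{y\in\mcal}\alpha^\top y.
\]
This reduces the problem to a tail bound for the \emph{linear} functional $L(x)=\alpha^\top x-\min_{y\in\mcal}\alpha^\top y$ under a log-concave measure. The paper then invokes three off-the-shelf one-dimensional facts: $\E_\pi L\le n$ (Lemma~\ref{lem:kalaivempala}), $\mathrm{Var}_\pi L\le n$ (Lemma~\ref{lem:sinhoentropic}), and the log-concave tail $\P(L>R\sqrt{\E L^2})<e^{-R+1}$ (Lemma~\ref{lem:lovaszgeo}). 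No convex-function concentration, no integration by parts, no boundary terms.

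Your integration-by-parts route has two concrete problems. First, the stated bound ``$|\nabla\!\cdot(g^{-1}\alpha)|\lesssim n+\|\alpha\|_{g^{-1}}^2$'' cannot be rearranged to $\E\Upsilon=O(n)$ unless the implicit constant in front of $\|\alpha\|_{g^{-1}}^2$ is strictly less than $1$, which you do not establish. (For the log barrier one can in fact show $|\nabla\!\cdot(g^{-1}\alpha)|\le 2\sqrt{n}\,\|\alpha\|_{g^{-1}}$ via leverage scores, which \emph{does} rearrange, but that is not what you wrote.) Second, the ``standard concentration of convex functions under log-concave measures (a form of Borell's lemma)'' is not standard in the form you need: Borell's lemma is about dilates of \emph{symmetric} convex sets, whereas the sublevel sets $\{\Upsilon\le t\}$ are merely convex. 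Exponential tails for a general convex $\Upsilon$ of a log-concave vector are not a routine citation; Lemma~\ref{lem:lovaszgeo} applies because $L$ is one-dimensional log-concave, which fails for $\Upsilon$. Your density-pumping alternative is a plausible heuristic but, as you note, the Jacobian of $x\mapsto x-\tfrac12 g(x)^{-1}\alpha/\|\alpha\|_{g(x)^{-1}}$ is nontrivial and the packing argument is not carried out.

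In short: the missing idea is the Dikin-ellipsoid linearization, which replaces all of the analytic difficulties you anticipate with three clean one-dimensional lemmas.
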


\begin{proof}
Let $g=g(x)$. For $\norm{\alpha}_{g^{-1}}$, note that 
\begin{align*}
\norm{\alpha}_{g(x)^{-1}}= & \max_{\norm u_{g(x)}=1}\alpha^{\top}u\\
= & \alpha^{\top}x-\min_{\norm{y-x}_{g}=1}\alpha^{\top}y\leq\alpha^{\top}x-\min_{y\in\mcal}\alpha^{\top}y,
\end{align*}
where the first equality is due to duality of norms and the last inequality
follows from the well-known fact that the Dikin ellipsoid at $x$
is inside $\mcal$. 

By Lemma \ref{lem:kalaivempala} with $c=\alpha/\norm{\alpha}_{2}$
and $T=1/\norm{\alpha}_{2}$, we have 
\[
\E_{x\sim\pi^{*}}[\alpha^{\top}x]\leq n+\min_{y\in\mcal}\alpha^{\top}y.
\]
Then, by Lemma \ref{lem:sinhoentropic} we have $\E[(\alpha^{\top}x-\min_{y\in\mcal}\alpha^{\top}y)^{2}]-\E[\alpha^{\top}x-\min_{y\in\mcal}\alpha^{\top}y]^{2}\leq n$
so that $\E[(\alpha^{\top}x-\min_{y\in\mcal}\alpha^{\top}y)^{2}]\leq n+n^{2}.$
By Lemma \ref{lem:lovaszgeo}, we have 
\[
\Pr_{x\sim\pi}\Brack{\alpha^{\top}x-\min_{y\in\mcal}\alpha^{\top}y>2\Par{\log\frac{1}{\rho}+1}n}\leq\rho.
\]
\end{proof}

\subsection{Auxiliary function $\ell$ and smoothness parameters $R$}

In this region $\mcal_{\rho}$ and step size $h$, the parameters
$M_{1},M_{2}$ and $M_{1}^{*},M_{2}^{*}$ (see Definition \ref{def:Mparameters})
are computed by 
\begin{align*}
M_{1} & =\max\Par{n,\norm{\alpha}_{g(x)^{-1}}^{2}}\leq10n^{2}\log^{2}\Par{\frac{1}{\rho}},\\
M_{1}^{*} & \leq\norm{\alpha}_{g(x)^{-1}}^{2}\leq10n^{2}\log^{2}\Par{\frac{1}{\rho}},\\
M_{2} & ,M_{2}^{*}=0.
\end{align*}
We use the following auxiliary function $\ell$ proposed in \cite{lee2018convergence}
and symmetric auxiliary function $\bar{\ell}$:
\begin{align*}
\ell(\gamma)=\max_{t\in[0,h]}\bigg(\frac{\norm{s_{\gamma'(t)}}_{2}}{\sqrt{n}+2M_{1}^{1/4}}+\frac{\norm{s_{\gamma'(t)}}_{4}}{2M_{1}^{1/4}}+ & \frac{\norm{s_{\gamma'(t)}}_{\infty}}{\sqrt{\log n}+2h\sqrt{M_{1}}}\bigg)+\frac{\norm{s_{\gamma'(0)}}_{2}}{\sqrt{n}}+\frac{\norm{s_{\gamma'(0)}}_{4}}{n^{1/4}}+\frac{\norm{s_{\gamma'(0)}}_{\infty}}{\sqrt{\log n}},\\
\bar{\ell}(\gamma)=\max_{t\in[0,h]}\bigg(\frac{\norm{s_{\gamma'(t)}}_{2}}{\sqrt{n}+2M_{1}^{1/4}}+\frac{\norm{s_{\gamma'(t)}}_{4}}{2M_{1}^{1/4}}+ & \frac{\norm{s_{\gamma'(t)}}_{\infty}}{\sqrt{\log n}+2h\sqrt{M_{1}}}\bigg).
\end{align*}
This measures how fast a Hamiltonian trajectory approaches the facets
of a polytope in the local norm. 

We make simple observations based on the self-concordance of $g$.
\begin{prop}
\label{prop:simpleObs} Let $\overline{\mcal_{\rho}}\defeq\Brace{x\in\mcal:\norm{\alpha}_{g(x)^{-1}}^{2}\leq20n^{2}\log^{2}\Par{\frac{1}{\rho}}}$
and $\gamma$ be any Hamiltonian curve $\gamma$ starting at $x\in\mcal_{\rho}$
with $v\in\vgood$. If step size $h$ satisfies $h^{2}\leq10^{-11}\min\Par{\frac{1}{n\log\frac{1}{\rho}},\frac{1}{C_{x}(x,v)}}$,
then $x_{h}$ and $\bx_{h}$ are contained in $\overline{\mcal_{\rho}}$.
\end{prop}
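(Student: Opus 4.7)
The plan is to show that starting from $x \in \mcal_{\rho}$, both the ideal one-step position $x_h$ and the discretized one-step position $\bar x_h$ stay close to $x$ in the local norm $g(x)$, and then use self-concordance of $g$ to transfer the bound $\|\alpha\|^2_{g(x)^{-1}} \le 10 n^2 \log^2(1/\rho)$ to the points $x_h, \bar x_h$, losing at most a factor of $2$.

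First I would invoke Proposition~\ref{prop:dynamics}-1 to bound $\|x-x_t\|_{g(x)}$ for all $t \in [0,h]$. To do so I must verify condition $\onecirc$ of that proposition, namely $h^2 \le 10^{-10}/(n+\sqrt{M_1}+M_2)$. In the polytope setting $M_2=0$ and $\sqrt{M_1} \le \sqrt{10}\, n \log(1/\rho)$, so $n+\sqrt{M_1}+M_2 \le 5 n \log(1/\rho)$, and the hypothesis $h^2 \le 10^{-11}/(n\log(1/\rho))$ is amply sufficient. Only item 1 of Proposition~\ref{prop:dynamics} is needed, whose proof relies solely on $\onecirc$, so no further conditions are required. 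This yields
\[
\|x-x_h\|_{g(x)} \le C_1\bigl(h\sqrt{n} + h^2(n+\sqrt{M_1})\bigr) \le C_2 \sqrt{10^{-11}},
\]
for universal constants $C_1,C_2$, which is at most $10^{-5}$, say.

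Next I would bound $\|x-\bar x_h\|_{g(x)}$ by the triangle inequality: $\|x-\bar x_h\|_{g(x)} \le \|x-x_h\|_{g(x)} + \|x_h-\bar x_h\|_{g(x)}$. For the second term, second-orderness of the integrator gives $d_g(\bar x_h, x_h) \le C_x(x,v)\,h^2$, and Lemma~\ref{lem:dist_const} transfers this Riemannian distance into a local-norm bound $\|x_h-\bar x_h\|_{g(x_h)} \le 2 C_x(x,v) h^2$; applying self-concordance (Lemma~\ref{lem:sc_facts}) to switch from $g(x_h)$ to $g(x)$ costs only a constant factor, since $\|x-x_h\|_{g(x)}$ is already tiny. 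With $h^2 C_x(x,v) \le 10^{-11}$, we conclude $\|x-\bar x_h\|_{g(x)} \le 10^{-4}$, say.

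Finally, self-concordance of $\phi$ (Lemma~\ref{lem:sc_facts}-1, or equivalently Lemma~\ref{lem:computation_sc}-3) gives
\[
g(y)^{-1} \preceq \frac{1}{(1-\|y-x\|_{g(x)})^2}\, g(x)^{-1}
\]
for both $y=x_h$ and $y=\bar x_h$. Since $\|y-x\|_{g(x)} \le 10^{-4}$, the multiplicative factor $(1-\|y-x\|_{g(x)})^{-2}$ is at most $2$. Therefore
\[
\|\alpha\|_{g(y)^{-1}}^2 \le 2\,\|\alpha\|_{g(x)^{-1}}^2 \le 20\, n^2 \log^2(1/\rho),
\]
showing $x_h,\bar x_h \in \overline{\mcal_{\rho}}$.

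There is no substantive obstacle: the proposition is a clean combination of (i) the first-item distance bound of Proposition~\ref{prop:dynamics}, (ii) the definition of $C_x$, and (iii) Hessian stability under self-concordance. The only item that warrants care is confirming the hypothesis of Proposition~\ref{prop:dynamics}-1 from the quantitative step-size assumption, which is exactly why the condition is stated as $h^2 \le 10^{-11}\min(1/(n\log(1/\rho)),\, 1/C_x(x,v))$ rather than just one of the two bounds.
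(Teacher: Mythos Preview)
Your proposal is correct and follows essentially the same approach as the paper: invoke Proposition~\ref{prop:dynamics}-1 to bound $\|x-x_h\|_{g(x)}$, use the triangle inequality together with the definition of $C_x$ to bound $\|x-\bar x_h\|_{g(x)}$, and then apply self-concordance to transfer the bound on $\|\alpha\|_{g(x)^{-1}}^2$ to the new points. If anything you are slightly more careful than the paper (explicitly converting $d_g$ to local norm via Lemma~\ref{lem:dist_const} and tracking tighter numerical constants), but the argument is the same.
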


\begin{proof}
Due to the assumption on the step size, we can use Proposition \ref{prop:dynamics}-1,
obtaining $\norm{x-\gamma(t)}_{x}\leq O\Par{t\sqrt{n+\sqrt{M_{1}}}}=O\Par{t\sqrt{n\log\frac{1}{\rho}}}<\frac{1}{4}$.
Also, $\norm{x-\bx_{h}}_{g}\leq\norm{x-\gamma(h)}_{g}+\norm{\gamma(h)-\bx_{h}}_{g}\leq\frac{1}{4}+h^{2}C_{x}(x,v)\leq\frac{1}{3}$.
The claim follows from the self-concordance of $g(x)$, due to $\norm{\alpha}_{g(\gamma(h))^{-1}}^{2}\leq\Par{1+\norm{x-\gamma(h)}_{x}}\norm{\alpha}_{g(x)^{-1}}^{2}\leq20n^{2}\log^{2}\frac{1}{\rho}$
and $\norm{\alpha}_{g(\bx_{h})^{-1}}^{2}\leq\frac{4}{3}\norm{\alpha}_{g(x)^{-1}}^{2}\leq20n^{2}\log^{2}\frac{1}{\rho}$.
\end{proof}
As in \cite{lee2018convergence}, we can represent the parameters
$\ell_{0},\ell_{1}$ and the smoothness parameters $R_{1},R_{2},R_{3}$
in terms of $M_{1}$. The original proof in \cite{lee2018convergence}
relies on the fact that $\norm{\grad f(\gamma(t))}_{g(\gamma(t))^{-1}}^{2}\leq M_{1}$
for any time $t\in[0,h]$ and any regular Hamiltonian curves. In our
setting, $\norm{\grad f(\gamma(t))}_{g(\gamma(t))^{-1}}^{2}\leq2M_{1}$
for any time $t\in[0,h]$ if $h^{2}\leq\frac{10^{-11}}{n\log\frac{1}{\rho}}$,
we can simply reproduce Lemma 54\textasciitilde 59 by replacing $M_{1}$
by $2M_{1}$.
\begin{lem}
\label{lem:l0} Consider a Hamiltonian trajectory $\gamma$ starting
at $x\in\mcal_{\rho}$ with an initial (normalized) velocity randomly
chosen from $\ncal(0,g(x)^{-1})$, with step size $h$ satisfying
$h^{2}n\log\frac{1}{\rho}\leq10^{-11}$. For $n$ large enough, if
$s$ satisfies $sh=O\Par n$, then
\[
\P_{\gamma}\Par{\ell(\gamma)\geq128}\leq\frac{1}{100}\min\Par{1,\frac{\ell_{0}}{sh}}.
\]
\end{lem}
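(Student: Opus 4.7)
The plan is to reduce to Lemmas~54--59 of \cite{lee2018convergence} essentially verbatim, checking only that the slightly different good-region parameter does not ruin the constants. Concretely, I would split $\ell(\gamma)$ into its six summands and control each.

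First, for the three initial-velocity terms $\|s_{\gamma'(0)}\|_2/\sqrt{n}$, $\|s_{\gamma'(0)}\|_4/n^{1/4}$, $\|s_{\gamma'(0)}\|_\infty/\sqrt{\log n}$: since $\gamma'(0)\sim\ncal(0,g(x)^{-1})$ with $g(x)=A_x^\top A_x$, the slack vector $s_{\gamma'(0)}=A_x\gamma'(0)$ has distribution $\ncal(0,P_x)$ where $P_x=A_x(A_x^\top A_x)^{-1}A_x^\top$ is an orthogonal projection of rank $n$. Each component is a sub-Gaussian scalar of variance at most $1$, so standard bounds give $\P(\|s_{\gamma'(0)}\|_2\ge c_1\sqrt n)$, $\P(\|s_{\gamma'(0)}\|_4\ge c_2 n^{1/4})$, and $\P(\|s_{\gamma'(0)}\|_\infty\ge c_3\sqrt{\log n})$ all at most, say, $1/1000$ when $c_i$ are moderate universal constants. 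These are exactly the bounds of Lemma~54--55 in \cite{lee2018convergence}, and nothing about $\mcal_\rho$ enters here. Since each threshold equals the normalizing denominator, each of the three initial terms exceeds a constant (say $4$) with probability at most $1/1000$.

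Second, for the three trajectory terms, I would first show that under $h^2 n\log(1/\rho)\le 10^{-11}$ the entire trajectory stays in $\overline{\mcal_\rho}$, so that $\|\grad f(\gamma(t))\|_{g(\gamma(t))^{-1}}^2\le 2M_1$ and the local metrics at $x$ and $\gamma(t)$ are within a $(1\pm 10^{-5})$ factor (Lemma~\ref{lem:sc_facts} and Proposition~\ref{prop:simpleObs}). Differentiating $s_{\gamma'(t)}=A_{\gamma(t)}\gamma'(t)$ and using the Hamiltonian equation $\tfrac{dv}{dt}=-\tfrac{\partial H}{\partial x}$, each derivative $\tfrac{d}{dt}\|s_{\gamma'(t)}\|_p$ splits into (a) a drift proportional to $\|\grad f(\gamma(t))\|_{g^{-1}}\le\sqrt{2M_1}$, (b) a metric-variation term bounded via self-concordance by $O(\|\gamma'(t)\|_{g(x)}^2)$, and (c) a log-barrier-gradient term contributing $O(n)$ or $O(\sqrt n)$ depending on $p$. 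Integrating over $[0,h]$ and using $h\sqrt{M_1}$, $h\sqrt n$ as the natural scales picks out exactly the denominators $\sqrt n+2M_1^{1/4}$, $2M_1^{1/4}$, and $\sqrt{\log n}+2h\sqrt{M_1}$ appearing in $\ell$. The upshot is that conditionally on the initial terms being $\le 4$, each $\max_{t\in[0,h]}(\cdot)$ term is at most a constant (say $60$), so that $\ell(\gamma)\le 128$.

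Third, to get the second branch $\tfrac{1}{100}\cdot\tfrac{\ell_0}{sh}=\tfrac{128}{100\,sh}$ of the minimum, which is sharper when $sh$ is large compared to $\ell_0=128$, I would use sharper Gaussian/chi-square tails: $\P(\|s_{\gamma'(0)}\|_2\ge C\sqrt n)\lesssim e^{-\Omega(C^2 n)}$ and analogously $e^{-\Omega(\sqrt{n}\,C^2)}$ for the $\|\cdot\|_4$ norm and $e^{-\Omega(C^2\log n)}=n^{-\Omega(C^2)}$ for the $\|\cdot\|_\infty$ norm. A union bound over the six summands produces a bound decaying faster than any polynomial of $n$, which under the hypothesis $sh=O(n)$ dominates $\ell_0/(sh)$ for $n$ large enough. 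Combining the two regimes by taking the tighter of the two estimates and a union bound over the six terms yields the stated $\tfrac{1}{100}\min(1,\ell_0/(sh))$. The main obstacle is the bookkeeping on constants in the trajectory step, where one must move between local metrics at $x$ and at $\gamma(t)$ and keep factors of $(1+O(h\sqrt{n\log(1/\rho)}))$ from blowing the threshold $128$; the step-size condition $h^2 n\log(1/\rho)\le 10^{-11}$ is precisely calibrated to absorb all of these.
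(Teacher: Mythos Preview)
Your proposal is correct and matches the paper's approach essentially verbatim: the paper does not give a standalone proof of this lemma but explicitly states that one ``can simply reproduce Lemma 54\textasciitilde 59 [of \cite{lee2018convergence}] by replacing $M_1$ by $2M_1$,'' the replacement being justified because the step-size condition $h^2 n\log\frac{1}{\rho}\le 10^{-11}$ forces $\|\nabla f(\gamma(t))\|_{g(\gamma(t))^{-1}}^2\le 2M_1$ along the trajectory. Your outline of the six-term decomposition, the Gaussian concentration for the initial-velocity terms, and the drift analysis for the trajectory terms is precisely the content of those cited lemmas, and your observation that the only new ingredient is keeping the trajectory inside $\overline{\mcal_\rho}$ (so that $2M_1$ replaces $M_1$) is exactly the point the paper makes.
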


As we shortly see in Lemma \ref{lem:l1}, we have $\ell_{1}h=O\Par{h^{2}M_{1}^{1/4}}=O\Par{\frac{1}{\sqrt{n\log\frac{1}{\rho}}}}$,
and thus $\ell_{1}$ can be used in place of $s$ in this lemma.
\begin{lem}
\label{lem:R1} Let $\gamma$ be a Hamiltonian curve starting at $x\in\mcal_{\rho}$
with $\ell(\gamma)\leq\ell_{0}\leq256$ and step size $h$ satisfying
$h^{2}n\log\frac{1}{\rho}\leq10^{-11}$. Then 
\[
\sup_{t\in[0,h]}\norm{\Phi(\gamma,t)}_{F,\gamma(t)}\leq R_{1}
\]
 with $R_{1}=O\Par{\sqrt{M_{1}}}$.
\end{lem}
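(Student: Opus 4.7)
The plan is to reduce to the polytope-specific computation of $R_{1}$ already carried out in \cite{lee2018convergence} (their Lemma~58 in the uniform-density case), transferring it to the exponential density by replacing $M_{1}$ with $2M_{1}$ everywhere along the Hamiltonian trajectory. The first step is to apply Proposition~\ref{prop:simpleObs}: under the step-size hypothesis $h^{2} n \log\tfrac{1}{\rho} \le 10^{-11}$, the entire curve $\gamma(t)$ stays inside $\overline{\mcal_{\rho}}$, so $\norm{\alpha}_{g(\gamma(t))^{-1}}^{2} \le 20 n^{2}\log^{2}\tfrac{1}{\rho} \le 2 M_{1}$ for every $t\in[0,h]$. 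This is the only place where the ``good-region'' refinement enters; with this uniform bound on $\grad f$ along the curve, every pointwise estimate of \cite{lee2018convergence} for the uniform case can be reused with a harmless constant-factor loss.

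Next I would expand $\Phi(\gamma,t)u = D_{u}\mu(\gamma(t)) - R(u,\gamma'(t))\gamma'(t)$ into its two natural pieces. Since $\mu = \grad f + \tfrac{1}{2}\tr[g^{-1}Dg]$, the first piece contributes the Hessian-of-potential term and a third-derivative-of-barrier term; on a Hessian manifold with $g(x)=A_{x}^{\top}A_{x}$ both admit explicit formulas in terms of $S_{\gamma(t)}^{-1}A$. The curvature piece $R(u,\gamma'(t))\gamma'(t)$ likewise has an explicit Hessian-manifold expression in terms of $s_{\gamma'(t)} = A_{\gamma(t)}\gamma'(t)$ and the entries of $A_{\gamma(t)}$. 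Then, drawing $v,w\sim \ncal(0,g(\gamma(t))^{-1})$, I expand $\E \inner{v,\Phi w}_{g(\gamma(t))}^{2}$ by passing to the orthonormal frame $A_{\gamma(t)}^{\top}A_{\gamma(t)}$ and carrying out the Gaussian expectation term by term. Each cross term reduces to a trace of a diagonal matrix weighted by entries of $s_{\gamma'(t)}$ or of $S_{\gamma(t)}^{-1}\alpha$.

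The crucial inputs at this stage are the bounds supplied by $\ell(\gamma)\le \ell_{0}\le 256$: namely $\norm{s_{\gamma'(t)}}_{2}\le O(\sqrt{n}+M_{1}^{1/4})$, $\norm{s_{\gamma'(t)}}_{4}\le O(M_{1}^{1/4})$, and $\norm{s_{\gamma'(t)}}_{\infty}\le O(\sqrt{\log n}+h\sqrt{M_{1}})$, combined with the uniform bound $\norm{\grad f(\gamma(t))}_{g(\gamma(t))^{-1}}\le \sqrt{2 M_{1}}$ established in the first step. Plugging these into the trace sums and using the step-size restriction to dominate $h\sqrt{M_{1}}$ by a constant, every summand contributes at most $O(\sqrt{M_{1}})$. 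Concretely, the $\grad f$--dependent part is bounded by $\norm{\grad f}_{g^{-1}}\cdot \norm{s_{\gamma'(t)}}_{\infty}\lesssim \sqrt{M_{1}}$, the $\tr[g^{-1}Dg]$--derivative part by $n^{1/2}\cdot\norm{s_{\gamma'(t)}}_{2}\lesssim \sqrt{M_{1}}$ (using $n\le M_{1}$), and the curvature part by a combination of $\norm{s_{\gamma'(t)}}_{4}^{2}$ and $\norm{s_{\gamma'(t)}}_{2}\norm{s_{\gamma'(t)}}_{\infty}$ terms, all $O(\sqrt{M_{1}})$.

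The main obstacle I expect is the curvature computation: writing the Riemann tensor on a Hessian manifold produces cubic expressions in $Dg$ and quartic combinations of $s_{\gamma'(t)}$-entries, and one must carefully pair them so that the $\ell$-norms of $s_{\gamma'(t)}$ at indices $2,4,\infty$ line up to give the cancellations described above rather than a naive $O(n^{1/2}\cdot M_{1}^{1/2})$ bound. This is exactly the bookkeeping performed in the original polytope analysis of \cite{lee2018convergence}, and since every hypothesis used there now holds at each $\gamma(t)\in\overline{\mcal_{\rho}}$ with $M_{1}$ replaced by $2M_{1}$, the same calculation goes through verbatim and yields $R_{1}=O(\sqrt{M_{1}})$.
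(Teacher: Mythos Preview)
Your approach is essentially the same as the paper's. The paper does not give a self-contained proof of this lemma; in the paragraph preceding it, the paper observes that under $h^{2}\le \frac{10^{-11}}{n\log(1/\rho)}$ one has $\norm{\grad f(\gamma(t))}_{g(\gamma(t))^{-1}}^{2}\le 2M_{1}$ for every $t\in[0,h]$, after which Lemmas~54--59 of \cite{lee2018convergence} can be reproduced verbatim with $M_{1}$ replaced by $2M_{1}$. This is exactly your plan.

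One minor caveat: Proposition~\ref{prop:simpleObs} as stated assumes $v\in\vgood$ (it invokes Proposition~\ref{prop:dynamics}-1) and only asserts $x_{h},\bx_{h}\in\overline{\mcal_{\rho}}$, whereas Lemma~\ref{lem:R1} assumes only $\ell(\gamma)\le\ell_{0}$ and you need the full curve. The fix is immediate: from $\ell(\gamma)\le 256$ you read off $\norm{\gamma'(t)}_{\gamma(t)}=\norm{s_{\gamma'(t)}}_{2}\le 256(\sqrt{n}+2M_{1}^{1/4})$, and integrating this under the step-size bound (with the usual self-concordance bootstrap to switch local norms) gives $\norm{x-\gamma(t)}_{g(x)}<\tfrac{1}{4}$ directly, so $\norm{\alpha}_{g(\gamma(t))^{-1}}^{2}\le 2M_{1}$ follows exactly as you describe.
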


\begin{lem}
\label{lem:R2} Let $\gamma$ be a Hamiltonian curve starting at $x\in\mcal_{\rho}$
with $\ell(\gamma)\leq\ell_{0}\leq256$ and step size $h$ satisfying
$h^{2}n\log\frac{1}{\rho}\leq10^{-11}$. For any $t\in[0,h]$, any
curve $c(s)$ starting from $\gamma(t)$ and any velocity field $v(c(s))$
on $c(s)$ with $v(c(0))=v(\gamma(t))=\gamma'(t)$, we have that 
\[
\Abs{\frac{d}{ds}\tr\Phi(v(c(s))\bigg|_{s=0}}\leq R_{2}\Par{\norm{\frac{dc}{ds}\bigg|_{s=0}}_{\gamma(t)}+h\norm{D_{s}v\big|_{s=0}}_{\gamma(t)}}
\]
 with $R_{2}=O\Par{\sqrt{nM_{1}}+\sqrt{n}M_{1}h^{2}+\frac{M_{1}^{1/4}}{h}+\frac{\sqrt{n\log n}}{h}}$.
\end{lem}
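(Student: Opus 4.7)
The plan is to adapt the proof of the analogous result (Lemma 58 in \cite{lee2018convergence}) to our refined setting, where smoothness parameters are defined over curves starting from the good region $\mcal_{\rho}$ rather than all of $\mcal$. The critical enabling observation is Proposition~\ref{prop:simpleObs}: under the step-size condition $h^{2} n\log\tfrac{1}{\rho} \leq 10^{-11}$, any Hamiltonian curve starting at $x \in \mcal_{\rho}$ with $\ell(\gamma) \leq \ell_{0}$ stays in $\overline{\mcal_{\rho}}$, so we enjoy the uniform bound $\norm{\grad f(\gamma(t))}_{g(\gamma(t))^{-1}}^{2} \leq 2M_{1}$ for all $t \in [0,h]$, not merely at $t = 0$. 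This lets us reuse the manipulations of the original proof essentially verbatim with $M_{1}$ replaced by $2M_{1}$.

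First I would decompose $\Phi(v)u = D_{u}\mu(\gamma(t)) - R(u,v)v$ into its drift and curvature components, and rewrite the trace using the standard stochastic estimator $\tr \Phi = \E_{w\sim\ncal(0,g^{-1})} \inner{w, \Phi w}_{g}$. Then $\frac{d}{ds}\tr\Phi(v(c(s)))|_{s=0}$ is the expectation of a derivative of a bilinear form, which I would evaluate by interchanging expectation with differentiation and applying the product rule. The drift trace $\tr D_{u}\mu$ splits further into a term involving $\grad f$ and one involving $\log\det g$; similarly $\tr R(\cdot,v)v$ becomes the Ricci curvature contracted with $v \otimes v$. For each of these four scalar functions of $(c(s), v(c(s)))$, the chain rule produces two kinds of contributions: one where the base point $c(s)$ moves (producing factors of $\|\tfrac{dc}{ds}\|_{\gamma(t)}$) and one where the velocity field $v(s)$ changes (producing $\|D_{s}v\|_{\gamma(t)}$).

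Next I would bound each contribution using the self-concordance stability lemmas (Lemma~\ref{lem:sc_facts} and Lemma~\ref{lem:computation_sc}) and the fourth-order self-concordance of $\phi$, together with the auxiliary-function estimates $\|s_{\gamma'(t)}\|_{2}, \|s_{\gamma'(t)}\|_{4}, \|s_{\gamma'(t)}\|_{\infty}$ controlled by $\ell(\gamma)\leq\ell_{0}$. The derivative-of-$\grad f$ terms contribute $O(\sqrt{nM_{1}})$ from a Cauchy--Schwarz against a Gaussian of norm $\sqrt{n}$ coupled with $\|\grad f\|_{g^{-1}} \leq \sqrt{2M_{1}}$; the quadratic cross-term in $h$ from the velocity's second-order drift becomes the $\sqrt{n}M_{1}h^{2}$ piece. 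The curvature pieces use that $\|s_{\gamma'}\|_{\infty} \lesssim \sqrt{\log n} + h\sqrt{M_{1}}$ and $\|s_{\gamma'}\|_{4} \lesssim M_{1}^{1/4}$ along $\gamma$; after dividing by $h$ (arising because the coefficient of $\|D_{s}v\|$ in the hypothesis is $h$, not $1$), these yield the $M_{1}^{1/4}/h$ and $\sqrt{n\log n}/h$ summands. Finally, the definition of $R_2$ requires splitting contributions into those multiplied by $\|\tfrac{dc}{ds}\|_{\gamma(t)}$ versus $h\|D_{s}v\|_{\gamma(t)}$, which aligns naturally with the two branches of the chain rule above.

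The main obstacle will be the bookkeeping: every differentiation of $g^{-1}$, $Dg$, or the curvature tensor $R$ spawns several terms, each of which must be bounded in the correct local norm and matched to either the $\|\tfrac{dc}{ds}\|$ or the $h\|D_{s}v\|$ branch. The delicate part is the curvature trace derivative: one needs $D^{2}g[v,v,\cdot]$ estimates via the highly self-concordance (i.e., fourth derivative) assumption, combined with Cauchy--Schwarz-type reductions that convert $\|s_{\gamma'}\|_{4}^{2}$ or $\|s_{\gamma'}\|_{\infty}\|s_{\gamma'}\|_{2}$ factors into the parameter bounds indicated by $\ell_{0}$. Because Proposition~\ref{prop:simpleObs} already furnishes $\gamma(t) \in \overline{\mcal_{\rho}}$, no new estimate on $\norm{\grad f}_{g^{-1}}$ is needed beyond replacing $M_{1}$ by $2M_{1}$, so the bound on $R_{2}$ transfers from \cite{lee2018convergence} up to constants, provided one audits carefully that each appeal to uniform bounds is now justified on $\overline{\mcal_{\rho}}$ instead of $\mcal$.
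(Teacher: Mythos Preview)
Your proposal is correct and matches the paper's approach exactly: the paper does not supply an independent proof of this lemma but states that, since the step-size assumption guarantees $\norm{\grad f(\gamma(t))}_{g(\gamma(t))^{-1}}^{2}\leq 2M_{1}$ for all $t\in[0,h]$, one can simply reproduce the corresponding lemma (Lemma~58) from \cite{lee2018convergence} with $M_{1}$ replaced by $2M_{1}$. One minor imprecision: the statement of Proposition~\ref{prop:simpleObs} concerns only the endpoint $x_{h}$ and $\bar{x}_{h}$, not $\gamma(t)$ for all $t$; the claim you need is the one the paper states in the paragraph immediately preceding Lemma~\ref{lem:l0}, whose justification is the same self-concordance argument used in the proof of Proposition~\ref{prop:simpleObs}.
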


\begin{lem}
\label{lem:R3} Let $\gamma$ be a Hamiltonian curve starting at $x\in\mcal_{\rho}$
with $\ell(\gamma)\leq\ell_{0}\leq256$ and step size $h$ satisfying
$h^{2}n\log\frac{1}{\rho}\leq10^{-11}$. Let $\zeta(t)$ be the parallel
transport of the vector $\gamma'(0)$ to $\gamma(t)$. Then 
\[
\sup_{t\in[0,h]}\norm{\Phi(\gamma,t)\zeta(t)}_{\gamma(t)}\leq R_{3}
\]
 with $R_{3}=O\Par{\sqrt{M_{1}\log n}+M_{1}^{3/4}n^{1/4}h}$.
\end{lem}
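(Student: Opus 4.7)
The plan is to bound $\norm{\Phi(\gamma,t)\zeta(t)}_{\gamma(t)}$ via the triangle inequality
\[
\norm{\Phi(\gamma,t)\zeta(t)}_{\gamma(t)}\le\norm{D_{\zeta(t)}\mu(\gamma(t))}_{\gamma(t)}+\norm{R(\zeta(t),\gamma'(t))\gamma'(t)}_{\gamma(t)},
\]
and estimate the drift and curvature pieces separately in terms of the slack-vector norms of $\gamma'(t)$, $\gamma'(0)$ and $\zeta(t)$. From the definition of the auxiliary function $\ell$ and the hypothesis $\ell(\gamma)\le\ell_0\le 256$ we read off the controls $\norm{s_{\gamma'(t)}}_2\lesssim\sqrt{n}+M_1^{1/4}$, $\norm{s_{\gamma'(t)}}_4\lesssim M_1^{1/4}$, $\norm{s_{\gamma'(t)}}_\infty\lesssim\sqrt{\log n}+h\sqrt{M_1}$, and similarly at time $0$. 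Together with Proposition~\ref{prop:simpleObs} and the step-size hypothesis $h^{2}n\log(1/\rho)\le 10^{-11}$, these bounds hold uniformly along $\gamma$ with $M_1$ replaced by $2M_1$, exactly as in the paragraph preceding Lemma~\ref{lem:l0}.

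Once these slack estimates are in hand, the curvature term is bounded by invoking the explicit expression for the Riemann tensor of the Hessian manifold induced by the logarithmic barrier, which writes $\inner{u,R(\zeta,\gamma')\gamma'}_{g(\gamma(t))}$ as a polynomial combination of entries of $s_u,s_\zeta,s_{\gamma'}$, and H\"older in $\ell^\infty\cdot\ell^4\cdot\ell^4$ (and $\ell^\infty\cdot\ell^\infty\cdot\ell^2$) produces a $O\bigl(\sqrt{M_1\log n}+M_1^{3/4}n^{1/4}h\bigr)$ contribution. For the drift term $D_{\zeta(t)}\mu(\gamma(t))$, I would expand $\mu$ into its pieces $\nabla f$, $\tfrac12\tr(g^{-1}Dg)$ and the Hamiltonian kinetic contribution, then differentiate each in the parallel-transported direction $\zeta(t)$. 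The dominant piece involves $\norm{\alpha}_{g(\gamma(t))^{-1}}\lesssim\sqrt{M_1}$ (valid on $\overline{\mcal_\rho}$ by Proposition~\ref{prop:simpleObs}) paired via H\"older against $\norm{s_{\zeta(t)}}_\infty$, matching the order of the curvature piece; the remaining trace/log-det contributions are handled by Lemma~\ref{lem:sc_facts} and self-concordance.

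The main technical obstacle is that while $\norm{\zeta(t)}_{\gamma(t)}$ is preserved exactly by parallel transport, the per-coordinate slack $s_{\zeta(t)}=A_{\gamma(t)}\zeta(t)$ is not, yet our H\"older estimates need an $\ell^\infty$-type bound on $s_{\zeta(t)}$. This is precisely the content of Lemma~59 in \cite{lee2018convergence}: one writes the parallel-transport ODE $D_t\zeta=0$ in local slack coordinates of the Hessian manifold, and a Gronwall argument driven by the $\ell$-regularity bounds on $s_{\gamma'(t)}$ shows that $\norm{s_{\zeta(t)}}_\infty$ differs from $\norm{s_{\gamma'(0)}}_\infty$ only by a $1+o(1)$ factor over $[0,h]$ under the step-size condition. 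The present lemma then follows by verbatim reproduction of that argument after the uniform substitution $M_1\mapsto 2M_1$, which is legitimized by $h^{2}n\log(1/\rho)\le 10^{-11}$ together with Proposition~\ref{prop:simpleObs}.
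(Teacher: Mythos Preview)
Your proposal is correct and matches the paper's approach: the paper gives no standalone proof here but simply notes that Lemmas~54--59 of \cite{lee2018convergence} go through verbatim once $M_1$ is replaced by $2M_1$, the substitution being justified by Proposition~\ref{prop:simpleObs} and the step-size hypothesis $h^2 n\log(1/\rho)\le 10^{-11}$. Your outline (split $\Phi\zeta$ into drift and curvature, control via slack norms from the $\ell$-bounds, and handle $\|s_{\zeta(t)}\|_\infty$ through the parallel-transport ODE and Gronwall) is exactly the content of those lemmas; one small slip is that $\mu(x)=-g(x)^{-1}\nabla f(x)-\tfrac12 g(x)^{-1}\tr[g(x)^{-1}Dg(x)]$ has no separate ``kinetic'' piece, so the drift term has only the $\nabla f$ and trace contributions to differentiate.
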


\begin{lem}
\label{lem:l1} Let $\gamma_{s}$ be a Hamiltonian variation starting
at $x\in\mcal_{\rho}$ with $\ell(\gamma_{s})\leq\ell_{0}\leq256$
and step size $h$ satisfying $h^{2}n\log\frac{1}{\rho}\leq10^{-11}$.
Then 
\[
\Abs{\frac{d}{ds}\ell(\gamma_{s})}\leq O\Par{M_{1}^{1/4}h+\frac{1}{h\sqrt{\log n}}}\Par{\norm{\frac{d}{ds}\gamma_{s}(0)}_{\gamma_{s}(0)}+h\norm{D_{s}\gamma_{s}'(0)}_{\gamma_{s}(0)}},
\]
and thus $\ell_{1}=O\Par{M_{1}^{1/4}h+\frac{1}{h\sqrt{\log n}}}$.
\end{lem}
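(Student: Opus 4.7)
The plan is to bound the derivatives of the six summands of $\ell$ term-by-term. First I would observe that for each fixed $t$ the function $s \mapsto \|s_{\gamma_s'(t)}\|_p$ is (piecewise) smooth with
\[
\Big|\tfrac{d}{ds}\|s_{\gamma_s'(t)}\|_p\Big| \leq \Big\|\tfrac{d}{ds}s_{\gamma_s'(t)}\Big\|_p,
\]
and the maximum over $t \in [0,h]$ adds at most an overall supremum; standard envelope arguments handle the non-smooth points of the max. Thus everything reduces to estimating $\|\frac{d}{ds}s_{\gamma_s'(t)}\|_p$ for $p \in \{2,4,\infty\}$.

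Next I would decompose $s_{\gamma_s'(t)} = A_{\gamma_s(t)}\gamma_s'(t) = S_{\gamma_s(t)}^{-1} A\, \gamma_s'(t)$, so that
\[
\tfrac{d}{ds}s_{\gamma_s'(t)} \;=\; -\,S_{\gamma_s(t)}^{-1}\operatorname{Diag}(A\,J(t))\,S_{\gamma_s(t)}^{-1}A\,\gamma_s'(t) + A_{\gamma_s(t)}\,K(t),
\]
where $J(t)=\partial_s\gamma_s(t)|_{s=0}$ is the variation field and $K(t)=D_s\gamma_s'(t)|_{s=0}$ its covariant velocity derivative. The key is then to bound $J(t),K(t)$ along $\gamma_s$ in the local norm by the initial data $\|J(0)\|_x,\|K(0)\|_x$. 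For this I would reuse the matrix-ODE theory that underlies Lemma~\ref{lem:invertHam} and Lemma~\ref{lem:localUnique}: the pair $(J,K)$ satisfies a linear ODE whose coefficients are controlled by $\Phi(\gamma,t)$, and the assumption $h^2 R_1 \leq 1$ yields $\|J(t)\|_{\gamma_s(t)}\lesssim \|J(0)\|_x + h\|K(0)\|_x$ and $\|K(t)\|_{\gamma_s(t)}\lesssim \|K(0)\|_x + R_1 h\,\|J(0)\|_x$. The first term on the right of the decomposition can be rewritten as $-\operatorname{Diag}(s_{J(t)})\,s_{\gamma_s'(t)}$ (pointwise product), and since $\ell(\gamma_s)\le \ell_0\le 256$ gives componentwise control of $s_{\gamma_s'(t)}$ via its $p$-norms, H\"older then turns the $\|\cdot\|_p$ bound on the product into a bound by $\|s_{J(t)}\|_\infty$ (or $\|s_{J(t)}\|_2$) times the appropriate norm of $s_{\gamma_s'(t)}$.

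Then I would compute the coefficient of each of the three running maxima. For the $\|\cdot\|_2/(\sqrt n + 2M_1^{1/4})$ and $\|\cdot\|_4/(2M_1^{1/4})$ summands, using $\|s_{\gamma_s'(t)}\|_2 \leq \ell_0(\sqrt n + 2M_1^{1/4})$, $\|s_{\gamma_s'(t)}\|_4 \leq 2\ell_0 M_1^{1/4}$, and the fact that $\|A_{\gamma_s(t)} K(t)\|_p$ is dominated by $\|K(t)\|_{\gamma_s(t)}$ in the $2$-norm and by $\|K(t)\|_{\gamma_s(t)}\cdot(\sqrt n)^{1/2-1/p}$-type factors for higher $p$, these contribute $O(M_1^{1/4}h)\cdot(\|J(0)\|_x + h\|K(0)\|_x)$. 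The $\|\cdot\|_\infty/(\sqrt{\log n} + 2h\sqrt{M_1})$ summand is the delicate one: here I would bound $\|s_{J(t)}\|_\infty$ using Lemma~\ref{lem:sc_facts} (so that $A$-scaled variations at $\gamma_s(t)$ are within constants of those at $x$) and use the $\sqrt{\log n}$ denominator to absorb the factor, producing the $\frac{1}{h\sqrt{\log n}}$ term. Finally, the three boundary terms involving $s_{\gamma_s'(0)}$ contribute only through $K(0)=D_s\gamma_s'(0)$ and are handled identically.

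The main obstacle will be the $\|\cdot\|_\infty$ term: tracking how the perturbation of the slacks $s_{\gamma_s(t)}$ propagates to the coordinatewise maximum of $s_{\gamma_s'(t)}$ requires passing through a union-bound style estimate in the $A$-coordinates rather than the Riemannian ones, and keeping the $(\sqrt{\log n} + 2h\sqrt{M_1})$ normalization in mind so that the $M_1^{1/4}h$ and $1/(h\sqrt{\log n})$ coefficients balance correctly. Since we now work inside $\mcal_\rho$, I must also verify at each step that the self-concordance stability (Lemma~\ref{lem:sc_facts}, Lemma~\ref{lem:computation_sc}) applies uniformly along $\gamma_s$, which is guaranteed by the step-size hypothesis $h^2 n\log(1/\rho)\le 10^{-11}$ combined with Proposition~\ref{prop:simpleObs}. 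The rest of the computation is the direct polytope analogue of Lemma~59 in \cite{lee2018convergence} with $M_1$ replaced by $2M_1$ throughout.
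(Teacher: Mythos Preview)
Your proposal is correct and takes essentially the same approach as the paper: the paper does not give an independent proof of this lemma but simply notes that, since the step-size hypothesis $h^2 n\log(1/\rho)\le 10^{-11}$ guarantees $\norm{\grad f(\gamma(t))}_{g(\gamma(t))^{-1}}^2\le 2M_1$ along the curve, one can reproduce Lemma~59 of \cite{lee2018convergence} verbatim with $M_1$ replaced by $2M_1$. Your sketch spells out the mechanics of that reproduction (differentiating each $\|s_{\gamma_s'(t)}\|_p$ summand, propagating the Jacobi fields via the matrix-ODE theory, and H\"older-combining with the $\ell_0$ control) and arrives at exactly the same reference, so there is nothing to add.
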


For $\bar{\ell}_{0},\bar{\ell}_{1},\bar{R}_{1}$, we can repeat the
arguments so far for regular Hamiltonian curves starting from $\overline{\mcal_{\rho}}$,
in which $\norm{\alpha}_{g(\gamma(t))^{-1}}^{2}$ is within a constant
factor of $M_{1}$. Therefore, these three parameters also have the
same bounds in Lemma \ref{lem:l0}, \ref{lem:R1} and \ref{lem:l1}
up to a multiplicative constant factor.

\subsection{Convergence rate of RHMC with numerical integrators}

Now that we estimated all the parameters, we can put them together
and state the mixing time of RHMC discretized by a sensitive  numerical integrator.

\thmdiscPoly*
\begin{proof}
We first note that $\vgood=\Brace{v\in\Rn:\bar{\ell}\Par{\ham_{x,t}\Par{g(x)^{-1}v}}\leq128}$,
the measure of which is at least $0.99$ by the definition of $\bar{\ell}_{0}$.
We check the conditions on the step size in Theorem \ref{thm:discGen}.
Let $\rho=\frac{\veps}{2\Lambda}$. We first bound $M_{1},M_{1}^{*}$
by $20n^{2}\log^{2}\frac{1}{\rho}$ and set $M_{2}$ to $0$. Substituting
these to Lemma \ref{lem:l1}, \ref{lem:R1}, \ref{lem:R2} and \ref{lem:R3},
we have 
\begin{align*}
\ell_{1} & \lesssim h\sqrt{n\log\frac{1}{\rho}}+\frac{1}{h},\\
R_{1} & \lesssim n\log\frac{1}{\rho},\\
R_{2} & \lesssim n^{3/2}\log\frac{1}{\rho}+h^{2}n^{5/2}\log^{2}\frac{1}{\rho}+\frac{\sqrt{n\log\frac{1}{\rho}}}{h}+\frac{\sqrt{n\log n}}{h},\\
R_{3} & \lesssim n\sqrt{\log n}\log\frac{1}{\rho}+hn^{7/4}\log^{3/2}\frac{1}{\rho}.
\end{align*}
Due to $h\leq\frac{10^{-20}}{n^{7/12}\log^{1/2}\frac{1}{\rho}}$,
direct computation leads to $h^{2}\max\Par{R_{1},\bar{R}_{1}},h^{5}R_{1}^{2}\ell_{1}/\ell_{0},h^{3}R_{2}+h^{2}R_{3}\lesssim1$
and $h\lesssim\min\Par{1,\frac{\ell_{0}}{\ell_{1}}}$. The rest of
conditions on the step size, $hC_{x}(x,v)\leq\frac{10^{-20}}{\sqrt{n}},\,h^{2}C_{x}(x,v)\leq\frac{10^{-10}}{n\log\frac{1}{\rho}}\text{ and }h^{2}C_{v}(x,v)\leq\frac{10^{-10}}{\sqrt{n\log\frac{1}{\rho}}}$,
guarantee that 
\begin{equation*}
hC_{x}(x,v)\leq\frac{10^{-20}}{\sqrt{n}},\,h^{2}C_{x}(x,v)\leq10^{-10}\min\Par{1,\frac{\bar{\ell}_{0}}{\bar{\ell}_{1}},\frac{1}{n+\sqrt{M_{1}}+\sqrt{M_{1}^{*}}}},\ h^{2}C_{v}(x,v)\leq\frac{10^{-10}}{\sqrt{n+\sqrt{M_{1}}}}.
\end{equation*}
As the isoperimetry is lower bounded by $\frac{1}{\sqrt{m}}$, Theorem~\ref{thm:discGen} results in the mixing time of $T=O\Par{mh^{-2}\log\frac{\Lambda}{\veps}}$
that ensures $\dtv(\pi_{T},\pi)\leq\veps$.
\end{proof}
By setting $C_{x},C_{v}$ to $0$, we can obtain the mixing time of
the ideal RHMC for exponential densities in polytopes.

\coridealPoly*

\subsubsection{Implicit midpoint method}

In the polytope setting, we can explicitly compute $C_{x}(x,v)$ and
$C_{v}(x,v)$ of IMM in terms of $n$ and $\rho$.
\begin{lem}
\label{lem:leapIMMsecondfinal} For $x\in\mcal_{\rho}$ and $v\in\vgood$,
let $h$ be step size of IMM with $h^{2}n\log\frac{1}{\rho}\leq10^{-11}$.
Then 
\[
C_{x}(x,v)=O\Par{n\log\frac{1}{\rho}},\ C_{v}(x,v)=O\Par{n^{3/2}\log^{3/2}\frac{1}{\rho}}.
\]
\end{lem}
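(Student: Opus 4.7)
The plan is to obtain this bound by direct substitution of the polytope-specific parameter estimates into the general second-order bounds already proven for IMM in Lemma~\ref{lem:LeapIMM-second}. Recall that the general bounds there state that, under the step size hypothesis $h^2(n+\sqrt{M_1}) \le 10^{-10}$, one has
\[
C_x(x,v) = O\bigl(n + \sqrt{M_1}\bigr),\qquad C_v(x,v) = O\Bigl(\sqrt{n+\sqrt{M_1}}\,\bigl(n+\sqrt{M_1}+M_2^{*}\bigr)\Bigr).
\]
So the task reduces to verifying (i) that the hypothesis of Lemma~\ref{lem:LeapIMM-second} holds under the new step size assumption $h^{2}n\log\frac{1}{\rho}\le 10^{-11}$, and (ii) that in the polytope setting with $f(x)=\alpha^{\top}x$ and $x\in\mcal_{\rho}$, the quantities $M_1$ and $M_2^{*}$ have the desired bounds.

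For step (i), the definition $\mcal_{\rho}=\{x:\norm{\alpha}_{g(x)^{-1}}^{2}\le 10n^{2}\log^{2}(1/\rho)\}$ and $\grad f(x)=\alpha$ immediately give $\norm{\grad f(x)}_{g(x)^{-1}}^{2}\le 10n^{2}\log^{2}(1/\rho)$ at the starting point. To upgrade this to a uniform bound along the Hamiltonian curve, I will invoke Proposition~\ref{prop:simpleObs}: under $h^{2}n\log(1/\rho)\lesssim 10^{-11}$, the curve stays in the slightly enlarged region $\overline{\mcal_{\rho}}$, so $\norm{\grad f(\gamma(t))}_{g(\gamma(t))^{-1}}^{2}\le 20n^{2}\log^{2}(1/\rho)$ throughout. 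Thus $M_{1}=O(n^{2}\log^{2}(1/\rho))$ and $\sqrt{M_{1}}=O(n\log\frac{1}{\rho})$, which makes $h^{2}(n+\sqrt{M_{1}}) = O(h^{2}n\log\frac{1}{\rho}) \le 10^{-10}$, confirming the hypothesis of Lemma~\ref{lem:LeapIMM-second}.

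For step (ii), since $f(x)=\alpha^{\top}x$ is linear, $\grad f$ is the constant vector $\alpha$, so $\grad f(\gamma(t))-\grad f(\bx_{j})=0$ and hence $M_{2}^{*}=0$ (and likewise $M_{2}=0$). Plugging $\sqrt{M_{1}}=O(n\log\frac{1}{\rho})$ and $M_{2}^{*}=0$ into the general bounds, I obtain
\[
C_x(x,v) = O\bigl(n+n\log\tfrac{1}{\rho}\bigr) = O\bigl(n\log\tfrac{1}{\rho}\bigr),
\]
\[
C_v(x,v) = O\Bigl(\sqrt{n\log\tfrac{1}{\rho}}\cdot n\log\tfrac{1}{\rho}\Bigr) = O\bigl(n^{3/2}\log^{3/2}\tfrac{1}{\rho}\bigr),
\]
as claimed.

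I do not expect any serious obstacle here, since the heavy lifting was already performed in Lemma~\ref{lem:LeapIMM-second}; the only subtlety is making sure the definition of $M_{1}$ (which requires a bound of $\norm{\grad f(\gamma(t))}_{g(x)^{-1}}^{2}$ at the initial metric $g(x)$ along the whole curve) is honestly satisfied, and this is exactly what Proposition~\ref{prop:simpleObs} together with the self-concordance stability in Lemma~\ref{lem:computation_sc} yield. Linearity of $f$ then trivially kills $M_{2}^{*}$, and the claimed bounds follow by arithmetic.
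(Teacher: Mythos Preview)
Your proposal is correct and follows essentially the same approach as the paper: invoke Lemma~\ref{lem:LeapIMM-second} and substitute the polytope-specific values $\sqrt{M_{1}}=O(n\log\frac{1}{\rho})$ and $M_{2}^{*}=0$. One minor simplification: since $\nabla f\equiv\alpha$ is constant and the norm in the definition of $M_{1}$ is taken at the \emph{initial} metric $g(x)^{-1}$, the bound $\norm{\nabla f(\gamma(t))}_{g(x)^{-1}}^{2}=\norm{\alpha}_{g(x)^{-1}}^{2}\leq 10n^{2}\log^{2}\frac{1}{\rho}$ holds trivially for all $t$ from $x\in\mcal_{\rho}$, so the detour through Proposition~\ref{prop:simpleObs} is not actually needed here.
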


\begin{proof}
By Lemma \ref{lem:LeapIMM-second}-2, it follows that 
\begin{align*}
C_{x}(x,v) & =O\Par{n+\sqrt{M_{1}}}\lesssim n+n\log\frac{1}{\rho}=O\Par{n\log\frac{1}{\rho}}.
\end{align*}
For $C_{v}(x,v)$, we first note that $M_{2}^{*}=0$ due to $\grad^{2}f(x)=0$.
Thus by Lemma \ref{lem:LeapIMM-second}-3, we have 
\begin{align*}
C_{v}(x,v) & \lesssim\Par{n+\sqrt{M_{1}}}^{3/2}=O\Par{n^{3/2}\log^{3/2}\frac{1}{\rho}}.
\end{align*}
\end{proof}
We can also specify a sufficient condition on the step size for the
sensitivity of IMM in the polytope setting.
\begin{lem}
\label{lem:leapIMMstabilityfinal} For $x\in\mcal_{\rho}$, $v\in\vgood$
and step size $h$ with $h^{2}n^{2}\log\frac{1}{\rho}\leq10^{-10}$,
IMM is sensitive at $(x,v)$.
\end{lem}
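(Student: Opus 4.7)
My plan is to derive the polytope sensitivity of IMM as a direct specialization of the general sensitivity result, Lemma~\ref{lem:IMMstable}. Three things must be checked: the convexity hypothesis on $\log\det g(x)$, that the abstract parameters specialize correctly for the log-barrier of a polytope, and that the step-size bound in the statement implies every step-size hypothesis invoked by Lemma~\ref{lem:IMMstable} (together with those from Proposition~\ref{prop:dynamics} that it inherits).

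For the convexity hypothesis, the logarithmic barrier gives $g(x)=A_x^\top A_x = A^\top S_x^{-2} A$. By the Cauchy--Binet formula,
\[
\det g(x) \;=\; \sum_{|I|=n} \det(A_I)^2 \prod_{i\in I}(a_i^\top x - b_i)^{-2}.
\]
Each factor $(a_i^\top x - b_i)^{-2}$ is log-convex in $x$ (its log is $-2\log(a_i^\top x - b_i)$, which is convex), so each summand is log-convex as a product of log-convex functions; and log-convexity is preserved under finite sums, so $\det g(x)$ itself is log-convex, i.e.\ $\log\det g(x)$ is convex in $x$. This is the hypothesis required by Lemma~\ref{lem:IMMstable}.

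Next, I specialize the parameters. On $\mcal_\rho$ the definition gives $M_1 \le 10n^2\log^2(1/\rho)$, hence $n+\sqrt{M_1}=O(n\log(1/\rho))$. Lemma~\ref{lem:R1} then yields $\bar R_1=O(n\log(1/\rho))$ (its argument goes through on $\overline{\mcal_\rho}$ where $M_1$ changes only by a constant factor). Substituting into Lemma~\ref{lem:IMMstable}, the first condition $h^2\le 10^{-10}/(n+\sqrt{M_1})^2$ becomes $h^2 n^2\log^2(1/\rho)\lesssim 10^{-10}$, and $h^2\le 10^{-5}/(\sqrt{n}\bar R_1)$ becomes $h^2 n^{3/2}\log(1/\rho)\lesssim 10^{-5}$. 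The step-size conditions of Proposition~\ref{prop:dynamics} reduce analogously to $h^2(n+\sqrt{M_1}+M_2)\lesssim 10^{-10}$ (with $M_2=0$ since $f$ is linear), $h^2 \bar R_1\lesssim 10^{-10}$, and smallness of $hC_x$, $h^2 C_x$, $h^2 C_v$, using the polytope estimates $C_x=O(n\log(1/\rho))$ and $C_v=O(n^{3/2}\log^{3/2}(1/\rho))$ from Lemma~\ref{lem:leapIMMsecondfinal}.

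The hypothesis $h^2 n^2\log(1/\rho)\le 10^{-10}$ (with absolute constants and logarithmic factors absorbed into the threshold) then implies all of the above requirements, and Lemma~\ref{lem:IMMstable} concludes sensitivity at $(x,v)$. I expect no serious obstacle: the proof is essentially substitution, and the only genuinely new input is the Cauchy--Binet log-convexity argument for $\det g$, which is a standard property of the log-barrier.
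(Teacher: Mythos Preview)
Your proposal is correct and follows the same overall reduction as the paper: verify that $\log\det g(x)$ is convex and then invoke Lemma~\ref{lem:IMMstable}. The paper's proof is a two-liner that cites Vaidya's result on the convexity of the volumetric barrier $\log\det\nabla^2\phi(x)$, whereas you give a self-contained elementary argument via the Cauchy--Binet expansion of $\det g(x)$ as a sum of log-convex terms; both are valid, and your argument has the advantage of not relying on an external reference. You are also more explicit than the paper in checking that the stated step-size bound subsumes the hypotheses of Lemma~\ref{lem:IMMstable} and Proposition~\ref{prop:dynamics}; your observation that an extra $\log(1/\rho)$ factor appears when specializing $(n+\sqrt{M_1})^2$ is accurate, but the paper absorbs such factors throughout and the downstream step-size choice in Corollary~\ref{cor:immPoly} accommodates it in any case.
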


\begin{proof}
Note that $\log\det g(x)$ is convex in $\mcal$, since the volumetric
barrier defined by $\log\det\hess\phi(x)$ is convex in $x$ (Lemma
1\textasciitilde 3 in \cite{vaidya1996new}). Thus, the claim follows
from Lemma \ref{lem:IMMstable}.
\end{proof}
Substituting the estimates of $C_{x}(x,v)$ and $C_{v}(x,v)$ as well
as the sufficient condition for the sensitivity to Theorem \ref{thm:discPoly},
we prove that the mixing rate of RHMC discretized by IMM for an exponential
density in a polytope is independent of the condition number and $\norm{\alpha}_{2}$.

\corimmPoly*
\begin{proof}
We can check that the step size $h=O\Par{\frac{1}{n^{3/2}\log\frac{\Lambda}{\veps}}}$
satisfies all the conditions in Theorem \ref{thm:discPoly}. Hence,
it suffices to choose $T=O\Par{mn^{3}\log^{3}\frac{\Lambda}{\veps}}$
to obtain $\dtv(\pi_{T},\pi)\leq\veps$.
\end{proof}

\subsubsection{Generalized Leapfrog method}

We now compute the mixing rate of RHMC discretized by LM. For LM,
we have the same results on $C_{x}(x,v)$ and $C_{v}(x,v)$ as IMM.
\begin{lem}
\label{lem:leapsecondfinal}For $x\in\mcal_{\rho}$ and $v\in\vgood$,
let $h$ be step size of LM with $h^{2}n\log\frac{1}{\rho}\leq10^{-10}$.
Then 
\[
C_{x}(x,v)=O\Par{n\log\frac{1}{\rho}},\ C_{v}(x,v)=O\Par{n^{3/2}\log^{3/2}\frac{1}{\rho}}.
\]
\end{lem}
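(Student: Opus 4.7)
The plan is to apply Lemma~\ref{lem:secondLeap} directly, mirroring the structure of the proof of Lemma~\ref{lem:leapIMMsecondfinal} for IMM. The key inputs are the parameter bounds in the polytope setting: for the target $f(x) = \alpha^{\top}x$ on $\mcal_{\rho}$, we have $M_1 \leq 10n^2\log^2\frac{1}{\rho}$ (from the definition of $\mcal_{\rho}$ bounding $\norm{\alpha}_{g(x)^{-1}}^2$), and $M_2^{*} = 0$ since $\grad^2 f \equiv 0$ (the gradient is constant, so differences $\grad f(\gamma(t)) - \grad f(\bar{x}_j)$ vanish identically).

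First I would verify that the step-size hypothesis $h^2 n\log\frac{1}{\rho} \leq 10^{-10}$ implies the hypothesis of Lemma~\ref{lem:secondLeap}, namely $h^2(n + \sqrt{M_1}) \leq 10^{-10}$. Since $\sqrt{M_1} \lesssim n\log\frac{1}{\rho}$, we have $n + \sqrt{M_1} = O(n\log\frac{1}{\rho})$, and the given bound on $h$ absorbs the multiplicative constants up to an acceptable slackness (here one may tighten the $10^{-10}$ to $10^{-11}$ if needed; this is harmless since the constants in $C_x, C_v$ are also absorbed into the $O(\cdot)$).

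Next, substituting into Lemma~\ref{lem:secondLeap}(2):
\[
C_x(x,v) = O\Par{n + \sqrt{M_1}} = O\Par{n + n\log\tfrac{1}{\rho}} = O\Par{n\log\tfrac{1}{\rho}}.
\]
For $C_v$, using $M_2^{*} = 0$ in Lemma~\ref{lem:secondLeap}(3):
\[
C_v(x,v) = O\Par{\sqrt{n + \sqrt{M_1}}\,\Par{n + \sqrt{M_1} + M_2^{*}}} = O\Par{\Par{n\log\tfrac{1}{\rho}}^{3/2}} = O\Par{n^{3/2}\log^{3/2}\tfrac{1}{\rho}},
\]
which is exactly the claimed bound.

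There is essentially no obstacle here, since the heavy lifting was already done in Lemma~\ref{lem:secondLeap} and in the estimate of $M_1$ for the logarithmic-barrier Hessian manifold. The proof is a two-line substitution, in complete parallel with Lemma~\ref{lem:leapIMMsecondfinal}. The only minor subtlety worth noting explicitly is the vanishing of $M_2^{*}$ for linear $f$, which is what prevents any extra term of the form $\sqrt{n+\sqrt{M_1}}\,M_2^{*}$ from inflating $C_v$; this is the same reason the IMM bound came out to the same order and keeps the final mixing time in Corollary~\ref{cor:leapPoly} matched with Corollary~\ref{cor:immPoly}.
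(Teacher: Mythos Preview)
Your proposal is correct and follows exactly the approach the paper takes: the paper does not give a separate proof of Lemma~\ref{lem:leapsecondfinal} but simply notes that the LM bounds are the same as those for IMM, which amounts to substituting $M_1 \leq 10n^2\log^2\frac{1}{\rho}$ and $M_2^{*}=0$ into Lemma~\ref{lem:secondLeap} just as Lemma~\ref{lem:leapIMMsecondfinal} does with Lemma~\ref{lem:LeapIMM-second}. Your remark on the step-size constant (possibly needing $10^{-11}$ rather than $10^{-10}$ to absorb the factor in $n+\sqrt{M_1}\lesssim n\log\frac{1}{\rho}$) is a fair observation that the paper glosses over.
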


For the sensitivity, LM requires a slightly stronger condition on step
size compared to IMM, which follows from Lemma \ref{lem:stableLeap}.
\begin{lem}
\label{lem:leapstabilityfinal} For $x\in\mcal_{\rho}$, $v\in\vgood$
and step size $h$ with $h^{2}n^{3}\log\frac{1}{\rho}\leq10^{-20}$,
LM is sensitive at $(x,v)$.
\end{lem}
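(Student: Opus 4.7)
The plan is to deduce the lemma directly from Lemma~\ref{lem:stableLeap} by plugging in the polytope-specific values of $M_1$, $\bar{R}_1$, $C_x$, $\ell_0/\ell_1$, and $\bar\ell_0/\bar\ell_1$, and checking that the assumed budget $h^2 n^3 \log\frac{1}{\rho} \leq 10^{-20}$ dominates every step-size hypothesis appearing there (both the two displayed conditions in Lemma~\ref{lem:stableLeap} and the auxiliary conditions imported from Proposition~\ref{prop:dynamics}).

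First I would substitute the parameter estimates already established in this section: on $\mcal_\rho$ we have $M_1 \leq 10 n^2 \log^2\frac{1}{\rho}$ and $M_2 = M_2^\ast = 0$, so $n + \sqrt{M_1} = O\!\left(n \log\frac{1}{\rho}\right)$; by Lemma~\ref{lem:R1} (applied on $\overline{\mcal_\rho}$), $\bar{R}_1 = O(\sqrt{M_1}) = O\!\left(n \log\frac{1}{\rho}\right)$; and by Lemma~\ref{lem:leapsecondfinal}, $C_x(x,v) = O\!\left(n \log\frac{1}{\rho}\right)$ and $C_v(x,v) = O\!\left(n^{3/2} \log^{3/2}\frac{1}{\rho}\right)$. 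Plugging these into the two explicit conditions of Lemma~\ref{lem:stableLeap} yields
\[
h^2 \leq \min\!\left(\frac{10^{-10}}{n^2 (n+\sqrt{M_1})},\,\frac{10^{-5}}{\sqrt{n}\,\bar{R}_1}\right) \;=\; \min\!\left(\Omega\!\left(\frac{1}{n^3 \log\frac{1}{\rho}}\right),\,\Omega\!\left(\frac{1}{n^{3/2} \log\frac{1}{\rho}}\right)\right),
\]
which is implied by $h^2 n^3 \log\frac{1}{\rho} \leq 10^{-20}$.

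Next I would verify the Proposition~\ref{prop:dynamics} conditions in the same way. The two hypotheses $h^2 \leq 10^{-10}/(n+\sqrt{M_1}+M_2)$ and $h^2 \leq 10^{-10}/\bar{R}_1$ both become $h^2 n \log\frac{1}{\rho} \lesssim 10^{-10}$, which is weaker than the assumed budget. For the two sensitivity-tied conditions, $C_x h \leq \sqrt{n}/10^{10}$ reduces to $h \lesssim 1/(\sqrt{n}\log\frac{1}{\rho})$, and using $\bar\ell_1 h = O(h^2 M_1^{1/4}) = O(1)$ together with $\bar\ell_0 = \Theta(1)$ from Lemma~\ref{lem:l0} (applied on $\overline{\mcal_\rho}$), the condition $C_x h^2 \leq 10^{-10}\min(1, \bar\ell_0/\bar\ell_1)$ also reduces to $h^2 n \log\frac{1}{\rho} \lesssim 1$; both are easily subsumed.

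I expect no real obstacle here since this is purely a parameter-substitution bookkeeping step, once one has the machinery of Lemma~\ref{lem:stableLeap} and the polytope-specific estimates in hand. The only mild care required is to remember that $\bar{R}_1$ is to be bounded on the slightly enlarged region $\overline{\mcal_\rho}$ from Proposition~\ref{prop:simpleObs}, on which $\norm{\alpha}_{g(x)^{-1}}^2$ is still within a constant factor of $M_1$, so the bounds on $\bar{R}_1, \bar\ell_0, \bar\ell_1$ coincide with those of $R_1, \ell_0, \ell_1$ up to constants. With that in place the proof collapses to a one-line citation of Lemma~\ref{lem:stableLeap}.
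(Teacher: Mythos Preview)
Your proposal is correct and takes essentially the same approach as the paper: the paper's entire proof is the sentence ``which follows from Lemma~\ref{lem:stableLeap},'' and you simply unpack that citation by substituting the polytope parameter estimates and checking each hypothesis. One minor slip: you write $\bar\ell_1 h = O(h^2 M_1^{1/4})$, but Lemma~\ref{lem:l1} gives $\bar\ell_1 = O\bigl(M_1^{1/4}h + \tfrac{1}{h\sqrt{\log n}}\bigr)$, so $\bar\ell_1 h = O\bigl(h^2 M_1^{1/4} + \tfrac{1}{\sqrt{\log n}}\bigr)$; the conclusion $\bar\ell_1 h = O(1)$ is unaffected, and the resulting bound $\bar\ell_0/\bar\ell_1 = \Omega(h)$ still suffices to absorb the $C_x h^2$ condition under your budget.
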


We prove that the mixing rate of RHMC discretized by LM for an exponential
density in a polytope with $m$ constraints is also independent of
the condition number.

\corleapPoly*
\begin{proof}
For step size $h=O\Par{\frac{1}{n^{3/2}\log\frac{\Lambda}{\veps}}}$,
LM is sensitive in $\mcal_{\rho}\times V_{1}^{c}$ by Lemma \ref{lem:stableLeap},
and we can use the estimates of $C_{x}$ and $C_{v}$ proven in Lemma
\ref{lem:leapsecondfinal}. Thus, this step size satisfies all the
conditions in Theorem \ref{thm:discPoly}. Hence, it suffices to choose
$T=O\Par{mn^{3}\log^{3}\frac{\Lambda}{\veps}}$ to obtain $\dtv(\pi_{T},\pi)\leq\veps$.
\end{proof}

\paragraph{Acknowledgement. }

This work was supported in part by NSF awards CCF-1909756, CCF-2007443
and CCF-2134105.

\bibliographystyle{alpha}
\bibliography{main}

\appendix
%!TEX root = ./main.tex
\section{Definitions}
\begin{defn}[Self-concordant barrier] \label{def:sc-barrier} 
A self-concordant barrier $\phi:K\subset\Rn\to\R$ is a function such that $\phi(x)\to\infty$
as $x\to\del K$ and that $\Abs{Df^{3}(x)[h,h,h]}\leq2\Par{D^{2}f(x)[h,h]}^{3/2}$
for all $x\in K$ and $h\in\Rn$. If $\Abs{Df^{4}(x)[h,h,h,h]}\leq6\Par{D^{2}f(x)[h,h]}^{2}$
is also satisfied for all $h$, then $\phi$ is called a highly self-concordant
barrier.
\end{defn}

\begin{defn}[Self-concordance parameter] 
For a self-concordant function $\phi$, the self-concordance parameter of $\phi$ is the smallest non-negative real number $\nu_{\phi}$ such that 
\[
|D\phi(x)[h]|^{2}\leq\nu_{\phi}D^{2}\phi(x)[h,h],
\]
where $Df(x)[h]$ is the directional derivative of $f$ along direction
$h$ and $D^{2}f(x)[h_{1},h_{2}]$ is the second-order directional
derivative of $f$ along directions $h_{1}$ and $h_{2}$.
\end{defn}

\begin{defn}[Riemannian length and distance] \label{def:RiemannDistance} 
Let $\phi:\R^{n}\rightarrow\R$ be a self-concordant function. For all
$x\in\R^{d}$, we define the local norm induced by $\nabla^{2}\phi(x)$
by 
\[
\norm h_{\nabla^{2}\phi(x)}=\sqrt{h^{\top}\nabla^{2}\phi(x)h}.
\]
For any smooth curve $c:[0,1]\rightarrow\R^{n}$, we define the length
of the curve as 
\[
L_{\phi}(c)=\int_{0}^{1}\norm{\frac{d}{dt}c(t)}_{\nabla^{2}\phi(c(t))}dt.
\]
For any $x,y\in\R^{d}$, we define the distance $d_{\phi}(x,y)$ to
be the infimum of the lengths of all piecewise smooth curves with
$c(0)=x$ and $c(1)=y$.
\end{defn}

\begin{defn}[Total variation distance] \label{def:tvDistance} For probability
distributions $P$ and $Q$ supported on $K$, the total variation
distance (TV distance) is defined by 
\[
\dtv(P,Q)=\sup_{A\subset K}\Par{P(A)-Q(A)}.
\]
 
\end{defn}

\section{Lemmas}
\begin{lem}
\label{lem:matrixSeries} For $n\in\mathbb{N}$ and matrix $X\in\R^{2n\times2n}$
of the form 
\[
X=\left[\begin{array}{cc}
C & I_{n}\\
-C^{2}+R & -C
\end{array}\right]
\]
with a symmetric matrix $C\in\R^{n\times n}$ and matrix $R\in\R^{n\times n}$,
we have 
\begin{align*}
X^{2n} & =\left[\begin{array}{cc}
R^{n} & 0\\
R^{n}C-CR^{n} & R^{n}
\end{array}\right],\\
X^{2n+1} & =\left[\begin{array}{cc}
R^{n}C & R^{n}\\
R^{n+1}-CR^{n}C & -CR^{n}
\end{array}\right].
\end{align*}
\end{lem}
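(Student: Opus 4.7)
The plan is to prove both identities by straightforward induction on the exponent, after first computing $X^2$ explicitly. The key observation, and the computational core of the argument, is that $X^2$ is essentially block triangular with $R$ on both diagonal blocks and a commutator in the lower-left:
\[
X^2 = \begin{bmatrix} R & 0 \\ RC - CR & R \end{bmatrix}.
\]
I would verify this by direct $2 \times 2$ block multiplication: the upper-left block is $C \cdot C + I \cdot (-C^2+R) = R$, the upper-right is $C - C = 0$, the lower-right is $-C^2 + R + C^2 = R$, and the lower-left is $(-C^2+R)C + (-C)(-C^2+R) = RC - CR$. Note that the symmetry of $C$ is not actually required for the identity itself; it only enters the surrounding application.

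With $X^2$ in hand, the base cases $n=0$ (giving $X^0 = I_{2n}$, which matches $\mathrm{diag}(R^0, R^0)$ with $R^0 C - C R^0 = 0$) and $n=1$ (for the odd-power identity, giving $X^1 = X$) are immediate. For the inductive step on even powers, I would write $X^{2(n+1)} = X^{2n} \cdot X^2$ and carry out the block multiplication; the lower-left block becomes
\[
(R^n C - C R^n) R + R^n (RC - CR) = R^n C R - C R^{n+1} + R^{n+1} C - R^n C R = R^{n+1} C - C R^{n+1},
\]
so the telescoping cancellation delivers exactly the claimed form $\begin{bmatrix} R^{n+1} & 0 \\ R^{n+1} C - C R^{n+1} & R^{n+1} \end{bmatrix}$. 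For odd powers, I would write $X^{2n+1} = X^{2n} \cdot X$ and again read off the four blocks: the lower-left block is $(R^n C - C R^n) C + R^n(-C^2 + R) = -C R^n C + R^{n+1}$, and the lower-right is $R^n C - C R^n - R^n C = - C R^n$, matching the claim.

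There is no real obstacle here: the lemma is a bookkeeping statement, and the only mildly delicate moment is the telescoping in the lower-left block of $X^{2(n+1)}$, which hinges on the fact that $R$ and $C$ appear symmetrically enough that the two cross terms $R^n C R$ cancel. Everything else is routine block arithmetic, so the proof reduces to cleanly displaying the computation of $X^2$ and then running the two one-line inductions in parallel.
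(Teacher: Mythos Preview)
Your proposal is correct and follows exactly the approach the paper indicates: the paper's entire proof reads ``The claim immediately follows from induction,'' and your computation of $X^2$ together with the two block-multiplication inductive steps is precisely the routine verification that sentence summarizes. Your remark that the symmetry of $C$ is not actually needed for the identity is also accurate.
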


The claim immediately follows from induction.

\begin{lem}[\cite{lee2018convergence}, Lemma 7] \label{lem:HamODE} In the Euclidean coordinate,
the Hamiltonian equations in (\ref{eq:hmc_intro}) can be represented
via the second-order ODE as follows:
\begin{align*}
D_{t}\frac{dx}{dt} & =\mu(x),\\
\frac{dx}{dt}(0) & \sim\ncal(0,g(x)^{-1}),
\end{align*}
where $D_{t}$ is the covariant derivative along the Hamiltonian trajectory
$x(t)$ and $\mu(x)\defeq-g(x)^{-1}\grad f(x)-\half g(x)^{-1}\tr\Brack{g(x)^{-1}Dg(x)}$
.
\end{lem}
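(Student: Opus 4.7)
The plan is to unpack the covariant derivative in Euclidean coordinates via the Christoffel symbols of the Levi-Civita connection for $g$, use the Hamiltonian equations \eqref{eq:hmc_intro} to rewrite $\ddot{x}$, and show that the ``curvature'' corrections cancel against the $\frac{1}{2}Dg[\dot{x},\dot{x}]$ term in $\frac{dv}{dt}$, leaving exactly $\mu(x)$. The initial-condition claim reduces to a pushforward of a Gaussian.

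First I would recall that for a curve $x(t)\in\mcal$ and the vector field $\dot{x}(t)$ along it, the covariant derivative in coordinates reads
\[
(D_t \dot{x})^k \;=\; \ddot{x}^k \;+\; \Gamma^k_{ij}\,\dot{x}^i\dot{x}^j,\qquad \Gamma^k_{ij}=\tfrac{1}{2}g^{kl}\bigl(\partial_i g_{jl}+\partial_j g_{il}-\partial_l g_{ij}\bigr).
\]
Using the symmetry of $\dot{x}^i\dot{x}^j$ in $(i,j)$, the Christoffel contraction simplifies to
\[
\Gamma^k_{ij}\dot{x}^i\dot{x}^j \;=\; \bigl(g^{-1}\dot{g}\,\dot{x}\bigr)^k \;-\; \tfrac{1}{2}\bigl(g^{-1}Dg[\dot{x},\dot{x}]\bigr)^k,
\]
where $\dot{g}$ denotes $\tfrac{d}{dt}g(x(t)) = Dg[\dot{x}]$, and I use the sign convention $(Dg[u,w])_k = u^a w^b\partial_k g_{ab}$ fixed by Lemma~\ref{lem:sc_facts} and the formula for $\partial H_2/\partial x$ in \eqref{eq:hmc_intro}.

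Next I would compute $\ddot{x}$ from the first Hamiltonian equation $\dot{x}=g(x)^{-1}v$. Differentiating and using $\tfrac{d}{dt}g^{-1}=-g^{-1}\dot{g}g^{-1}$,
\[
\ddot{x} \;=\; -g^{-1}\dot{g}\,\dot{x} \;+\; g^{-1}\dot{v}.
\]
Adding this to the Christoffel contraction above, the $\pm g^{-1}\dot{g}\dot{x}$ terms cancel and
\[
D_t\dot{x} \;=\; g^{-1}\dot{v} \;-\; \tfrac{1}{2}g^{-1}Dg[\dot{x},\dot{x}].
\]
Now I substitute the second Hamiltonian equation,
\[
\dot{v} \;=\; -\grad f(x)\;-\;\tfrac{1}{2}\tr\!\bigl[g^{-1}Dg\bigr]\;+\;\tfrac{1}{2}Dg[\dot{x},\dot{x}],
\]
which is exactly \eqref{eq:hmc_intro}. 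The $\tfrac{1}{2}g^{-1}Dg[\dot{x},\dot{x}]$ contribution from $g^{-1}\dot{v}$ cancels the remaining curvature term, leaving
\[
D_t\dot{x} \;=\; -g(x)^{-1}\grad f(x)\;-\;\tfrac{1}{2}g(x)^{-1}\tr\!\bigl[g(x)^{-1}Dg(x)\bigr]\;=\;\mu(x),
\]
as desired. Finally, the initial condition is immediate: since $\dot{x}(0)=g(x(0))^{-1}v(0)$ and $v(0)\sim\ncal(0,g(x(0)))$ by Step~1 of Algorithm~\ref{alg:RHMC}, the linear pushforward gives $\dot{x}(0)\sim\ncal(0,\,g(x(0))^{-1}g(x(0))g(x(0))^{-1})=\ncal(0,g(x(0))^{-1})$.

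The only real obstacle is bookkeeping: keeping the sign and slot conventions for $Dg[\cdot,\cdot]$, $\tr[g^{-1}Dg]$, and the Christoffel symbols consistent with how they are written in \eqref{eq:hmc_intro}. Once the convention $(Dg[u,w])_k=u^a w^b\partial_k g_{ab}$ is fixed, the calculation is a short direct verification. There is no analytic difficulty and no need for self-concordance here; the result is purely a restatement of Hamilton's equations as the geodesic-like second-order ODE $D_t\dot{x}=\mu(x)$ driven by the potential-plus-volume correction $\mu$.
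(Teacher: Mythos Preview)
Your proof is correct. The paper does not supply a proof of this lemma at all; it merely states it with a citation to \cite{lee2018convergence}, so there is nothing to compare against beyond noting that your coordinate computation---expanding $D_t\dot x$ via Christoffel symbols, differentiating $\dot x=g^{-1}v$, and observing the cancellation of the $g^{-1}\dot g\,\dot x$ and $\tfrac12 g^{-1}Dg[\dot x,\dot x]$ terms---is exactly the standard derivation one would expect.
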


\begin{lem}[\cite{lee2018convergence}, Lemma 64] \label{lem:boundLogDet}
For matrix $E\in\R^{n\times n}$ with $\norm E_{2}<\frac{1}{4}$,
we have 
\[
\Abs{\log\det\Par{I+E}-\tr E}\leq\norm E_{F}^{2}.
\]
\end{lem}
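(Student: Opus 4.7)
}

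The plan is to reduce the claim to a power-series identity and then estimate the tail. Since $\norm{E}_2<1/4<1$, the matrix logarithm series
\[
\log(I+E)=\sum_{k=1}^{\infty}\frac{(-1)^{k+1}}{k}E^{k}
\]
converges absolutely in operator norm, and the Jacobi-type identity $\log\det(I+E)=\tr\log(I+E)$ holds (one way to see this is to apply it to the Jordan form of $E$, where both sides reduce to $\sum_{i}\log(1+\lambda_{i})$). Subtracting $\tr E$, which is precisely the $k=1$ term, leaves the series identity
\[
\log\det(I+E)-\tr E \;=\; \sum_{k=2}^{\infty}\frac{(-1)^{k+1}}{k}\tr(E^{k}).
\]

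The key technical step is the bound $\Abs{\tr(E^{k})}\leq\norm{E}_{F}^{2}\,\norm{E}_{2}^{k-2}$ for all $k\geq2$. I would prove this by writing $\tr(E^{k})=\tr(E\cdot E^{k-1})$ and applying the trace Cauchy--Schwarz inequality $\Abs{\tr(AB)}\leq\norm{A}_{F}\norm{B}_{F}$, then using the submultiplicativity $\norm{E^{k-1}}_{F}\leq\norm{E}_{2}^{k-2}\norm{E}_{F}$ (which in turn follows from $\norm{MX}_{F}\leq\norm{M}_{2}\norm{X}_{F}$ applied $k-2$ times). The $k=2$ case is also directly visible from $\tr(E^{2})=\sum_{i,j}E_{ij}E_{ji}\leq\norm{E}_{F}^{2}$ by Cauchy--Schwarz.

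Plugging this into the series and using $\norm{E}_{2}<1/4$, I would estimate
\[
\Abs{\log\det(I+E)-\tr E}\;\leq\;\sum_{k=2}^{\infty}\frac{1}{k}\,\norm{E}_{F}^{2}\,\norm{E}_{2}^{k-2}
\;\leq\;\frac{\norm{E}_{F}^{2}}{2}\sum_{j=0}^{\infty}\norm{E}_{2}^{j}
\;=\;\frac{\norm{E}_{F}^{2}}{2(1-\norm{E}_{2})}\;\leq\;\frac{2}{3}\norm{E}_{F}^{2},
\]
which is comfortably below $\norm{E}_{F}^{2}$.

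There is no serious obstacle here; the only item that deserves a moment of care is justifying $\log\det(I+E)=\tr\log(I+E)$ for possibly non-diagonalizable $E$. I would handle this by a density argument (diagonalizable matrices are dense and both sides are continuous in $E$ on $\Brace{\norm{E}_{2}<1}$), or equivalently by noting that both sides agree term-by-term in the convergent power expansions when read as analytic functions of the entries of $E$.
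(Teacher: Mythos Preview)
Your argument is correct. The paper does not actually prove this lemma; it is quoted in the appendix as Lemma~64 of \cite{lee2018convergence} and used as a black box, so there is no in-paper proof to compare against. Your approach---expand $\log\det(I+E)=\tr\log(I+E)$ as a power series, isolate the $k=1$ term, and bound each remaining $\tr(E^{k})$ via $\Abs{\tr(AB)}\le\norm{A}_{F}\norm{B}_{F}$ together with $\norm{E^{k-1}}_{F}\le\norm{E}_{2}^{k-2}\norm{E}_{F}$---is exactly the standard route for this kind of estimate and yields the slightly sharper constant $\tfrac{2}{3}$ under the hypothesis $\norm{E}_{2}<\tfrac{1}{4}$. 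The density/continuity justification you sketch for $\log\det(I+E)=\tr\log(I+E)$ in the non-diagonalizable case is fine; alternatively one can differentiate $t\mapsto\log\det(I+tE)$ and integrate, which avoids any appeal to Jordan form.
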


\begin{lem}[\cite{kalai2006simulated}, Lemma 4.1] \label{lem:kalaivempala}
For a unit vector $c\in\R^{n}$, constant $T$ and convex set $K\subset\Rn$, we have 
\[
\E_{x\sim\pi}\Brack{c^{\top}x}\leq nT+\min_{x\in K}c^{\top}x,
\]
where $\pi$ is a probability density proportional to $e^{-\frac{c^{\top}x}{T}}$.
\end{lem}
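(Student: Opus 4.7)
The plan is to reduce the claim to a sharp one-dimensional integral inequality for log-concave densities on a halfline, via marginalization along $c$ and Brunn--Minkowski. Let $m \defeq \min_{x \in K} c^{\top}x$ and $Y(x) \defeq c^{\top}x - m \geq 0$; it suffices to show $\E_{x \sim \pi}[Y] \leq nT$.

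First I would project $\pi$ onto the direction $c$. The marginal density of $Y$ under $\pi$ is proportional to $g(y)\,e^{-y/T}$ on $[0, M-m]$ (with $M \defeq \max_{x \in K} c^{\top}x$), where $g(y) \defeq \mathrm{vol}_{n-1}(\{x \in K : c^{\top}x = m + y\})$ is the $(n-1)$-dimensional cross-sectional volume. By the Brunn--Minkowski inequality, $g^{1/(n-1)}$ is concave on its support; in particular $g$ is itself log-concave, and in the generic case where the minimum of $c^{\top}x$ is attained on a set of $(n-1)$-dimensional measure zero one has $g(0)=0$ and $g$ is dominated by a polynomial of degree $n-1$ growing out of the origin.

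Second, I would invoke the sharp one-dimensional inequality: for any non-negative $g$ on $[0,\infty)$ with $g^{1/(n-1)}$ concave,
\[
\frac{\int_{0}^{\infty} y\, g(y)\, e^{-y/T}\, dy}{\int_{0}^{\infty} g(y)\, e^{-y/T}\, dy} \;\leq\; nT,
\]
with equality in the extremal case $g(y) \propto y^{n-1}$, i.e.\ the Gamma$(n,T)$ density whose mean is exactly $nT$. This comparison is a standard Fradelizi-type extremality statement for log-concave distributions on a halfline: within the admissible class the mean is maximized on the boundary of the concavity constraint, namely by the monomial $y^{n-1}$, against which one verifies the Gamma mean by direct computation. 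Combined with step one this yields $\E_{\pi}[c^{\top}x] - m \leq nT$.

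The hard part will be making the one-dimensional comparison quantitatively sharp: getting the constant exactly $n$ and not $n+1$, and properly handling the boundary case $g(0)>0$ (when the minimum of $c^{\top}x$ is attained on a face of positive $(n-1)$-dimensional volume). A clean workaround is an approximation argument, tilting $K$ infinitesimally so the minimum is attained at a unique vertex, applying the inequality, and passing to the limit via dominated convergence. An alternative route is to prove the inequality directly by a change of variables that absorbs the $e^{-y/T}$ weight into a uniform weight on a higher-dimensional simplex and then applies Brunn--Minkowski once more.
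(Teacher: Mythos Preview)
The paper does not supply its own proof of this lemma; it is quoted verbatim from Kalai--Vempala (2006), Lemma~4.1, and used as a black box. Your proposal is essentially a reconstruction of the original Kalai--Vempala argument: project onto the direction $c$, use Brunn--Minkowski to obtain that the cross-sectional volume $g$ satisfies $g^{1/(n-1)}$ concave, and reduce to a sharp one-dimensional comparison whose extremizer is the Gamma$(n,T)$ density $y^{n-1}e^{-y/T}$ with mean exactly $nT$. This is correct and is the standard route.

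The two concerns you raise are real but resolvable. For the constant: the point is that the admissible marginal is of the form $g(y)e^{-y/T}$ with $g^{1/(n-1)}$ concave, and among such densities on a halfline the mean residual is maximized at the boundary of the concavity constraint, namely $g(y)\propto y^{n-1}$; one clean way to see the bound $nT$ uniformly (including the case $g(0)>0$) is to note that any admissible $g$ is dominated pointwise by $(a+by)^{n-1}$ for some tangent line $a+by$ to $g^{1/(n-1)}$, reduce to that linear case by a ratio-monotonicity argument, and then observe that the resulting density is a shifted-and-truncated Gamma$(n,T)$, for which the mean residual life $\E[Z-c\mid Z\ge c]$ is non-increasing in $c$ by log-concavity and equals $nT$ at $c=0$. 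Your tilting-and-limit workaround also works. Either way, there is no gap.
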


\begin{lem}[\cite{nguyen2013dimensional}, Corollary 6] \label{lem:sinhoentropic}
Let $\pi$ be a log-concave density proportional to $\exp(-V)$ on
$\R^{n}$. Then,
\[
\text{Var}_{x\sim\pi}\Par{V(x)}\leq n.
\]
\end{lem}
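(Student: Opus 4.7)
The plan is to prove the variance bound via a Prékopa-Leindler argument applied to the perspective transform of $V$. Introduce the one-parameter family $\pi_{t}\propto e^{-tV(x)}$ with log-partition $F(t)\defeq\log\int_{\Rn}e^{-tV(x)}\,dx$. Elementary computation (differentiating under the integral) gives $F'(t)=-\E_{\pi_{t}}[V]$ and $F''(t)=\Var_{\pi_{t}}(V)$, so the desired bound is exactly $F''(1)\leq n$, and the goal reduces to producing a second-derivative control on $F$.

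The key step is to exhibit a concave modification of $F$ whose second derivative contributes $-n/t^{2}$, so that the negative part exactly cancels the dimensional budget. The substitution $y=tx$ in the integral defining $F$ gives $t^{n}\int e^{-tV(x)}\,dx=\int e^{-tV(y/t)}\,dy$, hence
\[
G(t)\defeq n\log t+F(t)=\log\int_{\Rn}e^{-tV(y/t)}\,dy.
\]
The map $(t,y)\mapsto tV(y/t)$, extended by $+\infty$ for $t\leq 0$ and for $y/t$ outside the support of $V$, is the classical perspective of the convex function $V$ and is therefore jointly convex on $\R\times\Rn$ (its epigraph is the convex cone generated by the epigraph of $V$). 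Prékopa-Leindler applied to this jointly convex integrand then yields concavity of $G$ on $(0,\infty)$. Consequently $G''(t)=F''(t)-n/t^{2}\leq 0$, i.e.\ $\Var_{\pi_{t}}(V)\leq n/t^{2}$ for every admissible $t$, and evaluation at $t=1$ gives the claim.

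The main technical point I anticipate is ensuring $F$ is $C^{2}$ in a neighborhood of $t=1$, so that the identity $F''(1)=\Var_{\pi}(V)$ is legitimate and Prékopa's conclusion can be read off via the second derivative. This is handled by Borell's lemma: since $e^{-V}$ is a log-concave probability density on $\Rn$, the tails of $V$ decay exponentially, $e^{-tV}$ is integrable on an open interval around $t=1$, and all polynomial moments of $V$ under $\pi_{t}$ exist uniformly on compact subintervals. Standard dominated-convergence arguments then justify differentiating under the integral twice, completing the argument. No deeper obstacle is expected, since joint convexity of the perspective function is a textbook fact and Prékopa-Leindler accommodates convex functions taking the value $+\infty$.
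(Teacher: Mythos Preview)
The paper does not supply its own proof of this lemma; it is simply quoted from the external reference \cite{nguyen2013dimensional}. There is therefore nothing in the present paper to compare your argument against.

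That said, your argument is correct and is one of the standard proofs of this bound. The identification $G(t)=n\log t+F(t)=\log\int e^{-tV(y/t)}\,dy$ via the substitution $y=tx$ is right, the perspective $(t,y)\mapsto tV(y/t)$ is jointly convex whenever $V$ is convex (which is exactly the log-concavity assumption on $\pi$), and Pr\'ekopa's theorem then gives concavity of $G$ on $(0,\infty)$. Reading off $G''(1)\le 0$ yields $F''(1)=\Var_{\pi}(V)\le n$. Your remarks on the $C^{2}$ regularity of $F$ near $t=1$ via exponential integrability of $V$ under $\pi$ are also in order. Nothing is missing.
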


\begin{lem} [\cite{lovasz2007geometry}, Lemma 5.17] \label{lem:lovaszgeo}Let
$X\in\Rn$ be randomly chosen from a log-concave distribution. Then
for any $R>1$,
\[
\P\Par{\Abs X>R\sqrt{\E X^{2}}}<e^{-R+1}.
\]
\end{lem}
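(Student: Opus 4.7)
The plan is to reduce to a one-dimensional tail bound for $Y := |X|$ and then combine log-concavity of its survival function with the second-moment identity $\E Y^2 = \int_0^\infty 2t\,\P(Y > t)\,dt$. By scale invariance of the statement, I would first normalize $\sqrt{\E|X|^2} = 1$, so that the target becomes $\P(|X| > R) < e^{-R+1}$ for all $R > 1$.

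The structural input I would use is log-concavity of the survival function $g(t) := \P(Y > t)$ on $[0,\infty)$, which lets me write $g(t) = e^{-\phi(t)}$ with $\phi:[0,\infty)\to[0,\infty]$ convex, non-decreasing, and $\phi(0) = 0$. Granting this, the bound follows from a short calculation: convexity together with $\phi(0) = 0$ forces $\phi(t)/t$ to be non-decreasing, so for $t \in [0, R]$ one has $\phi(t) \leq a t$ with $a := \phi(R)/R$, hence $g(t) \geq e^{-a t}$ on $[0, R]$. Combining with the moment identity yields
\begin{equation*}
1 \;=\; \int_0^\infty 2t\,g(t)\,dt \;\geq\; \int_0^R 2t\,e^{-at}\,dt \;=\; \frac{2}{a^2}\bigl(1 - (1+aR)\,e^{-aR}\bigr).
\end{equation*}
The right-hand side is strictly decreasing in $a$, and a direct computation shows that at the candidate value $a = (R-1)/R$ it already strictly exceeds $1$ for every $R > 1$: the inequality amounts to $2R^2\bigl(1 - R\,e^{-(R-1)}\bigr) > (R-1)^2$, which I would verify by Taylor expansion near $R = 1$ (both sides and their first two derivatives match at $R=1$, while the next order favors the left) and by the fact that $R^3 e^{-R} \to 0$ as $R \to \infty$. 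Hence $a > (R-1)/R$, equivalently $\phi(R) > R-1$, which gives $g(R) < e^{-R+1}$.

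The main obstacle is establishing log-concavity of $g$. A direct Pr\'ekopa--Leindler argument applied to the log-concave function $(x,t) \mapsto \pi(x)\,\mathbf{1}[|x|\leq t]$, whose support is the convex cone $\{(x,t)\in\Rn\times[0,\infty) : |x|\leq t\}$, immediately yields log-concavity of the cumulative $F(t) = \P(Y \leq t)$; unfortunately this does not transfer to $1-F$, since the analogous set $\{(x,t) : |x| > t\}$ fails to be convex (take $(x_1, t_1)$ with $|x_1|>t_1$ and $x_2 = -x_1$, $t_2=t_1$). The route I would take is first to prove that $Y$ admits a log-concave density $q(r) = r^{n-1}\int_{S^{n-1}} \pi(r\theta)\,d\sigma(\theta)$ on $(0,\infty)$, a radial log-concavity theorem in the spirit of Borell and Ball, provable by a Pr\'ekopa-type argument on the polar pushforward of $\pi$ that exploits how the $r^{n-1}$ Jacobian interacts with log-concavity. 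Once the density of $Y$ is log-concave, the classical Bagnoli--Bergstrom principle gives log-concavity of both the cumulative and the survival function of $Y$ on its support. This radial log-concavity step is the delicate technical ingredient; once it is in hand, the remainder of the proof is the elementary computation above.
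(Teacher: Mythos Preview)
The paper does not prove this lemma; it is quoted from Lov\'asz--Vempala without argument, so there is nothing in the paper to compare against.

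On the merits of your proposal: the computational half is fine. Granted log-concavity of $g(t)=\P(|X|>t)$ with $g(0)=1$, the second-moment identity combined with convexity of $-\log g$ does force $g(R)<e^{-R+1}$, and the scalar inequality $2R^2\bigl(1-Re^{-(R-1)}\bigr)>(R-1)^2$ you reduce to indeed holds for all $R>1$.

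The gap is the structural input. The ``radial log-concavity theorem'' you intend to prove is false in the generality needed: for a general log-concave $X$, neither the density of $|X|$ nor its survival function is log-concave. Already in one dimension, take $X$ uniform on $[-1,2]$ (so $\E X^2=1$). The density of $|X|$ equals $2/3$ on $[0,1]$ and $1/3$ on $(1,2]$, a downward step, hence not log-concave; correspondingly $g(t)=(3-2t)/3$ on $[0,1]$ and $(2-t)/3$ on $(1,2]$, and $(\log g)'$ jumps \emph{upward} from $-2$ to $-1$ at $t=1$, violating concavity of $\log g$. You correctly flagged this step as the delicate one, but it is not merely delicate --- it fails, and no Pr\'ekopa-type argument can repair it, precisely because (as you yourself observe) the set $\{(x,t):|x|>t\}$ is not convex.

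What does survive is the special case the paper actually uses: there $X=\alpha^{\top}x-\min_y\alpha^{\top}y$ is a \emph{non-negative one-dimensional} log-concave variable, and for such $X$ the survival function $\P(X>t)=\int_t^\infty f$ \emph{is} log-concave by the standard Pr\'ekopa marginal argument on the convex set $\{(s,t):s\ge t\}$. Your argument therefore establishes the lemma in the only case the paper needs. For the full $\Rn$ statement one must instead reduce to this one-dimensional non-negative setting by a different mechanism (e.g.\ a localization/needle argument of the kind Lov\'asz--Vempala develop); the radial pushforward does not do the job.
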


\begin{lem}[\cite{nesterov2002riemannian}, Lemma 3.1] \label{lem:dist_const}
Suppose $\phi:\R^{n}\rightarrow\R$ is self-concordant and $\kcal\subset\Rn$
is convex. For any $x,y\in\kcal$, i
\begin{itemize}
\item If $d_{\phi}(x,y)\leq\delta-\delta^{2}<1$ for some $0<\delta<1$,
then $\norm{y-x}_{\nabla^{2}\phi(x)}\leq\delta$. 
\item If $\delta=\norm{x-y}_{\nabla^{2}\phi(x)}<1$, then $\delta-\frac{1}{2}\delta^{2}\leq d_{\phi}(x,y)\leq-\log(1-\delta)$.
\end{itemize}
\end{lem}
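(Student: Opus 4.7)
My plan is to derive both items from the standard self-concordance two-sided Hessian bound already stated as Lemma \ref{lem:sc_facts}, namely $(1-\|y-x\|_{\nabla^{2}\phi(x)})^{2}\nabla^{2}\phi(x)\preceq\nabla^{2}\phi(y)\preceq(1-\|y-x\|_{\nabla^{2}\phi(x)})^{-2}\nabla^{2}\phi(x)$ whenever the denominator is positive. I will first prove the second item (which contains the substantive inequalities) and then deduce the first item by a short contrapositive argument.

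For the upper bound $d_{\phi}(x,y)\leq-\log(1-\delta)$, I will use the straight-line path $c(t)=x+t(y-x)$ for $t\in[0,1]$, which is admissible because $\kcal$ is convex. Its length is $L_{\phi}(c)=\int_{0}^{1}\|y-x\|_{\nabla^{2}\phi(c(t))}\,dt$, and applying the self-concordance bound at the pair $(x,c(t))$ (note $\|c(t)-x\|_{\nabla^{2}\phi(x)}=t\delta<1$) gives $\|y-x\|_{\nabla^{2}\phi(c(t))}\leq\delta/(1-t\delta)$. A one-line integration then yields $L_{\phi}(c)\leq-\log(1-\delta)$, and $d_{\phi}(x,y)\leq L_{\phi}(c)$.

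For the lower bound $d_{\phi}(x,y)\geq\delta-\tfrac{1}{2}\delta^{2}$, I will take an arbitrary smooth curve $c:[0,1]\to\kcal$ joining $x$ and $y$ and introduce the scalar comparison function $u(t)\defeq\|c(t)-x\|_{\nabla^{2}\phi(x)}$, which satisfies $u(0)=0,u(1)=\delta$ and the trivial pointwise estimate $|u'(t)|\leq\|c'(t)\|_{\nabla^{2}\phi(x)}$. Combining this with the other direction of self-concordance, $\|c'(t)\|_{\nabla^{2}\phi(x)}\leq(1-u(t))^{-1}\|c'(t)\|_{\nabla^{2}\phi(c(t))}$ (valid wherever $u(t)<1$), gives the differential inequality $(1-u(t))|u'(t)|\leq\|c'(t)\|_{\nabla^{2}\phi(c(t))}$. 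Integrating in $t$ and using $(1-u)|u'|\geq(1-u)u'$ together with the primitive $u-u^{2}/2$ yields $L_{\phi}(c)\geq\delta-\delta^{2}/2$, after which taking the infimum over $c$ completes the bound.

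The main technical nuisance, and the only place where the argument is not a one-liner, is the possibility that some competing curve $c$ exits the open Dikin ellipsoid $\{z:\|z-x\|_{\nabla^{2}\phi(x)}<1\}$ before reaching $y$, so that the self-concordance inequality blows up along part of $c$. I will handle this by letting $\tau=\inf\{t:u(t)=1\}$; on $[0,\tau)$ the previous estimate gives $L_{\phi}(c|_{[0,\tau]})\geq\lim_{s\to\tau^{-}}(u(s)-u(s)^{2}/2)=1/2$, which is already strictly larger than $\delta-\delta^{2}/2$ for any $\delta<1$, so the lower bound survives. Finally, the first item is a clean contrapositive: if $\|y-x\|_{\nabla^{2}\phi(x)}=\delta'>\delta$, then either $\delta'<1$ and the lower bound just proved combined with strict monotonicity of $s\mapsto s-s^{2}/2$ on $[0,1]$ gives $d_{\phi}(x,y)>\delta-\delta^{2}/2\geq\delta-\delta^{2}$, or $\delta'\geq 1$ and the Dikin-exit argument gives $d_{\phi}(x,y)\geq 1/2>\delta(1-\delta)=\delta-\delta^{2}$; either way the hypothesis $d_{\phi}(x,y)\leq\delta-\delta^{2}$ fails.
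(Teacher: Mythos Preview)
Your proof is correct. The paper does not supply its own proof of this lemma; it simply cites \cite{nesterov2002riemannian}, Lemma~3.1, and your argument is essentially the standard one found there: bound the straight-line length via the upper Hessian estimate from Lemma~\ref{lem:sc_facts} for $d_\phi\le -\log(1-\delta)$, and for the lower bound compare an arbitrary curve's length to the antiderivative $u-u^2/2$ of $1-u$ via the lower Hessian estimate, with the Dikin-exit case giving length at least $1/2$. One minor remark: you invoke Lemma~\ref{lem:sc_facts}, whose hypothesis in this paper is stated for \emph{highly} self-concordant barriers, whereas Lemma~\ref{lem:dist_const} assumes only self-concordance; the two-sided Hessian bound you actually use holds under plain self-concordance (it is Theorem~4.1.6 in \cite{nesterov2003introductory}, as the paper itself notes in the proof of Lemma~\ref{lem:sc_facts}), so there is no gap.
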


\end{document}